\documentclass[a4paper,11pt]{article}
\usepackage{bm,mathrsfs,amsmath,amsthm,amssymb,amscd,longtable,array,
graphicx,color}
\newcommand{\nc}{\newcommand}
\nc{\rnc}{\renewcommand}
\nc{\nn}{\nonumber}
\nc{\der}{{\partial}}
\rnc{\Im}{{\rm{Im}\,}}
\rnc{\Re}{{\rm{Re}\,}}
\nc{\db}{\displaybreak[0]\\}
\nc{\bra}{\langle}
\nc{\ket}{\rangle}
\nc{\bs}{\boldsymbol}

\newtheorem{theorem}{Theorem}[section]
\newtheorem{lemma}[theorem]{Lemma}
\newtheorem{conjecture}[theorem]{Conjecture}
\newtheorem{proposition}[theorem]{Proposition}
\newtheorem{corollary}[theorem]{Corollary}

\theoremstyle{definition}

\numberwithin{equation}{section}

\numberwithin{equation}{section}

\begin{document}%
%
\title{Tetrahedral $L$-operators, tensor Schur polynomials and $q$-deformed loop elementary symmetric functions}

\author{
Shinsuke Iwao, Kohei Motegi and Ryo Ohkawa
}

\date{\today}

\maketitle

\begin{abstract}
We study three-dimensional partition functions constructed from the tetrahedral
$L$-operator introduced and studied by Bazhanov-Sergeev and Kuniba-Maruyama-Okado.
First, we explore the $q=0$ case, extending {\color{black}the authors'} previous results and giving applications
by a further analysis on the Zamolodchikov-Faddeev algebra.
We introduce a class of partition functions which can be expressed as
the tensor Schur polynomials, a class of  products of Schur polynomials.
As an application, we derive the shuffle formula for the Schur polynomials which is geometrically
the pushforward formula by Jo\'zefiak-Pragacz-Lascoux.
We also give a derivation and a unification of the
Gustafson-Milne and Feh{\'e}r--N{\'e}methi--Rim{\'a}nyi identities,
and introduce
a family of Laurent polynomials using divided difference operators which 
imitates the Schubert polynomials from the perspective of our study.
We also present an application to the steady state of the multispecies totally asymmetric simple exclusion process.
Second, we investigate several classes
of partition functions for the generic $q$ case, and determine the explicit forms
as deformations of the elementary symmetric functions.
One of them can be regarded as (an extension of) a $q$-deformed loop elementary symmetric functions.
\end{abstract}

\section{Introduction}
The tetrahedron equation introduced by Zamolodchikov
\cite{Zamone,Zamtwo} is a three-dimensional generalization
of the Yang-Baxter equation
\cite{Yang,BaxterYBE}, which plays the key role of integrability
in three-dimensional statistical physics models.
The tetrahedron equation has been the subject of research on its solutions ever since its introduction
\cite{Baxterone,BaSt,BB,Kor,Kashaev,SMS}, and particularly in recent years.
The connections with quantum groups \cite{FST,FRT,Drinfeld,Jimbo}, as well as those between the $R$-matrices in two and three dimensions, have been gradually clarified over the years \cite{BaSe,BMS,MBS,KOS,KMY,Kuniba}.
There are also developments on constructing solutions \cite{GSZ,SY,IKT1,IKT2} using the cluster algbras \cite{FZ,FG},
and another approach using fermion algebras
\cite{PK,PKtwo}.  In addition, there are more recent developments \cite{BKMS,SSHY}.

Partition functions constructed from three-dimensional $L$-operators,
which are solutions to the tetrahedron equation, are much less studied than
their two-dimensional counterparts.
By investigating the $q=0$ version of a solution by
Bazhanov-Sergeev \cite{BaSe} and constructing a realization of the
Zamolodchikov-Faddeev algebra, Kuniba-Maruyama-Okado \cite{KMO1,KMO2}
identified a class of three-dimensional partition functions with
the (relative) steady state probabilities of a multispecies generalization of
a totally asymmetric symmetric exclusion process (TASEP) \cite{Spitzer} under the periodic boundary condition.
This was done by transforming a combinatorial construction of the
matrix product steady state of the multispecies TASEP by Ferrari-Martin \cite{FM}
in a nontrivial way using the crystal basis \cite{Kashiwara} description of the combinatorial $R$-matrix \cite{NY}.
There are studies on the steady states of (T)ASEP from various perspectives.
See foundational works \cite{DEHP,Sa} for open boundaries,
and \cite{AAMP,CDGW,Martin,CMW,KW,KOSc,ANP} for multispecies (T)ASEP,
including various types and settings, for example.

In a related but different setting from \cite{KMO2}, we introduced a class of partition functions
which can be expressed by the Schur polynomials.
This was done by taking the vacuum expectation values in the bosonic Fock spaces
instead of taking the traces. Another crucial point is that, for realizing Schur polynomials associated
with general Young diagrams, we do not need to use all types of the $X$-operators
introduced in \cite{KMO2}.

In this paper, we further explore the algebraic aspects of the partition functions and also give applications.
This paper largely consists of two parts.
The first part is a more detailed study for the $q=0$ case.
By a further analysis on the Zamolodchikov-Faddeev algebra, we  \\
\\
$\cdot$ introduce a family of partition functions whose
explicit forms are given by the tensor Schur polynomials,
generalizing the main result in \cite{IMO}; \\
$\cdot$ give a derivation of the
shuffle formula for the Schur polynomials, which corresponds to
the Jo\'zefiak-Pragacz-Lascoux \cite{JLP} formula in geometry; \\
$\cdot$ give a derivation and a unification of the 
Gustafson-Milne \cite{GM} and Feh{\'e}r--N{\'e}methi--Rim{\'a}nyi
 \cite{FNR} identities for 
the Schur polynomials, \\
$\cdot$ introduce an analog of Schubert polynomials
which represent a certain class of partition functions; \\
$\cdot$ give an application to the relative steady state
probabilities of the mutlispecies TASEP. \\

By a tensor Schur polynomial we simply mean a product of Schur polynomials.
The terminology ``tensor Schur functions (polynomials, basis)'' appears, for example,
in the study of wreath Macdonald polynomials in \cite{OSW}.

In the second part, we use the original tetrahedral $L$-operator by Bazhanov-Sergeev \cite{BaSe}
which includes the (quantum group) parameter $q$.
Since the parameter $q$ is generic,
the tetrahedral $L$-operator can be viewed as an operator-valued six-vertex model
which upgrades the operator-valued five-vertex model studied in the first part.
We introduce several classes of partition functions and
we determine their explicit forms as deformations of the elementary symmetric functions.
One class involves the $q$-exponential functions,
and the other class can be regarded as a $q$-deformation of the loop elementary symmetric functions
and its further extension.

This paper is organized as follows. 
In section 2, we introduce the $X$-operators and 
study partition functions constructed from the $X$-operators using the $q=0$ tetrahedral $L$-operator as local pieces.
We
relate with the tensor Schur polynomials, modified Schubert polynomials
and give applications to the Schur polynomial identities and steady state of multispecies TASEP.
In section 3, we introduce several classes of partition functions constructed from the $q$ generic tetrahedral $L$-operator,
and determine their explicit forms.

Here we collect the notations and conventions that will be used frequently in this paper.
We denote sets of variables using bold symbols, such as 
$\mathbf{w}_i$ and $\mathbf{z}_j$.
For integers $i$ and $j$ with $i \le j$, we write $[i,j]:=\{i,i+1,\dots,j\}$, and for any 
such family of sets $\mathbf{x}_i$ (e.g., $\mathbf{w}_i$ or $\mathbf{z}_i$), we define
\[
\mathbf{x}_{[i,j]}
  = \mathbf{x}_i \cup \mathbf{x}_{i+1} \cup \dots \cup \mathbf{x}_j .
\]
For integers $i$ and $j$ with $i>j$, we set $[i,j]:=\varnothing$.

Let us introduce several shorthand notations.
For a set of variables ${\bf z}=\{z_1,\dots,z_{|{\bf z}|}\}$, 
where $|{\bf z}|$ denotes its cardinality, we write
\[
X_i({\bf z}) := \prod_{j=1}^{|{\bf z}|} X_i(z_j),
\]
which is well defined for commuting operators $X_i(z)$.
For any integer $p$, we define
\[
{\bf z}^p := \prod_{j=1}^{|{\bf z}|} z_j^p.
\]
For two sets of variables ${\bf z}$ and ${\bf w}$, we further introduce the shorthand notations
\[
{\bf z}-{\bf w} := 
\prod_{j=1}^{|{\bf z}|}\prod_{k=1}^{|{\bf w}|}(z_j - w_k),
\qquad
1 - {\bf z}/{\bf w} :=
\prod_{j=1}^{|{\bf z}|}\prod_{k=1}^{|{\bf w}|}
\left(1 - \frac{z_j}{w_k}\right).
\]

We also use summation symbols of the form 
$\displaystyle  \sum_{({\bf w}_1,\dots,{\bf w}_m)}$, 
which denote sums over all $m$-tuples 
$({\bf w}_1,\dots,{\bf w}_m)$ 
such that each ${\bf w}_i$ $(i=1,\dots,m)$ is an unordered set of variables 
with $|{\bf w}_i| = |{\bf z}_i|$, and
\[
{\bf w}_1 \cup \cdots \cup {\bf w}_m 
   = {\bf z}_1 \cup \cdots \cup {\bf z}_m .
\]

A \emph{partition} is a weakly decreasing sequence of nonnegative integers
\(
  \lambda = (\lambda_1,\lambda_2,\dots,\lambda_n)
\)
with $\lambda_1 \ge \lambda_2 \ge \cdots \ge \lambda_n \ge 0$.
The Schur polynomial associated with a partition $\lambda$
and a set of symmetric variables \(
  \mathbf{z} = (z_1,z_2,\dots,z_n)
\)
 is defined by
\begin{align}
  s_\lambda(\mathbf{z})
  = \frac{\det ( z_i^{\,\lambda_j + n - j} )_{1\le i,j\le n}}
         {\det( z_i^{\,n-j} )_{1\le i,j\le n}}. \label{Schurdef}
\end{align}

For integers $i$ and $j\ge 0$, 
the notation $i^j$ denotes the string consisting of $j$ consecutive copies of $i$.
For example, $3^4$ represents $(3,3,3,3)$.

\section{$q=0$ three-dimensional $L$-operator}
In this section, we study three-dimensional partition functions based on 
the $q=0$ case of the tetrahedral $L$-operator introduced in \cite{BaSe} 
and further developed in \cite{KMO1,KMO2}. 
This section extends the results of \cite{IMO} and further investigates
algebraic identities and applications to the multi-species TASEP
through the multiple commutation relations of the
Zamolodchikov--Faddeev algebra.

\subsection{$q=0$ three-dimensional $L$-operator and $X$-operators}
We first introduce the $q=0$ three-dimensional $L$-operator.

Let $V=\mathbb{C}v_0 \oplus \mathbb{C}v_1$ be the two-dimensional complex vector space 
with standard basis $\{v_0,v_1\}$, and let 
$\mathcal{F}=\bigoplus_{m=0}^\infty \mathbb{C}|m\rangle$ be the bosonic Fock space 
with basis $\{|m\rangle\}_{m\ge0}$.
Its dual spaces are 
$V^\ast=\mathbb{C}v_0^\ast\oplus \mathbb{C}v_1^\ast$, where $v_j^\ast(v_i)=\delta_{ij}$,
and 
$\mathcal{F}^\ast=\bigoplus_{m=0}^\infty \mathbb{C}\langle m|$, 
where $\langle m'|m\rangle=\delta_{m',m}$ is the standard pairing.

We define linear operators $\mathbf{t},\mathbf{b}^+,\mathbf{b}^-$ on $\mathcal{F}$ by
\[
\mathbf{t}|m\rangle=\delta_{m,0}|m\rangle,\qquad
\mathbf{b}^+|m\rangle=|m+1\rangle,\qquad
\mathbf{b}^-|m\rangle=|m-1\rangle,
\]
where $|-1\rangle:=0$.
They satisfy
\[
\mathbf{t}\mathbf{b}^+=\mathbf{b}^- \mathbf{t}=0,\qquad
\mathbf{b}^+\mathbf{b}^-=1-\mathbf{t},\qquad
\mathbf{b}^-\mathbf{b}^+=1,
\]
and are referred to as the vacuum projector, creation operator, and annihilation operator, 
respectively.
This algebra is the $q\to0$ limit of the $q$-oscillator algebra.
See Figure \ref{bosonspaceactionqzero} for a graphical description of the bosonic Fock space.

\begin{figure}[htbp]
\centering
\includegraphics[width=8truecm]{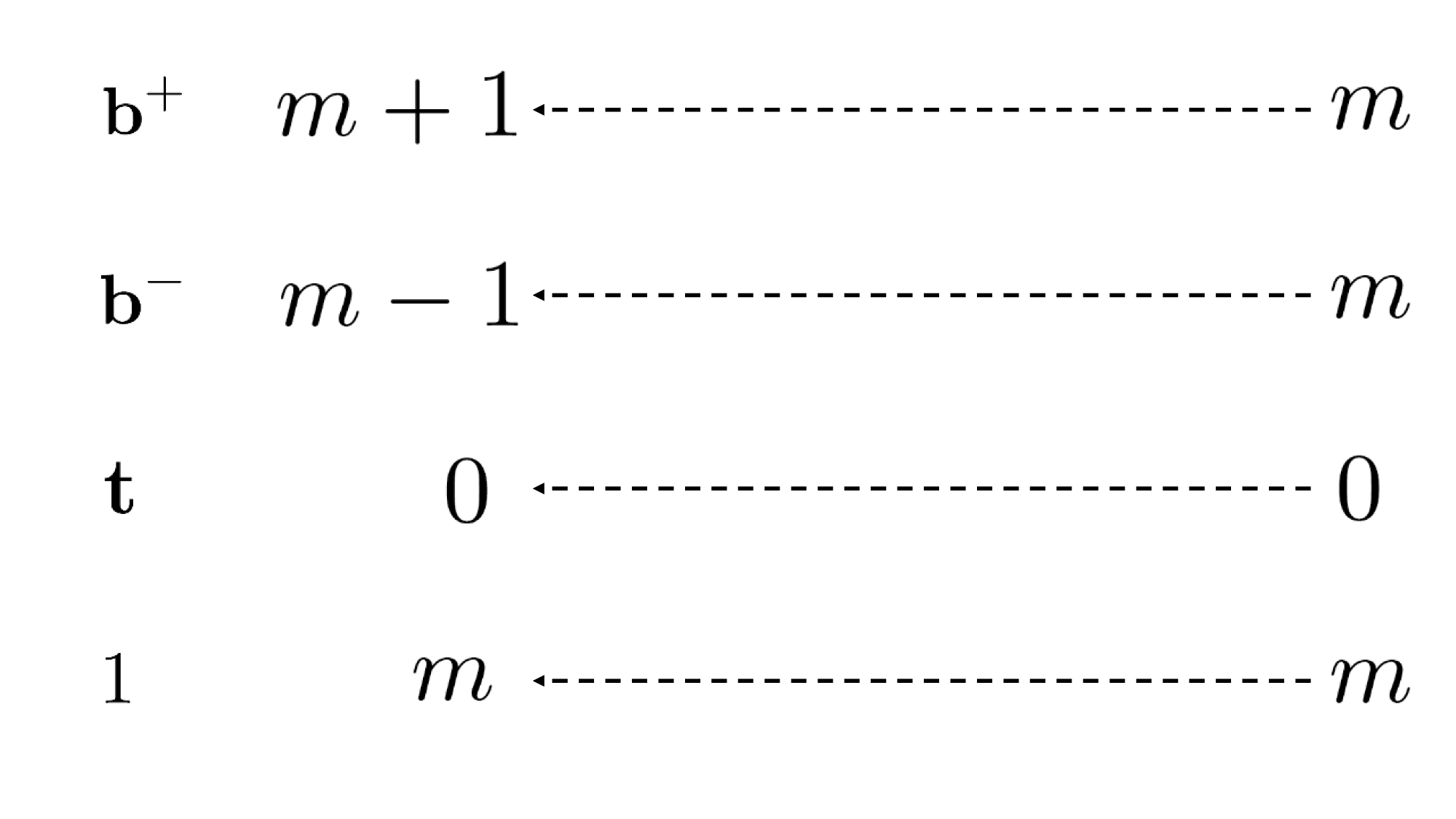}
\caption{A bosonic Fock space is depicted as a dashed line.  
The actions of the creation operator $\mathbf{b}^+$, the annihilation operator $\mathbf{b}^-$, 
the vacuum projection operator $\mathbf{t}$, and the identity $1$ on a state $|m\rangle$ 
are shown graphically.
}
\label{bosonspaceactionqzero}
\end{figure}

We next introduce the operator-valued two-dimensional $L$-operator 
$\mathcal{R}:V\otimes V\to V\otimes V$ defined by
\[
\mathcal{R}(v_i\otimes v_j)
=\sum_{a,b=0}^1 v_a\otimes v_b\, [\mathcal{R}]_{ij}^{ab},
\]
with nonzero matrix elements
\[
[\mathcal{R}]_{00}^{00}=1,\quad
[\mathcal{R}]_{11}^{11}=1,\quad
[\mathcal{R}]_{10}^{01}=\mathbf{b}^+,\quad
[\mathcal{R}]_{01}^{10}=\mathbf{b}^-,\quad
[\mathcal{R}]_{01}^{01}=\mathbf{t}.
\]
All other components vanish.
See Figure
\ref{qzerotetraloperator} for the graphical description of $\mathcal{R}$
which we use in this paper. 
In the figure, we color the half-edges which are connected to 1s in the two-dimensional vector spaces with
red, and color 0s with
blue.

\begin{figure}[htbp]
\centering
\includegraphics[width=12truecm]{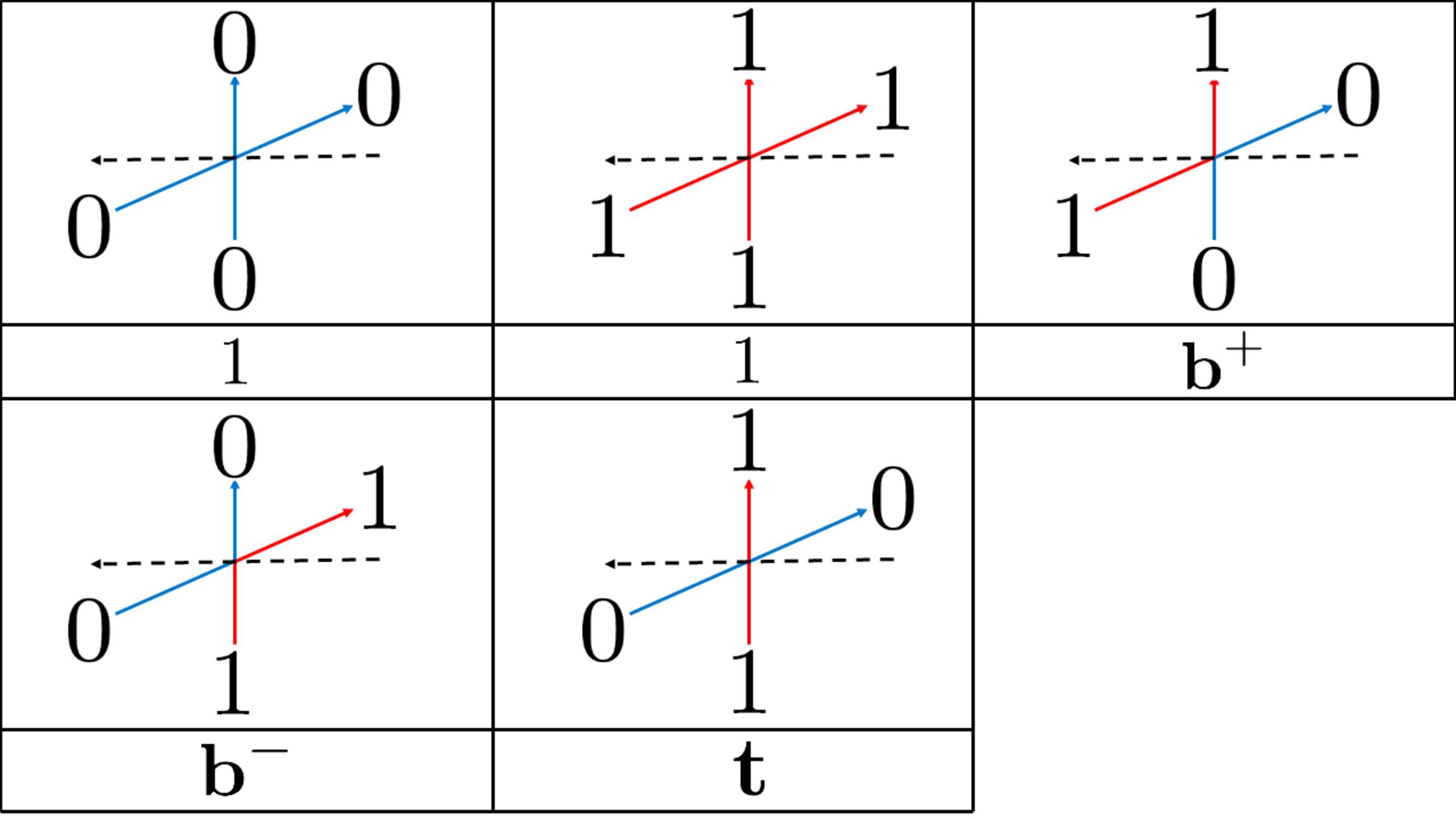}
\caption{
The non-zero operator-valued matrix elements of $\mathcal{R}$.
The two solid lines represent two two-dimensional spaces,
and the dashed line represents the bosonic Fock space.
}
\label{qzerotetraloperator}
\end{figure}

Let 
\[
\mathbb{Z}^3 = \mathbb{Z}\mathbf{e}_1 \oplus \mathbb{Z}\mathbf{e}_2 \oplus \mathbb{Z}\mathbf{e}_3
\] 
be the three-dimensional lattice with standard basis $\{\mathbf{e}_1,\mathbf{e}_2,\mathbf{e}_3\}$, 
and identify 
\[
\alpha\mathbf{e}_1+\beta\mathbf{e}_2+\gamma\mathbf{e}_3 \longleftrightarrow (\alpha,\beta,\gamma).
\]
We call the planes
\[
H^1_k = \{\alpha=k\},\qquad
H^2_\ell = \{\beta=\ell\},\qquad
H^3_m = \{\gamma=m\}
\]
the $k$-th row, the $\ell$-th column, and the $m$-th slice, respectively.

Let  
\[
D_n := \{(k,\ell) \in \mathbb{Z}^2 \mid k\ge 1,\ \ell\ge 1,\ k+\ell \le n\}
\]
be the triangular index set of cardinality $\frac{n(n-1)}{2}$.  
To each line 
\[
L_{k\ell} := H^1_k \cap H^2_\ell, \qquad (k,\ell)\in D_n,
\]
we assign a copy of the Fock space $\mathcal{F}$, denoted by $\mathcal{F}_{k\ell}$.

Each $\mathcal{F}_{k\ell}$ has basis 
\[
\{|m_{k\ell}\rangle_{k\ell} \;;\; m_{k\ell}=0,1,2,\dots\},
\] 
so the standard bases of  
\[
\mathcal{F}^{\otimes n(n-1)/2} \quad \text{and} \quad (\mathcal{F}^{\otimes n(n-1)/2})^*
\] 
are given by
\[
\Bigl\{ \bigotimes_{(k,\ell)\in D_n} |m_{k\ell}\rangle_{k\ell} \Bigr\},
\qquad
\Bigl\{ \bigotimes_{(k,\ell)\in D_n} {}_{k\ell}\langle m_{k\ell}| \Bigr\}.
\]
We write
\[
|\Omega\rangle := |0\rangle^{\otimes n(n-1)/2}, \qquad
\langle\Omega| := (|\Omega\rangle)^{*}
\]
for the vacuum and its dual.

We now introduce the linear operator $X_i^{(n)}(z)$ $(i=0,\dots,n)$ \cite{Kuniba,KMO1,KMO2} acting on 
\(
\mathcal{F}^{\otimes n(n-1)/2}\otimes \mathbb{C}[z].
\)
Its action is described graphically in Figure~\ref{XIoperatorfigureqzero}.  
Each of the $\frac{n(n-1)}{2}$ intersections carries a matrix element of the $L$-operator $\mathcal{R}$, chosen according to the local configuration of red and blue segments.

The operator $X_i^{(n)}(z)$ is defined as the weighted sum over all configurations of $\mathcal{R}^{\otimes n(n-1)/2}$ satisfying:  
(i) the bottom edges of columns $1,\dots,i$ are colored red;  
(ii) those of columns $i+1,\dots,n$ are colored blue.  
When no ambiguity arises, we write $X_i(z)$.

\begin{figure}[htbp]
\centering
\includegraphics[width=12truecm]{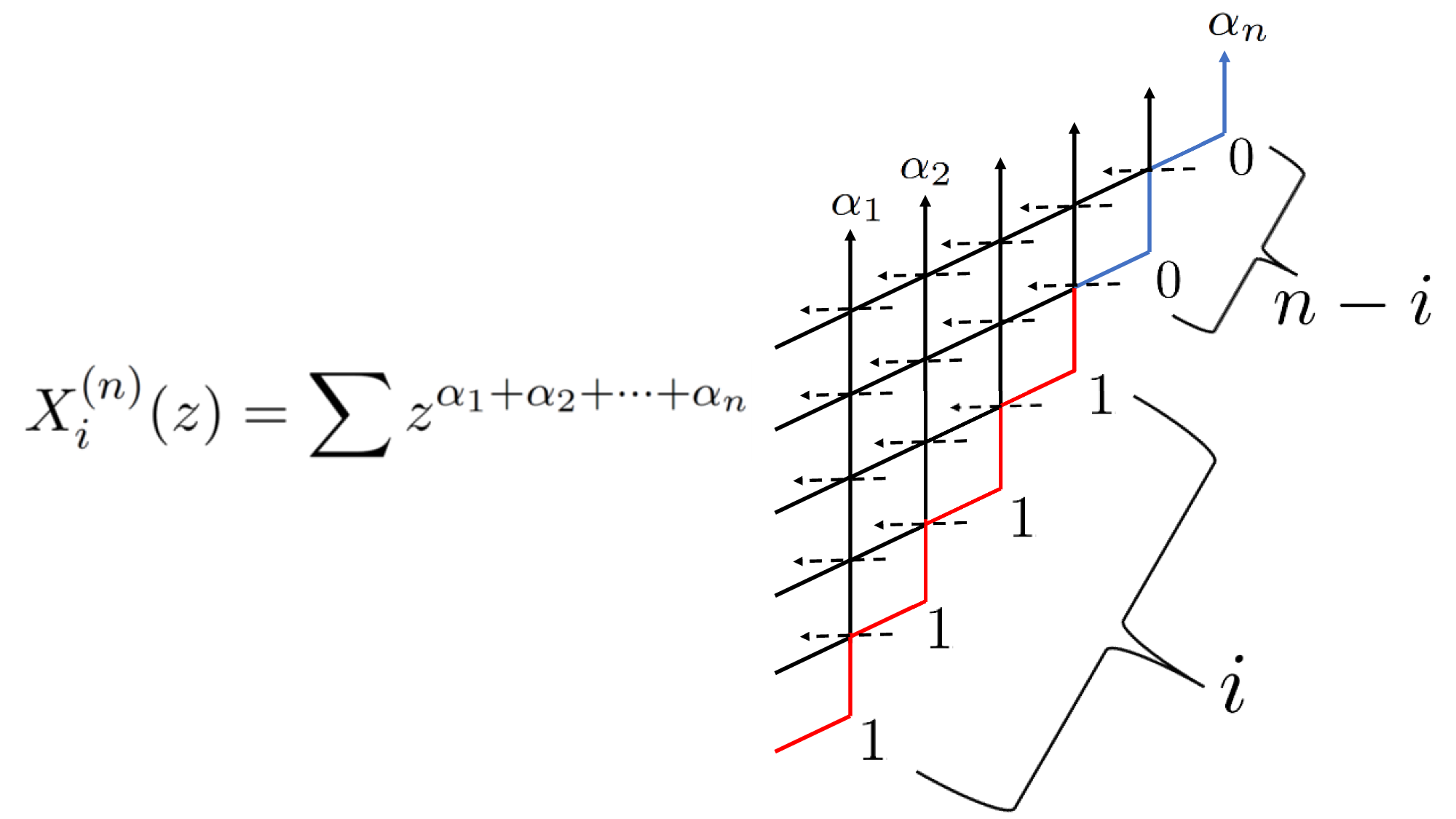}
\caption{The operator $X_i^{(n)}(z)$.  
We sum over all configurations except along one fixed boundary.  
The subscript $i$ indicates this fixed boundary condition, specifying that the first $i$ consecutive sites on that boundary are occupied by $1$s.  
The sum is weighted, and the power $\alpha_1 + \alpha_2 + \cdots + \alpha_n$ of $z$ counts the number of $1$s along the top boundary.  
Here, for $i=1,2,\dots,n$, 
each $\alpha_i \in \{0,1\}$ labels the state of the $i$-th site on the top boundary.
}
\label{XIoperatorfigureqzero}
\end{figure}

We also introduce $X_{i,j}^{(n)}$
as Figure~\ref{XIoperatorfigureqzerodecomp}, which we restrict the sum of $\alpha_1 + \alpha_2 + \cdots + \alpha_n$
to be $j$ and set $z=1$.
Note the relation $X_i^{(n)}=\sum_{j} z^j X_{i,j}^{(n)}$ which follows by definition.

\begin{figure}[htbp]
\centering
\includegraphics[width=8truecm]{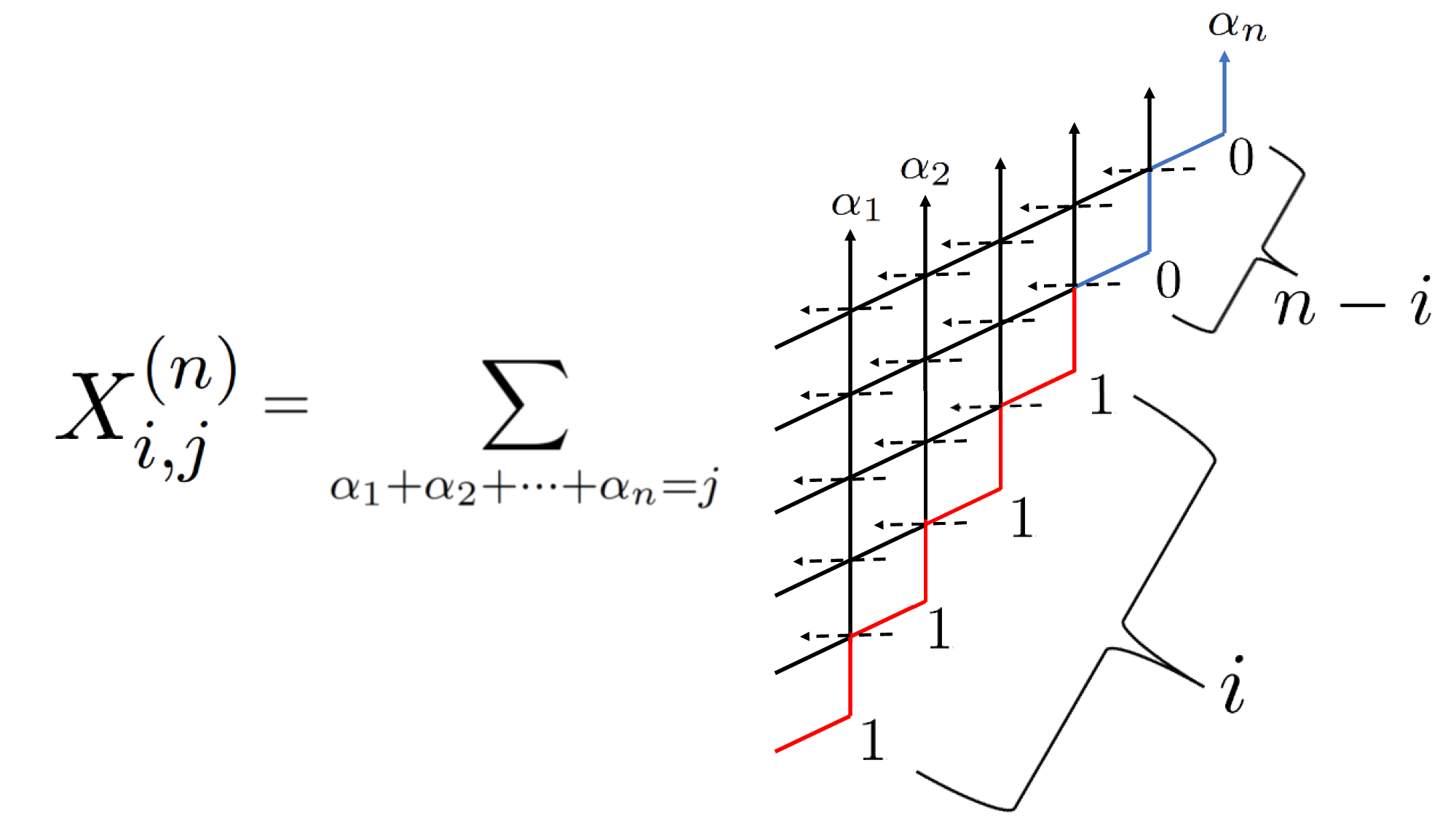}
\caption{The operator $X_{i,j}^{(n)}$. The sum of $\alpha_1 + \alpha_2 + \cdots + \alpha_n$
is restricted to $j$ and there is no $z$-dependence.
}
\label{XIoperatorfigureqzerodecomp}
\end{figure}

The operators $X_i(z)$
are shown to satisfy the following relations.
\begin{theorem} \cite{KMO2,Kuniba} \label{ThmFZ}
The operators $X_i(z)$ satisfy the  Zamolodchikov-Faddeev algebra relations
\begin{align}
X_i(x)X_j(y)=\left\{
\begin{array}{ll}
X_i(y)X_j(x)+(1-x/y)X_j(y)X_i(x) & i<j,  \\
X_i(y)X_i(x) & i=j,  \\
x/y X_i(y) X_j(x) & i>j, 
\end{array}
\right.
.
\end{align} \label{FZalg}
\end{theorem}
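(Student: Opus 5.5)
The natural route is the standard one behind \cite{KMO2,Kuniba}. We realize $X_i(z)$ as a matrix element of a layer-to-layer transfer matrix and derive the relations from the tetrahedron equation satisfied by $\mathcal{R}$, via a three-dimensional ``train'' argument. Concretely, the $n(n-1)/2$ vertices of $X_i^{(n)}(z)$ form a triangular array. Its blue/red boundary labels are a bottom row with $i$ reds followed by $n-i$ blues, and a top row $(\alpha_1,\dots,\alpha_n)$ weighted by $z^{\sum\alpha}$. We therefore write
\[
X_i(z)=\sum_{\alpha\in\{0,1\}^n} z^{|\alpha|}\,\bigl(v_\alpha^\ast\otimes \mathrm{id}\bigr)\, T\,\bigl(v_{1^i0^{n-i}}\otimes\mathrm{id}\bigr),
\]
where $T$ is the ordered product of the $\mathcal{R}$'s acting on $V^{\otimes n}\otimes\mathcal{F}^{\otimes n(n-1)/2}$. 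The product $X_i(x)X_j(y)$ is then a two-layer object acting on the same Fock spaces.

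\textbf{Step 1 (local intertwiner).} We need a two-dimensional object exchanging two layers. At $q=0$ this is the five-vertex-type $R$-matrix $S(x/y)$ on $V\otimes V$ (or on the relevant $\mathfrak{sl}_n$ fundamental representations). Its entries are fixed by matching the right-hand side: $S$ is diagonal on equal colors, and on $v_i\otimes v_j$ with $i<j$ it produces $v_i\otimes v_j$ with weight $1$ and $v_j\otimes v_i$ with weight $1-x/y$, with the weight $x/y$ in the opposite order. I will check, by the finite case analysis over the five nonzero entries of $\mathcal{R}$, a local ``$RLL$'' identity of the form $\mathcal{R}\,\mathcal{R}\,S=S\,\mathcal{R}\,\mathcal{R}$, i.e.\ the $q\to0$ tetrahedron equation with one $\mathcal{R}$ specialized to the spectral-parameter-dependent $S$. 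The $z$-grading enters only through conservation of the number of $1$'s, which turns each crossing weight into a function of $x/y$.

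\textbf{Step 2 (train argument).} We pull $S$ through the whole triangle, one vertex at a time, using Step 1. We start from the bottom boundary, where the colors are $1^i0^{n-i}$ and $1^j0^{n-j}$; there, applying $S$ to the product of boundary vectors is an explicit linear combination. We finish at the top boundary, where summing over $\alpha$ and $\alpha'$ with weights $x^{|\alpha|}y^{|\alpha'|}$ makes $S$ act by a scalar. Because the top boundary is summed freely, it behaves like the vector $\sum_\alpha z^{|\alpha|}v_\alpha$, and we must show that this vector is an eigenvector of $S$ up to the swap $x\leftrightarrow y$. Comparing the two sides then yields the three cases $i<j$, $i=j$ and $i>j$: at the bottom, $S$ acting on a pair of distinct colors $(i,j)$ yields the two-term combination when $i<j$ and the single term with factor $x/y$ when $i>j$.

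\textbf{Main obstacle.} The hard part is the bookkeeping in Step 2. The layers here are triangular rather than rectangular, so moving $S$ across changes the geometry at the diagonal edge. Moreover, the boundary vectors $1^i0^{n-i}$ live in a multi-line space, so $S$ must be understood as a composite of elementary two-line intertwiners. I would first do $n=2$ by direct computation; there is one Fock space, and $X_0,X_1,X_2$ are explicit combinations of $1,\mathbf{b}^\pm,\mathbf{t}$ times powers of $z$. This checks the normalizations. I would then run induction on $n$: we decompose $X^{(n)}_i$ by peeling off the last diagonal of vertices, which gives a combination of $X^{(n-1)}$'s tensored with operators on the new Fock spaces. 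This is the ``nested'' structure used in \cite{KMO2}, and the induction reduces the claim to the $n-1$ relations plus one row of local moves.
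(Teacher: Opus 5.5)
Your overall strategy (layer transfer matrices plus the tetrahedron equation) is the one the paper points to. The paper gives no proof and refers to \cite{KMO2,Kuniba}. However, your proposal breaks down already at the set-up. You write $X_i(z)=\sum_\alpha z^{|\alpha|}(v_\alpha^\ast\otimes\mathrm{id})\,T\,(v_{1^i0^{n-i}}\otimes\mathrm{id})$, with $T$ an ordered product of $\mathcal{R}$'s on $V^{\otimes n}\otimes\mathcal{F}^{\otimes n(n-1)/2}$. Every nonzero entry $[\mathcal{R}]^{ab}_{ij}$ has $a+b=i+j$, so such a $T$ preserves the number of $1$'s. Only $|\alpha|=i$ survives, and your operator is $z^iX_i(1)$. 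This cannot be the operator of the theorem. For $X_i(z)=z^iA_i$, the case $i>j$ reads $x^iy^jA_iA_j=x^{j+1}y^{i-1}A_iA_j$, which forces $A_iA_j=0$ for $i\ge j+2$. But by Theorem~\ref{mainthm}, $\langle\Omega|X_2(z_1)X_0(z_2)|\Omega\rangle=z_1(z_1+z_2)\neq0$ for every $n\ge2$. In particular $X_0(z)$, whose fixed legs are all blue, must have terms of positive degree in $z$, whereas in your model it is $z$-independent. So in the actual layer, the number of $1$'s on the $z$-weighted legs is not determined by the fixed legs. The layer is not a number-conserving map from the fixed boundary to the weighted one. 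There are summed legs your formula omits: unweighted free legs, and/or incoming and outgoing legs mixed on the same side. Your Step 2 is built on this wrong picture, since it runs $S$ from a fixed ``input'' boundary to a summed ``output'' boundary $(v_0+zv_1)^{\otimes n}$. Your remark that the $z$-grading ``enters only through conservation'' is exactly the point that fails.

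Step 1 is not established either. The tetrahedron equation for the three-dimensional $L$-operator has four factors on each side, one of them the 3D $R$ acting on three Fock spaces. Your three-factor relation $\mathcal{R}\mathcal{R}S=S\mathcal{R}\mathcal{R}$ with a scalar $S$ is not an instance of it. As written it is not even well typed. Two stacked layers share every Fock line, and the two vertices on a given line involve four two-dimensional spaces, so an exchange needs intertwiners on two pairs of them. Moreover, the $S$ you describe (weights $1$, $1-x/y$, $x/y$ on colours $i<j$) is the species-level $R$-matrix on $\mathbb{C}^{n+1}\otimes\mathbb{C}^{n+1}$, read off from the statement itself. You would still have to show two things:
\begin{itemize}
\item that this $S$ is produced by local intertwiners on pairs of two-dimensional spaces, restricted to the span of the $v_{1^i0^{n-i}}$;
\item that the boundary vectors on all summed legs are carried to their $x\leftrightarrow y$ swaps.
\end{itemize}
These are where the proof actually lives, and your last paragraph leaves them, together with the inductive step on $n$, as a plan. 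To complete the argument you need one of two routes. The first is the bilinear identities for layer transfer matrices with general boundary conditions as derived in \cite{KMO2}, followed by their $q=0$ specialisation. The second is a fully written induction on $n$ through the nested structure of $X^{(n)}_i$ in terms of $X^{(n-1)}$.
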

Theorem~\ref{ThmFZ} was derived in \cite{KMO2,Kuniba}
as a consequence of the bilinear identities for layer transfer matrices,
which in turn follow from the tetrahedron equation, a consistency condition
for the three-dimensional $L$-operator.
For further details, including an explanation of the tetrahedron equation,
see \cite{BaSe,KMO2,Kuniba}.

\subsection{Partition functions, tensor Schur polynomials and shuffle formula}

We introduce the following class of three-dimensional partition functions
\begin{align}\label{eq:partition_function}
\langle \Omega |X_{i_1}^{(n)}(z_{1}) X_{i_{2}}^{(n)} (z_{2}) \cdots
X_{i_{m-1}}^{(n)}(z_{m-1}) X_{i_m}^{(n)}(z_{m}) | \Omega \rangle,
\end{align}
which is graphically represented in Figure \ref{3Dpartitionfunctionfigure}.
\begin{figure}[htbp]
\centering
\includegraphics[width=12truecm]{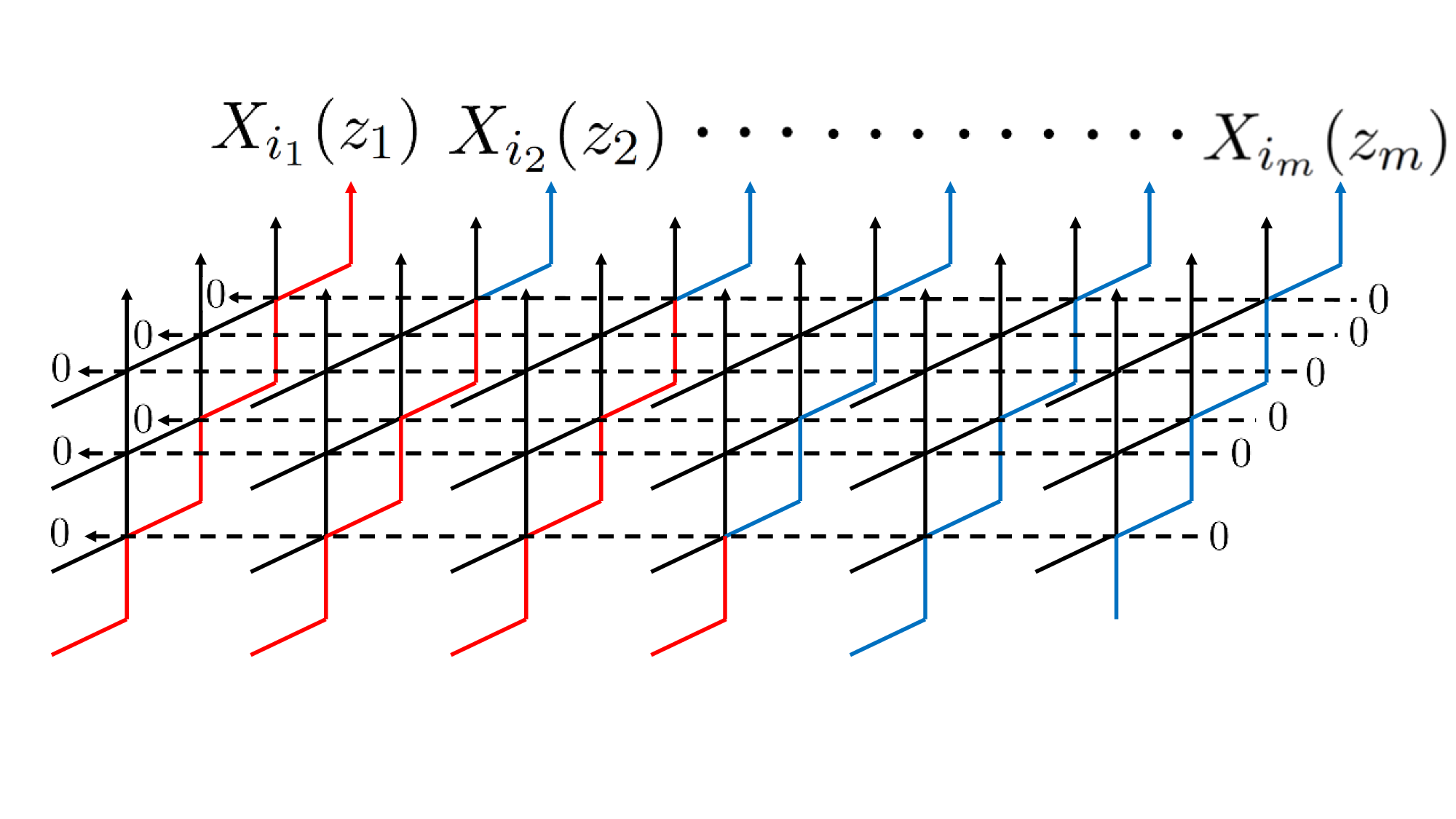}
\caption{The  partition functions
$\langle \Omega| X_{i_1}(z_{1}) X_{i_{2}} (z_{2}) \cdots
X_{i_{m-1}}(z_{m-1}) X_{i_m}(z_{m})|\Omega \rangle$.
}
\label{3Dpartitionfunctionfigure}
\end{figure}

In the previous paper \cite{IMO}, {\color{black}the authors} investigated the case
$n \geq i_1 \geq i_2 \geq \cdots \geq i_m \geq 0$ and established a
correspondence with the Schur polynomials.
We first study a more general class and investigate
its algebraic structures.

First, let us recall some previous results.
We also correct minor mistakes in the earlier work. 
Although it does not affect the main results there, 
this correction plays an important role in the present discussion.

\begin{proposition} \label{propmultipleformulaone}
For $n \geq i_1 > i_{2} > \cdots > i_m \geq 0$,
we have the following commutation relations:
\begin{align}
&\frac{X_{i_1}({\bf z}_1) X_{{i_{2}}}({\bf z}_{2}) \cdots X_{i_{m-1}}({\bf z}_{m-1}) X_{i_m}({\bf z}_m)}
{\prod_{k=1}^m {\bf z}_k^{m-k}}
\nonumber \\
=&\sum_{({\bf w}_1,{\bf w}_2,\dots,{\bf w}_m)}
\frac{1}{ \prod_{k=1}^m {\bf w}_k^{m-k}  \prod_{1 \le j<k \le m}(1-{\bf w}_k/{\bf w}_j )} 
 X_{i_m}({\bf w}_m) X_{i_{m-1}}({\bf w}_{m-1}) \cdots X_{i_2}({\bf w}_{2}) X_{i_1}({\bf w}_1).
\label{multiplecommrel}
\end{align}
Here, we take the sum over all 
$({\bf w}_1,{\bf w}_2,\dots,{\bf w}_m) $
such that
${\bf w}_i \ (i=1,\dots,m)$ are unordered sets of variables satisfying $|{\bf w}_i|=|{\bf z}_i|$
and 
${\bf w}_1  \cup {\bf w}_2  \cup \cdots \cup {\bf w}_m={\bf z}_1 \cup {\bf z}_2  \cup \cdots \cup {\bf z}_m$
.
\end{proposition}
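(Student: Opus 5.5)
We would argue by reducing everything to the rank-two, one-variable exchange relation that follows from Theorem~\ref{ThmFZ}, and then build up. First, the case $i=j$ of Theorem~\ref{ThmFZ} says $X_i(x)X_i(y)=X_i(y)X_i(x)$. Hence $X_i(\mathbf z)$ is well defined and symmetric in $\mathbf z$, and both sides of \eqref{multiplecommrel} make sense.

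\emph{Step 1: one variable, two indices.} Take $i>j$. Theorem~\ref{ThmFZ} gives $X_i(y)X_j(x)=\tfrac{y}{x}X_i(x)X_j(y)$. Reading the $i<j$ relation with the roles of the indices exchanged gives
\[
X_j(x)X_i(y)=X_j(y)X_i(x)+(1-x/y)X_i(y)X_j(x).
\]
Substituting the first relation into the second and solving for $X_i(x)X_j(y)$ yields
\[
\frac{X_i(x)X_j(y)}{x}=\frac{1}{x(1-y/x)}X_j(y)X_i(x)+\frac{1}{y(1-x/y)}X_j(x)X_i(y).
\]
This is exactly \eqref{multiplecommrel} for $m=2$ and $|\mathbf z_1|=|\mathbf z_2|=1$: the two terms correspond to $\mathbf w_1=\{x\}$ and $\mathbf w_1=\{y\}$.

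\emph{Step 2: two blocks of arbitrary size.} Next we prove the case $m=2$ with $|\mathbf z_1|=a$ and $|\mathbf z_2|=b$ by induction on $a+b$. We write $X_{i}(\mathbf z_1)X_j(\mathbf z_2)=X_i(\mathbf z_1')X_i(x)X_j(y)X_j(\mathbf z_2')$ and move variables across one at a time using Step 1. Then we regroup using the symmetry of each $X_k(\mathbf w)$. The result is a sum over pairs $(\mathbf w_1,\mathbf w_2)$ in which each coefficient is a sum of products of the elementary factors $\frac{1}{w(1-w'/w)}$.

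The coefficient of a fixed $X_j(\mathbf w_2)X_i(\mathbf w_1)$ must then be identified with $\mathbf z_1\,\mathbf w_1^{-1}\prod(1-\mathbf w_2/\mathbf w_1)^{-1}$. This is a rational-function identity of symmetrization type. We would prove it by induction, either via partial fractions in one distinguished variable or by checking residues at the poles $w=w'$ and the behaviour at infinity. We expect this bookkeeping to be the main obstacle. What to try first: both sides, after multiplying by $\prod(\mathbf w_1-\mathbf w_2)$, are polynomials of controlled degree, so it suffices to compare them at enough specializations.

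\emph{Step 3: $m$ blocks.} Finally we induct on $m$, reversing the word $X_{i_1}\cdots X_{i_m}$ by a sequence of adjacent block transpositions, for instance by first moving $X_{i_m}$ to the far left. Each transposition of the blocks $k<l$ (with $i_k>i_l$) is an application of Step 2. It contributes the factor $\prod(1-\mathbf w_l/\mathbf w_k)^{-1}$ and a monomial.

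It remains to check that these monomials combine to $\prod_k\mathbf z_k^{m-k}/\prod_k\mathbf w_k^{m-k}$. Each of the $m-k$ transpositions of block $k$ with the later blocks changes the power of its variable set by one. So block $k$ acquires exactly the exponent shift $m-k$, and the sets relabel consistently because the summation over $(\mathbf w_1,\dots,\mathbf w_m)$ is over all set partitions of the same type. The general case is therefore a consequence of Step 2 and associativity of the rearrangement.
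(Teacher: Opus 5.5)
\textbf{The proposal has genuine gaps in Steps 2 and 3, which carry all the content of the statement.} Step 1 is correct. Your rearrangement of the two Zamolodchikov--Faddeev relations gives $X_i(x)X_j(y)=(1-y/x)^{-1}\bigl(X_j(y)X_i(x)-X_j(x)X_i(y)\bigr)$, and hence the case $m=2$, $|\mathbf z_1|=|\mathbf z_2|=1$.

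In Step 2 you defer the identification of the coefficient, and it is a genuine resummation over paths. Already for $\mathbf z_1=\{x_1,x_2\}$, $\mathbf z_2=\{y\}$, move $X_j(y)$ leftwards and divide by $x_1x_2$. The coefficient of $X_j(x_1)X_i(x_2)X_i(y)$ receives two contributions, $-\frac{1}{(x_1-y)(x_2-y)}$ and $\frac{1}{(x_2-y)(x_1-x_2)}$. They combine to the required $\frac{1}{(x_1-x_2)(x_1-y)}$ only after a partial-fraction step.

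Step 3 is incorrect as stated. Each application of Step 2 produces a sum over redistributions of the current, intermediate sets. So its factor is not $\prod(1-\mathbf w_l/\mathbf w_k)^{-1}$ in the final sets, and one final tuple is reached along several paths. The monomial count is also wrong: the old/new ratios that block $1$ collects from its $m-1$ swaps telescope to $\mathbf z_1/\mathbf w_1$, not $(\mathbf z_1/\mathbf w_1)^{m-1}$.

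Concretely, take $m=3$ with one variable per block and perform the swaps $(2,3),(1,3),(1,2)$ on $X_{i_1}(z_1)X_{i_2}(z_2)X_{i_3}(z_3)$. The term $X_{i_3}(z_1)X_{i_2}(z_2)X_{i_1}(z_3)$ arises from two paths: (first, second, first term) and (second term three times). Their coefficients are $\frac{z_1z_2z_3}{(z_2-z_3)^2(z_1-z_3)}$ and $-\frac{z_1z_2^2}{(z_2-z_3)^2(z_1-z_2)}$. Neither has the product form you assert; only their sum equals $\frac{z_1^2z_2}{(z_3-z_2)(z_3-z_1)(z_2-z_1)}$. So ``associativity of the rearrangement'' hides a second symmetrization identity that you never address.

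\textbf{Missing idea: full symmetry of the normalized product.} The normalized product $\prod_k\mathbf z_k^{-(m-k)}X_{i_1}(\mathbf z_1)\cdots X_{i_m}(\mathbf z_m)$ is invariant under all permutations of $\mathbf z_1\cup\cdots\cup\mathbf z_m$, not just within blocks. This follows from the case $i>j$ of Theorem~\ref{ThmFZ}, $X_i(x)X_j(y)=(x/y)X_i(y)X_j(x)$, applied across adjacent blocks, together with the case $i=j$.

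The paper's argument (from \cite{SU,IMO}, used for \eqref{multipleforshuffle} and \eqref{multipleformulaone}) exploits this in two moves. To find the coefficient of a fixed $(\mathbf w_1,\dots,\mathbf w_m)$, first relabel the left-hand side so that $\mathbf w_k$ sits in block $k$; this yields the whole factor $\prod_k\mathbf z_k^{m-k}/\mathbf w_k^{m-k}$ at once. Then reverse the word keeping only the first term $(1-y/x)^{-1}X_j(y)X_i(x)$ of the exchange relation at every step. This is justified because a path that uses the second term even once cannot end with each $\mathbf w_k$ back in block $k$. The product of these first-term factors is $\prod_{j<k}(1-\mathbf w_k/\mathbf w_j)^{-1}$, so no rational identity or path resummation is needed.

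If you want to keep your inductive scheme, you can repair Step 3 by commuting $X_{i_1}(\mathbf z_1)$ past the symmetric normalized product of blocks $2,\dots,m$ as a single unit, as in \eqref{multipleformulaone}. Then each final tuple comes from a unique splitting. However, that relation is itself obtained by the symmetry argument.

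A completed brute-force verification would have one advantage. It would not presuppose that the coefficients in the reversed ordered products are uniquely determined, which the symmetry argument takes for granted. The proposal stops before reaching that point.
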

$\mathbf{Note}$:
We use the form
$\frac{1}{ \prod_{k=1}^m {\bf w}_k^{m-k}  \prod_{1 \le j<k \le m}(1-{\bf w}_k/{\bf w}_j )} $,
 in the middle of
\cite[(3.12)]{IMO},
for the coefficients in the right-hand side of \eqref{multiplecommrel}.
Rewriting as $  \prod_{1 \le j<k \le m} \frac{1}{ {\bf w}_j -{\bf w}_k}$
is correct only when $|{\bf z}_j |=1$ for all $j$.

The case $t=0$ of \cite[Prop. 4]{Pragacz} (degeneration of the symmetrization
formula from the Hall-Littlewood polynomials to the Schur polynomials) is the following.
\begin{proposition} \cite{Pragacz} \label{Symmformula}
For ${\bf z}=({\bf z}_1,{\bf z}_2,\dots,{\bf z}_m)$, we have
\begin{align}
s_{(\lambda_1^{|{\bf z}_1|},\lambda_2^{|{\bf z}_2|},\dots,\lambda_m^{|{\bf z}_m|})}({\bf z})=\sum_{({\bf w}_1,{\bf w}_2,\dots,{\bf w}_m)}
\frac{1}{\prod_{1 \le j<k \le m}(1-{\bf w}_k/{\bf w}_j )}
 {\bf w}_1^{\lambda_1} {\bf w}_2^{\lambda_2}
\cdots {\bf w}_m^{\lambda_m}.
\label{symmformulaSchur}
\end{align}
Here, we take the sum over all 
$({\bf w}_1,{\bf w}_2,\dots,{\bf w}_m) $
such that
${\bf w}_i \ (i=1,\dots,m)$ are unordered sets of variables satisfying $|{\bf w}_i|=|{\bf z}_i|$
and 
${\bf w}_1  \cup {\bf w}_2  \cup \cdots \cup {\bf w}_m={\bf z}_1 \cup {\bf z}_2  \cup \cdots \cup {\bf z}_m$
.
\end{proposition}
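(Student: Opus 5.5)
We would prove the identity \eqref{symmformulaSchur} by first establishing the fully symmetrized (all blocks of size one) version, and then grouping the terms according to cosets of a Young subgroup. Put $N=|{\bf z}_1|+\cdots+|{\bf z}_m|$ and relabel the variables as $z_1,\dots,z_N$, with ${\bf z}_1$ occupying the first $|{\bf z}_1|$ positions, ${\bf z}_2$ the next $|{\bf z}_2|$ positions, and so on. Let $\mu=(\lambda_1^{|{\bf z}_1|},\dots,\lambda_m^{|{\bf z}_m|})$, so that $\mu$ is constant on each block.

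\emph{Step 1 (full symmetrization).} For an arbitrary partition $\mu$ of length at most $N$ we show that
\begin{align*}
s_\mu(z_1,\dots,z_N)=\sum_{\sigma\in S_N}\sigma\Bigl( z^{\mu}\prod_{1\le i<j\le N}\frac{1}{1-z_j/z_i}\Bigr).
\end{align*}
This follows from \eqref{Schurdef}. Expanding the numerator determinant gives $\sum_\sigma \mathrm{sgn}(\sigma)\,\sigma(z^{\mu+\delta})$ with $\delta=(N-1,\dots,0)$. The Vandermonde denominator $\prod_{i<j}(z_i-z_j)$ is alternating, so $\mathrm{sgn}(\sigma)/\prod_{i<j}(z_i-z_j)=\sigma\bigl(1/\prod_{i<j}(z_i-z_j)\bigr)$. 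We then use $z^{\delta}=\prod_{i<j}z_i$, which gives $z^\delta/\prod_{i<j}(z_i-z_j)=\prod_{i<j}(1-z_j/z_i)^{-1}$.

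\emph{Step 2 (coset decomposition).} Let $H=S_{|{\bf z}_1|}\times\cdots\times S_{|{\bf z}_m|}\subset S_N$ be the Young subgroup preserving the blocks, and write each $\sigma$ as $\sigma=\rho\tau$ with $\rho$ a minimal coset representative and $\tau\in H$. The cosets $\rho H$ correspond bijectively to the ordered set partitions $({\bf w}_1,\dots,{\bf w}_m)$ in the sum, with ${\bf w}_k=\rho({\bf z}_k)$.

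Split the product $\prod_{i<j}(1-z_j/z_i)^{-1}$ into pairs lying in different blocks and pairs lying in the same block. Since $\tau$ only permutes variables within blocks, it fixes the cross-block part $\prod_{j<k}(1-{\bf z}_k/{\bf z}_j)^{-1}$. It also fixes $z^\mu={\bf z}_1^{\lambda_1}\cdots{\bf z}_m^{\lambda_m}$, because $\mu$ is constant on blocks. Applying $\rho$ then yields the summand ${\bf w}_1^{\lambda_1}\cdots{\bf w}_m^{\lambda_m}\prod_{j<k}(1-{\bf w}_k/{\bf w}_j)^{-1}$, multiplied by
\begin{align*}
\rho\Bigl(\prod_{k=1}^m\ \sum_{\tau_k\in S_{|{\bf z}_k|}}\tau_k\prod_{\substack{i<j\\ i,j\in \text{block }k}}\frac{1}{1-z_j/z_i}\Bigr).
\end{align*}

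\emph{Step 3 (inner sums equal one).} Each inner sum is exactly the right-hand side of Step 1 with $\mu=0$ in the variables ${\bf z}_k$. It therefore equals $s_\emptyset({\bf z}_k)=1$, and this proves \eqref{symmformulaSchur}.

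The main point needing care is Step 2. One has to check that the cross-block factor really is $H$-invariant and depends only on the coset. The unordered-set notation $1-{\bf w}_k/{\bf w}_j$ makes this transparent, since the factor is a product over all pairs of elements drawn from two distinct blocks. The remaining steps are routine.
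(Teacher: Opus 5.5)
Your proof is correct, but it takes a different route from the paper. The paper gives no argument of its own: it quotes the identity as the $t=0$ case of Pragacz's Hall--Littlewood symmetrization formula, i.e.\ essentially $P_\mu(z;t)=\sum_{w\in S_N/S_N^{\mu}} w\bigl(z^{\mu}\prod_{\mu_i>\mu_j}\frac{z_i-tz_j}{z_i-z_j}\bigr)$ together with $P_\mu(z;0)=s_\mu(z)$.

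You instead work directly from the bialternant \eqref{Schurdef}, in three steps. In Step 1 you symmetrize over all of $S_N$; your sign and Vandermonde bookkeeping and the identity $z^\delta=\prod_{i<j}z_i$ are right. In Step 2 you group the sum over cosets of the Young subgroup, where $z^\mu$ and the cross-block factor are $H$-invariant and depend only on the coset. In Step 3 the within-block sums collapse to $s_\varnothing=1$.

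This is in fact the mechanism underlying the Hall--Littlewood formula as well. For generic $t$ the within-block sum $\sum_{\tau\in S_r}\tau\bigl(\prod_{i<j}\frac{z_i-tz_j}{z_i-z_j}\bigr)$ equals $\prod_{i=1}^{r}\frac{1-t^i}{1-t}$ rather than $1$. That is why the $t$-version is tied to the stabilizer of $\mu$, or else carries a normalization.

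What your approach buys is self-containment and slightly more generality. At $t=0$ the within-block factor is $1$, so the coset decomposition can be taken for any block structure, including blocks with $\lambda_j=\lambda_{j+1}$. Moreover, Step 1 never uses that $\mu$ is weakly decreasing, so the identity holds for any integer sequence $(\lambda_1,\dots,\lambda_m)$ with $s$ defined by the bialternant.

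The citation route, by contrast, gives the $t$-deformed statement at no extra cost, but it presupposes the Hall--Littlewood theory.
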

$\mathbf{Note}$:
The correct form of the coefficients
in the right hand side of \eqref{symmformulaSchur} 
is $\frac{1}{\prod_{1 \le j<k \le m}(1-{\bf w}_k/{\bf w}_j )}$
as is written in \cite[Prop. 4]{Pragacz}, which we correct from \cite[(3.5)]{IMO}.
Rewriting
$\frac{1}{\prod_{1 \le j<k \le m}(1-{\bf w}_k/{\bf w}_j )}$
 as
$
\frac{1}{\prod_{1 \le j<k \le m}({\bf w}_j-{\bf w}_k) } {\bf w}_1^{m-1} {\bf w}_2^{m-2}
\cdots {\bf w}_m^{0}$
is correct only when $|{\bf z}_j |=1$ for all $j$.

These two corrections do not affect the main theorem from {\color{black}the previous} paper \cite[Thm. 3.6]{IMO}.

\begin{theorem} \label{mainthm} \cite{IMO}
For $n \geq i_1 > i_{2} > \cdots > i_m \geq 0$, we have
\begin{align}
&\langle \Omega |
X_{i_1}({\bf z}_1) X_{{i_{2}}}({\bf z}_{2}) \cdots X_{i_{m-1}}({\bf z}_{m-1}) X_{i_m}({\bf z}_m)
| \Omega \rangle
\nn \\
=&\prod_{k=1}^m {\bf z}_k^{m-k}
s_{((i_1-m+1)^{|{\bf z}_1|},(i_2-m+2)^{|{\bf z}_2|},\dots,i_m^{|{\bf z}_m|})}({\bf z}).
 \label{schurtheorem}
\end{align}
\end{theorem}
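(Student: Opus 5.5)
We will reduce the vacuum expectation value to Proposition~\ref{Symmformula} by using the multiple commutation relation of Proposition~\ref{propmultipleformulaone}. This moves the operators into increasing index order, where their action on the vacuum is elementary.

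\textbf{Step 1: vacuum action of a single operator.} We first establish that for $i \le j$, the product $X_i(x)X_j(y)$ acting on the vacuum behaves simply. More precisely, we will show the following facts:
\[
\langle \Omega | X_i(z) = z^{i}\,\langle \Omega | \quad\text{modulo terms killed by later operators},
\]
and, more usefully, that
\begin{align*}
\langle\Omega| X_{i_m}({\bf w}_m)\cdots X_{i_1}({\bf w}_1)|\Omega\rangle
 = {\bf w}_1^{i_1}{\bf w}_2^{i_2}\cdots {\bf w}_m^{i_m}
\end{align*}
when $i_m<\cdots<i_1$.

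The mechanism is the ice-rule/conservation in the five-vertex picture. Acting on $\langle\Omega|$ from the left, all Fock spaces begin in the vacuum, so $\mathbf{b}^-$ cannot appear. This forces the red lines entering at the bottom of columns $1,\dots,i$ to go straight up, since $\mathbf{t}$ and $1$ are the only vertices compatible with no annihilation. The top boundary then has exactly $i$ ones, which gives the weight $z^i$.

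The operators must then return to $|\Omega\rangle$, so no net creation is allowed. In increasing order, each successive operator sees only vacuum states, so each factor acts as the scalar $z^{i}$. I will verify this by the graphical argument of \cite{IMO}: with vacuum boundary on both sides, any $\mathbf{b}^+$ would have to be compensated by a $\mathbf{b}^-$ in a later layer. In increasing order of $i$ this is impossible, because later layers have at least as many red lines filling the left columns.

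\textbf{Step 2: apply the commutation relation.} Take the expectation of \eqref{multiplecommrel} and use Step 1 for each term on the right-hand side. This gives
\begin{align*}
\frac{\langle\Omega|X_{i_1}({\bf z}_1)\cdots X_{i_m}({\bf z}_m)|\Omega\rangle}{\prod_k{\bf z}_k^{m-k}}
=\sum_{({\bf w}_1,\dots,{\bf w}_m)}\frac{\prod_k {\bf w}_k^{\,i_k-m+k}}{\prod_{j<k}(1-{\bf w}_k/{\bf w}_j)}.
\end{align*}

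\textbf{Step 3: identify the Schur polynomial.} Apply Proposition~\ref{Symmformula} with $\lambda_k=i_k-m+k$. This sequence is weakly decreasing because $i_k>i_{k+1}$ gives $\lambda_k\ge\lambda_{k+1}$. It is also nonnegative, since $i_k\ge m-k$ follows from strict decrease down to $i_m\ge0$. The right-hand side is therefore the Schur polynomial in \eqref{schurtheorem}, and multiplying back by $\prod_k{\bf z}_k^{m-k}$ gives the claim.

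\textbf{Main obstacle.} The main obstacle is Step 1, namely the rigorous evaluation of the increasingly ordered product on the vacuum. Two points need care:
\begin{itemize}
\item[(a)] the intermediate states remain the vacuum, so the operators act as scalars;
\item[(b)] no configuration with a $\mathbf{t}$-weight zero or a nontrivial $\mathbf{b}^\pm$ survives.
\end{itemize}
I would first try induction on $m$. The innermost step is to show $X_i(z)|\Omega\rangle$ is supported on states that are annihilated by $\langle\Omega|X_{j}(\cdot)\cdots$ for $j\le i$, except for its vacuum component $z^i|\Omega\rangle$. The triangular lattice geometry of $D_n$, with the red lines entering at the bottom of columns $1,\dots,i$, should make this a direct case check.
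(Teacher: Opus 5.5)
Your Steps 2 and 3 are exactly the paper's proof. The paper takes $\langle\Omega|\cdot|\Omega\rangle$ of \eqref{multiplecommrel}, evaluates the reversed (index-increasing) products by \eqref{top}, and then symmetrizes via \eqref{symmformulaSchur} with $\lambda_k=i_k-m+k$. Your Step 1 is precisely Lemma~\ref{simplestpartition}. The paper does not reprove it but takes it from \cite[Lem.~3.2]{IMO}, so citing it completes your argument.

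If you want Step 1 to be self-contained, the sketch you give does not work as written.

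\begin{itemize}
\item Next to $\langle\Omega|$ it is $\mathbf{b}^+$ that is excluded, not $\mathbf{b}^-$: $\langle 0|\mathbf{b}^+=0$, while $\langle 0|\mathbf{b}^-=\langle 1|\neq 0$.
\item Excluding one of $\mathbf{b}^{\pm}$ still leaves the other available. So a single vacuum boundary does not by itself force the red lines straight up for general $i$. The freezing arguments behind \eqref{actiononvacuum} and \eqref{actionondualvacuum} rely on all bottom edges having the same color.
\item Your point (a) is false as stated. Specialize the explicit $X^{(3)}_1$ listed in the $q$-deformed section to $q=0$, $z=1$, where it reduces to the five-vertex operator. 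This gives $X^{(3)}_1(1)|\Omega\rangle=|\Omega\rangle+\mathbf{b}^+_{12}|\Omega\rangle$, so intermediate states need not be the vacuum and the factors do not act as scalars.
\end{itemize}

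What is true, and what must be proved, is your weaker ``innermost step'': the non-vacuum components are annihilated by the remaining index-increasing bra $\langle\Omega|X_{i_1}\cdots$. Establishing this requires using both boundary vacua together. That is exactly the content you have left as an unperformed ``direct case check''.
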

\eqref{schurtheorem} follows from \eqref{multiplecommrel},
\eqref{symmformulaSchur} and the following factorization property
\cite[Lem. 3.2]{IMO}.
\begin{lemma} \label{simplestpartition} \cite{IMO}
When $n \geq i_m \geq i_{m-1} \geq \dots \geq i_2 \geq i_1 \geq 0$,
we have
\begin{align}
\langle \Omega| X_{i_1}(z_{1}) X_{i_{2}} (z_{2}) \cdots
X_{i_{m-1}}(z_{m-1}) X_{i_m}(z_{m})|\Omega \rangle
=z_1^{i_1} \cdots z_{m}^{i_m}.
\label{top}
\end{align}
\end{lemma}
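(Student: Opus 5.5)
Since $\langle\Omega|$ and $|\Omega\rangle$ are vacuum vectors, I would prove the lemma by induction on $m$, showing that the rightmost operator acts on the vacuum by a scalar (up to a correction killed by $\langle\Omega|$). More precisely, I would aim for the intermediate claim that for $i_{m-1}\le i_m$,
\[
\langle\Omega|\, X_{i_1}(z_1)\cdots X_{i_{m-1}}(z_{m-1})X_{i_m}(z_m)|\Omega\rangle
= z_m^{i_m}\,\langle\Omega|\, X_{i_1}(z_1)\cdots X_{i_{m-1}}(z_{m-1})|\Omega\rangle .
\]
Iterating this gives $z_1^{i_1}\cdots z_m^{i_m}$. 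The base case $m=1$ is $\langle\Omega|X_i(z)|\Omega\rangle=z^i$, which I would check directly from the lattice definition.

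\textbf{Base case.} With all Fock spaces in the vacuum at both ends, every vertex must carry a weight $1$ or $\mathbf{t}$. The $\mathbf{b}^\pm$ entries would change the occupation number, and that cannot be undone within a single vertex on the $L_{k\ell}$ line. Inspecting the nonzero entries of $\mathcal{R}$, the weights $1$ and $\mathbf{t}$ correspond to $00\to00$, $11\to11$ and $01\to01$. In other words, colours pass straight through every vertex with no exchange. So the configuration is forced: the red bottom edges of columns $1,\dots,i$ propagate straight to the top. This gives exactly $i$ ones on the top boundary and hence the weight $z^i$. I would need to confirm from Figure~\ref{XIoperatorfigureqzero} that the diagonal/transmission convention indeed sends column $\ell$ to top site $\ell$, and that the $\mathbf{t}$ factors evaluate to $1$ on the vacuum.

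\textbf{Inductive step.} The main obstacle is that the intermediate states between $X_{i_{m-1}}$ and $X_{i_m}$ need not be the vacuum. I would first try to make the commutation relations of Theorem~\ref{ThmFZ} do the work: use the $i>j$ and $i=j$ cases to reorder the product into weakly decreasing order, and then argue about $\langle\Omega|X_j$ for the largest index. That seems circular, however, so instead I would work directly with the lattice and track the Fock states.

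Acting on $|\Omega\rangle$, the operator $X_{i_m}(z_m)$ produces states in which some $\mathbf{b}^+$ have fired, i.e.\ a red edge has moved into a blue position. The key monotonicity observation is that $\mathbf{b}^+$ occurs in the $10\to01$ vertex, which moves red ``to the right'', while $\mathbf{b}^-$ undoes this. Pairing with $\langle\Omega|$ therefore requires every creation made by $X_{i_m}$ to be annihilated by some operator to its left. An operator $X_{i_k}$ with $i_k\le i_m$ has fewer red edges entering at the bottom. I would argue, via a conserved-charge count on each line $L_{k\ell}$ (number of red edges to the left of position $\ell$ in row $k$), that such an operator cannot produce $\mathbf{b}^-$ at the needed vertices. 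The conclusion would be that only the vacuum component of $X_{i_m}|\Omega\rangle$ survives. Its coefficient is $z_m^{i_m}$ by the base-case analysis.

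Making this charge argument precise is the step I expect to be hardest. I would formalize it as follows: for each state, define the occupation numbers as a height function, and show that $\langle\Omega|X_{i_1}\cdots X_{i_{m-1}}$ is supported on states whose heights are bounded by a quantity that vanishes when all indices are at most $i_m$.
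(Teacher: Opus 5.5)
\textbf{Verdict: right strategy, but the inductive step is missing.} Your intermediate identity is true, and your base case is essentially correct. The inductive step, which carries the whole content of the lemma, is left as a heuristic.

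\textbf{Base case.} You should also check that the forced configuration is admissible. An incoming pair (row red, column blue) admits only $\mathbf{b}^+$, which has no vacuum-to-vacuum weight, so that pair must never occur. It does not. From the explicit $n=3$ operators of Section 3, the row through $L_{k1},\dots,L_{k,n-k}$ has a freely summed left end and exits into the bottom edge of column $n-k+1$. Hence the red rows are exactly those exiting into columns $\le i$, and they cross only columns of smaller index, which are red.

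\textbf{The gap: two precise statements are needed.}

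(i) $X_j|\Omega\rangle$ has bosons only on lines $L_{k\ell}$ with $\ell>j$. On the vacuum $\mathbf{b}^-$ is excluded, and columns $1,\dots,j$ enter red. Only $\mathbf{b}^-$ can turn a column blue, while $\mathbf{b}^+$ requires a red row meeting a blue column, so no boson is created on columns $\le j$.

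(ii) For $i_1\le\cdots\le i_r$, the covector $\langle\Omega|X_{i_1}\cdots X_{i_r}$ has no bosons on lines with $\ell\ge i_r$.

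Statement (ii) cannot be obtained one operator at a time from the vacuum, because the left operators create and destroy bosons among themselves. Already for $n=3$, $\langle\Omega|X_2$ has a component carrying $\langle 1|$ on $L_{21}$. So (ii) must be propagated by induction on $r$:
\begin{itemize}
\item By the induction hypothesis the covector is $\langle 0|$ on every line with $\ell\ge i_r$, so $\mathbf{b}^+$ cannot fire there.
\item Suppose $\mathbf{b}^-$ fired at some $L_{k\ell}$ with $\ell\ge i_r$. The row would leave that vertex red, and afterwards it meets only lines of larger $\ell$.
\item A red row turns blue only through $\mathbf{b}^+$, which is unavailable there, so the row would exit red. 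That forces $n-k+1\le i_r$, hence $\ell\le n-k<i_r$, a contradiction.
\end{itemize}

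\textbf{Closing the induction.} Apply (i) with $j=i_m$ and (ii) with $r=m-1$. Since $i_{m-1}\le i_m$, every line satisfies $\ell\le i_m$ or $\ell\ge i_{m-1}$. So only the vacuum component of $X_{i_m}|\Omega\rangle$ survives the pairing, which is exactly your intermediate claim.

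\textbf{Why your proposed invariant does not do this job.} The count of red edges to the left of a position in a row is not conserved by the vertices, and it does not control where $\mathbf{b}^-$ can fire. The operative facts are the one-way behaviour of row colours and the boundary condition fixing each row's exit colour. Also, read literally, ``supported on states whose heights are bounded by a quantity that vanishes'' would say the covector is the vacuum, which is false; the vanishing must be restricted to the lines $\ell\ge i_{m-1}$.

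\textbf{Relation to the paper.} The paper does not prove this lemma; it quotes it from \cite{IMO}. Its own freezing arguments for \eqref{actiononvacuum} and \eqref{tracefactorization} determine the whole three-dimensional configuration directly. Once (i) and (ii) are in place, your induction shows that every intermediate state is the vacuum, hence every vertex is $1$ or $\mathbf{t}$. This recovers the same frozen configuration, organized as a support and orthogonality argument.
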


The following computation for $n \geq i_1 > i_{2} > \cdots > i_m \geq 0$
\begin{align}
&\langle \Omega| X_{i_1}({\bf z}_1) X_{{i_{2}}}({\bf z}_{2}) \cdots X_{i_{m-1}}({\bf z}_{m-1}) X_{i_m}({\bf z}_m) |\Omega \rangle
\nonumber \\
=&\prod_{k=1}^m {\bf z}_k^{m-k}
\sum_{({\bf w}_1,{\bf w}_2,\dots,{\bf w}_m)}
\frac{1}{ \prod_{k=1}^m {\bf w}_k^{m-k}  \prod_{1 \le j<k \le m}(1-{\bf w}_k/{\bf w}_j )} 
 \nn \\
&\times
\langle \Omega| X_{i_m}({\bf w}_m) X_{i_{m-1}}({\bf w}_{m-1}) \cdots X_{i_2}({\bf w}_{2}) X_{i_1}({\bf w}_1) | \Omega \rangle
\nonumber \\
=&\prod_{k=1}^m {\bf z}_k^{m-k}
\sum_{({\bf w}_1,{\bf w}_2,\dots,{\bf w}_m)}
\frac{1}{ \prod_{k=1}^m {\bf w}_k^{m-k}  \prod_{1 \le j<k \le m}(1-{\bf w}_k/{\bf w}_j )} 
{\bf w}_1^{i_1}
{\bf w}_2^{i_2}
\cdots {\bf w}_{m-1}^{i_{m-1}} {\bf w}_m^{i_m} \nn \\
=&
\prod_{k=1}^m {\bf z}_k^{m-k}
s_{((i_1-m+1)^{|{\bf z}_1|},(i_2-m+2)^{|{\bf z}_2|},\dots,i_m^{|{\bf z}_m|})}({\bf z}),
\end{align}
using
\eqref{multiplecommrel},
\eqref{symmformulaSchur} and \eqref{top} shows \eqref{schurtheorem}.

Our first new result in this paper is the extension of Theorem \ref{mainthm}.
We derive the correspondence between a larger class of partition functions
and the products of the Schur polynomials, which we call as the tensor Schur polynomials.

Introduce $2p$ ($p \in \mathbb{Z}_{ \geq 1}$) sequences of integers 
$\{ m_i \}:=\{ m_{i,1}>m_{i,2}>\cdots>m_{i, k_i} \}$,
$\{ n_i \}:=\{ n_{i,1} <n_{i,2}<\cdots<n_{i, \ell_i} \}$, $i=1,\dots,p$
such that $m_{i,1} \le n_{i,1}$ for $i=1,\dots,p$, $n_{i,\ell_i} \le m_{i+1,k_{i+1}}$ for $i=1,\dots,p-1$,
$0 \le m_{1, k_1}$ and $n_{p,\ell_p} \le n$.
If $\{ m_i \}=\varnothing$, we set $k_i=0$, and we set $\ell_i=0$ if $\{ n_i \}=\varnothing$.
We introduce the set of variables
$
\mathbf{z}_{i,j}=\{z_{i,j,k} \ | \ k=1,\dots, |\mathbf{z}_{i,j}| \}
$,
$
\mathbf{w}_{i,j}=\{w_{i,j,k} \ | \ k=1,\dots, |\mathbf{w}_{i,j}| \}
$
and define
\begin{align}
X_{m_{i,j}}(\mathbf{z}_{i,j}):=\prod_{k=1}^{|\mathbf{z}_{i,j}|}
X_{m_{i,j}}(z_{i,j,k}), \ \ \
X_{n_{i,j}}(\mathbf{w}_{i,j}):=\prod_{k=1}^{|\mathbf{w}_{i,j}|}
X_{n_{i,j}}(w_{i,j,k}).
\end{align}
Since $[X_i(x), X_i(y)]=0$, the ordering of the operators inside
$X_{m_{i,j}}(\mathbf{z}_{i,j})$ and $X_{n_{i,j}}(\mathbf{w}_{i,j})$ do not matter.
We further define $\mathbf{z}_{i}:=\mathbf{z}_{i,1} \cup  \mathbf{z}_{i,2}
\cup \cdots \cup \mathbf{z}_{i,k_i}$,
$\mathbf{w}_{i}:=\mathbf{w}_{i,1} \cup  \mathbf{w}_{i,2}
\cup \cdots \cup \mathbf{w}_{i,\ell_i}$,
and introduce
\begin{align}
\overrightarrow{X_{m_i}}(\mathbf{z}_{i})&:=
X_{m_{i,k_i}}(\mathbf{z}_{i,k_i}) \cdots
X_{m_{i,2}}(\mathbf{z}_{i,2})
X_{m_{i,1}}(\mathbf{z}_{i,1}), \\
\overleftarrow{X_{m_i}}(\mathbf{z}_{i})&:=
X_{m_{i,1}}(\mathbf{z}_{i,1})
X_{m_{i,2}}(\mathbf{z}_{i,2})
\cdots
X_{m_{i,k_i}}(\mathbf{z}_{i,k_i}),
 \\
X_{n_i}(\mathbf{w}_{i})&:=
X_{n_{i,1}}(\mathbf{w}_{i,1})
X_{n_{i,2}}(\mathbf{w}_{i,2})
\cdots
X_{n_{i,\ell_i}}(\mathbf{w}_{i,\ell_i}).
\end{align}

\begin{theorem}
We have
\begin{align}
&\langle \Omega|
\overleftarrow{X_{m_1}}(\mathbf{z}_{1})
X_{n_1}(\mathbf{w}_{1})
\overleftarrow{X_{m_2}}(\mathbf{z}_{2})
X_{n_2}(\mathbf{w}_{2})
\cdots
\overleftarrow{X_{m_p}}(\mathbf{z}_{p})
X_{n_p}(\mathbf{w}_{p})
|\Omega \rangle \nonumber \\
=&\prod_{i=1}^p \prod_{j=1}^{k_i} 
{\mathbf z_{i,j}}^{k_{i}-j}
\prod_{i=1}^p \prod_{j=1}^{\ell_i} 
{\mathbf{w} _{i,j} }^{n_{i,j}}
\prod_{i=1}^p
s_{(
(m_{i,1}-k_i+1)^{|\mathbf{z}_{i,1}|},
(m_{i,2}-k_i+2)^{|\mathbf{z}_{i,2}|},\dots,
m_{i,k_i}^{|\mathbf{z}_{i,k_i}|}
)
}({\mathbf{z}}_i).
\label{productSchur}
\end{align}
\end{theorem}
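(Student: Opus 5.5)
The plan is to reduce the statement to Proposition~\ref{propmultipleformulaone}, Lemma~\ref{simplestpartition} and Proposition~\ref{Symmformula}, exactly as in the proof of Theorem~\ref{mainthm}, but block by block. Within the $i$-th block $\overleftarrow{X_{m_i}}(\mathbf{z}_i)=X_{m_{i,1}}(\mathbf{z}_{i,1})\cdots X_{m_{i,k_i}}(\mathbf{z}_{i,k_i})$ the indices are strictly decreasing, $m_{i,1}>\cdots>m_{i,k_i}$. So \eqref{multiplecommrel}, with $m$ replaced by $k_i$, rewrites this block as an operator identity:
\begin{align*}
\overleftarrow{X_{m_i}}(\mathbf{z}_i)=\prod_{j=1}^{k_i}\mathbf{z}_{i,j}^{k_i-j}\sum_{(\mathbf{u}_{i,1},\dots,\mathbf{u}_{i,k_i})}\frac{1}{\prod_{j=1}^{k_i}\mathbf{u}_{i,j}^{k_i-j}\prod_{1\le j<l\le k_i}(1-\mathbf{u}_{i,l}/\mathbf{u}_{i,j})}\,\overrightarrow{X_{m_i}}(\mathbf{u}_i).
\end{align*}
Here the $\mathbf{u}_{i,j}$ run over set partitions of $\mathbf{z}_i$ with $|\mathbf{u}_{i,j}|=|\mathbf{z}_{i,j}|$. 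Since this is an identity of operators involving only the variables of block $i$, we substitute it into the product for every $i=1,\dots,p$ simultaneously. The vacuum expectation value then becomes a $p$-fold independent sum of coefficients times
\[
\langle\Omega|\overrightarrow{X_{m_1}}(\mathbf{u}_1)X_{n_1}(\mathbf{w}_1)\cdots\overrightarrow{X_{m_p}}(\mathbf{u}_p)X_{n_p}(\mathbf{w}_p)|\Omega\rangle .
\]
Empty blocks ($k_i=0$ or $\ell_i=0$) simply contribute trivial factors.

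Next, the indices appearing in this reordered word, read from left to right, are
\[
m_{1,k_1}<\cdots<m_{1,1}\le n_{1,1}<\cdots<n_{1,\ell_1}\le m_{2,k_2}<\cdots\le n_{p,\ell_p}\le n ,
\]
which form a weakly increasing sequence in $[0,n]$ by the hypotheses on $\{m_i\},\{n_i\}$. After expanding every $X_a(\mathbf{x})$ into a product of single-variable operators (the order within a set is irrelevant since $[X_a(x),X_a(y)]=0$), Lemma~\ref{simplestpartition} applies. It shows that this expectation value equals the monomial $\prod_{i,j}\mathbf{u}_{i,j}^{m_{i,j}}\prod_{i,j}\mathbf{w}_{i,j}^{n_{i,j}}$. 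The $\mathbf{w}$-part does not depend on the summation variables, so it factors out and gives $\prod_{i,j}\mathbf{w}_{i,j}^{n_{i,j}}$.

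What remains is a product over $i$ of independent sums. The $i$-th sum is
\[
\prod_{j=1}^{k_i}\mathbf{z}_{i,j}^{k_i-j}\sum_{(\mathbf{u}_{i,1},\dots,\mathbf{u}_{i,k_i})}\frac{\prod_{j=1}^{k_i}\mathbf{u}_{i,j}^{m_{i,j}-k_i+j}}{\prod_{1\le j<l\le k_i}(1-\mathbf{u}_{i,l}/\mathbf{u}_{i,j})} .
\]
Proposition~\ref{Symmformula} with $\lambda_j=m_{i,j}-k_i+j$, which is weakly decreasing since the $m_{i,j}$ are strictly decreasing, identifies this sum with the prefactor times $s_{((m_{i,1}-k_i+1)^{|\mathbf{z}_{i,1}|},\dots,m_{i,k_i}^{|\mathbf{z}_{i,k_i}|})}(\mathbf{z}_i)$. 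Multiplying over $i$ yields \eqref{productSchur}.

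The step needing the most care is the legitimacy of applying \eqref{multiplecommrel} inside a longer operator word and then factorizing the sum. This holds because \eqref{multiplecommrel} is an operator identity and the summations for different blocks involve disjoint variable sets. One must also verify, including the boundary cases $m_{i,1}=n_{i,1}$ and $n_{i,\ell_i}=m_{i+1,k_{i+1}}$, that the rearranged index sequence is indeed weakly increasing, so that Lemma~\ref{simplestpartition} is applicable; this is where the ordering hypotheses on the sequences are used.
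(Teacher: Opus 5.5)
Your proposal is correct and follows the paper's proof essentially step for step. You rewrite each block $\overleftarrow{X_{m_i}}(\mathbf{z}_i)$ as a sum over $\overrightarrow{X_{m_i}}(\mathbf{u}_i)$ via \eqref{multiplecommrel}, evaluate the resulting word with weakly increasing indices by Lemma~\ref{simplestpartition}, and symmetrize each block with Proposition~\ref{Symmformula} using $\lambda_j=m_{i,j}-k_i+j$. Your explicit checks, that the reordered index sequence is weakly increasing (including the boundary equalities) and that these $\lambda_j$ are weakly decreasing, spell out what the paper leaves implicit.
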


\begin{proof}
We can apply the same argument which is used to show \eqref{schurtheorem}.
First, from \eqref{top}, we have
\begin{align}
&\langle \Omega|
\overrightarrow{X_{m_1}}(\mathbf{z}_{1})
X_{n_1}(\mathbf{w}_{1})
\overrightarrow{X_{m_2}}(\mathbf{z}_{2})
X_{n_2}(\mathbf{w}_{2})
\cdots
\overrightarrow{X_{m_p}}(\mathbf{z}_{p})
X_{n_p}(\mathbf{w}_{p})
|\Omega \rangle \nonumber \\
=&\prod_{i=1}^p \prod_{j=1}^{k_i} 
{\mathbf z_{i,j}}^{m_{i,j}}
\prod_{i=1}^p \prod_{j=1}^{\ell_i} 
{\mathbf{w} _{i,j} }^{n_{i,j}}
=\prod_{i=1}^p \prod_{j=1}^{k_i} \prod_{k=1}^{|\mathbf{z}_{i,j}|}
z_{i,j,k}^{m_{i,j}}
\prod_{i=1}^p \prod_{j=1}^{\ell_i} \prod_{k=1}^{|\mathbf{w}_{i,j}|}
w_{i,j,k}^{n_{i,j}}. \label{freezecaserewrite}
\end{align}

Applying the commutation relations
\eqref{multiplecommrel} multiple times,
we get

\begin{align}
&\langle \Omega|
\overleftarrow{X_{m_1}}(\mathbf{z}_{1})
X_{n_1}(\mathbf{w}_{1})
\overleftarrow{X_{m_2}}(\mathbf{z}_{2})
X_{n_2}(\mathbf{w}_{2})
\cdots
\overleftarrow{X_{m_p}}(\mathbf{z}_{p})
X_{n_p}(\mathbf{w}_{p})
|\Omega \rangle \nonumber \\
=&\prod_{i=1}^p \prod_{j=1}^{k_i} 
{\mathbf z_{i,j}}^{k_{i}-j}
\sum_{\mathbf{u}_1=(\mathbf{u}_{1,1},
\mathbf{u}_{1,2},\dots,\mathbf{u}_{1,k_1}
)}
\frac{1}{
\prod_{j=1}^{k_1} {\bf u}_{1,j}^{k_1-j}
\prod_{1 \le j < k \le k_1}
(
1-\mathbf{u}_{1,k}/ \mathbf{u}_{1,j}
)
} \nonumber \\
&\times
\sum_{\mathbf{u}_2=(\mathbf{u}_{2,1},
\mathbf{u}_{2,2},\dots,\mathbf{u}_{2,k_2}
)}
\frac{1}{
\prod_{j=1}^{k_2} {\bf u}_{2,j}^{k_2-j}
\prod_{1 \le j < k \le k_2}
(
1-\mathbf{u}_{2,k}/ \mathbf{u}_{2,j}
)
} \cdots
\nonumber \\
&\times
\sum_{\mathbf{u}_p=(\mathbf{u}_{p,1},
\mathbf{u}_{p,2},\dots,\mathbf{u}_{p,k_p}
)}
\frac{1}{
\prod_{j=1}^{k_p} {\bf u}_{p,j}^{k_p-j}
\prod_{1 \le j < k \le k_p}
(
\mathbf{u}_{p,j}-\mathbf{u}_{p,k}
)
} \nonumber \\
&\times \langle \Omega|
\overrightarrow{X_{m_1}}(\mathbf{u}_{1})
X_{n_1}(\mathbf{w}_{1})
\overrightarrow{X_{m_2}}(\mathbf{u}_{2})
X_{n_2}(\mathbf{w}_{2})
\cdots
\overrightarrow{X_{m_p}}(\mathbf{u}_{p})
X_{n_p}(\mathbf{w}_{p})
|\Omega \rangle,
\label{multipleapplication}
\end{align}
where $\sum_{\mathbf{u}_i=(\mathbf{u}_{i,1},
\mathbf{u}_{i,2},\dots,\mathbf{u}_{i,k_i}
)}$ means we take sum over all
$\mathbf{u}_i=(\mathbf{u}_{i,1},
\mathbf{u}_{i,2},\dots,\mathbf{u}_{i,k_i}
)$ such that $\mathbf{u}_i=\mathbf{z}_i$ and $|\mathbf{u}_{i,j}|=|\mathbf{z}_{i,j}|$, $j=1,\dots,k_i$.
We insert
\eqref{freezecaserewrite}
into
\eqref{multipleapplication} to get
\begin{align}
&\langle \Omega|
\overleftarrow{X_{m_1}}(\mathbf{z}_{1})
X_{n_1}(\mathbf{w}_{1})
\overleftarrow{X_{m_2}}(\mathbf{z}_{2})
X_{n_2}(\mathbf{w}_{2})
\cdots
\overleftarrow{X_{m_p}}(\mathbf{z}_{p})
X_{n_p}(\mathbf{w}_{p})
|\Omega \rangle \nonumber \\
=&\prod_{i=1}^p \prod_{j=1}^{k_i} 
{\mathbf z_{i,j}}^{k_{i}-j}
\sum_{\mathbf{u}_1=(\mathbf{u}_{1,1},
\mathbf{u}_{1,2},\dots,\mathbf{u}_{1,k_1}
)}
\frac{1}{
\prod_{j=1}^{k_1} {\bf u}_{1,j}^{k_1-j}
\prod_{1 \le j < k \le k_1}
(
1-\mathbf{u}_{1,k}/ \mathbf{u}_{1,j}
)
} \nonumber \\
&\times
\sum_{\mathbf{u}_2=(\mathbf{u}_{2,1},
\mathbf{u}_{2,2},\dots,\mathbf{u}_{2,k_2}
)}
\frac{1}{
\prod_{j=1}^{k_2} {\bf u}_{2,j}^{k_2-j}
\prod_{1 \le j < k \le k_2}
(
1-\mathbf{u}_{2,k}/ \mathbf{u}_{2,j}
)
}
\cdots \nonumber \\
&\times
\sum_{\mathbf{u}_p=(\mathbf{u}_{p,1},
\mathbf{u}_{p,2},\dots,\mathbf{u}_{p,k_p}
)}
\frac{1}{
\prod_{j=1}^{k_p} {\bf u}_{p,j}^{k_p-j}
\prod_{1 \le j < k \le k_p}
(
1-\mathbf{u}_{p,k} / \mathbf{u}_{p,j}
)
} 
 \prod_{i=1}^p \prod_{j=1}^{k_i} 
{\mathbf u_{i,j}}^{m_{i,j}}
\prod_{i=1}^p \prod_{j=1}^{\ell_i} 
{\mathbf{w} _{i,j} }^{n_{i,j}}
\nonumber \\
=&\prod_{i=1}^p \prod_{j=1}^{k_i} 
{\mathbf z_{i,j}}^{k_{i}-j}
\prod_{i=1}^p \prod_{j=1}^{\ell_i} 
{\mathbf{w} _{i,j} }^{n_{i,j}}
\sum_{\mathbf{u}_1=(\mathbf{u}_{1,1},
\mathbf{u}_{1,2},\dots,\mathbf{u}_{1,k_1}
)}
\frac{ \prod_{j=1}^{k_1} 
{\mathbf u_{1,j}}^{m_{1,j}-k_1+j}
}{\prod_{1 \le j < k \le k_1}
(
1-\mathbf{u}_{1,k} / \mathbf{u}_{1,j}
)
} \nonumber \\
&\times
\sum_{\mathbf{u}_2=(\mathbf{u}_{2,1},
\mathbf{u}_{2,2},\dots,\mathbf{u}_{2,k_2}
)}
\frac{
\prod_{j=1}^{k_2} 
{\mathbf u_{2,j}}^{m_{2,j}-k_2+j}
}{\prod_{1 \le j < k \le k_2}
(
1-\mathbf{u}_{2,k} / \mathbf{u}_{2,j}
)
}
\cdots
\sum_{\mathbf{u}_p=(\mathbf{u}_{p,1},
\mathbf{u}_{p,2},\dots,\mathbf{u}_{p,k_p}
)}
\frac{
\prod_{j=1}^{k_p} 
{\mathbf u_{p,j}}^{m_{p,j}-k_p+j}
}{\prod_{1 \le j < k \le k_p}
(
1-\mathbf{u}_{p,k} / \mathbf{u}_{p,j}
)
}. \label{lastforsymm}
\end{align}
Applying the symmetrization formula for monomials
\eqref{symmformulaSchur} to the right-hand side of \eqref{lastforsymm},
we obtain \eqref{productSchur}.

\end{proof}

A special case $p=1$ of
\eqref{productSchur} is

\begin{align}
&\langle \Omega|
\overleftarrow{X_{m_1}}(\mathbf{z}_{1})
\overleftarrow{X_{m_2}}(\mathbf{z}_{2})
|\Omega \rangle 
=\prod_{i=1}^2 \prod_{j=1}^{k_i} 
{\mathbf z_{i,j}}^{k_{i}-j}
\prod_{i=1}^2
s_{(
(m_{i,1}-k_i+1)^{|\mathbf{z}_{i,1}|},
(m_{i,2}-k_i+2)^{|\mathbf{z}_{i,2}|},\dots,
m_{i,k_i}^{|\mathbf{z}_{i,k_i}|}
)
}({\mathbf{z}}_i).
\label{specialcaseproductofschur}
\end{align}

Using
\eqref{specialcaseproductofschur},
we can derive the shuffle formula for the Schur polynomials,
which corresponds to the pushforward formula
by Jo\'zefiak-Lascoux-Pragacz
\cite[Prop. 1]{JLP} in geometry.
First, we prepare the following multiple commutation relations.

\begin{proposition}
We have
\begin{align}
\overleftarrow{X_{m_2}}(\mathbf{z}_{2})
\overleftarrow{X_{m_1}}(\mathbf{z}_{1})
=& \prod_{j=1}^{k_2} 
{\mathbf z_{2,j}}^{k_1+k_{2}-j}
\prod_{j=1}^{k_i} 
{\mathbf z_{1,j}}^{k_{1}-j} \nonumber \\
&\times
\sum_{(
\mathbf{w}_{1}, \mathbf{w}_{2}
)}
 \prod_{j=1}^{k_2} 
{\mathbf w_{2,j}}^{-(k_1+k_{2}-j)}
\prod_{j=1}^{k_i} 
{\mathbf w_{1,j}}^{-(k_{1}-j)} 
\frac{1}{1-
\mathbf{w}_{1}/\mathbf{w}_{2}
}
\overleftarrow{X_{m_1}}(\mathbf{w}_{1})
\overleftarrow{X_{m_2}}(\mathbf{w}_{2}), \label{multipleforshuffle}
\end{align}
where $\sum_{({\bf w}_1, {\bf w}_2)}$
denotes the sum over all 
$({\bf w}_1, {\bf w}_2)$ such that ${\bf w}_1,{\bf w}_2$
are unordered sets of variables satisfying $|{\bf w}_1|=|{\bf z}_1|$,
$|{\bf w}_2|=|{\bf z}_2|$
and ${\bf w}_1  \cup {\bf w}_2 ={\bf z}_1 \cup {\bf z}_2$.
\end{proposition}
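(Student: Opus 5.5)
The plan is to show that both sides of \eqref{multipleforshuffle} reduce, via Proposition~\ref{propmultipleformulaone}, to the same linear combination of \emph{fully reversed} products. I assume throughout that all indices in the second block exceed those in the first, so that
\[
m_{2,1}>m_{2,2}>\cdots>m_{2,k_2}>m_{1,1}>\cdots>m_{1,k_1}
\]
is a strictly decreasing sequence of length $k_1+k_2$. If $m_{2,k_2}=m_{1,1}$ is allowed by the ordering conventions, I would merge the two equal-index groups using $[X_i(x),X_i(y)]=0$ and handle that boundary case separately. Then $\overleftarrow{X_{m_2}}(\mathbf z_2)\overleftarrow{X_{m_1}}(\mathbf z_1)$ is exactly a product of the type in Proposition~\ref{propmultipleformulaone} with $m=k_1+k_2$ groups. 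In the global labelling, group $j$ of block $2$ sits at position $j$, and group $j$ of block $1$ sits at position $k_2+j$. The weights $m-k$ are therefore $k_1+k_2-j$ for $\mathbf z_{2,j}$ and $k_1-j$ for $\mathbf z_{1,j}$, which are precisely the prefactors appearing in \eqref{multipleforshuffle}.

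First, apply \eqref{multiplecommrel} to the left-hand side. This expresses it as a sum over all refinements $(\mathbf u_{2,1},\dots,\mathbf u_{2,k_2},\mathbf u_{1,1},\dots,\mathbf u_{1,k_1})$ of $\mathbf z_1\cup\mathbf z_2$ with the prescribed cardinalities. Each term is the fully reversed product $\overrightarrow{X_{m_1}}(\mathbf u_1)\overrightarrow{X_{m_2}}(\mathbf u_2)$ with coefficient
\[
\prod_j\mathbf z_{2,j}^{k_1+k_2-j}\prod_j\mathbf z_{1,j}^{k_1-j}
\prod_j\mathbf u_{2,j}^{-(k_1+k_2-j)}\prod_j\mathbf u_{1,j}^{-(k_1-j)}
\frac{1}{D(\mathbf u_2)\,D(\mathbf u_1)\,(1-\mathbf u_1/\mathbf u_2)},
\]
where $D(\mathbf u_i)=\prod_{1\le j<k\le k_i}(1-\mathbf u_{i,k}/\mathbf u_{i,j})$. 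Here the cross factors $\prod_{j,k}(1-\mathbf u_{1,k}/\mathbf u_{2,j})$ have been collected into $1-\mathbf u_1/\mathbf u_2$, where $\mathbf u_i=\bigcup_j\mathbf u_{i,j}$.

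Second, expand the right-hand side. For each term $(\mathbf w_1,\mathbf w_2)$, apply \eqref{multiplecommrel} separately to $\overleftarrow{X_{m_1}}(\mathbf w_1)$ (with $k_1$ groups) and to $\overleftarrow{X_{m_2}}(\mathbf w_2)$ (with $k_2$ groups). This turns $\overleftarrow{X_{m_1}}(\mathbf w_1)\overleftarrow{X_{m_2}}(\mathbf w_2)$ into a sum of $\overrightarrow{X_{m_1}}(\mathbf u_1)\overrightarrow{X_{m_2}}(\mathbf u_2)$, where $\mathbf u_1$ refines $\mathbf w_1$ and $\mathbf u_2$ refines $\mathbf w_2$. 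The outer sum over $(\mathbf w_1,\mathbf w_2)$ followed by the inner sums over refinements is then a sum over exactly the same index set as in the first step, each refinement counted once, with $\mathbf w_i$ determined as the union of the $\mathbf u_{i,j}$.

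The remaining step is to compare coefficients term by term, which is pure bookkeeping of monomials. In the right-hand side, the block-$2$ factors combine as
\[
\prod_j\mathbf w_{2,j}^{-(k_1+k_2-j)}\prod_j\mathbf w_{2,j}^{k_2-j}=\mathbf w_2^{-k_1}=\mathbf u_2^{-k_1},
\]
which is symmetric in the block. Together with the factors $\mathbf u_{2,j}^{-(k_2-j)}$ coming from the inner expansion, this gives exactly $\prod_j\mathbf u_{2,j}^{-(k_1+k_2-j)}$. For block $1$, the factors $\prod_j\mathbf w_{1,j}^{-(k_1-j)}$ and $\prod_j\mathbf w_{1,j}^{k_1-j}$ cancel, leaving $\prod_j\mathbf u_{1,j}^{-(k_1-j)}$. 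The factor $1/(1-\mathbf w_1/\mathbf w_2)$ equals $1/(1-\mathbf u_1/\mathbf u_2)$, and the within-block denominators $D(\mathbf u_i)$ come directly from the two inner applications of \eqref{multiplecommrel}. Hence the coefficients agree term by term, and \eqref{multipleforshuffle} follows without needing any linear independence of the reversed products.

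The main obstacle I expect is the careful indexing: first, checking that the global weights $m-k$ split as $k_1+k_2-j$ and $k_1-j$, and second, checking that the mismatch between the global and the block-local exponents is the block-symmetric monomial $\mathbf w_2^{k_1}$, so that it passes freely between $\mathbf w_2$ and its refinement $\mathbf u_2$. A secondary point is verifying that the nested sums really enumerate each refinement exactly once; this holds because $\mathbf w_i$ is recovered uniquely from the $\mathbf u_{i,j}$.
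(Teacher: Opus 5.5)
Your proof is correct, but it takes a different route from the paper's.

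\textbf{The paper's route.} The paper works directly from the Zamolodchikov--Faddeev relations. It uses the invariance of $\prod_j\mathbf z_{2,j}^{j-k_1-k_2}\prod_j\mathbf z_{1,j}^{j-k_1}\,\overleftarrow{X_{m_2}}(\mathbf z_2)\overleftarrow{X_{m_1}}(\mathbf z_1)$ under permutations of all the variables to put a prescribed split $(\mathbf w_1,\mathbf w_2)$ into the arguments. It then reads off the coefficient $A(\mathbf w_1,\mathbf w_2)$ by moving the two blocks past each other using only the first term of \eqref{commrelone}, which produces the factor $1/(1-\mathbf w_1/\mathbf w_2)$.

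\textbf{Your route.} You obtain the block-reversal formula as a formal consequence of the full-reversal formula \eqref{multiplecommrel}, applied once to the whole product and once to each block. Both sides become sums over the same refinements $\mathbf u$ of the same words $\overrightarrow{X_{m_1}}(\mathbf u_1)\overrightarrow{X_{m_2}}(\mathbf u_2)$. The coefficients coincide because the global weights $k_1+k_2-j$ and $k_1-j$ differ from the block-local ones only by the block-symmetric monomial $\mathbf w_2^{k_1}=\mathbf u_2^{k_1}$, and the cross factors collect into $1-\mathbf u_1/\mathbf u_2$. I checked this bookkeeping and it is right.

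\textbf{What each buys.} Your route gives an honest operator identity with no coefficient-extraction step. You never need the claim, left implicit in the paper, that the exchange (second) terms of \eqref{commrelone} do not feed back into the coefficient of the chosen split, nor any uniqueness of the expansion. The cancellation of $\prod_j\mathbf w_{1,j}^{\pm(k_1-j)}$ also shows for free that each summand is independent of how $\mathbf w_i$ is divided into the groups $\mathbf w_{i,j}$. The paper's route, in turn, is self-contained from \eqref{commrelone}--\eqref{commrelthree} and does not presuppose Proposition~\ref{propmultipleformulaone}. Your method would give \eqref{multipleformulaone}, \eqref{multipleformulatwo} and \eqref{multipleformulathree} in exactly the same way.

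\textbf{A correction to your side remark.} The case $m_{2,k_2}=m_{1,1}$, which the conventions $m_{1,1}\le m_{2,k_2}$ allow, cannot be handled separately, because the identity is false there. Take $k_1=k_2=1$, $m_{2,1}=m_{1,1}=i$ and single variables $z_1,z_2$. By \eqref{commreltwo} the right-hand side is $z_2\,X_i(z_1)X_i(z_2)\bigl(\frac{1}{z_2-z_1}+\frac{1}{z_1-z_2}\bigr)=0$. On the other hand, $\langle\Omega|X_i(z_2)X_i(z_1)|\Omega\rangle=z_1^iz_2^i$ by \eqref{top}, so the left-hand side is nonzero. Hence $m_{2,k_2}>m_{1,1}$ is a genuine hypothesis of the proposition, not a convenience, and you should state it as such. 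The paper's use of \eqref{commrelone}, which is valid only for $i>j$, needs it too.
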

\begin{proof}
We apply the argument given in \cite{SU,IMO} for example.
Using the Zamolodchikov-Faddeev algebra relations \eqref{FZalg}
rewritten
in the following forms
\begin{align}
X_i(x)X_j(y)&=(1-y/x)^{-1} X_j(y) X_i(x)-(1-y/x)^{-1} X_j(x) X_i(y), \ i>j, \label{commrelone} \\
X_i(x)X_i(y)&=X_i(y)X_i(x), \label{commreltwo} \\
X_i(x)X_j(y)&=x/y X_i(y) X_j(x), \  i>j, \label{commrelthree}
\end{align}
the commutation relations can be written
in the following form
\begin{align}
\overleftarrow{X_{m_2}}(\mathbf{z}_{2})
\overleftarrow{X_{m_1}}(\mathbf{z}_{1})
=\sum_{
(
{\bf w}_1,{\bf w}_2
)
}
A(
{\bf w}_1,{\bf w}_2
)
\overleftarrow{X_{m_1}}(\mathbf{w}_{1})
\overleftarrow{X_{m_2}}(\mathbf{w}_{2}). \label{generalformforshuffle}
\end{align}
We also note that $
\prod_{j=1}^{k_2} 
{\mathbf z_{2,j}}^{j-k_1-k_{2}}
\prod_{j=1}^{k_i} 
{\mathbf z_{1,j}}^{j-k_{1}}
\overleftarrow{X_{m_2}}(\mathbf{z}_{2})
\overleftarrow{X_{m_1}}(\mathbf{z}_{1})
$ is invariant under any permutation of $({\bf z}_1,{\bf z}_2)$ which follows from \eqref{commreltwo} and \eqref{commrelthree}.

To get $A(
{\bf w}_1,{\bf w}_2
)$ for a fixed $(
{\bf w}_1,{\bf w}_2
)$, we rewrite the left-hand side of \eqref{generalformforshuffle} using the invariance stated above as
\begin{align}
\overleftarrow{X_{m_2}}(\mathbf{z}_{2})
\overleftarrow{X_{m_1}}(\mathbf{z}_{1})
=
& \prod_{j=1}^{k_2} 
{\mathbf z_{2,j}}^{k_1+k_{2}-j}
\prod_{j=1}^{k_i} 
{\mathbf z_{1,j}}^{k_{1}-j} \nonumber \\
&\times
 \prod_{j=1}^{k_2} 
{\mathbf w_{2,j}}^{-(k_1+k_{2}-j)}
\prod_{j=1}^{k_i} 
{\mathbf w_{1,j}}^{-(k_{1}-j)} 
\overleftarrow{X_{m_2}}(\mathbf{w}_{2})
\overleftarrow{X_{m_1}}(\mathbf{w}_{1})
,
\end{align}
and reverse the order of 
$\overleftarrow{X_{m_2}}(\mathbf{w}_{2})$ and $\overleftarrow{X_{m_1}}(\mathbf{w}_{1})$ by
just
using the first term of the right-hand side of \eqref{commrelone},
from which we get the factor $\frac{1}{
1-{\bf w}_1 / {\bf w}_2
}$.
Hence we have
\begin{align}
A(
{\bf w}_1,{\bf w}_2
)=& 
\prod_{j=1}^{k_2} 
{\mathbf z_{2,j}}^{k_1+k_{2}-j}
\prod_{j=1}^{k_i} 
{\mathbf z_{1,j}}^{k_{1}-j} \nonumber \\
&\times
 \prod_{j=1}^{k_2} 
{\mathbf w_{2,j}}^{-(k_1+k_{2}-j)}
\prod_{j=1}^{k_i} 
{\mathbf w_{1,j}}^{-(k_{1}-j)} 
\frac{1}{
1-{\bf w}_1 / {\bf w}_2
}.
\end{align}
\end{proof}

With the use of
\eqref{multipleforshuffle},
we give a derivation of the shuffle formula for the Schur polynomials.

\begin{theorem}
We have
\begin{align}
&
s_{(
(m_{2,1}-k_1-k_2+1)^{|\mathbf{z}_{2,1}|},
\dots,
(m_{2,k_2}-k_1)^{|\mathbf{z}_{2,k_2}|},
(m_{1,1}-k_1+1)^{|\mathbf{z}_{1,1}|},
\dots,
m_{1,k_1}^{|\mathbf{z}_{1,k_1}|}
)
}({\mathbf{z}}_1,{\mathbf{z}}_2)
\nonumber \\
=&
\sum_{(
\mathbf{w}_{1}, \mathbf{w}_{2}
)}
\frac{1}{\mathbf{w}_{2}-
\mathbf{w}_{1}
}
s_{(
(m_{1,1}-k_1+1)^{|\mathbf{z}_{1,1}|},
(m_{1,2}-k_1+2)^{|\mathbf{z}_{1,2}|},\dots,
m_{1,k_1}^{|\mathbf{z}_{1,k_1}|}
)
}({\mathbf{w}}_1) \nonumber \\
&\times
s_{(
(m_{2,1}+|{\mathbf{z}}_1|-k_1-k_2+1)^{|\mathbf{z}_{2,1}|},
(m_{2,2}+|{\mathbf{z}}_1|-k_1-k_2+2)^{|\mathbf{z}_{2,2}|},\dots,
(m_{2,k_2}+|{\mathbf{z}}_1|-k_1)^{|\mathbf{z}_{2,k_2}|}
)
}({\mathbf{w}}_2). \label{JLP}
\end{align}
Here, $\sum_{({\bf w}_1, {\bf w}_2)}$
denotes the sum over all 
$({\bf w}_1, {\bf w}_2)$ such that ${\bf w}_1,{\bf w}_2$
are unordered sets of variables satisfying $|{\bf w}_1|=|{\bf z}_1|$,
$|{\bf w}_2|=|{\bf z}_2|$
and ${\bf w}_1  \cup {\bf w}_2 ={\bf z}_1 \cup {\bf z}_2$.

\end{theorem}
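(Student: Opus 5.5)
The plan is to evaluate the vacuum expectation value
\[
\langle \Omega|\overleftarrow{X_{m_2}}(\mathbf{z}_{2})\overleftarrow{X_{m_1}}(\mathbf{z}_{1})|\Omega\rangle
\]
in two different ways and compare the results.

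\emph{First evaluation.} We need the sequences to fit the hypotheses of Theorem~\ref{mainthm}. The natural setting for the shuffle formula is $m_{2,k_2}>m_{1,1}$, so that the concatenated sequence $m_{2,1}>\cdots>m_{2,k_2}>m_{1,1}>\cdots>m_{1,k_1}$ is strictly decreasing. I will assume this. Under it, the left-hand side is a single ordered product with $m=k_1+k_2$ blocks, and Theorem~\ref{mainthm} applies directly. The block $\mathbf{z}_{2,j}$ sits in position $j$, so it receives the prefactor $\mathbf{z}_{2,j}^{k_1+k_2-j}$ and the row length $m_{2,j}-k_1-k_2+j$. The block $\mathbf{z}_{1,j}$ sits in position $k_2+j$, so it receives the prefactor $\mathbf{z}_{1,j}^{k_1-j}$ and the row length $m_{1,j}-k_1+j$. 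These are exactly the prefactors in \eqref{multipleforshuffle}, and the resulting Schur polynomial in $(\mathbf{z}_1,\mathbf{z}_2)$ is the left-hand side of \eqref{JLP}.

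\emph{Second evaluation.} Apply \eqref{multipleforshuffle} to reorder the product into $\overleftarrow{X_{m_1}}(\mathbf{w}_{1})\overleftarrow{X_{m_2}}(\mathbf{w}_{2})$. We then evaluate each term with \eqref{specialcaseproductofschur}, which is the case $p=1$ of \eqref{productSchur}; this is legitimate since all of $\{m_1\}$ lies below all of $\{m_2\}$. That formula gives
\[
\prod_j \mathbf{w}_{1,j}^{k_1-j}\prod_j\mathbf{w}_{2,j}^{k_2-j}\, s_{\lambda^{(1)}}(\mathbf{w}_1)\, s_{\lambda^{(2)}}(\mathbf{w}_2),
\]
where $\lambda^{(1)}$ has rows $m_{1,j}-k_1+j$ and $\lambda^{(2)}$ has rows $m_{2,j}-k_2+j$. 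Combining this with the coefficient in \eqref{multipleforshuffle}, the $z$-prefactors cancel against those of the first evaluation. The $\mathbf{w}_1$ monomials cancel completely. For $\mathbf{w}_2$, the leftover monomial is $\prod_j \mathbf{w}_{2,j}^{-k_1}=\mathbf{w}_2^{-k_1}$.

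\emph{Matching with \eqref{JLP}.} It remains to rewrite the kernel:
\[
\frac{1}{1-\mathbf{w}_1/\mathbf{w}_2}=\frac{\mathbf{w}_2^{|\mathbf{z}_1|}}{\mathbf{w}_2-\mathbf{w}_1},
\]
since each $w\in\mathbf{w}_2$ appears $|\mathbf{w}_1|=|\mathbf{z}_1|$ times. Multiplying $s_{\lambda^{(2)}}(\mathbf{w}_2)$ by $\mathbf{w}_2^{|\mathbf{z}_1|-k_1}$ adds $|\mathbf{z}_1|-k_1$ to every part, because $\mathbf{w}_2^{c}s_\lambda(\mathbf{w}_2)=s_{\lambda+c^{|\mathbf{w}_2|}}(\mathbf{w}_2)$ by \eqref{Schurdef}. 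This yields rows $m_{2,j}+|\mathbf{z}_1|-k_1-k_2+j$, which is exactly the second Schur factor in \eqref{JLP}. The formula then follows.

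\emph{Main obstacle.} The delicate step is bookkeeping rather than anything conceptual. One must confirm that the hypotheses of Theorem~\ref{mainthm} and \eqref{specialcaseproductofschur} genuinely hold for the ordering $m_{2,k_2}>m_{1,1}$. One must also check that \eqref{multipleforshuffle} requires $\{m_2\}>\{m_1\}$ elementwise, so that relation \eqref{commrelone} is the one used. I would first check the exponents in the case $k_1=k_2=1$ before trusting the general count. If the theorem is meant without the strict separation between $\{m_1\}$ and $\{m_2\}$, the identity should still follow from the case above by polynomiality and the straightening of Schur symbols, but I would restrict to the separated case as the main argument.
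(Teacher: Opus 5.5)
Your proposal is correct and follows the paper's proof essentially step for step. You evaluate $\langle\Omega|\overleftarrow{X_{m_2}}(\mathbf{z}_2)\overleftarrow{X_{m_1}}(\mathbf{z}_1)|\Omega\rangle$ once via Theorem~\ref{mainthm} and once via \eqref{multipleforshuffle} combined with \eqref{specialcaseproductofschur}, then rewrite $1/(1-\mathbf{w}_1/\mathbf{w}_2)=\mathbf{w}_2^{|\mathbf{z}_1|}/(\mathbf{w}_2-\mathbf{w}_1)$ and absorb $\mathbf{w}_2^{|\mathbf{z}_1|-k_1}$ into the second Schur factor, exactly as the paper does; your explicit hypothesis $m_{2,k_2}>m_{1,1}$ is precisely what the paper's argument tacitly needs for both Theorem~\ref{mainthm} and \eqref{multipleforshuffle}.
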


\begin{proof}

First, from \eqref{schurtheorem},
we have the following expression for $\langle \Omega|
\overleftarrow{X_{m_2}}(\mathbf{z}_{2})
\overleftarrow{X_{m_1}}(\mathbf{z}_{1})
|\Omega \rangle$
\begin{align}
&\langle \Omega|
\overleftarrow{X_{m_2}}(\mathbf{z}_{2})
\overleftarrow{X_{m_1}}(\mathbf{z}_{1})
|\Omega \rangle 
= \prod_{j=1}^{k_2} 
{\mathbf z_{2,j}}^{k_1+k_{2}-j}
\prod_{j=1}^{k_i} 
{\mathbf z_{1,j}}^{k_{1}-j} \nonumber \\
&
\times s_{(
(m_{2,1}-k_1-k_2+1)^{|\mathbf{z}_{2,1}|},
\dots,
(m_{2,k_2}-k_1)^{|\mathbf{z}_{2,k_2}|},
(m_{1,1}-k_1+1)^{|\mathbf{z}_{1,1}|},
\dots,
m_{1,k_1}^{|\mathbf{z}_{1,k_1}|}
)
}({\mathbf{z}}_1,{\mathbf{z}}_2). \label{oneforJLP}
\end{align}

On the other hand, combining
\eqref{specialcaseproductofschur}
and \eqref{multipleforshuffle}, we have
\begin{align}
&\langle \Omega|
\overleftarrow{X_{m_2}}(\mathbf{z}_{2})
\overleftarrow{X_{m_1}}(\mathbf{z}_{1})
|\Omega \rangle 
= \prod_{j=1}^{k_2} 
{\mathbf z_{2,j}}^{k_1+k_{2}-j}
\prod_{j=1}^{k_i} 
{\mathbf z_{1,j}}^{k_{1}-j} \nonumber \\
&\times
\sum_{(
\mathbf{w}_{1}, \mathbf{w}_{2}
)}
 \prod_{j=1}^{k_2} 
{\mathbf w_{2,j}}^{-k_1}
\frac{1}{1-
\mathbf{w}_{1}/\mathbf{w}_{2}
}
\prod_{i=1}^2
s_{(
(m_{i,1}-k_i+1)^{|\mathbf{z}_{i,1}|},
(m_{i,2}-k_i+2)^{|\mathbf{z}_{i,2}|},\dots,
m_{i,k_i}^{|\mathbf{z}_{i,k_i}|}
)
}({\mathbf{w}}_i). \label{preanotherevaljlp}
\end{align}

Using
\begin{align}
\prod_{j=1}^{k_2} 
{\mathbf w_{2,j}}^{-k_1}
\frac{1}{1-
\mathbf{w}_{1}/\mathbf{w}_{2}
}
={\mathbf{w}}_2^{|{\mathbf{z}}_1|-k_1}
\frac{1}{\mathbf{w}_{2}-
\mathbf{w}_{1}
}, \nonumber
\end{align} 
and
\begin{align}
&{\mathbf{w}}_2^{|{\mathbf{z}}_1|-k_1}
s_{(
(m_{2,1}-k_2+1)^{|\mathbf{z}_{2,1}|},
(m_{2,2}-k_2+2)^{|\mathbf{z}_{2,2}|},\dots,
m_{2,k_2}^{|\mathbf{z}_{2,k_2}|}
)
}({\mathbf{w}}_2) \nonumber \\
=&
s_{(
(m_{2,1}+|{\mathbf{z}}_1|-k_1-k_2+1)^{|\mathbf{z}_{2,1}|},
(m_{2,2}+|{\mathbf{z}}_1|-k_1-k_2+2)^{|\mathbf{z}_{2,2}|},\dots,
(m_{2,k_2}+|{\mathbf{z}}_1|-k_1)^{|\mathbf{z}_{2,k_2}|}
)
}({\mathbf{w}}_2), \nonumber
\end{align}
\eqref{preanotherevaljlp} can be rewritten as
\begin{align}
&\langle \Omega|
\overleftarrow{X_{m_2}}(\mathbf{z}_{2})
\overleftarrow{X_{m_1}}(\mathbf{z}_{1})
|\Omega \rangle 
= \prod_{j=1}^{k_2} 
{\mathbf z_{2,j}}^{k_1+k_{2}-j}
\prod_{j=1}^{k_i} 
{\mathbf z_{1,j}}^{k_{1}-j} \nonumber \\
&\times
\sum_{(
\mathbf{w}_{1}, \mathbf{w}_{2}
)}
\frac{1}{\mathbf{w}_{2}-
\mathbf{w}_{1}
}
s_{(
(m_{1,1}-k_1+1)^{|\mathbf{z}_{1,1}|},
(m_{1,2}-k_1+2)^{|\mathbf{z}_{1,2}|},\dots,
m_{1,k_1}^{|\mathbf{z}_{1,k_1}|}
)
}({\mathbf{w}}_1) \nonumber \\
&\times
s_{(
(m_{2,1}+|{\mathbf{z}}_1|-k_1-k_2+1)^{|\mathbf{z}_{2,1}|},
(m_{2,2}+|{\mathbf{z}}_1|-k_1-k_2+2)^{|\mathbf{z}_{2,2}|},\dots,
(m_{2,k_2}+|{\mathbf{z}}_1|-k_1)^{|\mathbf{z}_{2,k_2}|}
)
}({\mathbf{w}}_2), \label{anotherforJLP}
\end{align}
which is another expression for $\langle \Omega|
\overleftarrow{X_{m_2}}(\mathbf{z}_{2})
\overleftarrow{X_{m_1}}(\mathbf{z}_{1})
|\Omega \rangle$.
Comparing 
\eqref{oneforJLP} and \eqref{anotherforJLP},
we get \eqref{JLP}.
\end{proof}

Setting
$k_1=|\mathbf{z}_1|=:r-d, k_2=|\mathbf{z}_2|=:d$
and $\mu_1:=m_{2,1}-d+1,\dots,\mu_d:=m_{2,d}$,
$\nu_{d+1}=m_{1,1}-(r-d)+1,\dotsm\nu_r=m_{1,r-d}$,
\eqref{JLP}
becomes
\begin{align}
s_{(\mu_1+d-r,\dots,\mu_{d}+d-r,\nu_{d+1},\dots,\nu_r)}
(\mathbf{z}_1,\mathbf{z}_2)
=\sum_{(\mathbf{w}_1,\mathbf{w}_2)}
\frac{1}{\mathbf{w}_1-\mathbf{w}_2}
s_{(\nu_{d+1},\dots,\nu_r)}(\mathbf{w}_1) s_{(\mu_1,\dotsm\mu_d)}(\mathbf{w}_2). \label{JLPtoseecorrespondence}
\end{align}

This identity is the symmetric function counterpart of the J\'ozefiak--Lascoux--Pragacz pushforward formula for Grassmann bundles.
Let us recall the geometric setup underlying this formula.

Let $X$ be a nonsingular variety, and let $E$ be a vector bundle of rank $r$ over $X$.
Let
\[
\pi : G_X(d, E) \to X
\]
be the Grassmann bundle parametrizing rank $d$ quotient bundles of $E$, and let
\[
0 \to S \to \pi^*E \to Q \to 0
\]
be the universal exact sequence, where $S$ and $Q$ are the universal subbundle and quotient bundle of ranks $r-d$ and $d$, respectively.

For a partition $\lambda=(\lambda_1,\dots,\lambda_d)$, the class
\[
\Delta_\lambda(s(Q))
:=
\det\bigl[s_{\lambda_i+j-i}(Q)\bigr]_{1\le i,j\le d}
\]
is defined in terms of the Segre classes $s_k(Q)$ of $Q$.
If $\mathbf{z}_2=(\alpha_1,\dots,\alpha_d)$ are the Chern roots of $Q$, then the Segre classes satisfy
\[
s_k(Q)=h_k(\mathbf{z}_2),
\]
where $h_k$ is the complete symmetric function, and therefore by the Jacobi--Trudi identity,
\[
\Delta_\lambda(s(Q)) = s_\lambda(\mathbf{z}_2),
\]
the Schur polynomial in the variables $\mathbf{z}_2$.
Similarly,
\[
\Delta_\nu(s(S)) = s_\nu(\mathbf{z}_1),
\]
where $\mathbf{z}_1=(\beta_1,\dots,\beta_{r-d})$ are the Chern roots of $S$.

\begin{theorem} \cite[Proposition 4.1]{JLP}
Let
\[
\pi : G_X(d,E) \to X
\]
be the Grassmann bundle parametrizing rank $d$ quotient bundles of $E$, and let
\[
0 \to S \to \pi^*E \to Q \to 0
\]
be the universal exact sequence, where $S$ and $Q$ are the universal subbundle and quotient bundle of ranks $r-d$ and $d$, respectively.

Let $\mu=(\mu_1,\dots,\mu_d)$ and $\nu=(\nu_{d+1},\dots,\nu_r)$ be partitions.
Then the pushforward along $\pi$ satisfies
\begin{align}
\pi_*\Bigl(
\Delta_\mu(s(Q))\,\Delta_\nu(s(S))
\Bigr)
=
\Delta_{(\mu_1+d-r,\dots,\mu_d+d-r,\nu_{d+1},\dots,\nu_r)}(s(E)),
\label{JLPgeometry}
\end{align}
where for any sequence $\lambda$,
\[
\Delta_\lambda(s(E))
:=
\det\bigl[s_{\lambda_i+j-i}(E)\bigr],
\]
and $s_k(E)$ denotes the $k$th Segre class of $E$.
If the sequence $(\mu_1+d-r,\dots,\mu_d+d-r,\nu_{d+1},\dots,\nu_r)$ is not a partition, the right-hand side is understood to be zero.
\end{theorem}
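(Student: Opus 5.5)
The plan is to reduce the geometric statement \eqref{JLPgeometry} to the symmetric function identity \eqref{JLPtoseecorrespondence}, using the splitting principle together with the standard localization (or symmetrization) formula for Grassmann bundles.

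First, by the splitting principle we may pull back to a flag bundle over $X$ on which $E$ splits. Pullback along such a map is injective on cohomology and commutes with $\pi_*$ by base change. So we may assume $E$ has Chern roots $x_1,\dots,x_r$. Both sides of \eqref{JLPgeometry} are then expressed in these roots. By Jacobi--Trudi, the right-hand side is the Schur polynomial $s_{(\mu_1+d-r,\dots,\mu_d+d-r,\nu_{d+1},\dots,\nu_r)}(x_1,\dots,x_r)$. When the index is not a partition, the determinant $\det[h_{\lambda_i+j-i}]$ either vanishes or equals a signed Schur function via the straightening rule; this matches the convention in \eqref{JLPgeometry} as long as we interpret the bialternant \eqref{Schurdef} consistently.

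Second, we invoke the classical pushforward formula for Grassmann bundles, with $\mathbf{w}_1$ playing the role of the Chern roots of $S$ and $\mathbf{w}_2$ those of $Q$. For a polynomial $f(\mathbf{w}_1\mid\mathbf{w}_2)$ symmetric in each block,
\begin{align*}
\pi_* f=\sum_{(\mathbf{w}_1,\mathbf{w}_2)} \frac{f(\mathbf{w}_1\mid \mathbf{w}_2)}{\mathbf{w}_2-\mathbf{w}_1}.
\end{align*}
Here the sum runs over the $\binom{r}{d}$ ways of splitting $\{x_1,\dots,x_r\}$ into a $d$-subset and its complement, and the denominator is the Euler class of the relative tangent bundle $\mathrm{Hom}(S,Q)$. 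This formula is either quoted (Atiyah--Bott/Brion localization, or iterated $\mathbb{P}^1$-bundle Gysin maps expressed via divided differences) or proved by writing $\pi_*$ as the symmetrizer $\partial_{w_0}$ modulo the parabolic subgroup. Applying it to $f=s_\nu(\mathbf{w}_1)s_\mu(\mathbf{w}_2)=\Delta_\nu(s(S))\Delta_\mu(s(Q))$, the left-hand side of \eqref{JLPgeometry} becomes exactly the right-hand side of \eqref{JLPtoseecorrespondence}.

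Third, we conclude by \eqref{JLPtoseecorrespondence}, which came from \eqref{JLP} under the substitution $k_1=|\mathbf{z}_1|=r-d$, $k_2=|\mathbf{z}_2|=d$ with all $|\mathbf{z}_{i,j}|=1$. Since every block has size one, the index sequences may be arbitrary strictly decreasing $m$'s, so $\mu$ and $\nu$ can be arbitrary partitions; large $m$'s can be chosen, compensated by the constraint $m_{1,1}\le m_{2,k_2}$, or removed by shifting via multiplication by powers of $\mathbf{z}$.

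The main obstacle is the bookkeeping of conventions. One must check that the sign and orientation of the denominator ($\mathbf{w}_2-\mathbf{w}_1$ in \eqref{JLP} versus $\mathbf{w}_1-\mathbf{w}_2$ as printed in \eqref{JLPtoseecorrespondence}) match $e(\mathrm{Hom}(S,Q))$; this is fixed by sign conventions on Segre classes and by which block carries the quotient roots. One must also check that the ordering constraints on the $m_{i,j}$ in Theorem~\ref{mainthm} permit all pairs $(\mu,\nu)$. If those constraints bind, we would extend \eqref{JLPtoseecorrespondence} by polynomial continuation: both sides are alternant-type expressions depending polynomially on the exponents after clearing the Vandermonde, so the identity extends to all integer sequences.
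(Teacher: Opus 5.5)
Your route is the one the paper sketches after the statement: pushforward as symmetrization, then the symmetric-function identity. Your denominator is also the correct one. $e(T_\pi)=e(S^\vee\otimes Q)=\mathbf{w}_2-\mathbf{w}_1$ is exactly the denominator of \eqref{JLP}. By contrast, \eqref{JLPtoseecorrespondence} and \eqref{pushforwardsymmetrization} as printed each carry a factor $(-1)^{d(r-d)}$, and these two factors cancel. So just use \eqref{JLP} directly.

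\textbf{The gap in your third step.} The paper's derivation of \eqref{JLP} requires $m_{2,k_2}>m_{1,1}$. Theorem~\ref{mainthm} is applied in \eqref{oneforJLP} to the whole strictly decreasing sequence. Also, \eqref{specialcaseproductofschur} rests on the factorization \eqref{freezecaserewrite}, which needs $m_{1,1}\le m_{2,k_2}$. Under your substitution, $m_{2,d}=\mu_d$ and $m_{1,1}=\nu_{d+1}+r-d-1$. So the constraint is exactly $\mu_d+d-r\ge\nu_{d+1}$, i.e.\ that $\lambda=(\mu_1+d-r,\dots,\mu_d+d-r,\nu_{d+1},\dots,\nu_r)$ is a partition.

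The constraint therefore does bind, and your devices do not remove it. Multiplying both sides by $(x_1\cdots x_r)^c$ adds $c$ to every part of $\mu$ and of $\nu$, so $\mu_d+d-r-\nu_{d+1}$ is unchanged. And the exponents enter exponentially, not polynomially, so ``polynomial continuation'' is not an argument.

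The clean repair bypasses the partition functions. With all blocks of size one, \eqref{JLP} is the Laplace expansion of $a_{\lambda+\delta}(x)=\det\bigl(x_i^{\lambda_j+r-j}\bigr)_{i,j=1}^{r}$ along the $d$ columns with exponents $\mu_j+d-j$. You then divide by $a_\delta(x)=\varepsilon(I)\,a_{\delta_d}(x_I)\,a_{\delta_{r-d}}(x_{\bar I})\prod_{i\in I,\,k\notin I}(x_i-x_k)$, where $\varepsilon(I)$ is the same sign as in the Laplace expansion. This holds for all partitions $\mu,\nu$ with no restriction. Since $\lambda+\delta\in\mathbb{Z}_{\ge0}^r$ automatically, Jacobi--Trudi identifies $a_{\lambda+\delta}/a_\delta$ with $\det[h_{\lambda_i+j-i}]$ in every case.

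This is also the correct form of your continuation idea. After multiplying by $a_\delta$, both sides are signed sums of the monomials obtained by permuting the exponent vector $\lambda+\delta$, with signs independent of $\lambda$.

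\textbf{The zero convention.} Your assertion that straightening ``matches the convention'' for non-partition sequences is wrong, because that convention is itself false in general. Take $r=2$, $d=1$, $\mu=(0)$, $\nu=(1)$. With the paper's convention, $s_1(S)=c_1(S)=c_1(E)-c_1(Q)$. Since $\pi_*\pi^*c_1(E)=0$, you get $\pi_*\bigl(s_1(S)\bigr)=-\pi_*\bigl(c_1(Q)\bigr)=-1$. Yet $(\mu_1+d-r,\nu_2)=(-1,1)$ is not a partition, so the statement would give $0$. The localization sum $\frac{x_1}{x_2-x_1}+\frac{x_2}{x_1-x_2}=-1$ and the determinant $\det\begin{pmatrix}s_{-1}&s_0\\ s_0&s_1\end{pmatrix}=-1$ both give the correct value.

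So what your argument proves, once repaired, is the determinantal statement $\pi_*\bigl(\Delta_\mu(s(Q))\Delta_\nu(s(S))\bigr)=\det[s_{\lambda_i+j-i}(E)]$ for all partitions $\mu,\nu$. For non-partition $\lambda$ this vanishes only when the straightening does (e.g.\ $\lambda_{d+1}=\lambda_d+1$). You should state that correction rather than claim agreement with the zero convention.
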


\eqref{JLPtoseecorrespondence}
is the symmetric function counterpart of the pushforward formula \eqref{JLPgeometry}.
Note pushforward corresponds to symmetrization in terms of functions.
More concretely, for a function $P(\mathbf{z}_1;\mathbf{z}_2) \in \mathbb{Z}[\mathbf{z}_1;\mathbf{z}_2]$
invariant under permutations of the Chern roots $\mathbf{z}_2=(\alpha_1,\dots,\alpha_d)$ of $Q$
and also invariant under permutations of the Chern roots $\mathbf{z}_1=(\beta_1,\dots,\beta_{r-d})$ 
of $S$,
then we have
\begin{align}
\pi_*(P(\mathbf{z}_1;\mathbf{z}_2))=\sum_{(\mathbf{w}_1,\mathbf{w}_2)}
\frac{1}{\mathbf{w}_1-\mathbf{w}_2}
P(\mathbf{w}_1;\mathbf{w}_2). \label{pushforwardsymmetrization}
\end{align}
Taking $P(\mathbf{z}_1;\mathbf{z}_2)$ to be
$P(\mathbf{z}_1;\mathbf{z}_2)=\Delta_\mu(s(Q))\,\Delta_\nu(s(S))=s_{(\nu_{d+1},\dots,\nu_r)}(\mathbf{z}_1) s_{(\mu_1,\dotsm\mu_d)}(\mathbf{z}_2)$, \\
$\pi_*(P(\mathbf{z}_1;\mathbf{z}_2))$ is nothing but the right hand side of \eqref{JLPtoseecorrespondence}.
\eqref{JLPgeometry} states that this becomes \\
 $\Delta_{(\mu_1+d-r,\dots,\mu_d+d-r,\nu_{d+1},\dots,\nu_r)}(s(E))
=s_{(\mu_1+d-r,\dots,\mu_{d}+d-r,\nu_{d+1},\dots,\nu_r)}
(\mathbf{z}_1,\mathbf{z}_2)$, the left hand side of \eqref{JLPtoseecorrespondence}.
\eqref{pushforwardsymmetrization} is a special case
of the formula for the partial flag bundles and complex cobordism.
See \cite[Theorem 1.8]{BresslerEvens}, \cite[Theorem 2.5, Corollary 2.6]{NakagawaNaruse}
for example. \eqref{JLPgeometry} has applications to geometry,
for example the degree
formula of the Grassmann bundles in \cite{KT}.


%

\subsection{Identities}

We derive other algebraic formulas in this subsection.
First we prepare the following lemma,
\begin{lemma}
We have
\begin{align}
X_n^{(n)}(z)|\Omega \rangle=z^n |\Omega \rangle, \label{actiononvacuum}
\\
\langle \Omega|X_0^{(n)}(z)=\langle \Omega|. \label{actionondualvacuum}
\end{align}
\end{lemma}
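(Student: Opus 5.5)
The plan is to read both identities directly off the lattice definition of $X_i^{(n)}(z)$, using the local weights of $\mathcal{R}$. No Zamolodchikov--Faddeev relation is needed. The key observation is that each nonzero matrix element $[\mathcal{R}]_{ij}^{ab}$ is determined by the pair of incoming colours $(i,j)$ on the two solid lines. In particular:
\begin{itemize}
\item if both incoming edges are red, i.e.\ $(i,j)=(1,1)$, the only nonzero element is $[\mathcal{R}]_{11}^{11}=1$, so both outgoing edges are red;
\item if both incoming edges are blue, i.e.\ $(i,j)=(0,0)$, the only nonzero element is $[\mathcal{R}]_{00}^{00}=1$, so both outgoing edges are blue.
\end{itemize}
In both monochromatic cases the Fock-space weight is the identity. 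I will propagate this through the triangular array $D_n$ in the order in which the vertices are traversed, starting from the fixed bottom boundary.

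For \eqref{actiononvacuum}, all bottom edges of columns $1,\dots,n$ of $X_n^{(n)}(z)$ are red. Arguing by induction over the vertices of $D_n$, ordered so that both incoming solid edges of a vertex are already determined, every vertex sees two red inputs. It is therefore forced to be the $[\mathcal{R}]_{11}^{11}=1$ vertex, and all edges reaching the top boundary are red. Hence, as an operator, $X_n^{(n)}(z)$ consists of a single configuration with weight $1$ on every $\mathcal{F}_{k\ell}$, and $\alpha_1=\dots=\alpha_n=1$. This gives $X_n^{(n)}(z)=z^n\cdot\mathrm{id}$, and in particular $X_n^{(n)}(z)|\Omega\rangle=z^n|\Omega\rangle$.

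For \eqref{actionondualvacuum}, the bottom of $X_0^{(n)}(z)$ is all blue, and the same induction with the $(0,0)$ case forces every vertex to be $[\mathcal{R}]_{00}^{00}=1$. So all $\alpha_i=0$, and we get $X_0^{(n)}(z)=\mathrm{id}$, whence $\langle\Omega|X_0^{(n)}(z)=\langle\Omega|$.

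The main obstacle is the boundary edges that are not fixed, namely the side boundaries of the triangle, which are summed over. If the induction cannot be ordered so that each vertex has both inputs already fixed by the bottom boundary, I would fall back on the weaker but sufficient vacuum argument. This uses colour conservation: every nonzero vertex preserves the number of red edges, as the list $11\to11$, $00\to00$, $10\to01$, $01\to10$, $01\to01$ shows.
\begin{itemize}
\item For the ket, any configuration with a non-monochromatic vertex must contain a $b^-$ acting on some $|0\rangle_{k\ell}$, which vanishes. The only other Fock weight present, $\mathbf{t}$, acts as the identity on $|0\rangle$.
\item For the bra, $\langle 0|\mathbf{b}^+=0$ kills every configuration containing a creation vertex.
\end{itemize}
Tracing along each Fock line $L_{k\ell}$ which of $\mathbf{b}^\pm$ must occur first is the step I expect to need the most care. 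In either case only the all-red (respectively all-blue) configuration survives, and it yields $z^n$ (respectively $1$).
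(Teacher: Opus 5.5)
There is a genuine gap. Your main argument assumes that the vertices of $X_i^{(n)}(z)$ can be ordered so that each one has both of its $\mathcal{R}$-inputs $(i,j)$ already fixed, after which the ice rule alone forces $[\mathcal{R}]_{11}^{11}$ or $[\mathcal{R}]_{00}^{00}$. That is not how the boundary sits in this lattice. At a vertex on the diagonal $k+\ell=n$, the two edges fixed by the boundary occupy the slots $j$ and $a$ of $[\mathcal{R}]_{ij}^{ab}$ (one incoming, one outgoing). Two completions therefore remain: $[\mathcal{R}]_{11}^{11}=1$ or $[\mathcal{R}]_{01}^{10}=\mathbf{b}^-$ if those edges are red, and $[\mathcal{R}]_{00}^{00}=1$ or $[\mathcal{R}]_{10}^{01}=\mathbf{b}^+$ if they are blue.

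Consequently the operator identities you derive, $X_n^{(n)}(z)=z^n\,\mathrm{id}$ and $X_0^{(n)}(z)=\mathrm{id}$, are false for $n\ge 2$. The explicit $n=3$ operators in Section 3 contain $z^{-1}\mathbf{a}^-_{11}$ in $X_3^{(3)}(z)$ and $z\,\mathbf{a}^+_{11}$ in $X_0^{(3)}(z)$, and these terms survive at $q=0$. Moreover, Theorem~\ref{ThmFZ} with $i=0$, $j=n$ would reduce to $y^n=x^n+(1-x/y)y^n$, i.e.\ $x^{n-1}=y^{n-1}$, which fails for generic $x,y$. This is exactly why the lemma is stated only on $|\Omega\rangle$ and $\langle\Omega|$.

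Your fallback points in the right direction and is essentially the paper's proof, but it asserts the crucial step instead of proving it. Colour conservation is only a count; it does not imply that every configuration other than the all-red one contains a $\mathbf{b}^-$. A configuration whose only non-monochromatic vertices are $\mathbf{t}$'s and $\mathbf{b}^+$'s conserves colour and would survive on $|\Omega\rangle$. You omit $\mathbf{b}^+$ from your list, and your remark that $\mathbf{t}$ acts trivially on $|0\rangle$ makes such terms harmful rather than harmless.

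What is missing is the local propagation the paper uses.
\begin{enumerate}
\item At each diagonal vertex, the two red boundary edges leave only $1$ or $\mathbf{b}^-$.
\item A single $X$-operator places exactly one vertex on each $\mathcal{F}_{k\ell}$, so that $\mathbf{b}^-$ acts directly on $|0\rangle_{k\ell}$ and vanishes. The vertex is therefore all red.
\item This forces two red edges at the adjacent vertices further inside the triangle, where again the only alternative to the identity is $\mathbf{b}^-$.
\item Repeating exhausts $D_n$ and leaves a single configuration with identity weights and $\alpha_1=\dots=\alpha_n=1$.
\end{enumerate}
The dual identity is the same argument with blue edges and $\langle 0|\mathbf{b}^+=0$. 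There is nothing to trace along a Fock line; the care belongs in ordering the vertices within the slice.
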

\begin{proof}
We show \eqref{actiononvacuum} as
\eqref{actionondualvacuum} can be proved in a similar way.
Figure \ref{XIoperatorfigureactionvac}, left panel is the graphical description of
$X_n^{(n)}(z)|\Omega \rangle$. We make the following observations
on the $L$-operators of $X_n^{(n)}(z)$ acting on the Fock spaces $\mathcal{F}_{k \ell}$, $k+\ell=n$.
For each $L$-operator, since the two edges are already colored by red,
we note there are two choices: color the other two edges by red or blue, which produces
the identity operator and the annihilation operator. Since we act on the vacuum state $|\Omega \rangle$,
annihilation operators are not allowed to use, hence we find the remaining two edges must be colored by
red for every $L$-operator, which is graphically Figure \ref{XIoperatorfigureactionvac}, middle panel.
Repeating this observation, we find that all edges must be colored by red (Figure \ref{XIoperatorfigureactionvac}, right panel).
Every $L$-operator produces an identity operator, and since $\alpha_1=\cdots=\alpha_n=1$,
we get \eqref{actiononvacuum}.
\end{proof}

\begin{figure}[htbp]
\centering
\includegraphics[width=12truecm]{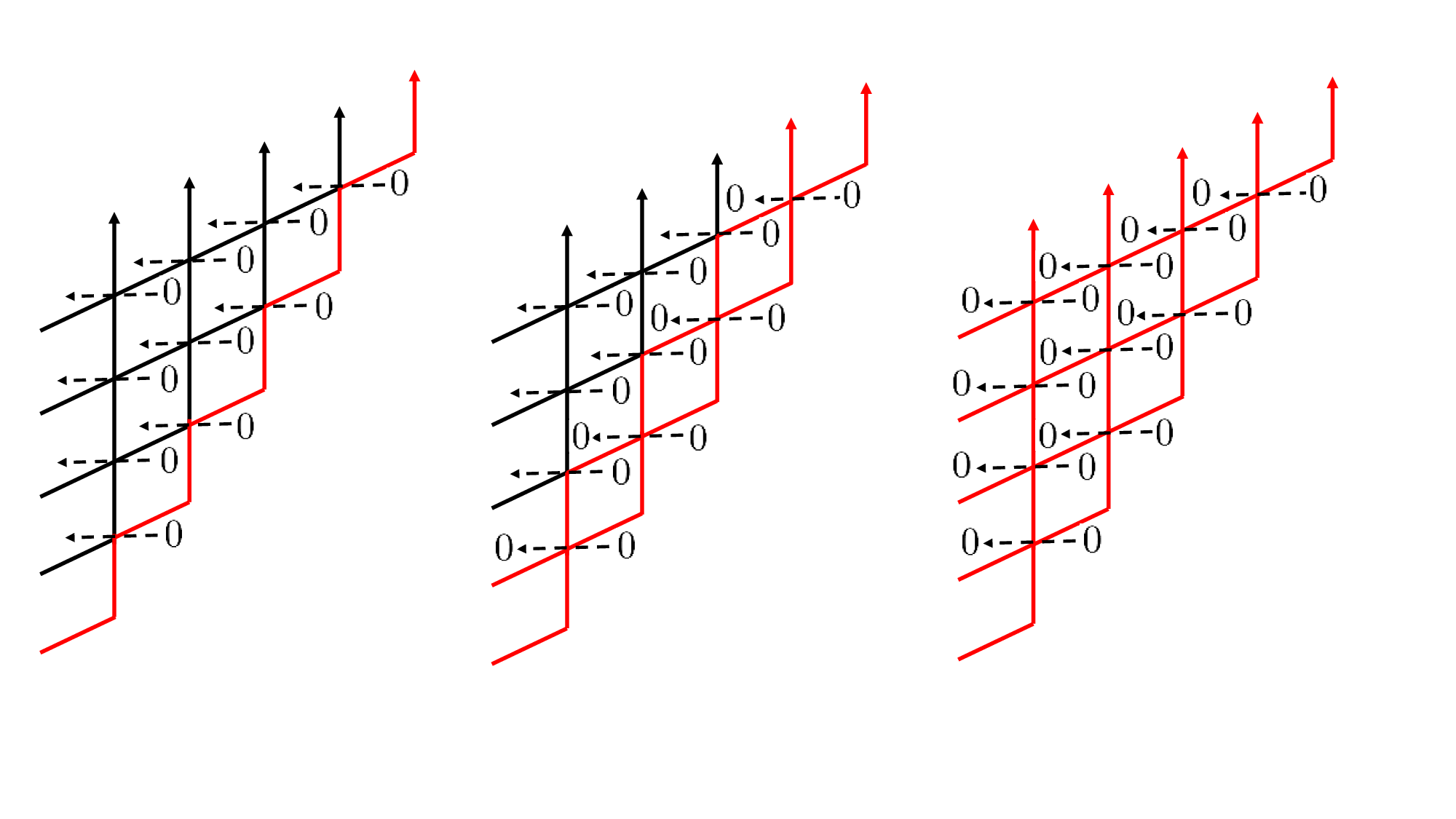}
\caption{The action of $X_n^{(n)}(z)$ on  $|\Omega \rangle$.
}
\label{XIoperatorfigureactionvac}
\end{figure}

We also prepare multiple commutation relations.

For $i_{2} >i_3> \cdots > i_{m}$ and ${\bf z}_{[2,m]}=
(
{\bf z}_2,{\bf z}_3,\dots,{\bf z}_m
)$, introduce
\begin{align}
X_{i_2,\dots,i_m}({\bf z}_{[2,m]})
:=
\frac{1}{\prod_{k=2}^m {\bf z}_k^{m-k}}
X_{i_2}({\bf z}_2) X_{i_3}({\bf z}_3)
\cdots X_{i_m}({\bf z}_m).
\end{align}
Note
$X_{i_2,\dots,i_m}({\bf z}_{[2,m]})$ is invariant under the permutation 
of ${\bf z}_{[2,m]}$.

\begin{proposition}
For $n \geq  i_{1} >i_2> \cdots > i_{m} \geq 0$ and ${\bf z}_{[2,m]}=
(
{\bf z}_2,{\bf z}_3,\dots,{\bf z}_m
)$, we have
\begin{align}
\frac{X_{i_1}({\bf z}_1) X_{i_2,\dots,i_m}({\bf z}_{[2,m]})}
{
{\bf z}_1^{m-1}
}
=\sum_{
(
{\bf w}_1,{\bf w}_{[2,m]}
)
}
\frac{1}{
1-{\bf w}_{[2,m]} / {\bf w}_1
}
\frac{ X_{i_2,\dots,i_m}({\bf w}_{[2,m]})
X_{i_1}({\bf w}_1)
}
{
{\bf w}_1^{m-1}
}, \label{multipleformulaone}
\end{align}
where 
$\displaystyle \sum_{
(
{\bf w}_1,{\bf w}_{[2,m]}
)
}
$
denotes the sum over all 
$({\bf w}_1, {\bf w}_{[2,m]})$ such that ${\bf w}_1,{\bf w}_{[2,m]}$
are unordered sets of variables satisfying $|{\bf w}_1|=|{\bf z}_1|$,
$|{\bf w}_{[2,m]}|=|{\bf z}_{[2,m]}|$
and ${\bf w}_1  \cup {\bf w}_{[2,m]} ={\bf z}_1 \cup {\bf z}_{[2,m]}$.

\end{proposition}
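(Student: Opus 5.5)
We follow the same strategy as in the proof of \eqref{multipleforshuffle}, i.e.\ the argument of \cite{SU,IMO}. Moving every operator in the left-hand side through the others with the Zamolodchikov--Faddeev relations \eqref{commrelone}--\eqref{commrelthree} produces an expansion of the form
\begin{align}
\frac{X_{i_1}({\bf z}_1) X_{i_2,\dots,i_m}({\bf z}_{[2,m]})}{{\bf z}_1^{m-1}}
=\sum_{({\bf w}_1,{\bf w}_{[2,m]})} A({\bf w}_1,{\bf w}_{[2,m]})\,
X_{i_2,\dots,i_m}({\bf w}_{[2,m]}) X_{i_1}({\bf w}_1). \nn
\end{align}
Each application of \eqref{commrelone} only exchanges spectral parameters between the operators being reordered, so the multiset of variables is preserved. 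Terms in which a variable from ${\bf z}_1$ ends up attached to some $X_{i_k}$, $k\ge2$, and vice versa, are collected into the summands indexed by the corresponding $({\bf w}_1,{\bf w}_{[2,m]})$. It therefore suffices to compute the single coefficient $A$ for a fixed splitting.

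\textbf{Invariance of the left-hand side.} We first check that the left-hand side is invariant under every permutation of ${\bf z}_1\cup{\bf z}_{[2,m]}$. By \eqref{commreltwo} it is invariant under permutations within each ${\bf z}_k$. For an exchange $x\leftrightarrow y$ with $x$ in ${\bf z}_j$, $y$ in ${\bf z}_k$ and $j<k$, we use \eqref{commrelthree}, since $i_j>i_k$. Commuting the relevant factors next to each other (operators with equal index commute), the identity $X_{i_j}(x)X_{i_k}(y)=\frac{x}{y}X_{i_j}(y)X_{i_k}(x)$ is exactly compensated by the prefactor $1/({\bf z}_1^{m-1}\prod_k{\bf z}_k^{m-k})$. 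Since $j<k$, the variable in the earlier block carries the larger exponent $m-j$, so swapping $x$ and $y$ changes the prefactor by the factor $y/x$. This is the same invariance already noted for $X_{i_2,\dots,i_m}$ and for the shuffle proposition.

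\textbf{The coefficient $A$.} Fix $({\bf w}_1,{\bf w}_{[2,m]})$. By the invariance, the left-hand side equals $X_{i_1}({\bf w}_1)X_{i_2,\dots,i_m}({\bf w}_{[2,m]})/{\bf w}_1^{m-1}$. We then move $X_{i_1}({\bf w}_1)$ to the right through each factor $X_{i_k}(u)$, $u\in{\bf w}_{[2,m]}$, one single-variable operator at a time. At each step we keep only the first term of \eqref{commrelone}, because the second term swaps spectral parameters and so contributes to a different splitting. Each crossing of $w\in{\bf w}_1$ past $u\in{\bf w}_{[2,m]}$ produces a factor $(1-u/w)^{-1}$, so the total factor is $1/(1-{\bf w}_{[2,m]}/{\bf w}_1)$. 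The normalizations ${\bf w}_1^{m-1}$ and $\prod_k{\bf w}_k^{m-k}$ are untouched, and their placement matches the right-hand side of \eqref{multipleformulaone}. This gives
\begin{align}
A({\bf w}_1,{\bf w}_{[2,m]})=\frac{1}{1-{\bf w}_{[2,m]}/{\bf w}_1}. \nn
\end{align}

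\textbf{Main obstacle.} The step needing the most care is justifying that, after full expansion, the coefficient of the ordered term with the prescribed splitting receives contributions only from the all-first-term path. The second term of \eqref{commrelone} always moves a ${\bf z}_1$-variable into a lower-indexed operator, so it cannot return to the fixed splitting. Combined with the invariance, this means the coefficient can be read off after first relabeling the variables to ${\bf w}$, as in \cite{SU,IMO}. An alternative route is to derive \eqref{multipleformulaone} from Proposition~\ref{propmultipleformulaone} by grouping ${\bf w}_2,\dots,{\bf w}_m$. However, that expansion reverses the order inside $X_{i_2,\dots,i_m}$ as well, so the direct argument above is preferable.
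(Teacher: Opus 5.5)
Your proposal is the paper's proof: relabel the left-hand side as $X_{i_1}({\bf w}_1)X_{i_2,\dots,i_m}({\bf w}_{[2,m]})/{\bf w}_1^{m-1}$ by permutation invariance, then reverse the order keeping only the first term of \eqref{commrelone} to get $1/(1-{\bf w}_{[2,m]}/{\bf w}_1)$. There is a harmless slip: your displayed ansatz omits the ${\bf w}_1^{m-1}$ that appears on the right-hand side.

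One repair is needed in your invariance check. The direct swap via \eqref{commrelthree} works only for adjacent blocks $k=j+1$. For $k\ge j+2$ the intermediate factors $X_{i_l}({\bf z}_l)$ cannot be commuted away, and the prefactor changes by $(y/x)^{k-j}$. General transpositions must therefore be built from adjacent-block ones. This tacitly requires that no block ${\bf z}_l$ lying between nonempty ones is empty, an assumption the statement itself also needs.
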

\begin{proof}
The argument to derive \eqref{multipleforshuffle} can be applied to derive \eqref{multipleformulaone}
as well.
The Zamolodchikov-Faddeev algebra relations in the form
\eqref{commrelone}, \eqref{commreltwo}, \eqref{commrelthree}
implies that
we can write the commutation relation in the following form
\begin{align}
X_{i_1}({\bf z}_1) X_{i_2,\dots,i_m}({\bf z}_{[2,m]})=\sum_{
(
{\bf w}_1,{\bf w}_{[2,m]}
)
}
A(
{\bf w}_1,{\bf w}_{[2,m]}
)
X_{i_2,\dots,i_m}({\bf w}_{[2,m]})
X_{i_1}({\bf w}_1). \label{generalform}
\end{align}
To get $A(
{\bf w}_1,{\bf w}_{[2,m]}
)$ for a fixed $(
{\bf w}_1,{\bf w}_{[2,m]}
)$, we rewrite the left-hand side of \eqref{generalform} as
\begin{align}
X_{i_1}({\bf z}_1) X_{i_2,\dots,i_m}({\bf z}_{[2,m]})
=
\frac{{\bf z}_1^{m-1}}{{\bf w}_1^{m-1}}
X_{i_1}({\bf w}_1) X_{i_2,\dots,i_m}({\bf w}_{[2,m]}),
\end{align}
and reverse the order of 
$X_{i_1}({\bf w}_1)$ and $X_{i_2,\dots,i_m}({\bf w}_{[2,m]})$ by
just
using the first term of the right hand side of \eqref{commrelone},
from which we get the factor $\frac{1}{
1-{\bf w}_{[2,m]} / {\bf w}_1
}$.
Hence we have
$A(
{\bf w}_1,{\bf w}_{[2,m]}
)= 
\frac{{\bf z}_1^{m-1}}{{\bf w}_1^{m-1}}
\frac{1}{
1-{\bf w}_{[2,m]} / {\bf w}_1
}$
\end{proof}

\begin{proposition}
For $n  > i_{2} > \cdots > i_{m} \geq 0$ and
${\bf z}_{[2,m]} =
(
{\bf z}_2,{\bf z}_3,\dots,{\bf z}_{m}
)$,
we have
\begin{align}
&s_{((n-m+1)^{|{\bf z}_1|},(i_2-m+2)^{|{\bf z}_2|},\dots,
i_{m}^{| {\bf z}_{m}       |})}({\bf z}) \nonumber \\
=
&\sum_{{\bf w}=({\bf w}_1, {\bf w}_{[2,m]})}
s_{((i_2-m+2)^{|{\bf z}_2|},
(i_3-m+3)^{|{\bf z}_3|},
\dots,
i_{m}^{| {\bf z}_{m}       |} )}({\bf w}_{[2,m]} )
\frac{{\bf w}_1^{n+1-m+|{\bf z}_{[2,m]}|}}{ {\bf w}_1-{\bf w}_{[2,m]} },
 \label{FNR}
\end{align}
\end{proposition}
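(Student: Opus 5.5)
We evaluate the partition function
\[
\langle \Omega| X_n({\bf z}_1) X_{i_2}({\bf z}_2)\cdots X_{i_m}({\bf z}_m)|\Omega\rangle
\]
with $i_1=n$ in two different ways and compare the results.

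\textbf{First evaluation.} Since $n>i_2>\cdots>i_m\ge 0$, Theorem~\ref{mainthm} applies directly with $i_1=n$. It gives
\[
\prod_{k=1}^m {\bf z}_k^{m-k}\, s_{((n-m+1)^{|{\bf z}_1|},(i_2-m+2)^{|{\bf z}_2|},\dots,i_m^{|{\bf z}_m|})}({\bf z}).
\]

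\textbf{Second evaluation.} We first rewrite the product as ${\bf z}_1^{m-1}\prod_{k\ge2}{\bf z}_k^{m-k}\cdot {\bf z}_1^{-(m-1)}X_n({\bf z}_1)X_{i_2,\dots,i_m}({\bf z}_{[2,m]})$. Then apply \eqref{multipleformulaone} with $i_1=n$, which moves $X_n({\bf w}_1)$ to the right end. By \eqref{actiononvacuum}, $X_n({\bf w}_1)|\Omega\rangle={\bf w}_1^{n}|\Omega\rangle$. What remains is
\[
{\bf z}_1^{m-1}\prod_{k\ge2}{\bf z}_k^{m-k}\sum_{({\bf w}_1,{\bf w}_{[2,m]})}\frac{{\bf w}_1^{n-m+1}}{1-{\bf w}_{[2,m]}/{\bf w}_1}\,\langle\Omega|X_{i_2,\dots,i_m}({\bf w}_{[2,m]})|\Omega\rangle .
\]
The inner vacuum expectation value is again given by Theorem~\ref{mainthm}, now applied with $m-1$ blocks. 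Because $X_{i_2,\dots,i_m}$ already carries the normalization $\prod_{k\ge 2}{\bf z}_k^{-(m-k)}$, this yields exactly $s_{((i_2-m+2)^{|{\bf z}_2|},\dots,i_m^{|{\bf z}_m|})}({\bf w}_{[2,m]})$. The exponents check: block $k$ of $m-1$ blocks gets $i_k-(m-1)+(k-1)=i_k-m+k$. The symmetric Schur polynomial depends only on the union ${\bf w}_{[2,m]}$, so there is no ambiguity about how ${\bf w}_{[2,m]}$ splits into blocks.

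\textbf{Prefactors.} The main bookkeeping concern is that the prefactors match between the two evaluations. The invariance of $X_{i_2,\dots,i_m}$ under permutations of ${\bf z}_{[2,m]}$ means that only the factor ${\bf z}_1^{m-1}$ is pulled out in front. After cancelling the common factor $\prod_k{\bf z}_k^{m-k}$, we are left to use the identity
\[
\frac{1}{1-{\bf w}_{[2,m]}/{\bf w}_1}=\frac{{\bf w}_1^{|{\bf z}_{[2,m]}|}}{{\bf w}_1-{\bf w}_{[2,m]}},
\]
which holds because each variable in ${\bf w}_1$ is paired with $|{\bf z}_{[2,m]}|$ factors. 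This converts ${\bf w}_1^{n-m+1}$ into ${\bf w}_1^{n+1-m+|{\bf z}_{[2,m]}|}/({\bf w}_1-{\bf w}_{[2,m]})$.

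Equating the two evaluations then gives \eqref{FNR}. The only delicate point is confirming that \eqref{multipleformulaone} and Theorem~\ref{mainthm} are valid for $i_1=n$ together with arbitrary block sizes, and both are stated in that generality.
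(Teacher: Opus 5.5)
Your proof is correct and is essentially the paper's argument: set $i_1=n$, take vacuum expectation values in \eqref{multipleformulaone}, use $X_n({\bf w}_1)|\Omega\rangle={\bf w}_1^n|\Omega\rangle$, evaluate the remaining normalized expectation value by Theorem~\ref{mainthm} with $m-1$ blocks, and rewrite $1/(1-{\bf w}_{[2,m]}/{\bf w}_1)$ as ${\bf w}_1^{|{\bf z}_{[2,m]}|}/({\bf w}_1-{\bf w}_{[2,m]})$. The only cosmetic difference is that you start from the unnormalized product and cancel $\prod_k {\bf z}_k^{m-k}$ at the end, whereas the paper applies Theorem~\ref{mainthm} directly to the already-normalized left-hand side of \eqref{multipleformulaone}.
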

where $\sum_{({\bf w}_1, {\bf w}_{[2,m]})}$
denotes the sum over all 
$({\bf w}_1, {\bf w}_{[2,m]})$ such that ${\bf w}_1,{\bf w}_{[2,m]}$
are unordered sets of variables satisfying $|{\bf w}_1|=|{\bf z}_1|$,
$|{\bf w}_{[2,m]}|=|{\bf z}_{[2,m]}|$
and ${\bf w}_1  \cup {\bf w}_{[2,m]} ={\bf z}_1 \cup {\bf z}_{[2,m]}$.

\begin{proof}

We set $i_1=n$ and take
the vacuum expectation values $\langle \Omega | \cdot |\Omega \rangle$ of both-hand sides of
\eqref{multipleformulaone}. Applying \eqref{schurtheorem},
the left-hand side becomes $s_{((n-m+1)^{|{\bf z}_1|},(i_2-m+2)^{|{\bf z}_2|},\dots,
i_{m}^{| {\bf z}_{m}       |})}({\bf z})$.
The right-hand side becomes
\begin{align}
\sum_{
(
{\bf w}_1,{\bf w}_{[2,m]}
)
}
\frac{1}{
1-{\bf w}_{[2,m]} / {\bf w}_1
}
\frac{ 
\langle \Omega|
X_{i_2,\dots,i_m}({\bf w}_{[2,m]})
X_{n}({\bf w}_1) | \Omega \rangle
}
{
{\bf w}_1^{m-1}
}. \label{multiplecommrelforfnrvac}
\end{align}
From
\eqref{actiononvacuum}, we have
\begin{align}
X_{n}({\bf w}_1) |\Omega \rangle
 =&{\bf w}_1^n |\Omega \rangle,
 \end{align}
 and
\eqref{multiplecommrelforfnrvac} becomes
\begin{align}
\sum_{
(
{\bf w}_1,{\bf w}_{[2,m]}
)
}
\frac{
{\bf w}_1^{n+1-m}
}{
1-{\bf w}_{[2,m]} / {\bf w}_1
}
\langle \Omega|
X_{i_2,\dots,i_m}({\bf w}_{[2,m]})
 | \Omega \rangle
. \label{multiplecommrelforfnrvactwo}
\end{align}
Applying \eqref{schurtheorem}, we have
\begin{align}
\langle \Omega|
X_{i_2,\dots,i_m}({\bf w}_{[2,m]})
 | \Omega \rangle
  =s_{((i_2-m+2)^{|{\bf z}_2|},
(i_3-m+3)^{|{\bf z}_3|},
\dots,
i_{m}^{| {\bf z}_{m}       |} )}({\bf w}_{[2,m]} ),
\end{align}
and  \eqref{multiplecommrelforfnrvactwo} is rewritten as
\begin{align}
\sum_{
(
{\bf w}_1,{\bf w}_{[2,m]}
)
}
\frac{
{\bf w}_1^{n+1-m}
}{
1-{\bf w}_{[2,m]} / {\bf w}_1
}
s_{((i_2-m+2)^{|{\bf z}_2|},
(i_3-m+3)^{|{\bf z}_3|},
\dots,
i_{m}^{| {\bf z}_{m}       |} )}({\bf w}_{[2,m]} ).
\end{align}
Using $1-{\bf w}_{[2,m]} / {\bf w}_1=({\bf w}_1-{\bf w}_{[2,m]}) {\bf w}_1^{-|{\bf w}_{[2,m]}|}
=({\bf w}_1-{\bf w}_{[2,m]}) {\bf w}_1^{-|{\bf z}_{[2,m]}|}
$, we get the right-hand side of \eqref{FNR}.
\end{proof}

\eqref{FNR} is the identity by Feh{\'e}r--N{\'e}methi--Rim{\'a}nyi
 \cite{FNR}.
See also Guo-Sun \cite{GS} where a generalization is given.
The identity written in the form 
\begin{align}
s_{((m-k)^{n-k},\lambda_1,\lambda_2,\dots,\lambda_k)}
(z_1,z_2,\dots,z_n)
=\sum_{S \in \binom{[n]}{k}} s_\lambda (z_S)
\frac{\displaystyle
\prod_{j \in \overline{S}} z_j^m
}{\displaystyle
\prod_{i \in S} \prod_{j \in \overline{S}}(z_j-z_i)
}.
\end{align}
in \cite[(1.4)]{GS} is the same identity, written in the traditional form.
Here $\binom{[n]}{k}$ denotes the family of all $k$-subsets of $[n]=\{1,2,\dots,n\}$. 
For $S\in\binom{[n]}{k}$ we write $\overline{S}=[n]\setminus S$ for its complement. 
The notation $s_\lambda(z_S)$ means the Schur polynomial in the variables 
$\{z_i : i\in S\}$, viewed as a $k$-variable symmetric polynomial. 
The products over $i\in S$ and $j\in\overline{S}$ represent all cross pairs between 
the chosen variables $z_i$ and the remaining variables $z_j$.

Another similar but distinct identity due to Gustafson and Milne \cite{GM}
is written in the form
\begin{align}
s_{(\lambda_1 - n + k, \lambda_2 - n + k, \dots, \lambda_k - n + k)}
(z_1,z_2, \dots, z_n)
= \sum_{S \in \binom{[n]}{k}}
\frac{s_\lambda(z_S)}
{ \prod_{i \in S} \prod_{j \in \overline{S}} (z_i - z_j)},
\end{align}
which appears as equation~\cite[(1.1)]{GS}
and can be derived in a similar way, as we explain below.

For $i_{1} >i_2> \cdots > i_{m-1}$ and ${\bf z}_{[1,m-1]}=
(
{\bf z}_1,{\bf z}_2,\dots,{\bf z}_{m-1}
)$, introduce
\begin{align}
X_{i_1,\dots,i_{m-1}}({\bf z}_{[1,m-1]})
&:=
\frac{1}{\prod_{k=1}^{m-1} {\bf z}_k^{m-k}}
X_{i_1}({\bf z}_1) X_{i_2}({\bf z}_2)
\cdots X_{i_{m-1}}({\bf z}_{m-1}) \nonumber \\
&=\frac{1}{\prod_{k=1}^{m-1} {\bf z}_k}
\frac{1}{\prod_{k=1}^{m-1} {\bf z}_k^{m-k-1}}
X_{i_1}({\bf z}_1) X_{i_2}({\bf z}_2)
\cdots X_{i_{m-1}}({\bf z}_{m-1})
.
\end{align}
Note
$X_{i_1,\dots,i_{m-1}}({\bf z}_{[1,m-1]} )$ is invariant under the permutation 
of ${\bf z}_{[1,m-1]}$.

We can show the following commutation relations,
which are analogous to \eqref{multipleformulaone} and can be proved by
a similar argument.

\begin{proposition}
For $n \geq  i_{1} >i_2> \cdots > i_{m} \geq 0$ and ${\bf z}_{[1,m-1]}=
(
{\bf z}_1,{\bf z}_2,\dots,{\bf z}_{m-1}
)$, we have
\begin{align}
X_{i_1,\dots,i_{m-1}}({\bf z}_{[1,m-1]})
X_{i_m}({\bf z}_m) =\sum_{
(
{\bf w}_{[1,m-1]},{\bf w}_m 
)
}
\frac{1}{
1-{\bf w}_m /  {\bf w}_{[1,m-1]}
}
X_{i_m}({\bf w}_m)
X_{i_1,\dots,i_{m-1}}({\bf w}_{[1,m-1]})
, \label{multipleformulatwo}
\end{align}
where 
$\displaystyle \sum_{
(
{\bf w}_{[1,m-1]}, {\bf w}_m
)
}
$
denotes the sum over all 
$(
{\bf w}_{[1,m-1]}, {\bf w}_m
)$ such that $
{\bf w}_{[1,m-1]}, {\bf w}_m
$
are unordered sets of variables satisfying 
$|{\bf w}_{[1,m-1]}|=|{\bf z}_{[1,m-1]}|$,
$|{\bf w}_m|=|{\bf z}_m|$,
and $ {\bf w}_{[1,m-1]} \cup {\bf w}_m = {\bf z}_{[1,m-1]} \cup {\bf z}_m  $.

\end{proposition}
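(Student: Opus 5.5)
We follow the same strategy as in the proof of \eqref{multipleformulaone} (and of \eqref{multipleforshuffle}).

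\textbf{Step 1: general form.} Every index in $X_{i_1,\dots,i_{m-1}}$ is strictly larger than $i_m$. Repeatedly applying the Zamolodchikov--Faddeev relations \eqref{commrelone}, \eqref{commreltwo}, \eqref{commrelthree} moves all the $X_{i_m}$ factors to the left. Since each relation only redistributes the spectral parameters among the factors, the result has the form
\begin{align}
X_{i_1,\dots,i_{m-1}}({\bf z}_{[1,m-1]}) X_{i_m}({\bf z}_m)
=\sum_{({\bf w}_{[1,m-1]},{\bf w}_m)} A({\bf w}_{[1,m-1]},{\bf w}_m)\,
X_{i_m}({\bf w}_m) X_{i_1,\dots,i_{m-1}}({\bf w}_{[1,m-1]}). \nonumber
\end{align}
Two facts make the sum range only over set partitions of ${\bf z}_{[1,m-1]}\cup{\bf z}_m$ into blocks of the prescribed sizes:
\begin{itemize}
\item the normalized product $X_{i_1,\dots,i_{m-1}}({\bf z}_{[1,m-1]})$ is symmetric in all variables of ${\bf z}_{[1,m-1]}$, as noted just before the proposition (this uses \eqref{commreltwo} and \eqref{commrelthree});
\item $X_{i_m}({\bf z}_m)$ is symmetric in ${\bf z}_m$.
\end{itemize}

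\textbf{Step 2: rewrite the left-hand side.} To find $A$ for a fixed pair $({\bf w}_{[1,m-1]},{\bf w}_m)$, we first rewrite the left-hand side. The reordering goes beyond the permutation invariance inside each factor. We need one more invariance: the combination
\[
X_{i_1,\dots,i_{m-1}}({\bf z}_{[1,m-1]})X_{i_m}({\bf z}_m),
\]
in which the $i_m$ block carries weight ${\bf z}_m^{0}$, is symmetric under all permutations of the full set ${\bf z}_{[1,m-1]}\cup{\bf z}_m$. This follows from \eqref{commrelthree}. For $i_{m-1}>i_m$ it reads $X_{i_{m-1}}(x)X_{i_m}(y)=\frac{x}{y}X_{i_{m-1}}(y)X_{i_m}(x)$, so $x^{-1}X_{i_{m-1}}(x)X_{i_m}(y)$ is symmetric in $x,y$. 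The prefactor $1/{\bf z}_k^{m-k}$ is designed to produce exactly this, with the relative weight $1$ between consecutive groups. Hence the left-hand side equals $X_{i_1,\dots,i_{m-1}}({\bf w}_{[1,m-1]})X_{i_m}({\bf w}_m)$ with no monomial prefactor. This explains why, unlike \eqref{multipleformulaone}, no power of ${\bf z}$ appears in \eqref{multipleformulatwo}.

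\textbf{Step 3: extract $A$ from the leading term.} We now commute $X_{i_m}({\bf w}_m)$ past each $X_{i_k}$ $(k\le m-1)$, keeping only the first term of \eqref{commrelone}. This term exchanges the operator order without exchanging the arguments, and each pair $(x\in{\bf w}_{[1,m-1]},\,y\in{\bf w}_m)$ contributes a factor $(1-y/x)^{-1}$. Multiplying over all pairs gives
\[
\frac{1}{1-{\bf w}_m/{\bf w}_{[1,m-1]}}.
\]
The second term of \eqref{commrelone} swaps the arguments. Its contributions are therefore attributed to other set partitions, and they are already accounted for by the symmetrization over $({\bf w}_{[1,m-1]},{\bf w}_m)$. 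This is the same bookkeeping as in \cite{SU,IMO}.

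\textbf{Main obstacle.} The delicate point is justifying that the coefficient of the term with argument assignment $({\bf w}_{[1,m-1]},{\bf w}_m)$ is captured exactly by the first-term-only computation after relabeling. The standard argument compares coefficients via the uniqueness of the expansion, as in \cite{IMO}, and uses the full-symmetry invariance from Step 2.

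As a check, the case where $|{\bf z}_k|=1$ for all $k$ and $m=2$ reduces directly to \eqref{commrelone}. The general case also agrees with Proposition \ref{propmultipleformulaone} after regrouping the factors.
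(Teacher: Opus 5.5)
Your proposal is correct and takes essentially the same route as the paper, which proves \eqref{multipleformulatwo} ``by a similar argument'' to \eqref{multipleformulaone}. That argument writes the product in the general form $\sum A\,X_{i_m}({\bf w}_m)X_{i_1,\dots,i_{m-1}}({\bf w}_{[1,m-1]})$, uses the full permutation invariance of the normalized product (here with no extra monomial, since ${\bf z}_m$ carries exponent $0$) to rewrite the left-hand side in the ${\bf w}$-variables, and reverses the order using only the first term of \eqref{commrelone} to read off $1/(1-{\bf w}_m/{\bf w}_{[1,m-1]})$. The coefficient-extraction subtlety you flag is not justified any further in the paper either; it relies on the same bookkeeping as in \cite{SU,IMO}.
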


\begin{proposition}
For $n \geq  i_{1} >i_2> \cdots > i_{m-1} > 0$ and ${\bf z}_{[1,m-1]}=
(
{\bf z}_1,{\bf z}_2,\dots,{\bf z}_{m-1}
)$, we have
\begin{align}
&s_{((i_1-m+1)^{|{\bf z}_1|},(i_2-m+2)^{|{\bf z}_2|},\dots,(i_{m-1}+1)^{|{\bf z}_{m-1}|})}({\bf z})
\nonumber \\
=&\sum_{
(
{\bf w}_{[1,m-1]}, {\bf w}_m
)
}
\frac{1}{
{\bf w}_{[1,m-1]}- {\bf w}_m
}
s_{((i_1-m+1+|{\bf z}_m|)^{|{\bf z}_1|},(i_2-m+2+|{\bf z}_m|)^{|{\bf z}_2|},\dots,(i_{m-1}+1+|{\bf z}_m|)^{|{\bf z}_{m-1}|})}({\bf w}_{[1,m-1]}), \label{GM}
\end{align}
where 
$\displaystyle \sum_{
(
{\bf w}_{[1,m-1]}, {\bf w}_m
)
}
$
denotes the sum over all 
$(
{\bf w}_{[1,m-1]}, {\bf w}_m
)$ such that $
{\bf w}_{[1,m-1]}, {\bf w}_m
$
are unordered sets of variables satisfying 
$|{\bf w}_{[1,m-1]}|=|{\bf z}_{[1,m-1]}|$,
$|{\bf w}_m|=|{\bf z}_m|$,
and $ {\bf w}_{[1,m-1]} \cup {\bf w}_m = {\bf z}_{[1,m-1]} \cup {\bf z}_m  $.
\end{proposition}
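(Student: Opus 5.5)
The plan is to mirror the proof of \eqref{FNR}, but this time eliminating the \emph{last} operator with the dual vacuum instead of the first one with the vacuum. Set $i_m=0$; this is allowed because the hypothesis $i_{m-1}>0$ keeps the strict chain $n\ge i_1>\cdots>i_{m-1}>i_m=0$ required by \eqref{multipleformulatwo}. Then take the vacuum expectation value $\langle\Omega|\cdot|\Omega\rangle$ of both sides of \eqref{multipleformulatwo}.

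\emph{Left-hand side.} By definition,
\[
\langle\Omega|X_{i_1,\dots,i_{m-1}}({\bf z}_{[1,m-1]})X_0({\bf z}_m)|\Omega\rangle
=\frac{1}{\prod_{k=1}^{m-1}{\bf z}_k^{m-k}}\,\langle\Omega|X_{i_1}({\bf z}_1)\cdots X_{i_{m-1}}({\bf z}_{m-1})X_0({\bf z}_m)|\Omega\rangle .
\]
Apply \eqref{schurtheorem} with $m$ operators. Its prefactor $\prod_{k=1}^m{\bf z}_k^{m-k}$ cancels exactly, since the $k=m$ factor is $1$. What remains is $s_\lambda({\bf z})$ with blocks $(i_k-m+k)^{|{\bf z}_k|}$ for $k=1,\dots,m-1$, followed by $0^{|{\bf z}_m|}$. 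Trailing zero parts do not change a Schur polynomial, so this is the Schur polynomial on the left of \eqref{GM}. I will check the block indices carefully against the statement: the general block is $(i_k-m+k)$, so the last block reads $i_{m-1}-1$, and I would reconcile this with the displayed last exponent.

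\emph{Right-hand side.} Use \eqref{actionondualvacuum}, which gives $\langle\Omega|X_0({\bf w}_m)=\langle\Omega|$. Each summand then reduces to
\[
\frac{1}{1-{\bf w}_m/{\bf w}_{[1,m-1]}}\,\langle\Omega|X_{i_1,\dots,i_{m-1}}({\bf w}_{[1,m-1]})|\Omega\rangle .
\]
Now apply \eqref{schurtheorem} with $m-1$ operators. Its prefactor is $\prod_{k=1}^{m-1}{\bf w}_k^{m-1-k}$, while the normalization in $X_{i_1,\dots,i_{m-1}}$ divides by $\prod_{k=1}^{m-1}{\bf w}_k^{m-k}$. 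The net result is
\[
{\bf w}_{[1,m-1]}^{-1}\,s_{((i_1-m+2)^{|{\bf z}_1|},\dots,i_{m-1}^{|{\bf z}_{m-1}|})}({\bf w}_{[1,m-1]}).
\]
Next rewrite $1-{\bf w}_m/{\bf w}_{[1,m-1]}=({\bf w}_{[1,m-1]}-{\bf w}_m)\,{\bf w}_{[1,m-1]}^{-|{\bf z}_m|}$. This leaves the factor $1/({\bf w}_{[1,m-1]}-{\bf w}_m)$ times ${\bf w}_{[1,m-1]}^{|{\bf z}_m|-1}$ times the Schur polynomial.

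Finally, absorb the monomial into the Schur polynomial using ${\bf x}^c s_\mu({\bf x})=s_{\mu+c^{|{\bf x}|}}({\bf x})$, the same step used in the proof of \eqref{JLP}. Every part is shifted by $|{\bf z}_m|-1$, so the first block becomes $(i_1-m+1+|{\bf z}_m|)^{|{\bf z}_1|}$, which matches \eqref{GM}. Equating the two evaluations then gives the identity.

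All steps are routine given \eqref{schurtheorem}, \eqref{multipleformulatwo} and \eqref{actionondualvacuum}. The only real care point is the exponent bookkeeping: the normalization of $X_{i_1,\dots,i_{m-1}}$ must be matched against the two different prefactors coming from \eqref{schurtheorem} with $m$ and with $m-1$ operators. Both sides must be computed with the same convention for the last block. A uniform shift $-1$ versus $+1$ in the displayed last parts would just be a typo in the last index, and I would confirm this by checking the case $m=2$ with single variables.
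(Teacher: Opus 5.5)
Your proposal is correct and follows the paper's proof essentially step for step. You set $i_m=0$, take $\langle\Omega|\cdot|\Omega\rangle$ of \eqref{multipleformulatwo}, remove $X_0({\bf w}_m)$ with \eqref{actionondualvacuum}, evaluate both sides with \eqref{schurtheorem}, and absorb ${\bf w}_{[1,m-1]}^{|{\bf z}_m|-1}$ into the Schur polynomial. Your index bookkeeping is also right: the last block should read $i_{m-1}-1$ (and $i_{m-1}-1+|{\bf z}_m|$ on the right), and your intermediate value ${\bf w}_{[1,m-1]}^{-1}s_{((i_1-m+2)^{|{\bf z}_1|},\dots,i_{m-1}^{|{\bf z}_{m-1}|})}$ is the correct one, whereas the paper's intermediate formula and its factorization step each contain an off-by-one shift, and the two shifts cancel.
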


\begin{proof}
We set $i_m=0$ and take the vacuum expectation values $\langle \Omega | \cdot |\Omega \rangle$ of both-hand sides of
\eqref{multipleformulatwo}.
Applying \eqref{schurtheorem},
the left-hand side becomes
$s_{((i_1-m+1)^{|{\bf z}_1|},(i_2-m+2)^{|{\bf z}_2|},\dots,(i_{m-1}+1)^{|{\bf z}_{m-1}|})}({\bf z})$,
and the right-hand side becomes
\begin{align}
\sum_{
(
{\bf w}_{[1,m-1]},{\bf w}_m 
)
}
\frac{1}{
1-{\bf w}_m /  {\bf w}_{[1,m-1]}
}
\langle \Omega|
X_{i_m}({\bf w}_m)
X_{i_1,\dots,i_{m-1}}({\bf w}_{[1,m-1]}) |\Omega \rangle
,  \label{gmrhsone}
\end{align}
Using \eqref{actionondualvacuum} and \eqref{schurtheorem},
we get
\begin{align}
\langle \Omega|
X_{0}({\bf w}_m)
&=\langle \Omega|,
\\
\langle \Omega|
X_{i_1,\dots,i_{m-1}}({\bf w}_{[1,m-1]}) |\Omega \rangle
&=\frac{1}{{\bf w}_{[1,m-1]}}
s_{((i_1-m+1)^{|{\bf z}_1|},(i_2-m+2)^{|{\bf z}_2|},\dots,(i_{m-1}+1)^{|{\bf z}_{m-1}|})}({\bf w}_{[1,m-1]}),
\end{align}
and
\eqref{gmrhsone}
becomes
\begin{align}
\sum_{
(
{\bf w}_{[1,m-1]},{\bf w}_m 
)
}
\frac{1}{
1-{\bf w}_m /  {\bf w}_{[1,m-1]}
}
\frac{1}{{\bf w}_{[1,m-1]}}
s_{((i_1-m+1)^{|{\bf z}_1|},(i_2-m+2)^{|{\bf z}_2|},\dots,(i_{m-1}+1)^{|{\bf z}_{m-1}|})}({\bf w}_{[1,m-1]}).
\label{gmrhstwo}
\end{align}
Using
\begin{align}
\frac{1}{
1-{\bf w}_m /  {\bf w}_{[1,m-1]}
}
\frac{1}{{\bf w}_{[1,m-1]}}
=\frac{1}{
{\bf w}^{\prime \prime}
-{\bf w}_m 
}
({\bf w}_{[1,m-1]})^{|{\bf w}_m |-1}
=\frac{1}{
{\bf w}_{[1,m-1]}
-{\bf w}_m 
}
({\bf w}_{[1,m-1]})^{|{\bf z}_m |-1},
\end{align}
and the factorization of Schur polynomials
\begin{align}
&s_{((i_1-m+1+|{\bf z}_m|)^{|{\bf z}_1|},(i_2-m+2+|{\bf z}_m|)^{|{\bf z}_2|},\dots,(i_{m-1}+1+|{\bf z}_m|)^{|{\bf z}_{m-1}|})}({\bf w}_{[1,m-1]}) \nonumber \\
&=
({\bf w}_{[1,m-1]})^{|{\bf z}_m |-1}
s_{((i_1-m+1)^{|{\bf z}_1|},(i_2-m+2)^{|{\bf z}_2|},\dots,(i_{m-1}+1)^{|{\bf z}_{m-1}|})}({\bf w}_{[1,m-1]}),
\end{align}
which can be derived from \eqref{Schurdef}, we find
\eqref{gmrhstwo}
can be further rewritten as the right-hand side of
\eqref{GM}.
\end{proof}

We can also unify the two types of identities.

For $i_{2} >i_3> \cdots > i_{m-1}$ $(m \geq 3)$ and ${\bf z}_{[2,m-1]}=
(
{\bf z}_2,{\bf z}_3,\dots,{\bf z}_{m-1}
)$, introduce
\begin{align}
X_{i_2,\dots,i_{m-1}}({\bf z}_{[2,m-1]})
&:=
\frac{1}{\prod_{k=2}^{m-1} {\bf z}_k^{m-k}}
X_{i_2}({\bf z}_2) X_{i_3}({\bf z}_3)
\cdots X_{i_{m-1}}({\bf z}_{m-1}),
\end{align}
which is invariant under the permutation 
of ${\bf z}_{[2,m-1]}$.

By applying a similar argument to
\eqref{multipleforshuffle},
we obtain the following multiple commutation relations.

\begin{proposition}
For $n \geq  i_{1} >i_2> \cdots > i_{m-1} > i_m \geq 0$ and ${\bf z}_{[2,m-1]}=
(
{\bf z}_2,{\bf z}_3,\dots,{\bf z}_{m-1}
)$, we have
\begin{align}
&
\frac{
X_{i_1}({\bf z}_1)
X_{i_2,\dots,i_{m-1}}({\bf z}_{[2,m-1]})
X_{i_m}({\bf z}_m)
}{
{\bf z}_1^{m-1}
}
\nonumber \\
 =&\sum_{
({\bf w}_1,
{\bf w}_{[2,m-1]},{\bf w}_m 
)
}
\frac{1}{(
1-{\bf w}_{[2,m-1]} /  {\bf w}_1)
(
1-{\bf w}_m /  {\bf w}_1)
(
1-{\bf w}_m /  {\bf w}_{[2,m-1]})
}
\nonumber \\
&\times
\frac{
X_{i_m}({\bf w}_m)
X_{i_2,\dots,i_{m-1}}({\bf w}_{[2,m-1]})
X_{i_1}({\bf w}_1)}
{ {\bf w}_1^{m-1}
}
, \label{multipleformulathree}
\end{align}
where 
$\displaystyle \sum_{
({\bf w}_1,
{\bf w}_{[2,m-1]}, {\bf w}_m
)
}
$
denotes the sum over all 
$({\bf w}_1,
{\bf w}_{[2,m-1]}, {\bf w}_m
)$ such that ${\bf w}_1,
{\bf w}_{[2,m-1]}, {\bf w}_m
$
are unordered sets of variables satisfying $|{\bf w}_1|=|{\bf z}_1|$,
$|{\bf w}_{[2,m-1]}|=|{\bf z}_{[2,m-1]}|$,
$|{\bf w}_m|=|{\bf z}_m|$,
and ${\bf w}_1 \cup {\bf w}_{[2,m-1]} \cup {\bf w}_m ={\bf z}_1 \cup {\bf z}_{[2,m-1]} \cup {\bf z}_m  $.

\end{proposition}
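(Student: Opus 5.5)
We follow the same strategy used for \eqref{multipleforshuffle}, \eqref{multipleformulaone} and \eqref{multipleformulatwo}. The Zamolodchikov--Faddeev relations in the form \eqref{commrelone}--\eqref{commrelthree} allow us to move every operator with larger index to the right past those with smaller index. Each exchange only redistributes the spectral variables among the blocks. Since $i_1>i_2>\cdots>i_m$, we may therefore write
\begin{align}
\frac{X_{i_1}({\bf z}_1)X_{i_2,\dots,i_{m-1}}({\bf z}_{[2,m-1]})X_{i_m}({\bf z}_m)}{{\bf z}_1^{m-1}}
=\sum_{({\bf w}_1,{\bf w}_{[2,m-1]},{\bf w}_m)} A({\bf w}_1,{\bf w}_{[2,m-1]},{\bf w}_m)\,
\frac{X_{i_m}({\bf w}_m)X_{i_2,\dots,i_{m-1}}({\bf w}_{[2,m-1]})X_{i_1}({\bf w}_1)}{{\bf w}_1^{m-1}}. \nonumber
\end{align}
The right-hand side runs over the reverse-ordered words with variable sets distributed among the three blocks. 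It then remains to identify the coefficient $A$ for a fixed distribution.

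To find $A$, we first observe that the left-hand side is symmetric under all permutations of ${\bf z}_1\cup{\bf z}_{[2,m-1]}\cup{\bf z}_m$. The product $\prod_{k=1}^m{\bf z}_k^{-(m-k)}X_{i_1}({\bf z}_1)\cdots X_{i_m}({\bf z}_m)$ is invariant by \eqref{commreltwo} and \eqref{commrelthree}, exactly as in Proposition \ref{propmultipleformulaone}. The left-hand side equals this product times the factor $\prod_{k=2}^{m-1}{\bf z}_k^{m-k}\cdot{\bf z}_m^{0}/\prod_{k=2}^{m-1}{\bf z}_k^{m-k}$ rearranged. Concretely, the normalization is ${\bf z}_1^{-(m-1)}\prod_{k=2}^{m-1}{\bf z}_k^{-(m-k)}{\bf z}_m^{0}$, which is precisely the full invariant normalization. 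Hence the left-hand side equals the same expression with $({\bf z}_1,{\bf z}_{[2,m-1]},{\bf z}_m)$ replaced by $({\bf w}_1,{\bf w}_{[2,m-1]},{\bf w}_m)$.

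Next we reverse the order of the three blocks $X_{i_1}({\bf w}_1)$, $X_{i_2,\dots,i_{m-1}}({\bf w}_{[2,m-1]})$ and $X_{i_m}({\bf w}_m)$, keeping the variables attached to their blocks. This means using only the first term of \eqref{commrelone} for each elementary transposition of a single variable from a higher block with one from a lower block. The second term of \eqref{commrelone} swaps variables between blocks, so it contributes only to other distributions. The triangular ordering argument of \cite{SU,IMO} guarantees that these contributions do not feed back into the coefficient of the fixed distribution.

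The factors accumulated by these transpositions are as follows:
\begin{itemize}
\item each pair $(x,y)\in{\bf w}_1\times{\bf w}_{[2,m-1]}$ gives $(1-y/x)^{-1}$;
\item each pair in ${\bf w}_1\times{\bf w}_m$ gives $(1-y/x)^{-1}$;
\item each pair in ${\bf w}_{[2,m-1]}\times{\bf w}_m$ gives $(1-y/x)^{-1}$.
\end{itemize}
Pairs inside ${\bf w}_{[2,m-1]}$ need no reordering, since the normalized block $X_{i_2,\dots,i_{m-1}}$ is kept intact. Multiplying these factors yields
\[
\frac{1}{(1-{\bf w}_{[2,m-1]}/{\bf w}_1)(1-{\bf w}_m/{\bf w}_1)(1-{\bf w}_m/{\bf w}_{[2,m-1]})},
\]
which gives \eqref{multipleformulathree}.

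The main obstacle is justifying that only the first terms of \eqref{commrelone} contribute to the coefficient of a fixed distribution, i.e.\ the triangularity. This is most delicate when moving a single variable of ${\bf w}_m$ through the middle block, whose internal indices differ. If the direct argument proves awkward, I would instead compose the already established relations. First apply \eqref{multipleformulaone}, with $X_{i_2,\dots,i_m}$ regrouped, to move $X_{i_1}$ to the right. Then apply \eqref{multipleformulatwo} to the remaining product $X_{i_2,\dots,i_{m-1}}X_{i_m}$, suitably normalized, and collect the factors; the sum over intermediate distributions collapses because each relation is already expressed in symmetric variables.
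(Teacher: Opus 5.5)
Your argument is correct and is essentially the paper's. The paper proves \eqref{multipleformulathree} simply by invoking the argument used for \eqref{multipleforshuffle}: rewrite the already fully normalized left-hand side in the target variables using invariance under \eqref{commreltwo}, \eqref{commrelthree}, then reverse the three blocks using only the first term of \eqref{commrelone}, which produces exactly your three cross factors.

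Your fallback also works verbatim and needs no new triangularity check. After \eqref{multipleformulaone}, the factor $1/(1-{\bf u}_{[2,m]}/{\bf u}_1)$ splits as $1/\bigl((1-{\bf w}_{[2,m-1]}/{\bf w}_1)(1-{\bf w}_m/{\bf w}_1)\bigr)$. The normalization of $X_{i_2,\dots,i_{m-1}}$ is exactly the one \eqref{multipleformulatwo} requires with $m-1$ indices, and the nested sums reindex bijectively onto the triple sum rather than ``collapsing''.
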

By an argument similar to that used to derive
\eqref{FNR} and \eqref{GM} from
\eqref{schurtheorem} and
\eqref{multipleformulathree},
we obtain the following identities.

\begin{theorem}
For $n >i_2> \cdots > i_{m-1} > 0$ and ${\bf z}_{[2,m-1]}=
(
{\bf z}_2,{\bf z}_3,\dots,{\bf z}_{m-1}
)$, we have
\begin{align}
&s_{((n-m+1)^{|{\bf z}_1|},(i_2-m+2)^{|{\bf z}_2|},\dots,(i_{m-1}+1)^{|{\bf z}_{m-1}|})}({\bf z})
\nonumber \\
 =&\sum_{
({\bf w}_1,
{\bf w}_{[2,m-1]},{\bf w}_m 
)
}
\frac{1}{(
1-{\bf w}_{[2,m-1]}/  {\bf w}_1)
(
1-{\bf w}_m /  {\bf w}_1)
(
1-{\bf w}_m /  {\bf w}_{[2,m-1]})
}
\frac{
 {\bf w}_1^{n+1-m}
 }{
 {\bf w}_{[2,m-1]}
 } \nonumber \\
&\times s_{((i_2-m+3)^{|{\bf z}_2|},(i_3-m+4)^{|{\bf z}_3|},\dots,i_{m-1}^{|{\bf z}_{m-1}|})}({\bf w}_{[2,m-1]}),
\end{align}
where 
$\displaystyle \sum_{
({\bf w}_1,
{\bf w}_{[2,m-1]}, {\bf w}_m
)
}
$
denotes the sum over all 
$({\bf w}_1,
{\bf w}_{[2,m-1]}, {\bf w}_m
)$ such that ${\bf w}_1,
{\bf w}_{[2,m-1]}, {\bf w}_m
$
are unordered sets of variables satisfying $|{\bf w}_1|=|{\bf z}_1|$,
$|{\bf w}_{[2,m-1]}|=|{\bf z}_{[2,m-1]}|$,
$|{\bf w}_m|=|{\bf z}_m|$,
and ${\bf w}_1 \cup {\bf w}_{[2,m-1]} \cup {\bf w}_m ={\bf z}_1 \cup {\bf z}_{[2,m-1]} \cup {\bf z}_m  $.

\end{theorem}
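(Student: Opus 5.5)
We set $i_1=n$ and $i_m=0$ in \eqref{multipleformulathree} and take the vacuum expectation value $\langle \Omega|\cdot|\Omega\rangle$ of both sides. The argument combines the two specializations used for \eqref{FNR} and \eqref{GM}.

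On the left-hand side, the numerator is $\langle\Omega|X_n({\bf z}_1)X_{i_2}({\bf z}_2)\cdots X_{i_{m-1}}({\bf z}_{m-1})X_0({\bf z}_m)|\Omega\rangle$ divided by $\prod_{k=2}^{m-1}{\bf z}_k^{m-k}$. By \eqref{schurtheorem} with $i_1=n$ and $i_m=0$, this expectation value equals $\prod_{k=1}^{m}{\bf z}_k^{m-k}\, s_{((n-m+1)^{|{\bf z}_1|},(i_2-m+2)^{|{\bf z}_2|},\dots,(i_{m-1}+1)^{|{\bf z}_{m-1}|},0^{|{\bf z}_m|})}({\bf z})$. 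The prefactor cancels exactly against ${\bf z}_1^{m-1}\prod_{k=2}^{m-1}{\bf z}_k^{m-k}$, since ${\bf z}_m^0=1$. Trailing zero parts do not change the Schur polynomial, so the left-hand side is the Schur polynomial in the statement.

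On the right-hand side, the operator in each summand is $X_0({\bf w}_m)X_{i_2,\dots,i_{m-1}}({\bf w}_{[2,m-1]})X_n({\bf w}_1)$. First apply \eqref{actiononvacuum} on the right, which gives $X_n({\bf w}_1)|\Omega\rangle={\bf w}_1^n|\Omega\rangle$. Then apply \eqref{actionondualvacuum} on the left, which gives $\langle\Omega|X_0({\bf w}_m)=\langle\Omega|$. Together these leave the factor ${\bf w}_1^{n+1-m}$ times $\langle\Omega|X_{i_2,\dots,i_{m-1}}({\bf w}_{[2,m-1]})|\Omega\rangle$.

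This remaining expectation value involves $m-2$ distinct indices $i_2>\dots>i_{m-1}$, with ${\bf w}_k$ carrying weight ${\bf w}_k^{m-k}$ for $k=2,\dots,m-1$. The exponents used in \eqref{schurtheorem} for $m-2$ factors are $(m-2)-(k-1)=m-k-1$. So
\begin{align}
\langle\Omega|X_{i_2,\dots,i_{m-1}}({\bf w}_{[2,m-1]})|\Omega\rangle=\frac{1}{{\bf w}_{[2,m-1]}}\,s_{((i_2-m+3)^{|{\bf z}_2|},\dots,i_{m-1}^{|{\bf z}_{m-1}|})}({\bf w}_{[2,m-1]}),
\end{align}
exactly as in the proof of \eqref{GM}. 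Here the Schur index for ${\bf w}_k$ is $i_k-(m-2)+(k-1)=i_k-m+k+1$. This matches the partition $((i_2-m+3)^{|{\bf z}_2|},\dots,i_{m-1}^{|{\bf z}_{m-1}|})$ in the statement. Substituting these into \eqref{multipleformulathree} gives precisely the claimed identity, with the three factors $(1-\cdot/\cdot)^{-1}$ carried over unchanged.

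The only step needing care is this bookkeeping of monomial prefactors. One must check that the normalization of $X_{i_2,\dots,i_{m-1}}$, with exponents $m-k$, differs from the one \eqref{schurtheorem} requires for $m-2$ operators by exactly one power per variable. That discrepancy is what produces the $1/{\bf w}_{[2,m-1]}$ factor and the shift of the Schur indices. Should $i_{m-1}=0$ be excluded? The hypothesis $i_{m-1}>0$ guarantees that the sequence $i_2>\dots>i_{m-1}>0$ is strictly above $i_m=0$, so \eqref{multipleformulathree} applies.
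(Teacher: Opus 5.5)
Your proposal is correct and follows the paper's own route: you specialize $i_1=n$, $i_m=0$ in \eqref{multipleformulathree}, take vacuum expectation values, strip off $X_n({\bf w}_1)$ and $X_0({\bf w}_m)$ via \eqref{actiononvacuum} and \eqref{actionondualvacuum}, and evaluate the middle factor by \eqref{schurtheorem} with $m-2$ operators, which yields the factor $1/{\bf w}_{[2,m-1]}$ and the parts $i_k-m+k+1$. One correction to your bookkeeping: \eqref{schurtheorem} gives the block ${\bf z}_k$ the part $i_k-m+k$, so the ${\bf z}_{m-1}$ block gets $i_{m-1}-1$, not the $i_{m-1}+1$ you copied from a misprint in the statement, and your argument therefore proves the intended identity with $(i_{m-1}-1)^{|{\bf z}_{m-1}|}$ on the left.
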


Finally, in this subsection, we present a 
corollary of \eqref{schurtheorem}.

For a Young diagram $\lambda=(\lambda_1,\dots,\lambda_m)$,
the Kostka numbers $K_{\lambda,\alpha}$ are defined as the expansion coefficients of the
Schur polynomial $s_\lambda$ in terms of monomial symmetric functions
\begin{align}
s_\lambda(z_1,\dots,z_m)=\sum_{\alpha=(\alpha_1,\dots,\alpha_m) }
K_{\lambda,\alpha}
\prod_{k=1}^m z_k^{\alpha_k}.
\label{kostkadef}
\end{align}

The following is a three-dimensional realization of the
Kostka numbers, which is a simple corollary of \eqref{schurtheorem}.
\begin{theorem} We have
\begin{align}
\langle \Omega| 
X_{\lambda_1+m-1,\alpha_1+m-1} X_{\lambda_2+m-2,\alpha_2+m-2}
\cdots X_{\lambda_m,\alpha_m}
| \Omega \rangle
=K_{(\lambda_1,\dots,\lambda_m),(\alpha_1,\dots,\alpha_m)}.
\end{align}
\end{theorem}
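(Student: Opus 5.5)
We would prove this by extracting coefficients from Theorem~\ref{mainthm} in the special case where every block is a single variable, $|{\bf z}_k|=1$. Set $i_k:=\lambda_k+m-k$ for $k=1,\dots,m$. Since $\lambda$ is a partition, $i_1>i_2>\cdots>i_m\ge 0$. We also need $i_1\le n$, which we assume, since otherwise the operators $X^{(n)}_{i_k}$ are not defined. Then \eqref{schurtheorem} with single variables $z_1,\dots,z_m$ reads
\begin{align}
\langle \Omega| X_{i_1}(z_1)\cdots X_{i_m}(z_m)|\Omega\rangle
=\prod_{k=1}^m z_k^{m-k}\, s_{(i_1-m+1,i_2-m+2,\dots,i_m)}(z_1,\dots,z_m)
=\prod_{k=1}^m z_k^{m-k}\, s_\lambda(z_1,\dots,z_m).
\end{align}

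Next we expand the left-hand side using $X_i(z)=\sum_j z^j X_{i,j}$, which holds by the definition of $X_{i,j}^{(n)}$. Each $X_{i_k}(z_k)$ carries its own variable $z_k$, and the $X_{i,j}$ do not depend on $z$. Hence the left-hand side equals
\begin{align}
\sum_{j_1,\dots,j_m} z_1^{j_1}\cdots z_m^{j_m}\,\langle\Omega|X_{i_1,j_1}\cdots X_{i_m,j_m}|\Omega\rangle .
\end{align}
This is a finite sum, since $0\le j\le n$ for each factor. By \eqref{kostkadef}, the right-hand side equals
\begin{align}
\sum_\alpha K_{\lambda,\alpha}\prod_k z_k^{\alpha_k+m-k}.
\end{align}

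Both sides are polynomials in the independent variables $z_1,\dots,z_m$. Comparing the coefficients of $\prod_k z_k^{\alpha_k+m-k}$ gives
\begin{align}
\langle\Omega|X_{\lambda_1+m-1,\alpha_1+m-1}\cdots X_{\lambda_m,\alpha_m}|\Omega\rangle=K_{\lambda,\alpha},
\end{align}
which is the claim.

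The exponent shifts $j_k=\alpha_k+m-k$ are the only bookkeeping to check. In particular, when some $\alpha_k+m-k$ exceeds $n$ or when $K_{\lambda,\alpha}=0$, both sides vanish consistently. The main point to be careful about is that the coefficient extraction is legitimate. This requires the expansion of the product of the $X_{i_k}(z_k)$ to be the plain multinomial expansion, which holds because each operator depends on a distinct variable and no commutation relations are used.
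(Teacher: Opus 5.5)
Your proposal is correct and is essentially the paper's own proof: specialize Theorem~\ref{mainthm} to single-variable blocks with $i_k=\lambda_k+m-k$, expand each $X_{i_k}(z_k)=\sum_j z_k^j X_{i_k,j}$, and compare the coefficient of $\prod_k z_k^{\alpha_k+m-k}$ with \eqref{kostkadef}. You also state explicitly the requirement $\lambda_1+m-1\le n$ for the operators to be defined, which the paper leaves implicit.
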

\begin{proof}
From $X_i(z)=\sum_{j=0}^n z^j X_{i,j}$, 
$|{\bf z}_1|=\cdots=|{\bf z}_m|=1$ case of
\eqref{schurtheorem}
and \eqref{kostkadef}, we have
\begin{align}
&\sum_{j_1,\dots,j_m=0}^n \prod_{k=1}^m z_k^{j_k}
\langle \Omega|
X_{i_1,j_1}X_{i_2,j_2} \cdots X_{i_m,j_m}
|\Omega \rangle \nonumber \\
=&\prod_{k=1}^{m} z_k^{m-k} 
\sum_{\alpha=(\alpha_1,\dots,\alpha_m) }
K_{(i_1-m+1,i_2-m+2,\dots,i_m),\alpha}
\prod_{k=1}^m z_k^{\alpha_k}.
\end{align}
Comparing coefficients of both-hand sides of the above equation, we get
\begin{align}
\langle \Omega|
X_{i_1,\alpha_1+m-1}X_{i_2,\alpha_2+m-2} \cdots X_{i_m,\alpha_m}
|\Omega \rangle =
K_{(i_1-m+1,i_2-m+2,\dots,i_m),(\alpha_1,\dots,\alpha_m)}
.
\end{align}
\end{proof}

\subsection{Application to multispecies TASEP}
In this subsection, we present an application to the multispecies TASEP
by considering partition functions defined through traces,
rather than vacuum expectation values.

We define the trace of $X \in \mathrm{End}(\mathcal{F}^{\otimes n(n-1)/2})$ to be
\begin{align}
\mathrm{Tr}(X):=\sum_{(j,k)} \sum_{m_{j,k} \geq 0}
\Bigg( \bigotimes_{(j,k) \in D_n} \langle m_{j,k}| \Bigg)
X
\Bigg( \bigotimes_{(j,k) \in D_n}
| m_{j,k} \rangle 
\Bigg).
\end{align}
Here we investigate some class of partition functions
in the following form
\begin{align}
\mathrm{Tr}(  X_{i_1}^{(n)}(z_1) X_{i_2}^{(n)}(z_2) \cdots X_{i_m}^{(n)}(z_m)  ),
\end{align}
which is graphically represented as Figure \ref{figuretracepartitionfunction}.

\begin{figure}[htbp]
\centering
\includegraphics[width=12truecm]{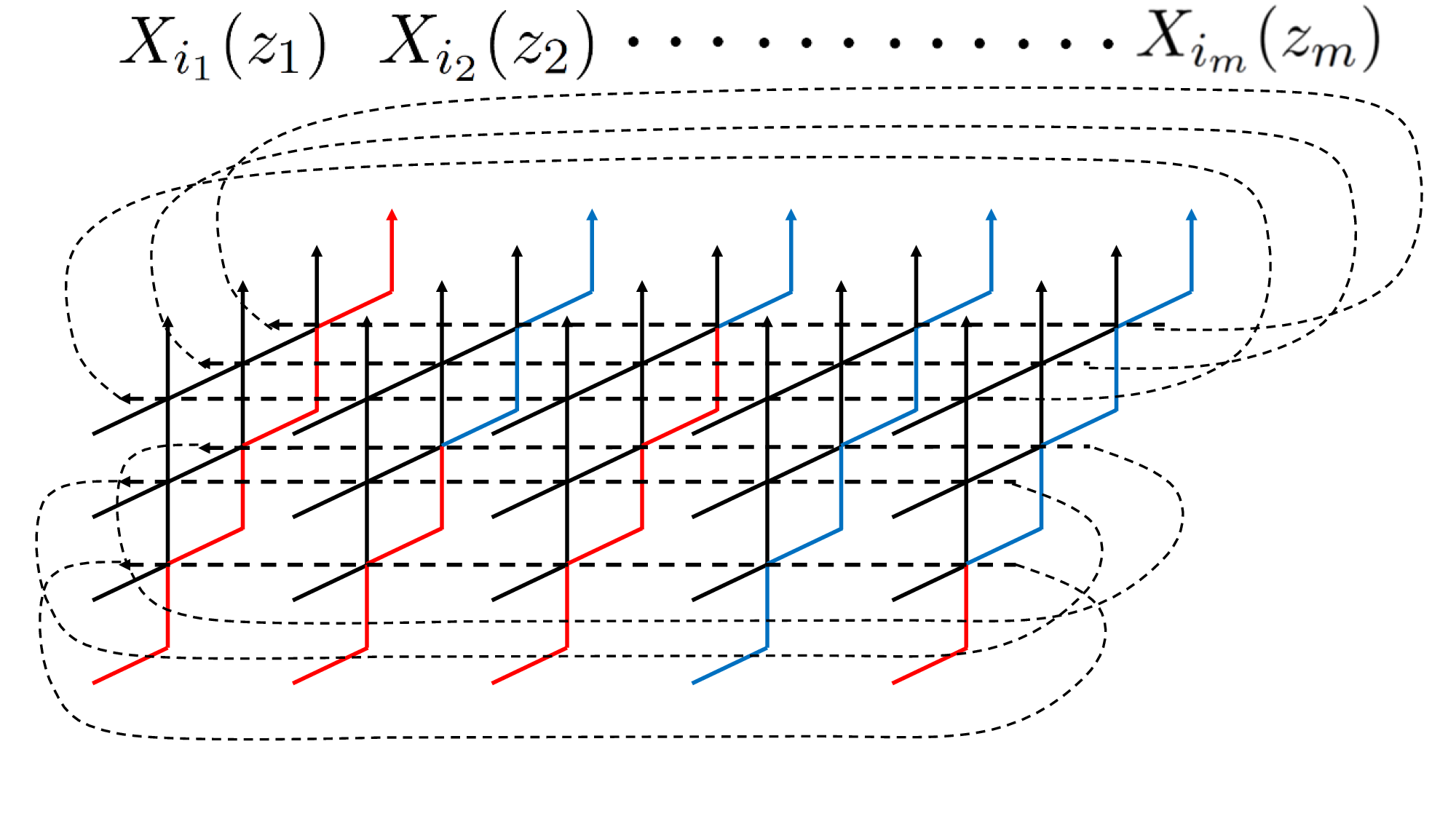}
\caption{The partition functions $\mathrm{Tr}(  X_{i_1}^{(n)}(z_1) X_{i_2}^{(n)}(z_2) \cdots X_{i_m}^{(n)}(z_m)  )$.
}
\label{figuretracepartitionfunction}
\end{figure}


We consider partition functions which at least one $X_j^{(n)}$ is used as components
for all $j=0,\dots,n$.
The following is the simplest type.
\begin{lemma}
We have
\begin{align}
\mathrm{Tr}
(
X_n({\bf z}_n)
X_{n-1}({\bf z}_{n-1} )
\cdots X_0({\bf z}_0)
)
=\prod_{j=0}^n {\bf z}_j^j. \label{tracefactorization}
\end{align}
\end{lemma}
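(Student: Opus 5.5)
The plan is to show that, in the trace, the only surviving configurations are the completely frozen ones. I will also use the two vacuum facts \eqref{actiononvacuum} and \eqref{actionondualvacuum}, or rather their proofs, to pin down the boundary behaviour.

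First I record a conservation law. Each nonzero matrix element of $\mathcal{R}$ shifts the Fock occupation by $+1$ ($\mathbf{b}^+$), $-1$ ($\mathbf{b}^-$) or $0$ ($1,\mathbf{t}$). It is therefore convenient to track, along each Fock line $L_{k\ell}$, the total number of $\mathbf{b}^+$ minus $\mathbf{b}^-$ over the whole product $X_n({\bf z}_n)\cdots X_0({\bf z}_0)$. The trace is a sum of diagonal matrix elements, so every surviving configuration has zero net creation on every line.

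The heart of the argument is an induction over the triangle $D_n$ showing that no $\mathbf{b}^\pm$ occurs at all. I would order the lines by a suitable linear order, for instance by decreasing $k+\ell$, mimicking the proof of \eqref{actiononvacuum}. For a fixed line I would compare the boundary colourings of successive factors. Factors $X_j$ with larger $j$ have more red on their fixed bottom boundary, and they stand to the left. The ice-rule (colour-conservation) structure of $\mathcal{R}$ should then force the following: along the string $X_n,\dots,X_0$, any creation on a line would have to be compensated by an annihilation appearing further to the left. Since the colours entering the line are monotone in $j$, that compensation cannot occur, and the monotonicity therefore yields a sign constraint forcing the net creation to be $\ge 0$ with equality only if no $\mathbf{b}^\pm$ appears.

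An alternative route is to cycle factors using the trace together with the Zamolodchikov--Faddeev relation \eqref{commrelthree}. For $i>j$, $X_i(x)X_j(y)=\tfrac{x}{y}X_i(y)X_j(x)$ lets spectral parameters be redistributed, which reduces the claim to the case $|{\bf z}_j|=1$ and fixes the dependence on each ${\bf z}_j$ up to the monomial ${\bf z}_j^{j}$.

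Once all configurations are frozen, every local weight is $1$ or $\mathbf{t}$ acting diagonally. I expect the frozen configuration to be unique, with the occupation forced to $0$ on every line, so that the trace collapses to the single vacuum term $\langle\Omega|\cdots|\Omega\rangle$. Then Lemma~\ref{simplestpartition}, applied with the increasing-index ordering read in the appropriate direction, or directly the count of red top edges, gives weight $z^j$ for each $X_j(z)$, since exactly $j$ top sites are red. Combining these weights gives $\prod_j {\bf z}_j^j$.

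The main obstacle is the middle step: proving rigorously that traces over nonvacuum states vanish. What I would try first is to show directly that $X_n(z)$ annihilates every nonvacuum basis bra component from the left, i.e.\ that $\langle m|X_n(z)$ involves only $\langle\Omega|$-type frozen states (dual to the proof of \eqref{actiononvacuum}). Similarly, $X_0(z)$ acting to the right should behave like \eqref{actionondualvacuum}. Together with cyclicity of the trace, this should confine the sum to the vacuum.
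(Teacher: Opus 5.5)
There is a genuine gap. Your target is right: no $\mathbf{b}^\pm$ survives and the occupation is $0$ on every line. But the mechanism you propose does not reach it, and the essential ingredient is missing. That ingredient is the zero-number projection $\mathbf{t}$ which an \emph{intermediate} factor deposits on each Fock line.

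In the paper, on $\mathcal{F}_{1,n-1}$ the ice rule and the boundary colours force the vertex of $X_{n-1}$ to be $[\mathcal{R}]_{01}^{01}=\mathbf{t}$. On the same line, $X_n$ can only contribute $1$ or $\mathbf{b}^-$, and $X_0,\dots,X_{n-2}$ can only contribute $1$ or $\mathbf{b}^+$. Then $\mathbf{b}^-\mathbf{t}=0$ forces $X_n$ to give $1$, and $\mathrm{Tr}(\mathbf{t}(\mathbf{b}^+)^k)=\delta_{k,0}$ forces all the others to give $1$. The colours fixed in this way feed into the next line in the order $\mathcal{F}_{1,n-1}\to\mathcal{F}_{2,n-2}\to\mathcal{F}_{1,n-2}\to\mathcal{F}_{3,n-3}\to\cdots\to\mathcal{F}_{1,1}$, and the same pattern recurs on each. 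The $\mathbf{t}$ on each line is also what makes the trace finite and pins the occupation to $0$. A frozen line carrying only identities would give a divergent trace, so "frozen" alone does not give your conclusion.

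Your monotonicity claim is false as stated. On $\mathcal{F}_{1,n-1}$ the colour constraints do allow a $\mathbf{b}^-$ from $X_n$ to the left of a $\mathbf{b}^+$ from some $X_j$ with $j\le n-2$, with zero net creation. Whether such a pair survives depends only on where the $\mathbf{t}$'s sit: $\mathrm{Tr}(\mathbf{t}\,\mathbf{b}^-\mathbf{b}^+)=1$, whereas $\mathrm{Tr}(\mathbf{b}^-\mathbf{t}\,\mathbf{b}^+)=0$.

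Your fallback also fails. $X_n(z)$ contains the all-red configuration, which is $z^n$ times the identity, so $\langle m|X_n(z)$ contains $z^n\langle m|$ for every $m$. Likewise $X_0(z)|m\rangle$ contains $|m\rangle$. The identities \eqref{actiononvacuum} and \eqref{actionondualvacuum} concern the vacuum on the opposite sides and have no such duals. The projections onto occupation $0$ come from $X_1,\dots,X_{n-1}$, not from the endpoint factors.

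The Zamolodchikov--Faddeev route only shows that $\prod_j{\bf z}_j^{-j}$ times the trace is symmetric in all variables. It neither reduces the problem to $|{\bf z}_j|=1$ nor shows that this symmetric function equals $1$.

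Finally, Lemma~\ref{simplestpartition} needs indices weakly increasing from left to right, so it does not apply to $X_n\cdots X_0$. Suppose you did prove that all non-vacuum diagonal entries vanish. The remaining vacuum expectation value would then follow from Theorem~\ref{mainthm} with $i_k=n+1-k$, where the Schur factor is $s_\varnothing=1$. Alternatively, you could read off the $j$ red top edges of $X_j$ in the explicit frozen configuration.
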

\begin{proof}
We investigate the coloring of the edges of the $L$-operators along the sequence of
the bosonic Fock spaces
\[
\mathcal{F}_{1,n-1} \to \mathcal{F}_{2,n-2} \to \mathcal{F}_{1,n-2}
    \to \mathcal{F}_{3,n-3} \to \mathcal{F}_{2,n-3} \to \cdots
    \to \mathcal{F}_{1,1}.
\]

We start from Figure \ref{figure:pf1},
and first examine the $L$-operators acting on $\mathcal{F}_{1,n-1}$.
We note from the ice-rule
of the $L$-operator $[\mathcal{R}]_{ij}^{ab}=0$ unless $i+j=a+b$ 
that the $L$-operator in the $X_{n-1}$-operator produces the zero-number projection operator,
and one edge is colored by red and the other one is colored by blue
(Figure \ref{figure:pf2}).
The $L$-operator in the $X_{n-1}$-operator can either produce an annihilation operator or the identity operator
in principle, but the existence of the zero-number projection operator in the $X_{n-1}$-operator forces
the  $L$-operator in the $X_{n-1}$-operator to be the identity operator, and the edges are colored by red
(Figure \ref{figure:pf3}).
By a similar consideration, we note each of the $L$-operators in the $X_0,\dots,X_{n-2}$-operators
is forced to be the identity operator, and the edges are colored by blue
(Figure \ref{figure:pf4}).

We then carry out the same observations for the $L$-operators acting on $\mathcal{F}_{2,n-2}$
(Figure \ref{figure:pf5}).
Repeating this argument along the sequence mentioned above, we finally get Figure \ref{figure:pf6}
which is the unique configuration for $\mathrm{Tr}
(
X_n({\bf z}_n)
X_{n-1}({\bf z}_{n-1} )
\cdots X_0({\bf z}_0)
)$.
One notes that, for the $X_j$-operator, there are $j$ red edges on the top boundary,
which gives a power of $j$ for the variable of the $X_j$-operator,
and we conclude
\eqref{tracefactorization}.

%
\end{proof}

\begin{figure}[htbp]
\centering

\begin{minipage}{0.45\textwidth}
\centering
\includegraphics[width=\textwidth]{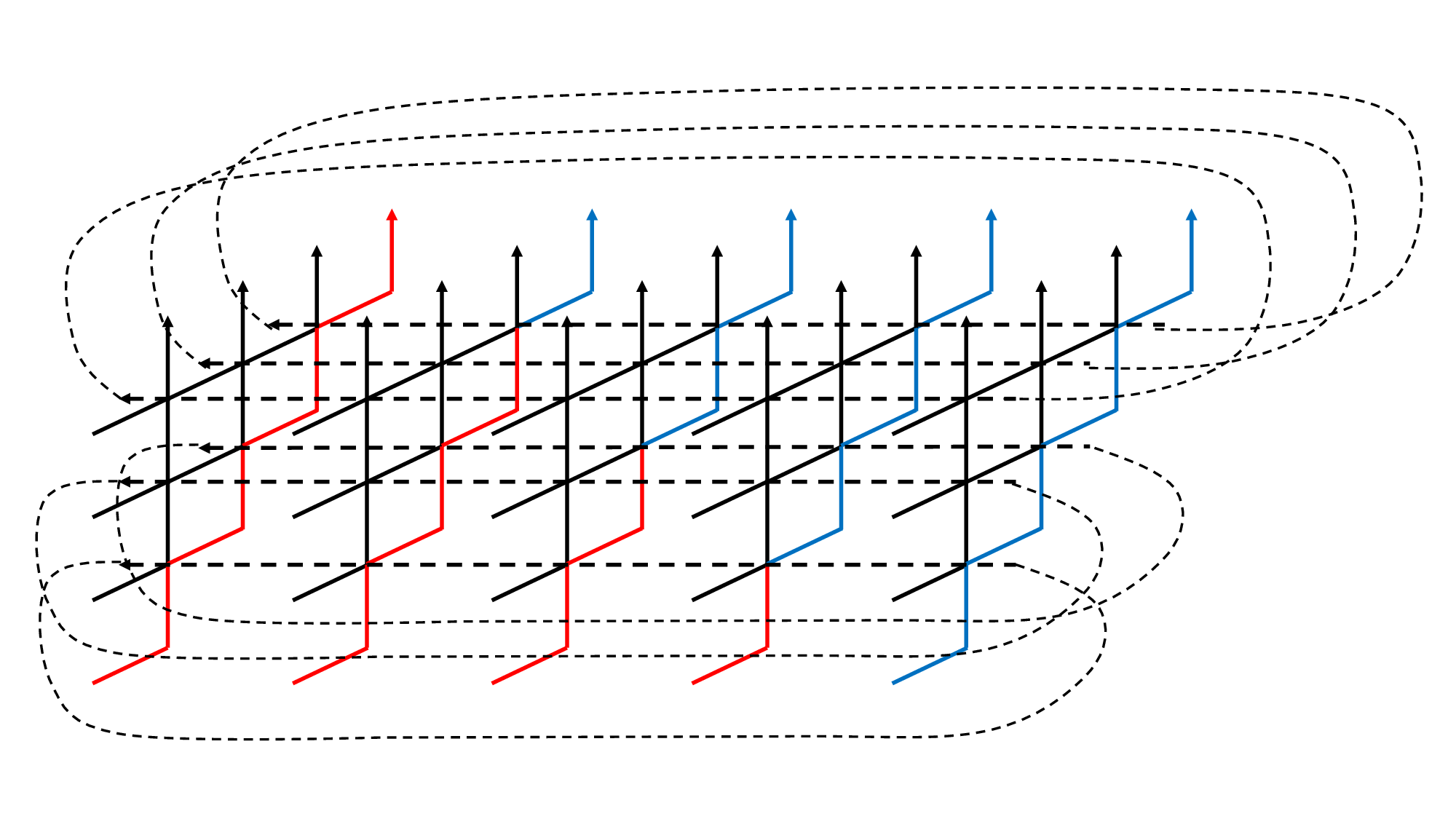}
\caption{The partition function $\mathrm{Tr}
(
X_n({\bf z}_n)
X_{n-1}({\bf z}_{n-1} )
\cdots X_0({\bf z}_0)
)$.}
\label{figure:pf1}
\end{minipage}
\hfill
\begin{minipage}{0.45\textwidth}
\centering
\includegraphics[width=\textwidth]{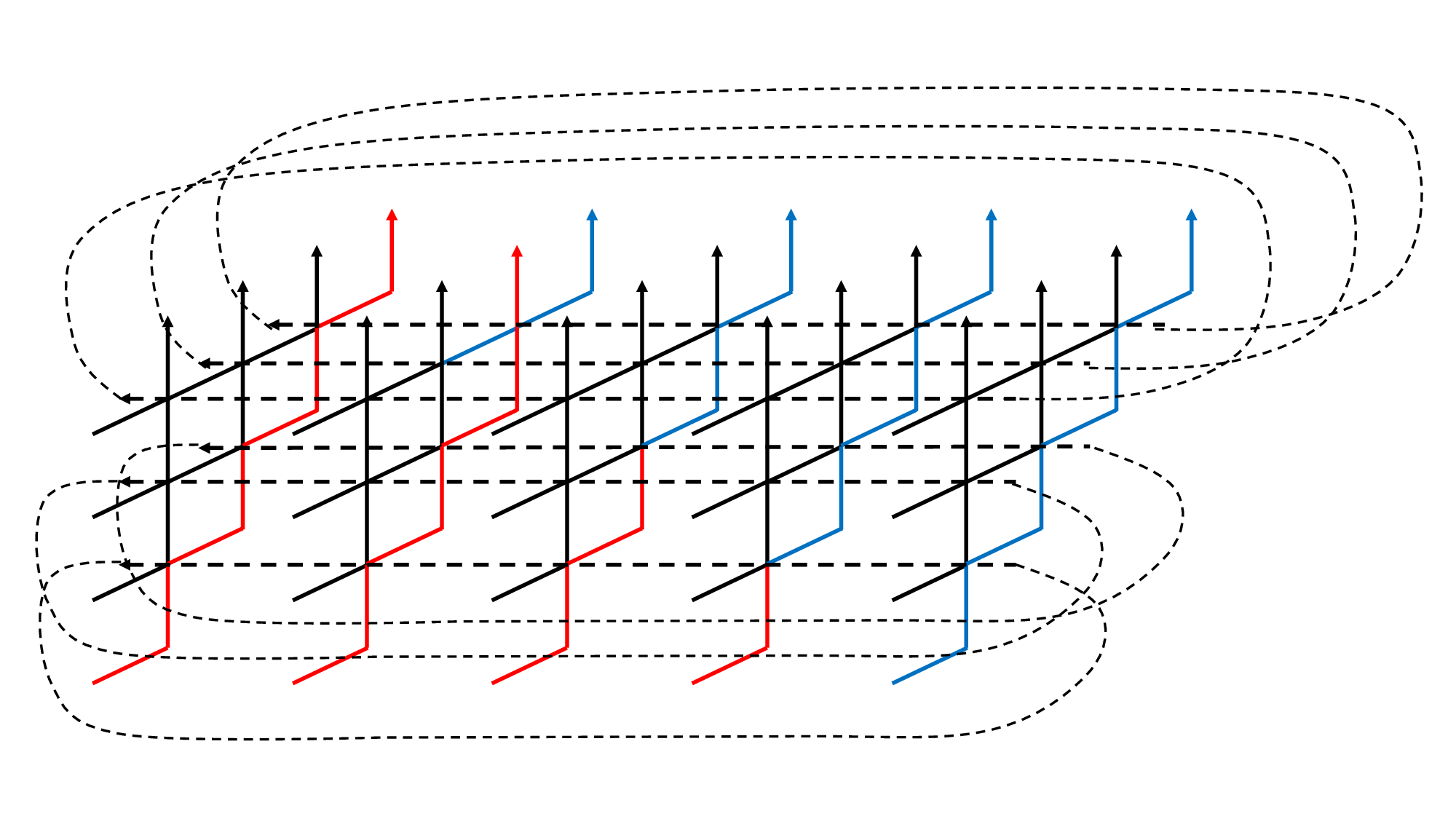}
\caption{The $L$-operator on $\mathcal{F}_{1,n-1}$ in the $X_{n-1}$-operator produces the zero-number
projection operator.}
\label{figure:pf2}
\end{minipage}

\end{figure}

\begin{figure}[htbp]
\centering

\begin{minipage}{0.45\textwidth}
\centering
\includegraphics[width=\textwidth]{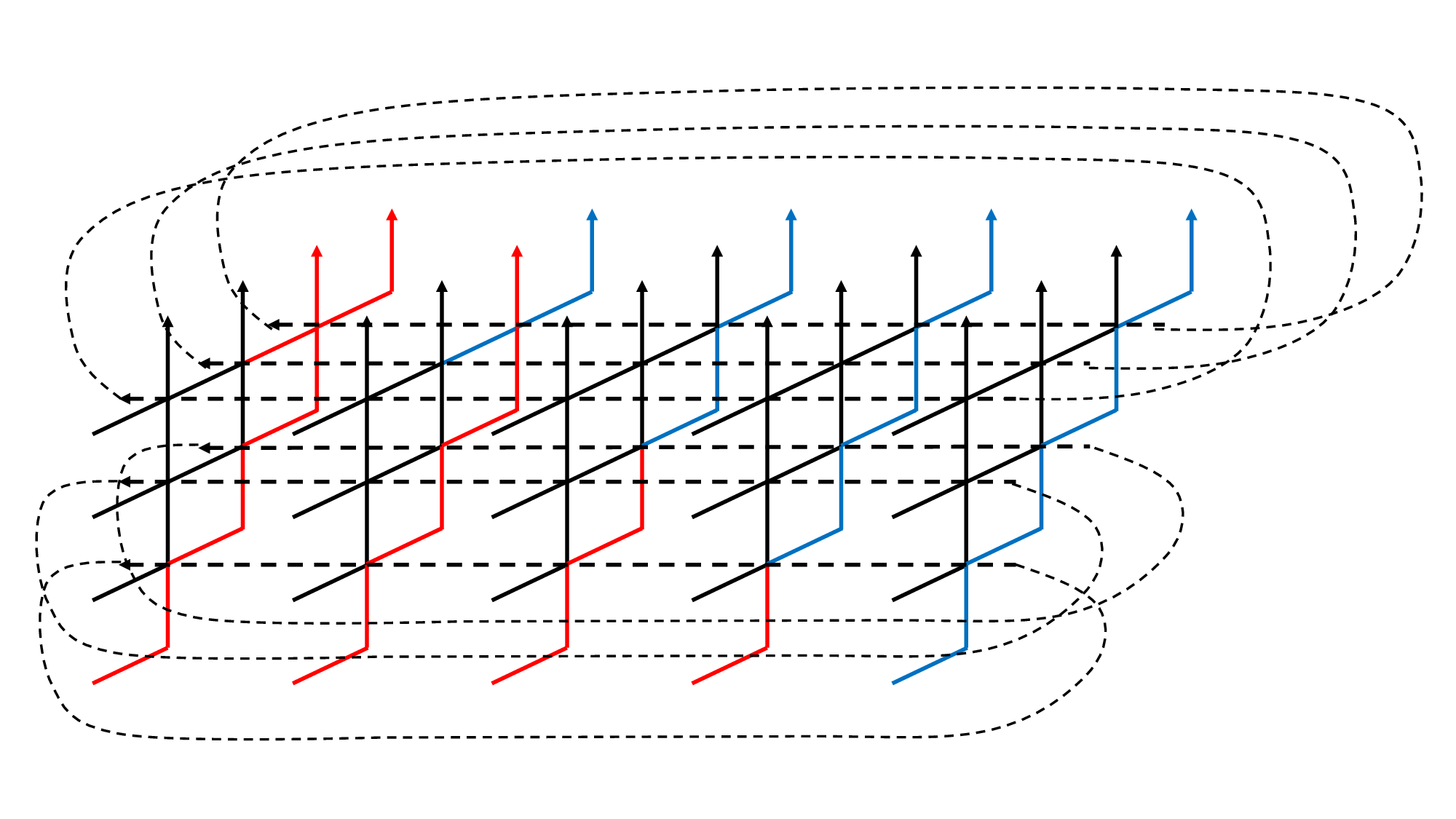}
\caption{
The $L$-operator acting on $\mathcal{F}_{1,n-1}$ in the $X_{n}$-operator cannot produce a annihilation operator
and is determined to be the identity operator, and the edges are colored red.
}
\label{figure:pf3}
\end{minipage}
\hfill
\begin{minipage}{0.45\textwidth}
\centering
\includegraphics[width=\textwidth]{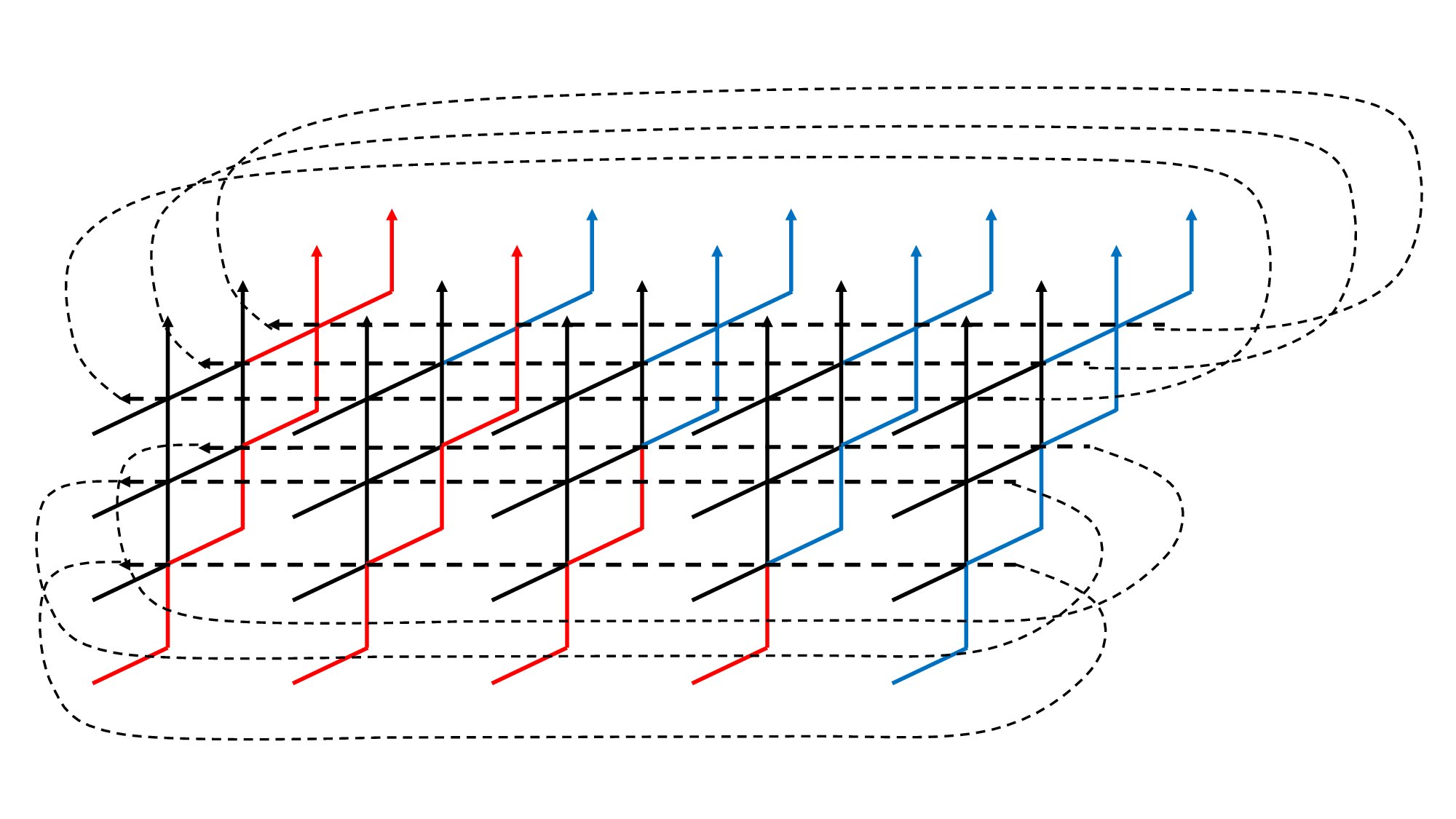}
\caption{
The $L$-operators acting on $\mathcal{F}_{1,n-1}$ in the $X_0,\dots,X_{n-2}$-operators cannot produce creation operators
and each of them is determined to be the identity operator, and the edges are colored blue.
}
\label{figure:pf4}
\end{minipage}

\end{figure}

\begin{figure}[htbp]
\centering

\begin{minipage}{0.45\textwidth}
\centering
\includegraphics[width=\textwidth]{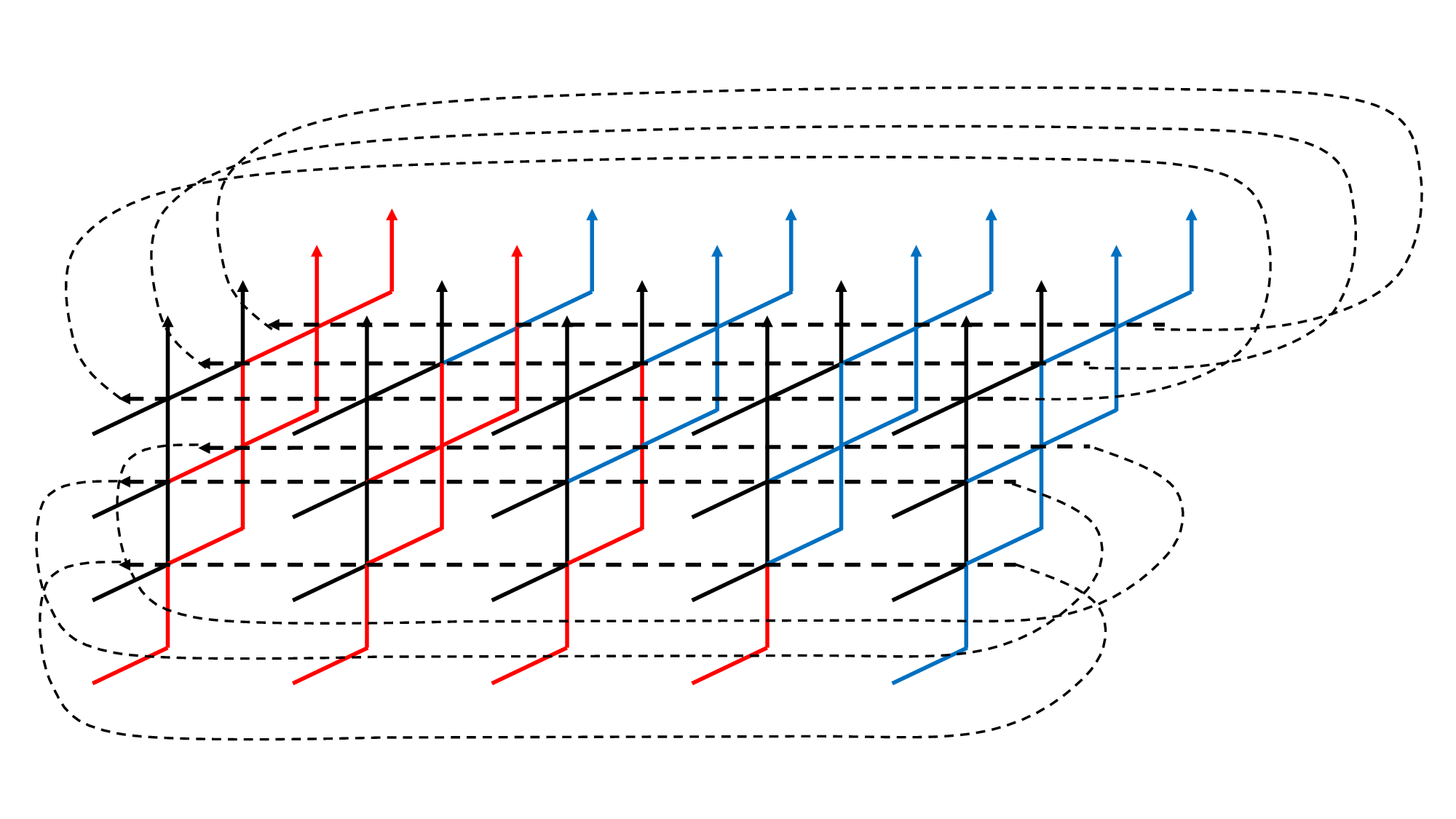}
\caption{The figure after
the edges of the $L$-operators acting on $\mathcal{F}_{2,n-2}$ are colored.
}
\label{figure:pf5}
\end{minipage}
\hfill
\begin{minipage}{0.45\textwidth}
\centering
\includegraphics[width=\textwidth]{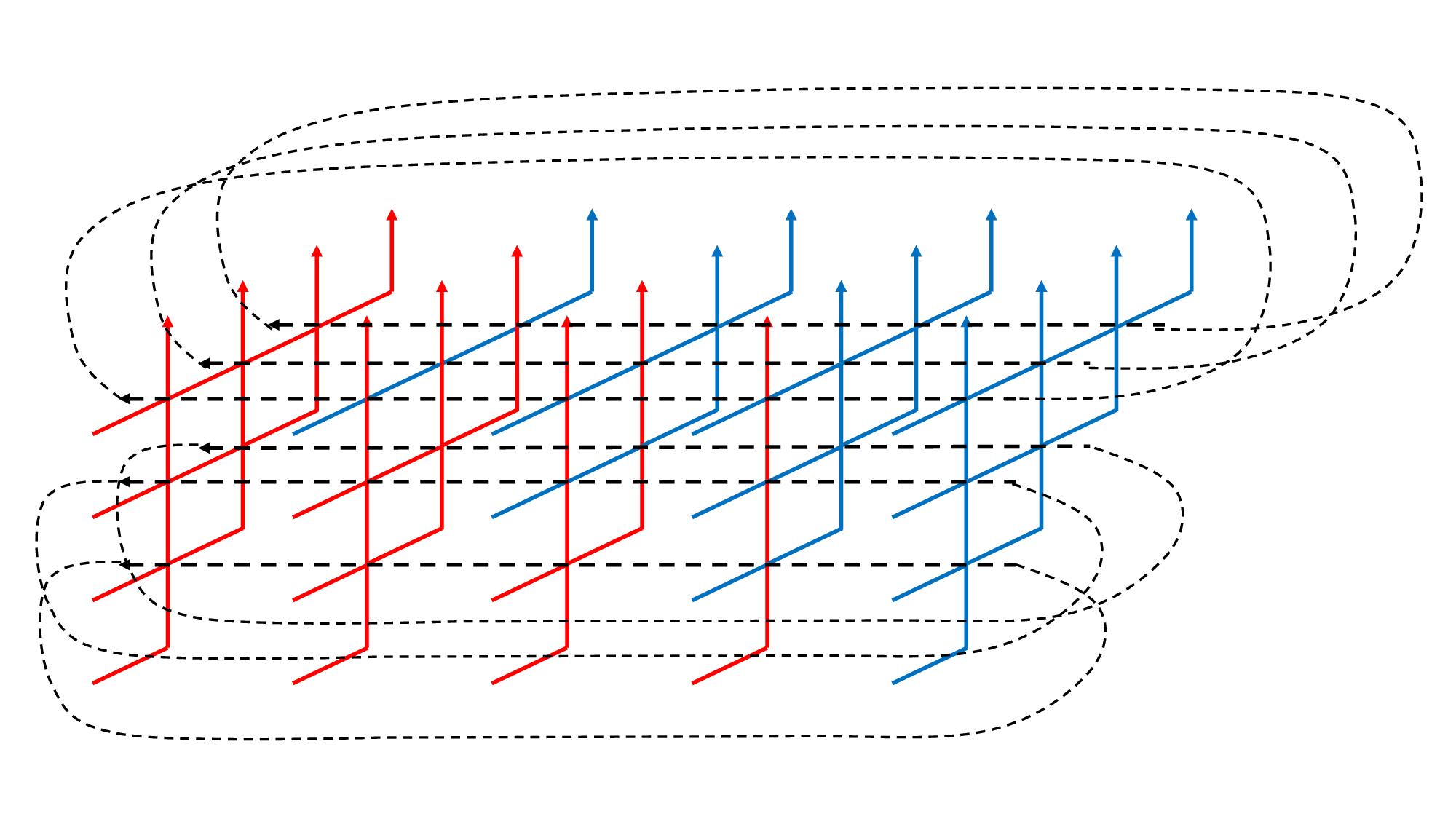}
\caption{The figure after 
the edges of the $L$-operators acting on all bosonic Fock spaces are colored.
This corresponds to the unique configuration.
}
\label{figure:pf6}
\end{minipage}

\end{figure}

Now we recall a theorem of Kuniba-Maruyama-Okado \cite{KMO2}
which gives a description of the steady state of the multispecies TASEP.
We introduce the $n$-multispecies TASEP on a one-dimensional periodic chain
with $L$ sites, denoted by $\mathbb{Z}_L$.
Let $\sigma_i \in \{0,1,\dots,n\}$ $(i=1,\dots,L)$ denote the species (color)
of the particle at site $i$, where color $0$ represents an empty site.
A configuration is written as $\boldsymbol{\sigma}=(\sigma_1,\sigma_2,\dots,\sigma_L)$.
We also introduce the associated vectors
\[
|\boldsymbol{\sigma} \rangle=
|\sigma_1,\sigma_2,\dots,\sigma_L\rangle
:= |\sigma_1\rangle_1 \otimes |\sigma_2\rangle_2 \otimes \cdots \otimes |\sigma_L\rangle_L
\in (\mathbb{C}^{\,n+1})^{\otimes L},
\]
where $|\sigma\rangle$ is the standard basis vector of the $(n+1)$-dimensional
complex vector space.

The dynamics is defined by the following local transition rule.
For each neighbouring pair of local states, if
\[
(\sigma_i,\sigma_{i+1}) = (\alpha,\beta)
\quad \text{with }\alpha>\beta,
\]
then the exchange
\[
\alpha\beta \;\longrightarrow\; \beta\alpha
\]
occurs with the uniform transition rate.

The time evolution of the probability vector
\[
|\mathbb{P}(t)\rangle = \sum_{\boldsymbol{\sigma}}
\mathbb{P}(\boldsymbol{\sigma},t)\,
|\boldsymbol{\sigma}\rangle,
\qquad
\boldsymbol{\sigma}=(\sigma_1,\dots,\sigma_L),
\]
is governed by the master equation
\[
\frac{d}{dt}|\mathbb{P}(t)\rangle = H\,|\mathbb{P}(t)\rangle,
\]
where $H$ is the Markov generator (transition matrix) defined by
\[
H=\sum_{i \in \mathbb{Z}_L} h_{i,i+1}.
\]
The local operator $h_{i,i+1}$ acts nontrivially only on sites $(i,i+1)$ and is
given by
\[
h_{i,i+1}\,|\alpha,\beta\rangle
=
\begin{cases}
|\beta,\alpha\rangle - |\alpha,\beta\rangle, & \alpha>\beta,\\[4pt]
0, & \alpha\le\beta,
\end{cases}
\]
where $|\alpha,\beta\rangle$ denotes the tensor product basis at sites $(i,i+1)$.

Sectors are invariant subspaces of the Markov matrix $H$ which are spanned by
basis vectors $|\boldsymbol{\sigma}\rangle$ with fixed particle
multiplicities.  Namely, for
\[
\boldsymbol{m}=(m_0,m_1,\dots,m_n),
\]
where $m_k$ denotes the number of particles of species~$k$, we define
\[
S(\boldsymbol{m})
=
\Bigl\{
\boldsymbol{\sigma}=(\sigma_1,\sigma_2,\dots,\sigma_L)
\,\Big|\,
\sum_{i=1}^L \delta_{k,\sigma_i}=m_k,\quad k=0,\dots,n
\Bigr\}.
\]
The Markov generator $H$ preserves these sectors, i.e.
\[
H\,|\boldsymbol{\sigma}\rangle \in 
\mathrm{span}\bigl\{\,|\boldsymbol{\tau}\rangle \;\big|\;\boldsymbol{\tau}\in S(\boldsymbol{m})\,\bigr\}
\qquad
\text{for all }\boldsymbol{\sigma}\in S(\boldsymbol{m}).
\]

For a fixed sector $S(\boldsymbol{m})$, the steady state is the vector
\[
|\mathbb{P}\rangle=\sum_{  \boldsymbol{\sigma}  \in S(\boldsymbol{m})}\mathbb{P}(  \boldsymbol{\sigma}  )\,|  \boldsymbol{\sigma}   \rangle
\]
satisfying
\[
H\,|\mathbb{P}\rangle=0.
\]
Here the components $\mathbb{P}( \boldsymbol{\sigma}   )$ are the stationary probabilities of the configuration $| \boldsymbol{\sigma}  \rangle$.

%

\begin{theorem} \cite{KMO2}
The components of the steady state vector $\mathbb{P}(\sigma_1,\sigma_2,\dots,\sigma_m)$,
\\
$0 \leq \sigma_1,\sigma_2,\dots,\sigma_m \leq n$
of $n$-species TASEP with  $m$-sites
are given by
\begin{align}
\mathbb{P}(\sigma_1,\sigma_2,\dots,\sigma_m)=
\mathrm{Tr}
(
X_{\sigma_1}(1)
X_{\sigma_2}(1)
\cdots X_{\sigma_m}(1)
), \label{normzaliationtasep}
\end{align}
under the normalization
\begin{align}
\mathbb{P}(\sigma_1,\sigma_2,\dots,\sigma_m)=1, \label{coeffone}
\end{align}
for $n=\sigma_1 \geq \sigma_2 \geq \cdots \geq \sigma_m=0$.
\end{theorem}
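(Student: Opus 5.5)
The plan is the matrix product ansatz, with the Zamolodchikov--Faddeev algebra of Theorem~\ref{ThmFZ} supplying the needed local ``hat relation'' and Lemma \eqref{tracefactorization} supplying the normalization. Set
\[
|\mathbb{P}\rangle=\sum_{\boldsymbol{\sigma}}\mathrm{Tr}\bigl(X_{\sigma_1}(1)\cdots X_{\sigma_m}(1)\bigr)|\boldsymbol{\sigma}\rangle ,
\]
restricted to the sector $S(\boldsymbol{m})$. The goal is to show $H|\mathbb{P}\rangle=0$. The standard sufficient condition is the existence of auxiliary operators $\hat X_a$ such that, as vectors in $(\mathbb{C}^{n+1})^{\otimes 2}$ with operator coefficients,
\[
h\sum_{a,b}X_aX_b\,|a,b\rangle=\sum_{a,b}\bigl(\hat X_aX_b-X_a\hat X_b\bigr)|a,b\rangle .
\]
Summing this over the bonds $(i,i+1)\in\mathbb{Z}_L$ telescopes inside the trace by cyclicity, so $H|\mathbb{P}\rangle=0$.

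Reading off components, the condition says the following. For $c<d$ it requires $X_dX_c=\hat X_cX_d-X_c\hat X_d$. For $c>d$ it requires $-X_cX_d=\hat X_cX_d-X_c\hat X_d$. For $c=d$ it requires $\hat X_cX_c=X_c\hat X_c$. I will take $\hat X_a:=-\frac{d}{dz}X_a(z)\big|_{z=1}$ and derive everything by differentiating Theorem~\ref{ThmFZ} in $y$ at $y=x=1$.

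\begin{itemize}
\item For $i<j$, the first relation gives $X_iX_j'=X_i'X_j+X_jX_i$, which is exactly the $c<d$ identity.
\item For $i>j$, the relation $X_i(x)X_j(y)=\frac{x}{y}X_i(y)X_j(x)$ gives $X_iX_j'=-X_iX_j+X_i'X_j$, which is the $c>d$ identity.
\item For $i=j$, commutativity $[X_i(x),X_i(y)]=0$ gives $X_iX_i'=X_i'X_i$.
\end{itemize}

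Since $X_i(z)=\sum_j z^jX_{i,j}$ is a polynomial in $z$, the derivatives are honest finite sums of operators, so the differentiation step poses no difficulty.

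For the normalization, take $z=1$ in Lemma \eqref{tracefactorization}. This shows that the sorted configuration $n=\sigma_1\ge\cdots\ge\sigma_m=0$ has trace $\prod_j 1^j=1$, which gives \eqref{coeffone}. In a sector where some species is absent, I would first relabel the species to reduce to a TASEP with fewer species and the corresponding smaller $n$.

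The main obstacle is analytic rather than algebraic: the trace over the infinite-dimensional space $\mathcal{F}^{\otimes n(n-1)/2}$ must be finite, and the telescoping via cyclicity must be legitimate for the operators $\hat X_a$. I would first try to use weight conservation (the ice rule) to show that for a fixed sector only finitely many diagonal Fock states contribute, arguing as in the proof of \eqref{tracefactorization}. A fallback is to insert a regulator $\prod \mathbf{t}$-type weight (or $q^{\mathrm{number}}$), run the argument, and remove the regulator at the end. Nonvanishing and positivity of the result then follow from uniqueness of the steady state of the irreducible Markov chain on $S(\boldsymbol{m})$.
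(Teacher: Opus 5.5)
The paper does not prove this statement; it quotes it from \cite{KMO2} and remarks only that the $X_i(z)$ realize the Zamolodchikov--Faddeev algebra. Your route (ZF algebra, then hat relation, then telescoping in the trace, then uniqueness of the stationary vector, normalized by \eqref{tracefactorization}) is the standard mechanism behind that citation. The algebraic part is correct: your reading of the components of $h(X\otimes X)$ is right, and differentiating the three relations of Theorem~\ref{ThmFZ} in $y$ at $x=y=1$ gives exactly the three required identities with $\hat X_a=-X_a'(1)$.

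\textbf{The gap.} The genuine gap is the one you only flag: finiteness of $\mathrm{Tr}(X_{\sigma_1}(1)\cdots X_{\sigma_m}(1))$. This is part of what the theorem asserts, and both the telescoping and the cyclic move across the bond $(m,1)$ depend on it. It is not a formality.

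\emph{Divergence outside basic sectors.} Outside basic sectors the trace really diverges. Take $n=3$ and $\sigma=(3,1,0)$, which literally satisfies the normalization clause. Using the $n=3$ operators of Section~3 at $q=0$, $z=1$, the term $1\cdot\mathbf t_{11}\mathbf t_{21}\cdot 1$ of $X_3X_1X_0$ acts as the identity on $\mathcal F_{12}$. It therefore contributes $\sum_{m_{12}\ge0}1$, and nothing cancels because all weights are nonnegative. So the statement must be restricted to sectors containing every species $0,\dots,n$, as the paper tacitly does. Relabelling species replaces $X^{(n)}$ by $X^{(n')}$, so it does not rescue the formula as written.

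\emph{Finiteness is not available from your tools.} Theorem~\ref{ThmFZ} does not supply finiteness. Its relations express descents $X_jX_i$ $(j>i)$ through ascents but not conversely, so \eqref{tracefactorization} cannot be transported to other words. Your regulator fallback also fails as stated. A weight like $x^{\mathbf N}$ does not commute with the $X_i$, which change boson numbers, so cyclicity is lost; and removing the regulator would again require the finiteness in question. A product of $\mathbf t$'s is not a regulator at all, since it projects onto the vacuum.

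\textbf{How to close it.} What closes the gap is your option (a), made precise. Show that in a basic sector every nonzero term of $\langle s|X_{\sigma_1}\cdots X_{\sigma_m}|s\rangle$ applies $\mathbf t$ to each $\mathcal F_{k\ell}$ at least once. Since each factor changes each occupation number by at most one, all occupations then stay $\le m$, and only finitely many $s$ contribute.

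For $n=3$ this is a short count. $X_1$ and $X_2$ always carry $\mathbf t_{21}$ and $\mathbf t_{12}$ respectively. In any term avoiding $\mathbf t_{11}$:
\begin{itemize}
\item $X_0$ and $X_3$ shift $\mathcal F_{12}$ and $\mathcal F_{21}$ by equal amounts;
\item $X_1$ raises $\mathcal F_{12}$;
\item $X_2$ lowers $\mathcal F_{21}$.
\end{itemize}
So the two net changes differ by $\#X_1+\#X_2>0$, which is impossible for a diagonal element. The general case needs the analogous ice-rule bookkeeping, or finiteness can be imported from the multiline-queue description of \cite{KMO1}.

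Once finiteness is in hand, the rest is routine. $\hat X_a=-\sum_j jX_{a,j}$ is dominated termwise by $n$ times $X_a$, because all configuration weights are nonnegative. Traces of tensor monomials in $\mathbf b^{\pm},\mathbf t$ are cyclically invariant, because they count closed nonnegative walks.
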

Specializing all the spectral variables in
\eqref{tracefactorization} to one
leads to
\begin{align}
\mathbb{P}(n^{|{\bf z}_n|},(n-1)^{|{\bf z}_{n-1}|},\dots,0^{|{\bf z}_0|})=1,
\end{align}
which corresponds to the normalization \eqref{coeffone}.

We determine several other components using the multiple commutation relations.
We first derive explicit formulas for partition functions involving spectral variables.
\begin{theorem}
For $k \geq j$, we have
\begin{align}
&\mathrm{Tr}
(
X_{n}({\bf z}_n) \cdots X_{k+1}({\bf z}_{k+1})
X_{j-1}({\bf z}_{j-1}) \cdots X_{0}({\bf z}_{0})
X_{k}({\bf z}_k) \cdots X_{j}({\bf z}_{j})
) \nonumber \\
=&\prod_{\ell=k+1}^n {\bf z}_\ell^{\ell+j-k-1} \prod_{\ell=0}^{k} {\bf z}_\ell^\ell
\
s_{(k+1-j)^{|{\bf z}_{[k+1,n]}|}}({\bf z}_{[0,j-1]}, {\bf z}_{[k+1,n]}).
\label{tasepprobgen}
\end{align}
\end{theorem}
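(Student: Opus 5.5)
The plan is to reduce the trace to the totally ordered configuration of Lemma \eqref{tracefactorization} by moving the block $X_k({\bf z}_k)\cdots X_j({\bf z}_j)$ to the left of $X_{j-1}({\bf z}_{j-1})\cdots X_0({\bf z}_0)$, which is what \eqref{tracefactorization} requires. I will not try to pass it through the whole group $X_n\cdots X_{k+1}$ via cyclicity of the trace.

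Set $A:=X_{j-1}({\bf z}_{j-1})\cdots X_0({\bf z}_0)$, a product of operators with indices smaller than $j$, and $B:=X_k({\bf z}_k)\cdots X_j({\bf z}_j)$, with indices in $[j,k]$. We need to rewrite $AB$ as a combination of terms $B'A'$. Every index occurring in $A$ is smaller than every index occurring in $B$. The relation for $i<j$ in Theorem \ref{ThmFZ} has two terms. The second term $(1-x/y)X_j(y)X_i(x)$ keeps the variables attached to their operators, up to a scalar. The first term $X_i(y)X_j(x)$ leaves the order unchanged and only swaps the variables.

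To handle this cleanly, I would proceed as in the derivation of \eqref{multipleforshuffle} and \eqref{multipleformulaone}. The expression $\prod_{\ell\le j-1}{\bf z}_\ell^{-\ell}A$ is invariant under permutations of ${\bf z}_{[0,j-1]}$, by \eqref{commreltwo} and \eqref{commrelthree}, up to the normalization that makes these operators symmetric. Similarly, $\prod_\ell {\bf z}_\ell^{-(\ell-j)}B$ should be invariant under permutations of ${\bf z}_{[j,k]}$. The exponents must be fixed by checking \eqref{commrelthree}; this normalization is the one delicate bookkeeping point.

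Writing $AB=\sum C\,B({\bf w}')A({\bf w})$, we expect a coefficient of the form
\[
C=\frac{\text{monomial ratio}}{1-{\bf w}_{[0,j-1]}/{\bf w}_{[j,k]}}
\]
or its analogue. This comes from using only the first term of the relation, once the lower and upper indices have been separated. Concretely, I would first prove a two-block multiple commutation relation analogous to \eqref{multipleformulaone}, with block $[j,k]$ and block $[0,j-1]$.

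Inserting this relation into the trace and applying \eqref{tracefactorization} to each term gives a sum of monomials. In each term the sets ${\bf w}_{[0,j-1]}$ and ${\bf w}_{[k+1,n]}$ play the roles of the swapped variables. This is where I expect the main difficulty. Cyclicity means the $X_{k+1},\dots,X_n$ block effectively sits to the right of $B$ as well, so the exchange really mixes ${\bf z}_{[0,j-1]}$ with ${\bf z}_{[k+1,n]}$, while ${\bf z}_{[j,k]}$ stays fixed. This explains why the Schur function in the claimed formula involves only these two groups.

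I would therefore instead use cyclicity to rewrite the trace as $\mathrm{Tr}(A\,B\,X_n\cdots X_{k+1})$. Then I would commute the block $D:=X_n({\bf z}_n)\cdots X_{k+1}({\bf z}_{k+1})$ past $A$ by writing $\mathrm{Tr}(BDA)$ and applying the relation to $DA$ in the reverse direction, $A\cdot D\mapsto D\cdot A$. This gives a symmetrization over $({\bf w}_{[0,j-1]},{\bf w}_{[k+1,n]})$ with weight $1/(1-{\bf w}_{[0,j-1]}/{\bf w}_{[k+1,n]})$ times monomials. The trace of $BDA$ is still not ordered, but each term can then be brought to the form of \eqref{tracefactorization} by a final cyclic move.

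Evaluating the result with \eqref{tracefactorization} leaves a sum of the form of Proposition \ref{Symmformula} with $m=2$. The exponent on ${\bf w}_{[k+1,n]}$ exceeds the exponent on ${\bf w}_{[0,j-1]}$ by the constant shift $k+1-j$, after factoring out $\prod_\ell {\bf z}_\ell^{\ell}$-type monomials. Proposition \ref{Symmformula} then yields $s_{(k+1-j)^{|{\bf z}_{[k+1,n]}|}}({\bf z}_{[0,j-1]},{\bf z}_{[k+1,n]})$, together with the prefactor $\prod_{\ell>k}{\bf z}_\ell^{\ell+j-k-1}\prod_{\ell\le k}{\bf z}_\ell^{\ell}$. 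The main obstacle is tracking the monomial normalizations exactly so that the exponents in this prefactor come out as stated.
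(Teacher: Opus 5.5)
Your final route is the paper's proof. You commute the block $X_n\cdots X_{k+1}$ past $X_{j-1}\cdots X_0$ using only the first term of \eqref{commrelone}, move to the ordered trace of \eqref{tracefactorization} by cyclicity, and finish with Proposition \ref{Symmformula} for $m=2$; the needed direction is $DA\mapsto AD$, as your weight $1/(1-{\bf w}_{[0,j-1]}/{\bf w}_{[k+1,n]})$ reflects, not $A\cdot D\mapsto D\cdot A$ as written. The normalization you flag is the paper's $\prod_{\ell=k+1}^n{\bf z}_\ell^{\ell+j-k-1}\prod_{\ell=0}^{j-1}{\bf z}_\ell^{\ell}$: the exponent rises by one per block from right to left, and after the trace is evaluated only ${\bf w}_{[k+1,n]}^{\,k+1-j}$ remains.
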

\begin{proof}
We can derive the following commutation relations
\begin{align}
&X_{n}({\bf z}_n) \cdots X_{k+1}({\bf z}_{k+1})
X_{j-1}({\bf z}_{j-1}) \cdots X_{0}({\bf z}_{0})
\nonumber \\
=&\prod_{\ell=k+1}^n {\bf z}_\ell^{\ell+j-k-1} \prod_{\ell=0}^{j-1} {\bf z}_\ell^\ell
\sum_{({\bf w}_{[0,j-1]}, {\bf w}_{[k+1,n]})}
\frac{1}{\prod_{\ell=k+1}^n {\bf w}_\ell^{\ell+j-k-1} \prod_{\ell=0}^{j-1} {\bf w}_\ell^\ell}
\frac{1}{1-{\bf w}_{[0,j-1]}/ {\bf w}_{[k+1,n]}}
\nonumber \\
&\times
X_{j-1}({\bf w}_{j-1}) \cdots X_{0}({\bf w}_{0})
X_{n}({\bf w}_n) \cdots X_{k+1}({\bf w}_{k+1}),
\label{multcommfortrace}
\end{align}
in a similar way to derive
\eqref{multiplecommrel}.
Multiplying both-hand sides of \eqref{multcommfortrace} by
$X_{k}({\bf z}_k) \cdots X_{j}({\bf z}_{j})$
and taking trace, we get
\begin{align}
&\mathrm{Tr}( X_{n}({\bf z}_n) \cdots X_{k+1}({\bf z}_{k+1})
X_{j-1}({\bf z}_{j-1}) \cdots X_{0}({\bf z}_{0})
X_{k}({\bf z}_k) \cdots X_{j}({\bf z}_{j}) )
\nonumber \\
=&\prod_{\ell=k+1}^n {\bf z}_\ell^{\ell+j-k-1} \prod_{\ell=0}^{j-1} {\bf z}_\ell^\ell
\sum_{({\bf w}_{[0,j-1]}, {\bf w}_{[k+1,n]})}
\frac{1}{\prod_{\ell=k+1}^n {\bf w}_\ell^{\ell+j-k-1} \prod_{\ell=0}^{j-1} {\bf w}_\ell^\ell}
\frac{1}{1-{\bf w}_{[0,j-1]}/ {\bf w}_{[k+1,n]}}
\nonumber \\
&\times
\mathrm{Tr}(
X_{j-1}({\bf w}_{j-1}) \cdots X_{0}({\bf w}_{0})
X_{n}({\bf w}_n) \cdots X_{k+1}({\bf w}_{k+1})
X_{k}({\bf z}_k) \cdots X_{j}({\bf z}_{j})
).
\label{fordergenprob}
\end{align}
Using the trace property and
\eqref{tracefactorization},
we have
\begin{align}
&
\mathrm{Tr}(
X_{j-1}({\bf w}_{j-1}) \cdots X_{0}({\bf w}_{0})
X_{n}({\bf w}_n) \cdots X_{k+1}({\bf w}_{k+1})
X_{k}({\bf z}_k) \cdots X_{j}({\bf z}_{j})
) \nonumber \\
=&
\mathrm{Tr}(
X_{n}({\bf w}_n) \cdots X_{k+1}({\bf w}_{k+1})
X_{k}({\bf z}_k) \cdots X_{j}({\bf z}_{j})
X_{j-1}({\bf w}_{j-1}) \cdots X_{0}({\bf w}_{0})
) \nonumber \\
=&\prod_{\ell=0}^{j-1} {\bf w}_\ell^\ell
\prod_{\ell=k+1}^{n} {\bf w}_\ell^\ell
\prod_{\ell=j}^{k} {\bf z}_\ell^\ell
.  \label{cyclicfactor}
\end{align}
Using \eqref{cyclicfactor},  \eqref{fordergenprob} can be rewritten as
\begin{align}
&\mathrm{Tr}( X_{n}({\bf z}_n) \cdots X_{k+1}({\bf z}_{k+1})
X_{j-1}({\bf z}_{j-1}) \cdots X_{0}({\bf z}_{0})
X_{k}({\bf z}_k) \cdots X_{j}({\bf z}_{j}) )
\nonumber \\
=&\prod_{\ell=k+1}^n {\bf z}_\ell^{\ell+j-k-1} \prod_{\ell=0}^{k} {\bf z}_\ell^\ell
\sum_{({\bf w}_{[0,j-1]}, {\bf w}_{[k+1,n]})}
\frac{\prod_{\ell=k+1}^n {\bf w}_\ell^{k+1-j}}{1-{\bf w}_{[0,j-1]}/{\bf w}_{[k+1,n]}}
.
\label{beforefinalexpressiontasep}
\end{align}
Finally, applying
\eqref{symmformulaSchur},
we get
\eqref{tasepprobgen}.
\end{proof}
As a corollary of \eqref{tasepprobgen}, by specializing all the spectral variables to one,
we get the following.
\begin{theorem}
For $k \geq j$, we have
\begin{align}
&\mathbb{P}
(
n^{|{\bf z}_n|}, \dots, (k+1)^{|{\bf z}_{k+1}|},
(j-1)^{|{\bf z}_{j-1}|}, \dots, 0^{|{\bf z}_{0}|},
k^{|{\bf z}_k|}, \dots,j^{|{\bf z}_{j}|}
) \nonumber \\
=&
s_{(k+1-j)^{|{\bf z}_{[k+1,n]}|}}(1^{ |{\bf z}_{[0,j-1]}|+ |{\bf z}_{[k+1,n]}|}).
\label{tasepprob}
\end{align}
\end{theorem}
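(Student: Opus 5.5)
The plan is to specialize \eqref{tasepprobgen} at $z=1$ and match both sides with \eqref{normzaliationtasep}.

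\textbf{Left-hand side.} Set every spectral variable in \eqref{tasepprobgen} to $1$. Each product $X_\ell({\bf z}_\ell)$ is a product of $|{\bf z}_\ell|$ commuting copies of $X_\ell(z)$, so it becomes $X_\ell(1)^{|{\bf z}_\ell|}$. The left-hand side is therefore the trace of the ordered product
\[
X_n(1)^{|{\bf z}_n|}\cdots X_{k+1}(1)^{|{\bf z}_{k+1}|}\,X_{j-1}(1)^{|{\bf z}_{j-1}|}\cdots X_0(1)^{|{\bf z}_0|}\,X_k(1)^{|{\bf z}_k|}\cdots X_j(1)^{|{\bf z}_j|}.
\]
By the theorem of \cite{KMO2} recalled above, this trace is $\mathbb{P}(\boldsymbol{\sigma})$, where $\boldsymbol{\sigma}$ lists the species in the same order. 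This gives exactly the word on the left of \eqref{tasepprob}. The normalization \eqref{coeffone} is consistent with \eqref{normzaliationtasep}, because specializing \eqref{tracefactorization} at $z=1$ returns $1$.

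\textbf{Right-hand side.} At ${\bf z}=1$ every monomial prefactor $\prod_{\ell}{\bf z}_\ell^{\ell+j-k-1}\prod_\ell{\bf z}_\ell^\ell$ in \eqref{tasepprobgen} equals $1$. The Schur polynomial is a polynomial, so it can simply be evaluated at the point $(1,\dots,1)$ with $|{\bf z}_{[0,j-1]}|+|{\bf z}_{[k+1,n]}|$ entries. This yields $s_{(k+1-j)^{|{\bf z}_{[k+1,n]}|}}(1^{|{\bf z}_{[0,j-1]}|+|{\bf z}_{[k+1,n]}|})$.

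\textbf{Main obstacle: justifying the specialization.} The identity \eqref{tasepprobgen} was derived through the multiple commutation relation \eqref{multcommfortrace}, whose individual terms have denominators $1-{\bf w}_{[0,j-1]}/{\bf w}_{[k+1,n]}$ that vanish at coinciding variables. The argument is as follows.
\begin{itemize}
\item The final identity \eqref{tasepprobgen} has no denominators. Its right-hand side is a polynomial, and its left-hand side is a polynomial in the $z$'s with operator coefficients.
\item To see that the trace is finite, note that \eqref{tasepprobgen} holds as an identity of rational functions. In each degree the trace is a finite sum, because of the conservation constraints of the five-vertex model; equivalently, the $z$-coefficients $X_{i,j}$ have finitely supported traces.
\item Two polynomial functions that agree on a Zariski dense set agree everywhere, in particular at $z=1$.
\end{itemize}
The remaining point is that the trace at $z=1$ equals the specialization of the polynomial trace. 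This holds because $X_i(z)=\sum_j z^jX_{i,j}$ is polynomial in $z$ and the trace is linear.
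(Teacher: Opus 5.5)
Your proposal is correct and is essentially the paper's own argument: \eqref{tasepprob} is obtained by setting all spectral variables in \eqref{tasepprobgen} equal to one and identifying the resulting trace with $\mathbb{P}$ via \eqref{normzaliationtasep}. One correction to your side justification: finiteness of the traces does not follow from the ice rule or from the individual coefficients $X_{i,j}$ (for instance $X_{0,0}$ contains the identity operator, so $\mathrm{Tr}(X_{0,0})$ diverges), but from the species content of this particular word, namely that every species $0,\dots,n$ occurs and forces vacuum projectors $\mathbf{t}$ (compare the proof of \eqref{tracefactorization}); since finiteness is already implicit in \eqref{tasepprobgen} and \eqref{normzaliationtasep}, you can simply drop that bullet.
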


{\bf Example}: One of the equations given in
\cite[Example 2.1.]{KMO2} is
\begin{align}
|\xi(2,1,1,1) \rangle
=&24|00123\rangle+6|00213 \rangle+12|01023 \rangle
+17|01203 \rangle 
+8|02013\rangle+3|02103 \rangle
\nonumber \\
&+4|10023 \rangle
+7|10203 \rangle 
+9|12003\rangle+6|20013 \rangle+3|20103 \rangle
+|21003 \rangle.
\end{align}
This correponds to the steady state vector of three-species TASEP $(n=3)$ in the sector $S(\boldsymbol{m})$,
$\boldsymbol{m}=(2,1,1,1)$.
The coefficient 1 of $|21003 \rangle$ and the cyclic property implies
$\mathbb{P}(21003)=\mathbb{P}(32100)=1$,
which matches with the normalization \eqref{coeffone}.
The coefficient 6 of $|00213 \rangle$ and the cyclic property implies
$\mathbb{P}(00213)=\mathbb{P}(30021)=6$.
The case
$n=3$, $j=1$, $k=2$, $|{\bf z}_0|=2$, $|{\bf z}_1|=|{\bf z}_2|=|{\bf z}_3|=1$ of
\eqref{tasepprob} gives $\mathbb{P}(30021)= s_{2}(1^3)=6$.
As another example,
the coefficient 4 of $|10023 \rangle$ and the cyclic property implies
$\mathbb{P}(10023)=\mathbb{P}(31002)=4$.
The case
$n=3$, $j=2$, $k=2$, $|{\bf z}_0|=2$, $|{\bf z}_1|=|{\bf z}_2|=|{\bf z}_3|=1$ of
\eqref{tasepprob} gives $\mathbb{P}(31002)= s_{1}(1^4)=4$.

\subsection{Modified Schubert polynomials}

In this subsection, we introduce a subclass of partition functions,
identify them as an analogue of Schubert polynomials defined via
divided difference operators, and discuss several of their properties.

We introduce the following version of the divided difference operator
for a function of variables $\mathbf{z}=(z_1,\dots,z_m)$
\begin{align}
\mathcal{D}_i
f
:=\frac{z_{i+1}}{z_i}
\frac{z_i f-z_{i+1}
s_i f
 }{z_i-z_{i+1}}, \ \ \ i=1,\dots,m-1, \label{twdivdiff}
\end{align}
where $s_i$ swaps $z_i$ and $z_{i+1}$.

For $w \in S_m$ where $S_m$ is the symmetric group of order $m$,
we introduce the {\it modified Schubert polynomials}
$\tilde{\mathfrak{S}}_w$
recursively by
\[
\begin{aligned}
\tilde{\mathfrak{S}}_{w_0}
&= z_1^{m-1} z_2^{m-2} \cdots z_{m-1}, \\
\tilde{\mathfrak{S}}_{w}
&= \mathcal{D}_i \, \tilde{\mathfrak{S}}_{w s_i},
\qquad \text{whenever } \ell(ws_i) = \ell(w) + 1,
\end{aligned}
\]
where $w_0$ is the longest permutation in $S_m$.
Denote $w \in S_m$ using one-line notation as $w=[w(1),w(2),\dots,w(m)]$.

\begin{align}
&\widetilde{\mathfrak S}_{[1,2,3]}
= \frac{
z_3^{2}\,(z_2+z_3)\,(z_1+z_2)\,(z_1+z_3)
}{z_1^{2}}, \nonumber \\[6pt]
&\widetilde{\mathfrak S}_{[1,3,2]}
= \frac{z_3\bigl(z_1^3 z_2 + z_1^3 z_3 + z_1^2 z_2^2 + z_1^2 z_2 z_3 + z_1 z_2^3 + z_1 z_2^2 z_3 + z_2^3 z_3\bigr)}{z_1^2}, \nonumber \\[6pt]
&\widetilde{\mathfrak S}_{[2,1,3]}
= \frac{z_3\bigl(z_1 z_2^2 + z_1 z_2 z_3 + z_1 z_3^2 + z_2^3 + z_2^2 z_3 + z_2 z_3^2 + z_3^3\bigr)}{z_2}, \nonumber \\[6pt]
&\widetilde{\mathfrak S}_{[2,3,1]}
= z_2^2\bigl(z_1 + z_2\bigr), \nonumber \\[6pt]
&\widetilde{\mathfrak S}_{[3,1,2]}
= \frac{z_1^2 z_3\bigl(z_2 + z_3\bigr)}{z_2}, \nonumber \\[6pt]
&\widetilde{\mathfrak S}_{[3,2,1]}
= z_1^2 z_2.
\end{align}

Note that the modified Schubert polynomials resembles but differs from
the Schubert polynomials $\mathfrak{S}_w$ 
\cite{LaSc,MacSch}
defined recursively by
\[
\begin{aligned}
\mathfrak{S}_{w_0}
&= z_1^{m-1} z_2^{m-2} \cdots z_{m-1}, \\[2mm]
\mathfrak{S}_{w}
&= \partial_i\, \mathfrak{S}_{w s_i},
\qquad \text{whenever } \ell(ws_i)=\ell(w)+1,
\end{aligned}
\]
which uses the same monomial $z_1^{m-1} z_2^{m-2} \cdots z_{m-1}$ for the longest element $w_0$, but
the type of divided difference operator used is the standard divided difference operator
\begin{align}
\partial_i f
= \frac{f - s_i f}{z_i - z_{i+1}},  \ \ \ i=1,\dots,m-1. \label{orddivdiff}
\end{align}
Note $\tilde{\mathfrak{S}}_w$ is in general a Laurent polynomial
in constrast to $\mathfrak{S}_w$ which is a polynomial for all $w \in S_n$. 
Note also the following relation
\begin{align}
\mathcal{D}_i =\frac{z_{i+1}}{z_i}  \partial_i z_i =\frac{z_{i+1}}{z_i} (1+z_{i+1} \partial_i). \label{relationdivdiff}
\end{align}

\begin{proposition}
For $m \le \frac{n+2}{2}$, we have
\begin{align}
\langle \Omega| X_{2(w(m)-1)}(z_m)
X_{2(w(m-1)-1)}(z_{m-1}) \cdots X_{2(w(1)-1)}(z_1)|\Omega \rangle
=\prod_{k=1}^m z_k^{m-k} \times \tilde{\mathfrak{S}}_w.
\label{relationmockSchubert}
\end{align}
\end{proposition}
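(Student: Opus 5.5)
We argue by downward induction on the length $\ell(w)$, starting from $w_0$ and mirroring the recursive definition of $\tilde{\mathfrak{S}}_w$. Write
\[
F_w(z_1,\dots,z_m):=\langle \Omega| X_{2(w(m)-1)}(z_m) X_{2(w(m-1)-1)}(z_{m-1}) \cdots X_{2(w(1)-1)}(z_1)|\Omega \rangle
\]
and $P:=\prod_{k=1}^m z_k^{m-k}$. The claim is $F_w=P\,\tilde{\mathfrak{S}}_w$. The hypothesis $m\le \frac{n+2}{2}$ ensures that every index $2(w(k)-1)\le 2(m-1)\le n$, so all operators $X_{2(w(k)-1)}^{(n)}$ are defined.

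\emph{Base case.} For $w_0=[m,m-1,\dots,1]$ the product reads $X_0(z_m)X_2(z_{m-1})\cdots X_{2(m-1)}(z_1)$. Its indices weakly increase from left to right, so Lemma \ref{simplestpartition} (\eqref{top}) gives
\[
F_{w_0}=\prod_{k=1}^m z_k^{2(m-k)}=P\cdot z_1^{m-1}\cdots z_{m-1}=P\,\tilde{\mathfrak{S}}_{w_0}.
\]

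\emph{Inductive step.} Suppose $\ell(ws_i)=\ell(w)+1$, i.e.\ $w(i)<w(i+1)$. Set $a=2(w(i+1)-1)$ and $b=2(w(i)-1)$, so $a>b$. In $F_w$ the adjacent pair at the slots of $z_{i+1},z_i$ is $X_a(z_{i+1})X_b(z_i)$. We apply \eqref{commrelone} with $x=z_{i+1}$, $y=z_i$:
\[
X_a(z_{i+1})X_b(z_i)=\frac{1}{1-z_i/z_{i+1}}\bigl(X_b(z_i)X_a(z_{i+1})-X_b(z_{i+1})X_a(z_i)\bigr).
\]
In $F_{ws_i}$ the slot of $z_{i+1}$ carries $X_b$ and the slot of $z_i$ carries $X_a$. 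Hence substituting $X_b(z_{i+1})X_a(z_i)$ reproduces $F_{ws_i}$. Substituting $X_b(z_i)X_a(z_{i+1})$ produces $s_iF_{ws_i}$, because the remaining operators do not involve $z_i,z_{i+1}$. Taking vacuum expectation values therefore gives
\[
F_w=\frac{z_{i+1}}{z_{i+1}-z_i}\bigl(s_iF_{ws_i}-F_{ws_i}\bigr)=z_{i+1}\,\partial_i F_{ws_i}.
\]

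\emph{Conversion to $\mathcal{D}_i$.} By induction, $F_{ws_i}=P\,G$ with $G=\tilde{\mathfrak{S}}_{ws_i}$. Factor $P=c\,(z_iz_{i+1})^{m-i-1}z_i$, where $c$ involves no $z_i,z_{i+1}$. The prefactor $c\,(z_iz_{i+1})^{m-i-1}$ is $s_i$-symmetric, so it commutes with $\partial_i$. Therefore
\[
P^{-1}z_{i+1}\partial_i(PG)=\frac{z_{i+1}}{z_i}\,\partial_i(z_iG)=\mathcal{D}_iG
\]
by \eqref{relationdivdiff}. This gives $F_w=P\,\tilde{\mathfrak{S}}_w$, completing the induction. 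Since $F_w$ is defined directly for each $w$, this argument also shows that $\tilde{\mathfrak{S}}_w$ is independent of the chosen reduced word.

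The only delicate point is the bookkeeping in the inductive step. One must check which of the two terms of \eqref{commrelone} equals $F_{ws_i}$ and which equals $s_iF_{ws_i}$, and track the signs. One must also confirm that the monomial $P$ interacts with $\partial_i$ exactly so as to produce the twisted operator $\mathcal{D}_i$ rather than $\partial_i$.
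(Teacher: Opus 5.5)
Your proof is correct and follows the paper's argument: downward induction from $w_0$, with the base case from \eqref{top} and the inductive step from the Zamolodchikov--Faddeev relation in the form \eqref{commrelone}. The only difference is bookkeeping. You apply the relation directly to $F_w$ with $x=z_{i+1}$, $y=z_i$, which gives $F_w=z_{i+1}\partial_iF_{ws_i}$, and then absorb the monomial via \eqref{relationdivdiff}; the paper instead starts from $\mathcal{D}_i\tilde{\mathfrak{S}}_{ws_i}$, applies the relation with $x=z_i$, $y=z_{i+1}$, and then needs \eqref{commrelthree} to swap the spectral parameters back.
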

\begin{proof}
For $w_0=[m,m-1,\dots,1]$, $w(i)=m+1-i$, $i=1,\dots,m$ and
since
\begin{align}
\langle \Omega| X_{0}(z_m)
X_{2}(z_{m-1}) \cdots X_{2(m-1)}(z_1)|\Omega \rangle
=\prod_{k=1}^m z_k^{2(m-k)},
\end{align}
is a special case of Lemma 3.2, we conclude
\eqref{relationmockSchubert} for $w_0$ is satisfied.
Suppose \eqref{relationmockSchubert} for $w$ replaced by $w s_i$ such that
$\ell(ws_i) = \ell(w) + 1$ is satisfied,
i.e. we assume
\begin{align}
&\langle \Omega| X_{2(w(m)-1)}(z_m)
 \cdots
 X_{2(w(i)-1)}(z_{i+1})X_{2(w(i+1)-1)}(z_i)
 \cdots
  X_{2(w(1)-1)}(z_1)|\Omega \rangle \nonumber \\
=&\prod_{k=1}^m z_k^{m-k} \times \tilde{\mathfrak{S}}_{ws_i}.
\label{relationmockSchubertforinduction}
\end{align}
We rewrite \eqref{relationmockSchubertforinduction} as
\begin{align}
\tilde{\mathfrak{S}}_{ws_i}
=\prod_{k=1}^m z_k^{k-m}
\langle \Omega| X_{2(w(m)-1)}(z_m)
 \cdots
 X_{2(w(i)-1)}(z_{i+1})X_{2(w(i+1)-1)}(z_i)
 \cdots
  X_{2(w(1)-1)}(z_1)|\Omega \rangle,
\end{align}
from which we also get
\begin{align}
s_i \tilde{\mathfrak{S}}_{ws_i}
=&z_i^{i+1-m} z_{i+1}^{i-m}
\prod_{\substack{ k=1 \\ k \neq i,i+1}}^m z_k^{k-m} \nonumber \\
&\times
\langle \Omega| X_{2(w(m)-1)}(z_m) 
 \cdots
 X_{2(w(i)-1)}(z_{i})X_{2(w(i+1)-1)}(z_{i+1})
 \cdots
  X_{2(w(1)-1)}(z_1)|\Omega \rangle.
\end{align}
We use these two equalities to compute $\mathcal{D}_i \, \tilde{\mathfrak{S}}_{w s_i}$
\begin{align}
\mathcal{D}_i \, \tilde{\mathfrak{S}}_{w s_i}=&
\frac{z_{i+1}}{z_i} \frac{
z_i \tilde{\mathfrak{S}}_{w s_i}-z_{i+1} s_i \tilde{\mathfrak{S}}_{w s_i}
}{z_i-z_{i+1}} \nonumber \\
=&
z_i^{i-1-m} z_{i+1}^{i+2-m}
\prod_{\substack{ k=1 \\ k \neq i,i+1}}^m z_k^{k-m}
\langle \Omega| X_{2(w(m)-1)}(z_m)
\cdots \nonumber \\
&\times
\frac{z_i}{z_i-z_{i+1}}
(
X_{2(w(i)-1)}(z_{i+1})X_{2(w(i+1)-1)}(z_i)-X_{2(w(i)-1)}(z_{i})X_{2(w(i+1)-1)}(z_{i+1})
) \nonumber \\
&\times \cdots
  X_{2(w(1)-1)}(z_1)|\Omega \rangle.
  \label{tochurecurrence}
\end{align}
Recall the Zamolodchikov-Faddeev relation \eqref{FZalg} rewritten in the form
\begin{align}
X_i(x)X_j(y)=(1-y/x)^{-1}X_j(y)X_i(x)-(1-y/x)^{-1}X_j(x)X_i(y), \ \ \ i>j. \label{FZalgrewrite}
\end{align}
Since $\ell(ws_i) = \ell(w) + 1$
is equivalent to $w(i)<w(i+1)$,
one notes that we can apply \eqref{FZalgrewrite} to
$
\frac{z_i}{z_i-z_{i+1}}
(
X_{2(w(i)-1)}(z_{i+1})X_{2(w(i+1)-1)}(z_i)-X_{2(w(i)-1)}(z_{i})X_{2(w(i+1)-1)}(z_{i+1})
)
$ in
 \eqref{tochurecurrence} to get the following simplification
 \begin{align}
\mathcal{D}_i \, \tilde{\mathfrak{S}}_{w s_i}=&
z_i^{i-1-m} z_{i+1}^{i+2-m}
\prod_{\substack{ k=1 \\ k \neq i,i+1}}^m z_k^{k-m}
\langle \Omega| X_{2(w(m)-1)}(z_m)
\cdots \nonumber \\
&\times
X_{2(w(i+1)-1)}(z_i)
X_{2(w(i)-1)}(z_{i+1})
 \cdots
  X_{2(w(1)-1)}(z_1)|\Omega \rangle.
  \label{tochurecurrencetwo}
\end{align}
Finally, we apply
\begin{align}
X_i(x)X_j(y)=x/y X_i(y)X_j(x), \ \ \ i>j,
\end{align}
to further rewrite the expression as
\begin{align}
\mathcal{D}_i \, \tilde{\mathfrak{S}}_{w s_i}
=\prod_{k=1}^m z_k^{k-m}
\langle \Omega| X_{2(w(m)-1)}(z_m)
X_{2(w(m-1)-1)}(z_{m-1}) \cdots X_{2(w(1)-1)}(z_1)|\Omega \rangle.
\end{align}
Together with the recursive defintion of $\tilde{\mathfrak{S}}_{w}$,
we get
\begin{align}
\tilde{\mathfrak{S}}_{w}
=\prod_{k=1}^m z_k^{k-m}
\langle \Omega| X_{2(w(m)-1)}(z_m)
X_{2(w(m-1)-1)}(z_{m-1}) \cdots X_{2(w(1)-1)}(z_1)|\Omega \rangle,
\end{align}
which completes the induction.
\end{proof}

We list some basic properties of the divided difference operators $\mathcal{D}_i $.
\begin{lemma}
We have
\begin{align}
&\mathrm{(i)} \ \mathcal{D}_i \mathcal{D}_{i+1} \mathcal{D}_i=\mathcal{D}_{i+1} \mathcal{D}_i \mathcal{D}_{i+1}, \\
&\mathrm{(ii)} \ \mathcal{D}_{i}^2=-\mathcal{D}_i,
\end{align}
\end{lemma}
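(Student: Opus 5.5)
The plan is to conjugate $\mathcal{D}_i$ by a single monomial, independent of $i$, into a classical isobaric divided difference operator whose relations are standard. Write
\[
M:=z_1^{m-1}z_2^{m-2}\cdots z_{m-1},\qquad \pi_i:=\partial_i z_i,\qquad \hat\pi_i:=\pi_i-1 ,
\]
where $\pi_i f=\partial_i(z_if)$ is the Demazure (isobaric) operator. A direct computation gives $\hat\pi_i f=z_{i+1}\partial_i f$.

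\textbf{Step 1.} Show that $\mathcal{D}_i=M^{-1}\hat\pi_i M$ for every $i$. Multiplication by a function symmetric in $z_i,z_{i+1}$ commutes with $\partial_i$. The exponents of $z_i$ and $z_{i+1}$ in $M$ are $m-i$ and $m-i-1$, so $M=(z_iz_{i+1})^{m-i-1}z_i\cdot R$, where $R$ does not involve $z_i,z_{i+1}$. Therefore
\[
M^{-1}\hat\pi_i(Mf)=M^{-1}z_{i+1}\,\partial_i(Mf)=\frac{z_{i+1}}{z_i}\,\partial_i(z_if).
\]
By \eqref{relationdivdiff}, the right-hand side is exactly $\mathcal{D}_if$. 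Because both relations in the lemma are preserved under conjugation by $M$, it suffices to prove them for the operators $\hat\pi_i$.

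\textbf{Step 2.} Recall the standard facts about Demazure operators: $\pi_i^2=\pi_i$ and $\pi_i\pi_{i+1}\pi_i=\pi_{i+1}\pi_i\pi_{i+1}$ (see Macdonald's notes on Schubert polynomials). Idempotence is immediate from two observations. First, $\pi_i f$ is symmetric in $z_i,z_{i+1}$. Second, $\pi_i g=g\,\pi_i(1)=g$ for any such symmetric $g$, since $\partial_i(z_i)=1$.

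The braid relation is the one nontrivial input, and I expect it to be the main obstacle if a self-contained argument is wanted. The first approach I would try is to check it on the space spanned by $1,z_i,z_{i+1},z_{i+2}$ and their products over $\mathbb{C}[z_i,z_{i+1},z_{i+2}]^{S_3}$-coefficients. Alternatively, I would derive it from the braid relation for $\partial_i$ together with the commutation rule $\partial_i z_j=(s_iz_j)\partial_i+\partial_i(z_j)$. In practice I would simply cite it.

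\textbf{Step 3.} Deduce the relations for $\hat\pi_i$.

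For (ii), expand:
\[
\hat\pi_i^2=\pi_i^2-2\pi_i+1=1-\pi_i=-\hat\pi_i .
\]
This can also be checked directly on $\mathcal{D}_i$. Since $\pi_if$ is symmetric in $z_i,z_{i+1}$,
\[
\mathcal{D}_i^2f=\frac{z_{i+1}}{z_i}\,\pi_i\!\left(\frac{z_{i+1}}{z_i}\,\pi_if\right)=\frac{z_{i+1}}{z_i}\,\pi_if\cdot \partial_i(z_{i+1})=-\mathcal{D}_if .
\]

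For (i), expand using $\pi_i^2=\pi_i$:
\[
(\pi_i-1)(\pi_{i+1}-1)(\pi_i-1)=\pi_i\pi_{i+1}\pi_i-\pi_i\pi_{i+1}-\pi_{i+1}\pi_i+\pi_i+\pi_{i+1}-1 .
\]
The right-hand side is invariant under $i\leftrightarrow i+1$ once the braid relation for $\pi$ is used on its first term. Hence $\hat\pi_i\hat\pi_{i+1}\hat\pi_i=\hat\pi_{i+1}\hat\pi_i\hat\pi_{i+1}$. Conjugating back by $M$ as in Step 1 gives (i) and (ii) for $\mathcal{D}_i$.
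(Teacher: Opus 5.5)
Your proof is correct, and it takes a different route from the paper. The paper verifies both identities by hand. It reduces to $i=1$ and expands $\mathcal{D}_1\mathcal{D}_2\mathcal{D}_1 f$ and $\mathcal{D}_2\mathcal{D}_1\mathcal{D}_2 f$, for a function of three variables, into the same explicit combination of $f$ evaluated at the six permutations of $(z_1,z_2,z_3)$; (ii) is the analogous check in two variables.

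You instead observe that the single staircase monomial $M=z_1^{m-1}\cdots z_{m-1}$ conjugates every $\mathcal{D}_i$ into $\hat\pi_i=\pi_i-1=z_{i+1}\partial_i$. This works precisely because the exponents of $z_i$ and $z_{i+1}$ in $M$ differ by one. The lemma thus becomes the standard $0$-Hecke relations for Lascoux's operators $\hat\pi_i$. Your derivation of these from $\pi_i^2=\pi_i$ and the braid relation for $\pi_i$ is right, and so is your direct check of (ii).

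The trade-off is that the paper's computation is self-contained, while yours delegates the only substantive step, the braid relation for $\pi_i$, to the literature. Done by hand, that step is the same six-term verification transported by conjugation. In exchange you get a structural explanation of why the relations hold, and more.

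Since $\tilde{\mathfrak{S}}_{w_0}=M$, your conjugation gives $\tilde{\mathfrak{S}}_w=M^{-1}\hat\pi_{i_1}\cdots\hat\pi_{i_k}(M^2)$ for any reduced word $s_{i_1}\cdots s_{i_k}$ of $w^{-1}w_0$. In other words, $\prod_k z_k^{m-k}\,\tilde{\mathfrak{S}}_w$ is a Demazure atom built on the dominant monomial $z^{2\delta}$ with $2\delta=(2m-2,2m-4,\dots,0)$. As a consistency check, $\hat\pi_1\hat\pi_2\hat\pi_1(z_1^4z_2^2)=z_2z_3^2(z_1+z_2)(z_1+z_3)(z_2+z_3)=z_1^2z_2\,\tilde{\mathfrak{S}}_{[1,2,3]}$, matching the paper's table. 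Demazure atoms are known to be monomial-positive with an explicit tableau (skyline filling) description. So your viewpoint would give a partition-function-free proof of the paper's nonnegativity proposition and points directly toward an answer to the Problem stated after it.
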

\begin{proof}
One can check by acting on functions.
For $\mathrm{(i)}$, it is enough to check $i=1$ acting on a function of three variables.
We can check
\begin{align}
 &\mathcal{D}_1 \mathcal{D}_{2} \mathcal{D}_1 f(z_1,z_2,z_3)
 =\mathcal{D}_2 \mathcal{D}_{1} \mathcal{D}_2 f(z_1,z_2,z_3)
 \nonumber \\
 =&\frac{z_3^2}{z_1^2} \,
\frac{1}{(z_1 - z_2)(z_1 - z_3)(z_2 - z_3)}
\bigl(
 z_1^2 z_2\, f(z_1, z_2, z_3)
 - z_1 z_2^2\, f(z_2, z_1, z_3)
 - z_1^2 z_3\, f(z_1, z_3, z_2)
 \nonumber \\
 &
 + z_2^2 z_3\, f(z_2, z_3, z_1)
 + z_1 z_3^2\, f(z_3, z_1, z_2)
 - z_2 z_3^2\, f(z_3, z_2, z_1)
\bigr).
\end{align}
For $\mathrm{(ii)}$, it is enough to check $i=1$ acting on a function of two variables.
We can check
\begin{align}
\mathcal{D}_1 f(z_1,z_2)=-\mathcal{D}_1^2 f(z_1,z_2)
=\frac{z_2 (z_1 f(z_1,z_2)-z_2 f(z_2,z_1) ) }{z_1 (z_1-z_2)}.
\end{align}

\end{proof}

\begin{lemma}
For two functions $f$ and $g$, we have
\begin{align}
\mathcal{D}_i(fg)=z_{i+1} (\partial_i f) \cdot g
+(s_i f) \cdot (\mathcal{D}_i g).
\end{align}
\end{lemma}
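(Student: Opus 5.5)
The plan is to verify the identity directly from the definition \eqref{twdivdiff} of $\mathcal{D}_i$ and the definition \eqref{orddivdiff} of $\partial_i$. No partition-function input is needed. The only observation required is that $s_i$ is multiplicative, so that $s_i(fg)=(s_if)(s_ig)$. The whole argument is one add-and-subtract step in the numerator, arranged so that one piece becomes a $\partial_i$ acting on $f$ and the other becomes $\mathcal{D}_i$ acting on $g$.

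\textbf{Step 1: expand the definition.} Using multiplicativity of $s_i$, write
\[
\mathcal{D}_i(fg)=\frac{z_{i+1}}{z_i}\,\frac{z_i fg-z_{i+1}(s_if)(s_ig)}{z_i-z_{i+1}}.
\]

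\textbf{Step 2: split the numerator.} Insert $-z_i(s_if)g+z_i(s_if)g$ into the numerator and regroup:
\[
z_i fg-z_{i+1}(s_if)(s_ig)=z_i\,(f-s_if)\,g+(s_if)\,\bigl(z_i g-z_{i+1}s_ig\bigr).
\]

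\textbf{Step 3: identify the two pieces.} Divide each piece by $z_i-z_{i+1}$ and multiply by the prefactor $z_{i+1}/z_i$.
\begin{itemize}
\item In the first piece the factor $z_i$ cancels the $1/z_i$ of the prefactor. What remains is $z_{i+1}\,\frac{f-s_if}{z_i-z_{i+1}}\,g=z_{i+1}(\partial_if)\,g$.
\item The second piece is exactly $(s_if)$ times $\frac{z_{i+1}}{z_i}\frac{z_ig-z_{i+1}s_ig}{z_i-z_{i+1}}$. By \eqref{twdivdiff} this equals $(s_if)(\mathcal{D}_ig)$.
\end{itemize}
Summing the two pieces gives the claimed formula.

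There is no real obstacle here. The only point needing care is choosing the correct intermediate term to add and subtract, namely $z_i(s_if)g$ rather than, say, $z_{i+1}f(s_ig)$. The other choice would produce a Leibniz rule of the mirrored form, with $f$ and $g$ playing swapped roles. As a consistency check, one can compare with \eqref{relationdivdiff}: setting $f=1$ gives $\partial_i f=0$ and $s_if=1$, and the formula then reduces trivially to $\mathcal{D}_ig=\mathcal{D}_ig$, as it should.
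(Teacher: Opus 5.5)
Your proof is correct and is essentially the paper's argument, run in the opposite direction. The paper starts from the right-hand side $z_{i+1}(\partial_i f)g+(s_if)(\mathcal{D}_ig)$, combines it into $\frac{z_{i+1}}{z_i(z_i-z_{i+1})}\bigl(z_ifg-z_{i+1}(s_if)(s_ig)\bigr)$, and then uses $(s_if)(s_ig)=s_i(fg)$ to recognize $\mathcal{D}_i(fg)$; your add-and-subtract of $z_i(s_if)g$ is the inverse of that simplification.
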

\begin{proof}
$z_{i+1} (\partial_i f) \cdot g
+(s_i f) \cdot (\mathcal{D}_i g)$ is explictly
\begin{align}
z_{i+1} \times \frac{f-s_i f}{z_i-z_{i+1}} \times g+s_i f \times \frac{z_{i+1}}{z_i} \times \frac{ z_i g-z_{i+1} s_i g }{z_i-z_{i+1}},
\end{align}
which can be simplified as
\begin{align}
\frac{z_{i+1}}{z_i(z_i-z_{i+1})} (z_i fg-z_{i+1} (s_i f)(s_i g)),
\end{align}
and can be further rewritten using $(s_i f)(s_i g)=s_i ( fg)$ as 
\begin{align}
\frac{z_{i+1}}{z_i(z_i-z_{i+1})} (z_i fg-z_{i+1} s_i( f g)),
\end{align}
which corresponds to the explicit form of $\mathcal{D}_i(fg)$.
\end{proof}

\begin{lemma}
We have
\begin{align}
&
\mathcal{D}_i \mathcal{D}_{i+1} \mathcal{D}_i \nonumber \\
=&
\frac{z_{i+2}^2}{z_i^2}
\partial_i \partial_{i+1} \partial_i
z_i^2 z_{i+1} \nonumber \\
=&\frac{z_{i+2}^2}{z_i^2}
(
1+(z_{i+1}+z_{i+2}) \partial_i+z_{i+2} \partial_{i+1}
+z_{i+1} z_{i+2} \partial_i \partial_{i+1}
+z_{i+2}^2 \partial_{i+1} \partial_i+z_{i+1} z_{i+2}^2 \partial_i \partial_{i+1} \partial_i
). \label{relopprod}
\end{align}
\end{lemma}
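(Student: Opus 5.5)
We prove the two equalities in \eqref{relopprod} separately. Both rest on the relation \eqref{relationdivdiff} $\mathcal{D}_i=\frac{z_{i+1}}{z_i}\partial_i z_i$, where $z_i$ denotes a multiplication operator. We also use the standard commutation rule for the ordinary divided differences,
\begin{align}
\partial_j\, g=(s_j g)\,\partial_j+(\partial_j g)\quad\text{(as operators)},\nn
\end{align}
so that $\partial_j$ commutes with multiplication by any $s_j$-invariant function.

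\textbf{First equality.} Substitute \eqref{relationdivdiff} for each factor:
\begin{align}
\mathcal{D}_i\mathcal{D}_{i+1}\mathcal{D}_i=\frac{z_{i+1}}{z_i}\partial_i z_i\cdot\frac{z_{i+2}}{z_{i+1}}\partial_{i+1}z_{i+1}\cdot\frac{z_{i+1}}{z_i}\partial_i z_i .\nn
\end{align}
Then move all multiplication operators outward.
\begin{itemize}
\item The inner product $z_i\cdot\frac{z_{i+2}}{z_{i+1}}$ is not symmetric in $z_i,z_{i+1}$, so it cannot simply pass through $\partial_i$.
\item It is cleaner to rewrite $\mathcal{D}_i=z_{i+1}\,\partial_i\,\frac{1}{z_{i+1}}\cdot z_{i+1}\cdot\frac{1}{z_i}\cdot z_i$. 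The efficient route is to note that $\frac{z_{i+1}}{z_i}\partial_i z_i=z_{i+1}\partial_i\, z_i\cdot\frac{1}{z_i}\ldots$; this is circular.
\end{itemize}
Instead, verify the first equality directly on a test function $f(z_1,z_2,z_3)$, with $i=1$ by shift invariance, exactly as in the proof of part (i) of the earlier lemma. That lemma already gives the explicit six-term formula for $\mathcal{D}_1\mathcal{D}_2\mathcal{D}_1 f$:
\begin{align}
\frac{z_3^2}{z_1^2}\cdot\frac{\sum_{\sigma\in S_3}\mathrm{sgn}(\sigma)\,\sigma\bigl(z_1^2z_2 f\bigr)}{(z_1-z_2)(z_1-z_3)(z_2-z_3)} .\nn
\end{align}
The six terms carry the monomial weights $z_1^2z_2,\ z_1z_2^2,\dots$, which are precisely $\sigma(z_1^2z_2)$. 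On the other side, $\partial_1\partial_2\partial_1$ is the full antisymmetrizer divided by the Vandermonde: $\partial_{w_0}g=\sum_\sigma\mathrm{sgn}(\sigma)\,\sigma g/\Delta$. Applied to $g=z_1^2z_2f$ and multiplied by $z_3^2/z_1^2$, this reproduces the same expression. The only care needed is to match the sign convention of the Vandermonde; otherwise it is a direct comparison, so the first equality follows.

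\textbf{Second equality.} Here we must expand $\partial_1\partial_2\partial_1\, z_1^2z_2$ as an operator, in the form $\sum_v c_v\,\partial_v$ with $c_v$ polynomials, by repeatedly applying the twisted Leibniz rule $\partial_j(gh)=(\partial_j g)h+(s_jg)\partial_j h$. The order of steps is:
\begin{enumerate}
\item Compute $\partial_1 z_1^2z_2=z_1z_2+z_2^2\,\partial_1$ as operators, using $s_1(z_1^2z_2)=z_1z_2^2$ and $\partial_1(z_1^2z_2)=z_1z_2$.
\item Apply $\partial_2$ on the left, commuting it past $z_1z_2$ and $z_2^2$ by the same rule. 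This yields terms with $1,\ \partial_2,\ \partial_1,\ \partial_2\partial_1$.
\item Apply $\partial_1$ on the left again, and collect terms using $\partial_1^2=0$ and the braid relation.
\end{enumerate}
The result should be
\begin{align}
z_3^2\bigl(1+(z_2+z_3)\partial_1+z_3\partial_2+z_2z_3\partial_1\partial_2+z_3^2\partial_2\partial_1+z_2z_3^2\partial_1\partial_2\partial_1\bigr)\nn
\end{align}
after pulling out the common factor, up to the prefactor $z_3^2/z_1^2$.

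Since an error in step 3 is plausible, a safer check exploits the fact that both sides are $\mathbb{Q}(z)$-linear combinations of the six operators $\partial_v$, $v\in S_3$, which are linearly independent. It therefore suffices to compare their actions on the six test functions $1,\ z_1,\ z_1z_2,\ z_1^2,\ z_1^2z_2,\ z_2$; these separate the operators via the triangular pairing $\partial_v\mathfrak{S}_u=\mathfrak{S}_{uv^{-1}}$. The main obstacle is this bookkeeping of noncommuting coefficients in the expansion. I would first carry out the Leibniz expansion and, if the coefficients do not match, fall back on the test-function verification with Schubert polynomials.
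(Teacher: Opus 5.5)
Your proposal has a genuine gap in the second equality. You take the same route as the paper there: expand $\partial_1\partial_2\partial_1 z_1^2z_2$ by moving the divided differences to the right with the twisted Leibniz rule. However, the one step you actually compute is wrong, and the remaining steps are not carried out.

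\textbf{The error in step 1.} Since $s_1(z_1^2z_2)=z_1z_2^2$, the rule gives $\partial_1 z_1^2z_2=z_1z_2+z_1z_2^2\,\partial_1$, not $z_1z_2+z_2^2\partial_1$; your version is not even homogeneous. Carried forward, your coefficient yields $1+(z_2-1)\partial_1+z_3\partial_2+z_2z_3\partial_1\partial_2+z_3^2\partial_1\partial_2\partial_1$, so the expansion fails. The displayed ``result'' is only the target restated, not derived. A quick check would have caught this: the top coefficient must be $w_0(z_1^2z_2)=z_2z_3^2$.

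\textbf{The corrected computation.} With the right coefficient, step 2 gives $z_1+z_1z_3\partial_2+z_1(z_2+z_3)\partial_1+z_1z_3^2\partial_2\partial_1$. Step 3 then produces exactly the six claimed terms. The extra $z_3\partial_1$ comes from $\partial_1\, z_1(z_2+z_3)\,\partial_1=z_3\partial_1+z_2(z_1+z_3)\partial_1^2=z_3\partial_1$. So only $\partial_1^2=0$ is needed, never the braid relation.

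\textbf{The fallback.} Your test-function check is logically sound, since your six functions span the same space as the Schubert polynomials of $S_3$. But as written it is a plan, not a verification.

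\textbf{The first equality.} Your antisymmetrizer comparison is correct, but it differs from the paper's route and rests on the six-term formula that the paper only asserts (``we can check'') in the proof of part (i). The operator route you dismissed as circular, which is what the paper intends, closes directly:
\begin{align*}
\mathcal{D}_1\mathcal{D}_2\mathcal{D}_1
&=\frac{z_2}{z_1}\partial_1\frac{z_3}{z_2}\partial_2 z_2^2\partial_1 z_1
=\frac{z_3}{z_1^2}\partial_1\partial_2\, z_1z_2^2\,\partial_1 z_1\\
&=\frac{z_3}{z_1^2}\partial_1\partial_2 z_2\,\partial_1 z_1^2z_2
=\frac{z_3^2}{z_1^2}\partial_1\partial_2\partial_1 z_1^2z_2 .
\end{align*}
The four equalities use, in turn:
\begin{itemize}
\item $1/z_1$ is $s_2$-invariant;
\item $\partial_1\frac{1}{z_2}=\frac{1}{z_1z_2}\partial_1 z_1$ and $z_1\partial_2=\partial_2 z_1$;
\item $z_1z_2\partial_1=\partial_1 z_1z_2$;
\item $\partial_2 z_2=1+z_3\partial_2$, together with $\partial_1^2=0$.
\end{itemize}
This derivation is self-contained and does not rely on the earlier explicit formula.
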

\begin{proof}
This can be obtained by using the relation between $\mathcal{D}_i$ and $\partial_i $
\eqref{relationdivdiff} to rewrite $\mathcal{D}_i \mathcal{D}_{i+1} \mathcal{D}_i$
in terms of $\partial_i $, $\partial_{i+1}$ to get the expression in the middle, and moving the divided difference operators 
to the right by using the exchange relations gives \eqref{relopprod}.
\end{proof}

We fix a sequence $I_k=(i_1,i_2,\dots,i_k)$
such that $s_{i_1} s_{i_2} \cdots s_{i_k}$
is a reduced expression, and
define $\mathcal{D}_{I_k}:=\mathcal{D}_{i_k} \cdots \mathcal{D}_{i_2} \mathcal{D}_{i_1}$
and $\mathcal{\partial}_{I_k}:=\partial_{i_k} \cdots \partial_{i_2} \partial_{i_1}$.
$\mathcal{D}_{I_k}$ can be expanded in the form
\begin{align}
\mathcal{D}_{I_k}=\sum_{J_\ell} c(I_k,J_\ell) \partial_{J_\ell},
\label{relprodopingen}
\end{align}
where the sum is taken over
 subsequences $J_\ell=(j_1,j_2,\dots,j_\ell)$ $(\ell \leq k)$
of $I_k$ such that $s_{j_1} s_{j_2} \cdots s_{j_\ell}$
are reduced expressions. For $J=\varnothing$,
we define
$\partial_\varnothing:=1$ and regard $c(I,\varnothing)$ as the constant term of the expansion.

For $I_{k+1}=(i_1,i_2,\dots,i_k,i_{k+1})$,
denote the expansion as
\begin{align}
\mathcal{D}_{I_{k+1}}=\sum_{J_{\ell+1}} c(I_{k+1},J_{\ell+1}) \partial_{J_{\ell+1}},
\label{expansiononelarger}
\end{align}
where $J_{\ell+1}=(j_1,j_2,\dots,j_\ell,j_{\ell+1})$.

\begin{proposition}
The expansion coefficients
satisfy the following recursion relations
\begin{align}
c(I_{k+1},J_{\ell+1})
&=\Bigg(
\frac{z_{i_{k+1}+1}}{z_{i_{k+1}}}
+
\frac{z_{i_{k+1}+1}^2}{z_{i_{k+1}}}
\partial_{i_{k+1}} \Bigg)
c(I_{k},J_{\ell+1})
+
\delta_{i_{k+1},j_{\ell+1}}
\frac{z_{i_{k+1}+1}^2}{z_{i_{k+1}}}
s_{i_{k+1}} c(I_{k},J_{\ell}),
\\
c(I_{k+1},\varnothing)
&=\Bigg(
\frac{z_{i_{k+1}+1}}{z_{i_{k+1}}}
+
\frac{z_{i_{k+1}+1}^2}{z_{i_{k+1}}}
\partial_{i_{k+1}} \Bigg)
c(I_{k},\varnothing),
\end{align}
where $\delta_{ij}$ is the Kronecker delta:
$\delta_{ij}=1$ for $i=j$ and $\delta_{ij}=0$ otherwise.
\end{proposition}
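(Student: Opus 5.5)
The recursion will come from composing one more twisted operator on the left. By definition $\mathcal{D}_{I_{k+1}}=\mathcal{D}_{i_{k+1}}\mathcal{D}_{I_k}$, and the plan is to substitute the expansion \eqref{relprodopingen} of $\mathcal{D}_{I_k}$ and push $\mathcal{D}_{i_{k+1}}$ through each coefficient. Write $i:=i_{k+1}$. By \eqref{relationdivdiff},
\[
\mathcal{D}_i=\frac{z_{i+1}}{z_i}+\frac{z_{i+1}^2}{z_i}\partial_i .
\]
Applied to a term $c(I_k,J)\partial_J$, with $c$ acting as a multiplication operator, this gives
\[
\frac{z_{i+1}}{z_i}c(I_k,J)\partial_J+\frac{z_{i+1}^2}{z_i}\partial_i\circ c(I_k,J)\partial_J .
\]

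Next I would use the operator Leibniz rule for the ordinary divided difference, $\partial_i\circ c=(\partial_i c)+(s_i c)\,\partial_i$. This is the $z$-free analogue of the product lemma proved above and is verified in one line from \eqref{orddivdiff}. It gives
\[
\mathcal{D}_i\, c(I_k,J)\partial_J=\Bigl(\frac{z_{i+1}}{z_i}+\frac{z_{i+1}^2}{z_i}\partial_i\Bigr)c(I_k,J)\cdot\partial_J+\frac{z_{i+1}^2}{z_i}(s_ic(I_k,J))\,\partial_i\partial_J ,
\]
where in the first term $\partial_i$ acts on the function $c(I_k,J)$ only. Since $\partial_{J_{\ell+1}}=\partial_{j_{\ell+1}}\partial_{J_\ell}$, the second term contributes $\partial_{(J,i)}$.

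Then collect coefficients of each $\partial_{J_{\ell+1}}$. The first type of term keeps the index word unchanged, so it gives $\bigl(\frac{z_{i+1}}{z_i}+\frac{z_{i+1}^2}{z_i}\partial_i\bigr)c(I_k,J_{\ell+1})$. The second type produces $\partial_{J_{\ell+1}}$ exactly when $J_{\ell+1}=(J_\ell,i)$, i.e. $j_{\ell+1}=i_{k+1}$, which is where the Kronecker delta comes from. The empty word can only arise from the first type, which gives the second formula. If $J_{\ell+1}$ is not a subsequence of $I_k$, we set $c(I_k,J_{\ell+1})=0$.

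The main obstacle is the bookkeeping of the index words. First, $(J_\ell,i)$ may fail to be reduced; then $\partial_i\partial_{J_\ell}$ is zero (by $\partial_i^2=0$ or the nil-Coxeter relations), so dropping such terms is consistent. Second, distinct subsequences may give equal products of $\partial$'s. I would read $c(I,J)$ as indexed by positions of subsequences, so the coefficients are defined formally and the recursion holds termwise. The expansion is then unique in this formal sense, and no cancellation argument is needed.
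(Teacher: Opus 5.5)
Your proposal is correct and follows the paper's proof. You rewrite $\mathcal{D}_{i_{k+1}}$ via \eqref{relationdivdiff}, move $\partial_{i_{k+1}}$ past $c(I_k,J_\ell)$ with the Leibniz rule $\partial_i\circ c=(\partial_i c)+(s_ic)\,\partial_i$, and compare coefficients of $\partial_{J_{\ell+1}}$; your prefactor $z_{i_{k+1}+1}^2/z_{i_{k+1}}$ on the $(s_{i_{k+1}}c)\,\partial_{i_{k+1}}\partial_{J_\ell}$ term is the right one, whereas the paper's intermediate display drops the square.

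Treating the coefficients formally handles the non-uniqueness more cleanly than the paper's Weyl-group remark, but your last paragraph needs one adjustment. To get the Kronecker delta on letters exactly as stated, index $J$ by words, with $c(I,J)$ summing over all occurrences of $J$ in $I$; with position-indexed coefficients the delta must instead mean ``the last letter of $J_{\ell+1}$ is taken from position $k+1$''.
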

\begin{proof}
This follows from comparing \eqref{expansiononelarger} and
\begin{align}
\mathcal{D}_{I_{k+1}}&=\mathcal{D}_{i_{k+1}}
\sum_{J_\ell} c(I_k,J_\ell) \partial_{J_\ell}
=\Bigg(
\frac{z_{i_{k+1}+1}}{z_{i_{k+1}}}
+
\frac{z_{i_{k+1}+1}^2}{z_{i_{k+1}}}
\partial_{i_{k+1}} \Bigg)
\sum_{J_\ell} c(I_k,J_\ell) \partial_{J_\ell} \nonumber \\
&=\sum_{J_\ell}
\Bigg\{
\Bigg(
\frac{z_{i_{k+1}+1}}{z_{i_{k+1}}}
+
\frac{z_{i_{k+1}+1}^2}{z_{i_{k+1}}}
\partial_{i_{k+1}} \Bigg)
c(I_k,J_\ell) 
\Bigg\}
\partial_{J_\ell} \nonumber \\
&+\sum_{J_\ell}
\frac{z_{i_{k+1}+1}}{z_{i_{k+1}}}
\{ s_{i_{k+1}} c(I_k,J_\ell) \} \partial_{i_{k+1}}
\partial_{J_\ell} \nonumber \\
&=\sum_{J_{\ell+1}}
\Bigg\{
\Bigg(
\frac{z_{i_{k+1}+1}}{z_{i_{k+1}}}
+
\frac{z_{i_{k+1}+1}^2}{z_{i_{k+1}}}
\partial_{i_{k+1}} \Bigg)
c(I_{k},J_{\ell+1})
+
\delta_{i_{k+1},j_{\ell+1}}
\frac{z_{i_{k+1}+1}^2}{z_{i_{k+1}}}
s_{i_{k+1}} c(I_{k},J_{\ell})
\Bigg\}
\partial_{J_{\ell+1}}
. \label{comparisononelarger}
\end{align}
When transforming to the last equality in
\eqref{comparisononelarger}, we use 
$\partial_i^2=0$ and
that $s_{J_{\ell+1}}=s_{j_1} s_{j_2} \cdots s_{j_{\ell+1}}$ are reduced expressions
so that the operators of longer length than $\ell+1$ such as
 $\partial_{j_{\ell+2}}
\cdots \partial_{j_1} $ cannnot be reduced to shorter length operators
after adding any single divided difference operator.
We also use $s_{J_\ell} s_{j_{\ell+1}} \neq s_{J_\ell}^{\prime} s_{j_{\ell+1}}$
for two different Weyl group elements  of the same length
$s_{J_\ell}=s_{j_1} s_{j_2} \cdots s_{j_\ell}$ and
$s_{J_\ell}^{\prime}=s_{j_1}^\prime s_{j_2}^\prime \cdots s_{j_\ell}^\prime
$, so that $\partial_{J_{\ell+1}}
=\partial_{i_{k+1}}
\partial_{J_\ell}
$ if and only if $i_{k+1}=j_{\ell+1}$
(and the length of $s_{J_\ell}$ and $s_{J_{\ell+1}}$ is $\ell$ and $\ell+1$
respectively, which are assumed).
\end{proof}
\eqref{relprodopingen}, \eqref{comparisononelarger} implies that the modified Schubert polynomials
can be expressed as a linear combination of the Schubert polynomials. To determine
the explicit forms is an open problem.

The following is an easy consequence of $\eqref{relationmockSchubert}$.
\begin{proposition} \label{nonnegativity}
$\tilde{\mathfrak{S}}_w$ are Laurent polynomials in $z_1,\dots,z_m$ with nonnegative coefficients
for arbitrary $w$.
\end{proposition}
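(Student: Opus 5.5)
The plan is to read nonnegativity off the combinatorial definition of the partition function, using \eqref{relationmockSchubert}. By that proposition, for $m \le \frac{n+2}{2}$ we have
\[
\tilde{\mathfrak{S}}_w=\prod_{k=1}^m z_k^{k-m}\,
\langle \Omega| X_{2(w(m)-1)}(z_m) \cdots X_{2(w(1)-1)}(z_1)|\Omega \rangle .
\]
The prefactor is a monomial, so it suffices to show that the vacuum expectation value is a polynomial in $z_1,\dots,z_m$ with nonnegative integer coefficients.

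For this I would expand each $X_i(z_k)$ as $\sum_j z_k^j X_{i,j}$. By definition, $X_{i,j}$ is a sum over edge colorings of the tensor product of local $L$-operators. Each local factor is one of $1,\mathbf{b}^+,\mathbf{b}^-,\mathbf{t}$. In the standard Fock basis, every one of these operators has matrix entries equal to $0$ or $1$. Therefore every product of such operators, and every matrix element $\langle\Omega|\cdots|\Omega\rangle$ of such a product, is a nonnegative integer. Summing over configurations, the coefficient of $\prod_k z_k^{j_k}$ is a count of admissible three-dimensional configurations, hence lies in $\mathbb{Z}_{\ge 0}$. Multiplying by the monomial $\prod_k z_k^{k-m}$ keeps all coefficients nonnegative, so $\tilde{\mathfrak{S}}_w$ is a Laurent polynomial with nonnegative coefficients.

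The only real subtlety is the phrase ``for arbitrary $w$''. The identification \eqref{relationmockSchubert} requires $m\le \frac{n+2}{2}$, so that the indices $2(w(k)-1)\le 2(m-1)$ fit in $[0,n]$. However, $\tilde{\mathfrak{S}}_w$ is defined purely through the operators $\mathcal{D}_i$ and does not depend on $n$. So for any given $m$ and $w\in S_m$, I would simply choose $n\ge 2m-2$ and apply the argument above.

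I would also remark that the conclusion does not follow directly from the recursion. The operator $\mathcal{D}_i$ involves a subtraction, so positivity is not evident from the definition alone. The three-dimensional realization is exactly what supplies it, and this is the point of the statement being a ``consequence'' of \eqref{relationmockSchubert}.
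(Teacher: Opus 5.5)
Your proposal is correct and follows the same route as the paper's proof. Both combine \eqref{relationmockSchubert} with the observation that the vacuum expectation value is a polynomial with nonnegative coefficients, because the local entries $1,\mathbf{b}^{\pm},\mathbf{t}$ all have nonnegative matrix elements. Your explicit remark is a useful addition: since $\tilde{\mathfrak{S}}_w$ does not depend on $n$, one may take $n\ge 2m-2$ to meet the hypothesis $m\le\frac{n+2}{2}$, which makes precise what the paper leaves implicit in ``fix an arbitrary $w$''.
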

\begin{proof}
Fix an arbitrary $w$.
The left-hand side of \eqref{relationmockSchubert} is a partition function
which is a polynomial in $z_1,\dots,z_m$ with nonnegative coefficients,
since all the matrix elements of the $X_j$-operators
constructing the partition function are polynomials in $z_1,\dots,z_m$ with nonnegative coefficients.
Combining this fact with the relation~\eqref{relationmockSchubert},
we obtain the statement of the proposition.
\end{proof}
{\bf Problem}: Prove Proposition \ref{nonnegativity} without using partition functions
and give a combinatorial construction of the coefficients. \\

Finally, let us give some remark and present a conjecture.
Define $R_i(u)$ as
$R_i(u)=1+(1-\mathrm{e}^u) \mathcal{D}_i$.

The following is a special case of
\cite[Lemma 2.2]{FK}, applying the fact $\mathcal{D}_i^2=-\mathcal{D}_i$.
\begin{lemma}
We have
\begin{align}
R_i(u) R_{i+1}(u+v) R_i(v)=
R_{i+1}(v) R_i (u+v) R_{i+1}(u).
\end{align}
\end{lemma}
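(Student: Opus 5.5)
The plan is to prove the identity by direct expansion in the algebra generated by $\mathcal{D}_i$ and $\mathcal{D}_{i+1}$. The only inputs are the two relations of the earlier lemma: the braid relation $\mathcal{D}_i \mathcal{D}_{i+1} \mathcal{D}_i=\mathcal{D}_{i+1} \mathcal{D}_i \mathcal{D}_{i+1}$ and the quadratic relation $\mathcal{D}_i^2=-\mathcal{D}_i$ (and likewise for $i+1$). This is the mechanism behind \cite[Lemma 2.2]{FK}, specialised to the relation $\mathcal{D}_i^2=-\mathcal{D}_i$ (i.e.\ to the case where the Hecke-type parameter makes the quadratic relation idempotent up to sign).

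First introduce the scalar abbreviations $a:=1-\mathrm{e}^u$, $b:=1-\mathrm{e}^v$ and $c:=1-\mathrm{e}^{u+v}$. The key scalar identity is
\[
c=1-\mathrm{e}^u\mathrm{e}^v=1-(1-a)(1-b)=a+b-ab,
\]
which encodes the additivity of the spectral parameters. In this notation $R_i(u)=1+a\mathcal{D}_i$, $R_{i+1}(u+v)=1+c\mathcal{D}_{i+1}$ and $R_i(v)=1+b\mathcal{D}_i$.

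Next expand the left-hand side into its eight terms:
\[
1+a\mathcal{D}_i+c\mathcal{D}_{i+1}+b\mathcal{D}_i+ab\,\mathcal{D}_i^2+ac\,\mathcal{D}_i\mathcal{D}_{i+1}+bc\,\mathcal{D}_{i+1}\mathcal{D}_i+abc\,\mathcal{D}_i\mathcal{D}_{i+1}\mathcal{D}_i .
\]
The term $ab\,\mathcal{D}_i^2$ becomes $-ab\,\mathcal{D}_i$, and the coefficient of $\mathcal{D}_i$ collapses to $a+b-ab=c$. Hence the left-hand side equals
\[
1+c(\mathcal{D}_i+\mathcal{D}_{i+1})+ac\,\mathcal{D}_i\mathcal{D}_{i+1}+bc\,\mathcal{D}_{i+1}\mathcal{D}_i+abc\,\mathcal{D}_i\mathcal{D}_{i+1}\mathcal{D}_i .
\]

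Expand the right-hand side $(1+b\mathcal{D}_{i+1})(1+c\mathcal{D}_i)(1+a\mathcal{D}_{i+1})$ in the same way. Applying $\mathcal{D}_{i+1}^2=-\mathcal{D}_{i+1}$ turns the coefficient of $\mathcal{D}_{i+1}$ into $a+b-ab=c$. The quadratic terms are $bc\,\mathcal{D}_{i+1}\mathcal{D}_i$ and $ca\,\mathcal{D}_i\mathcal{D}_{i+1}$, and the cubic term is $abc\,\mathcal{D}_{i+1}\mathcal{D}_i\mathcal{D}_{i+1}$. The braid relation identifies this cubic term with $abc\,\mathcal{D}_i\mathcal{D}_{i+1}\mathcal{D}_i$, so the two expansions agree term by term.

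There is no genuine obstacle. The only points requiring care are the operator ordering in the two mixed quadratic terms (the coefficients $ac$ and $bc$ must attach to the same ordered products on both sides, which they do), and recognising that the collapse of $a+b-ab$ into $c$ is exactly where $\mathcal{D}_i^2=-\mathcal{D}_i$ enters.
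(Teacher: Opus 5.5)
Your proof is correct, and it is essentially the paper's argument. The paper only invokes \cite[Lemma 2.2]{FK} together with $\mathcal{D}_i^2=-\mathcal{D}_i$; your expansion, where $1-\mathrm{e}^{u+v}=a+b-ab$ absorbs the $\mathcal{D}_i^2$ and $\mathcal{D}_{i+1}^2$ terms and the braid relation matches the cubic terms, is exactly the computation underlying that lemma, written out explicitly.
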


However,
it seems to be not easy to apply Fomin-Kirillov formalism \cite{FK} which worked
for the Schubert polynomials
and use the Yang-Baxter equation effectively to get a nice compact formula
for the modified Schubert polynomials.

At the end of this subsection, we present the following conjecture.
\begin{conjecture}
The following holds.
\begin{align}
&\mathcal{D}_i (\mathcal{D}_{i+1} \mathcal{D}_i) \cdots (\mathcal{D}_{i+n-1} \mathcal{D}_{i+n-2} \cdots \mathcal{D}_i)
\nonumber \\
=&\prod_{j=i+1}^{n+i} \frac{z_j^{j-i}}{z_{n+2-j}^{j-i}}
\partial_i (\partial_{i+1} \partial_i) \cdots (\partial_{i+n-1} \partial_{i+n-2} \cdots \partial_i)
\prod_{j=i+1}^{n+i} z_{n+2-j}^{j-i}.
\end{align}
\end{conjecture}

\section{On a $q$-deformation of partition functions}
In this section, we investigate certain subclasses of $q$-deformed partition
functions, as well as several other types of partition functions, and determine
their explicit forms. In our analysis, we use the original $L$-operator introduced
by Bazhanov--Sergeev \cite{BaSe}, which involves the quantum parameter $q$.
We also present a conjecture on the commutativity.

Before entering this subsection, we collect several basic notions from 
$q$-analysis that will be used repeatedly below.
\cite{GR} is a standard reference on $q$-series.
Let $q$ be a generic complex number satisfying $|q|<1$.
We begin with the $q$-integer and its factorial $(n \in \mathbb{Z}_{\geq 0})$:
\[
[n]_q=\frac{1-q^n}{1-q}, \qquad
[n]_q! = [1]_q [2]_q \cdots [n]_q .
\]

From these, the $q$-binomial coefficients $(n,k \in \mathbb{Z}_{\geq 0}, n \geq k)$ are defined by
\[
\begin{bmatrix} n \\ k \end{bmatrix}_q
  =\frac{[n]_q!}{[k]_q!\,[n-k]_q!}
  =\frac{(q;q)_n}{(q;q)_k (q;q)_{n-k}},
\]
where we have introduced the $q$-Pochhammer symbol
\[
(a;q)_n=(1-a)(1-aq)\cdots(1-aq^{n-1}), \qquad
(a;q)_0=1,
\]
and its infinite version
\[
(a;q)_\infty=\prod_{r=0}^{\infty}(1-aq^r).
\]

The $q$-binomial theorem is
\begin{align}
\sum_{k=0}^{n} q^{k(k-1)/2}
  \begin{bmatrix} n \\ k \end{bmatrix}_q
  (-z)^k = (z;q)_n. \label{qbinomial}
\end{align}
The following summation formula is well known:
\begin{align}
\sum_{k=0}^{\infty}\frac{(a;q)_k}{(q;q)_k} z^k
  =\frac{(az;q)_\infty}{(z;q)_\infty}. \label{qsummation}
\end{align}
Two particular limits of \eqref{qsummation} are
\begin{align}
\sum_{n=0}^{\infty}\frac{z^n}{(q;q)_n}
      &=\frac{1}{(z;q)_\infty}, \label{qsummationone} \\
\sum_{n=0}^{\infty}\frac{q^{n(n-1)/2}z^n}{(q;q)_n}
      &=(-z;q)_\infty. \label{qsummationtwo}
\end{align}

\subsection{$q$-$L$-operator and a conjecture on commutativity}

Let $V=\mathbb{C}v_0\oplus \mathbb{C}v_1$ be the two-dimensional vector space 
with standard basis $\{v_0,v_1\}$.
Its dual space is 
\[
V^\ast=\mathbb{C}v_0^\ast \oplus \mathbb{C}v_1^\ast,
\qquad
v_j^\ast(v_i)=\delta_{ij}.
\]

We also consider the $q$-bosonic Fock space
\[
\mathfrak{F}=\bigoplus_{m=0}^\infty \mathbb{C}(q)\,|m\rangle,
\]
on which the $q$-oscillator operators 
$\mathbf{k},\mathbf{a}^+,\mathbf{a}^-$ act by
\[
\mathbf{k}|m\rangle=q^m|m\rangle,\qquad
\mathbf{a}^+|m\rangle=|m+1\rangle,\qquad
\mathbf{a}^-|m\rangle=(1-q^{2m})|m-1\rangle,
\]
where $|-1\rangle:=0$.
They satisfy the relations
\begin{align}
\mathbf{k}\mathbf{a}^+ = q\,\mathbf{a}^+\mathbf{k},\qquad
\mathbf{k}\mathbf{a}^- = q^{-1}\mathbf{a}^-\mathbf{k},\qquad
\mathbf{a}^-\mathbf{a}^+ = 1-q^2\mathbf{k}^2,\\
\mathbf{a}^+\mathbf{a}^- = 1-\mathbf{k}^2,\qquad
\mathbf{k}\mathbf{k}^{-1}=\mathbf{k}^{-1}\mathbf{k}=1.
\end{align}

We now introduce the dual Fock space
\[
\mathfrak{F}^\ast=\bigoplus_{m=0}^\infty \mathbb{C}(q)\,\langle m|,
\]
so that the operators act by
\[
\langle m|\mathbf{k}=q^m\langle m|,
\qquad
\langle m|\mathbf{a}^+=(1-q^{2m})\,\langle m-1|,
\qquad
\langle m|\mathbf{a}^-=\langle m+1|.
\]
The natural pairing is
\[
\langle m|m'\rangle=(q^2)_m\,\delta_{m,m'},
\qquad
(q^2)_m=\prod_{r=0}^{m-1}(1-q^{2r}).
\]

It is also convenient to introduce the normalized dual vectors
\[
\langle\!\langle m|:=\frac{1}{(q^2)_m}\langle m|,
\]
for which 
\[
\langle\!\langle m|m'\rangle=\delta_{m,m'}.
\]

We now define the operator-valued $L$-operator
\cite{BaSe,KMO2}
acting on the tensor product
\(V\otimes V\) with values in \(\mathrm{End}(\mathfrak{F})\).
For basis vectors \(v_i\otimes v_j\in V\otimes V\), we set
\[
\mathcal{L}(z)(v_i\otimes v_j)
=
\sum_{a=0}^1\sum_{b=0}^1
v_a\otimes v_b\;
[\mathcal{L}(z)]_{ij}^{ab},
\]
where the operator-valued matrix elements are
\begin{align}
[\mathcal{L}(z)]_{00}^{00} &= 1, \qquad
[\mathcal{L}(z)]_{11}^{11} = 1, \qquad
[\mathcal{L}(z)]_{10}^{01} = z\,\mathbf{a}^+, \\
[\mathcal{L}(z)]_{01}^{10} &= z^{-1}\mathbf{a}^-, \qquad
[\mathcal{L}(z)]_{01}^{01} = \mathbf{k}, \qquad
[\mathcal{L}(z)]_{10}^{10} = -q\,\mathbf{k},
\end{align}
and all other $[\mathcal{L}(z)]_{ij}^{ab}$ vanish.

Note that, under the specialization $q=0$ and $z=1$,
the $L$-operator $\mathcal{L}(z)$ for the operator-valued six-vertex model
reduces to the operator-valued five-vertex model $\mathcal{R}$
introduced in the previous section.

We denote by $\mathfrak{F}_{k\ell}$ the bosonic Fock space associated with
$(k,\ell)\in D_n$, and by
$\mathbf{a}^+_{k\ell}$, $\mathbf{a}^-_{k\ell}$, and $\mathbf{k}_{k\ell}$
the bosonic operators acting on $\mathfrak{F}_{k\ell}$.
In this section, we define the linear operator $X_i^{(n)}(z)$ (Figure $\ref{XIoperatorfigureqgeneric}$) acting on
$\bigotimes_{(k,\ell)\in D_n}\mathfrak{F}_{k\ell}\otimes\mathbb{C}[z]$
as follows.
We replace $\mathcal{R}$ by $\mathcal{L}(z)$ and take the sum
over all configurations satisfying the same conditions as in the previous section:
(i) the bottom edges of columns $1,\dots,i$ are colored red;
(ii) those of columns $i+1,\dots,n$ are colored blue.

\begin{figure}[htbp]
\centering
    \includegraphics[width=0.8\textwidth]{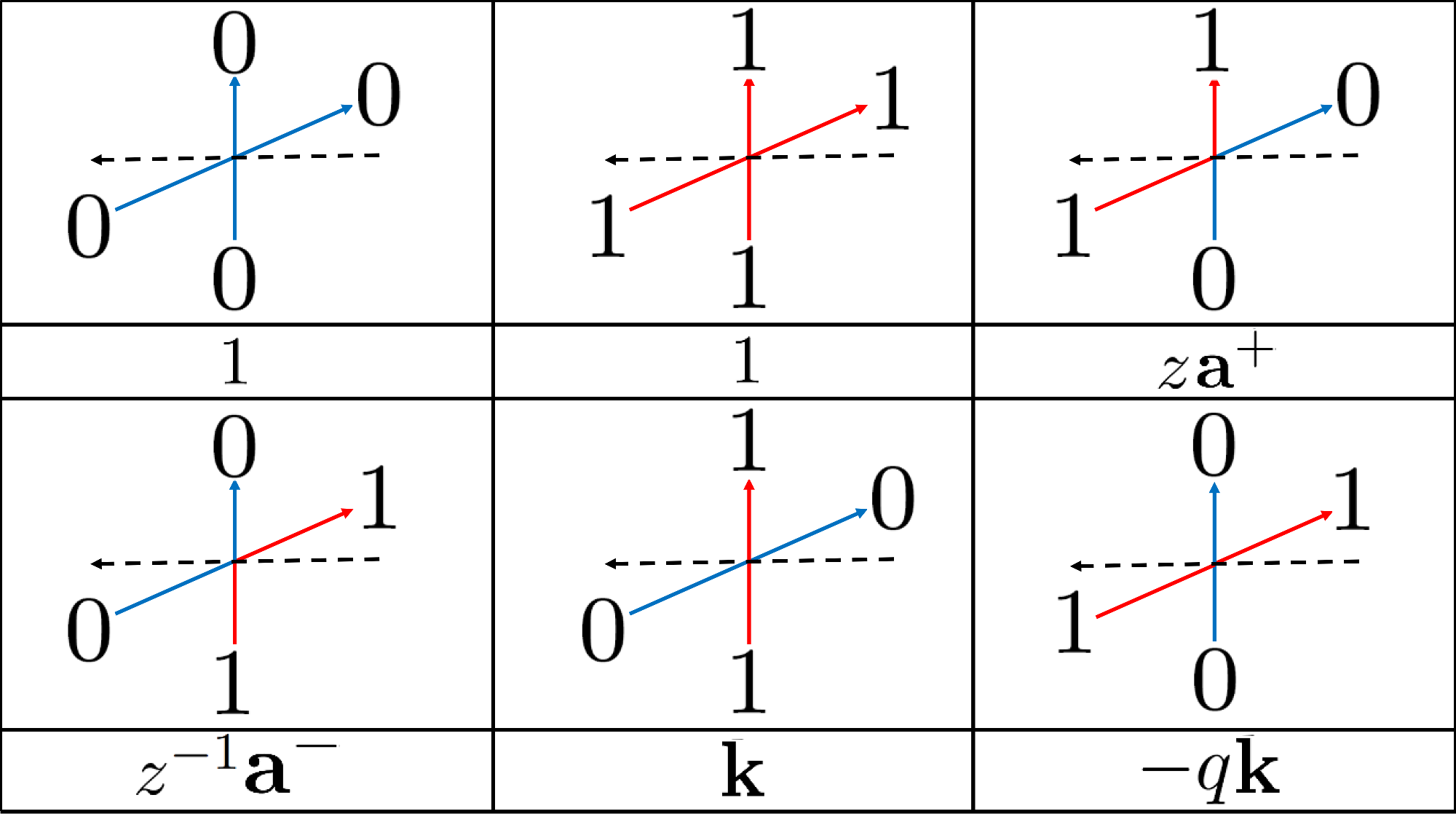}
\caption{The operator-valued $L$-operator.
For each configuration, the weight assigned is given below:
$\mathcal{L}(z)_{00}^{00}=\mathcal{L}(z)_{11}^{11}=1$, 
$\mathcal{L}(z)_{10}^{01}=z\,\mathbf{a}^+$,
$\mathcal{L}(z)_{01}^{10}=z^{-1}\mathbf{a}^-$,
$\mathcal{L}(z)_{01}^{01}=\mathbf{k}$,
$\mathcal{L}(z)_{10}^{10}=-q\,\mathbf{k}$.
By abuse of notation, we also denote the configurations as
$\mathcal{L}_{00}^{00}, \mathcal{L}_{11}^{11}, 
 \mathcal{L}_{10}^{01}, \mathcal{L}_{01}^{10},
 \mathcal{L}_{01}^{01}, \mathcal{L}_{10}^{10}$.
}
\label{figurethreedimension}
\end{figure}

\begin{figure}[htbp]
\centering
\includegraphics[width=12truecm]{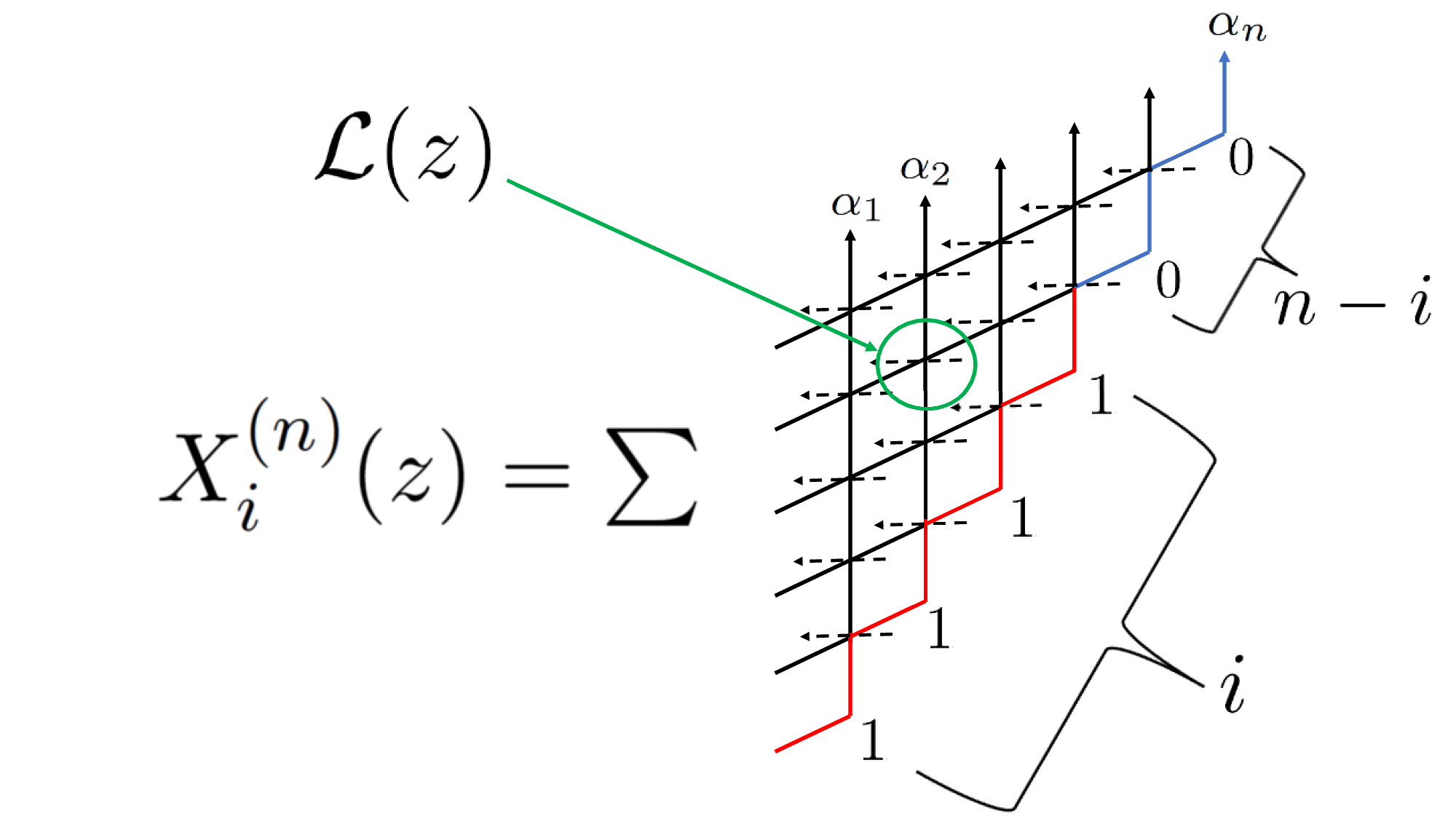}
\caption{The operator $X_i^{(n)}(z)$.  
We sum over all configurations except along one fixed boundary.  
We replace $\mathcal{R}$ by $\mathcal{L}(z)$ at every vertex and take the unweighted sum.
The dependence on $z$ comes from the $\mathcal{L}(z)$-operators.
}
\label{XIoperatorfigureqgeneric}
\end{figure}

For $n=3$, the $X$-operators are explicitly
\begin{align}
X_0^{(3)}(z)
&= 1
 + z\,\mathbf{a}_{11}^+
 + z^2\,\mathbf{a}_{12}^+ \mathbf{a}_{21}^+
 + z\,\mathbf{a}_{12}^+ \mathbf{a}_{21}^+ \mathbf{a}_{11}^-
 - qz\,\mathbf{a}_{12}^+ \mathbf{k}_{11}
 + z\,\mathbf{a}_{21}^+ \mathbf{k}_{11},
\\
X_1^{(3)}(z)
&= \mathbf{a}_{12}^+ \mathbf{a}_{11}^- \mathbf{k}_{21}
 + z\,\mathbf{a}_{12}^+ \mathbf{k}_{21}
 + \mathbf{k}_{11}\mathbf{k}_{21},
\\
X_2^{(3)}(z)
&= \mathbf{a}_{11}^+ \mathbf{a}_{21}^- \mathbf{k}_{12}
 + z^{-1}\mathbf{a}_{21}^- \mathbf{k}_{12}
 + \mathbf{k}_{11}\mathbf{k}_{12},
\\
X_3^{(3)}(z)
&= 1
 + z^{-1}\mathbf{a}_{11}^+ \mathbf{a}_{12}^- \mathbf{a}_{21}^-
 + z^{-1}\mathbf{a}_{11}^-
 + z^{-2}\mathbf{a}_{12}^- \mathbf{a}_{21}^-
 + z^{-1}\mathbf{a}_{12}^- \mathbf{k}_{11}
 - qz^{-1}\mathbf{a}_{21}^- \mathbf{k}_{11}.
\end{align}

By direct computation,
we have
\begin{align}
&[X_1^{(3)}(z),X_1^{(3)}(w)] \nonumber \\
=&(\mathbf{a}_{12}^+ \mathbf{a}_{11}^- \mathbf{k}_{21}
   + z\,\mathbf{a}_{12}^+ \mathbf{k}_{21}
   + \mathbf{k}_{11} \mathbf{k}_{21})
  (\mathbf{a}_{12}^+ \mathbf{a}_{11}^- \mathbf{k}_{21}
   + w\,\mathbf{a}_{12}^+ \mathbf{k}_{21}
   + \mathbf{k}_{11} \mathbf{k}_{21}) \nonumber \\
-&(\mathbf{a}_{12}^+ \mathbf{a}_{11}^- \mathbf{k}_{21}
    + w\,\mathbf{a}_{12}^+ \mathbf{k}_{21}
    + \mathbf{k}_{11} \mathbf{k}_{21})
  (\mathbf{a}_{12}^+ \mathbf{a}_{11}^- \mathbf{k}_{21}
    + z\,\mathbf{a}_{12}^+ \mathbf{k}_{21}
    + \mathbf{k}_{11} \mathbf{k}_{21}) \nonumber \\
=&\, w\,\mathbf{a}_{12}^+ \mathbf{a}_{11}^- \mathbf{k}_{21} \mathbf{a}_{12}^+ \mathbf{k}_{21}
 + z\,\mathbf{a}_{12}^+ \mathbf{k}_{21} \mathbf{a}_{12}^+ \mathbf{a}_{11}^- \mathbf{k}_{21}
 + z\,\mathbf{a}_{12}^+ \mathbf{k}_{21} \mathbf{k}_{11} \mathbf{k}_{21}
 + w\,\mathbf{k}_{11} \mathbf{k}_{21} \mathbf{a}_{12}^+ \mathbf{k}_{21} \nonumber \\
-& z\,\mathbf{a}_{12}^+ \mathbf{a}_{11}^- \mathbf{k}_{21} \mathbf{a}_{12}^+ \mathbf{k}_{21}
 - w\,\mathbf{a}_{12}^+ \mathbf{k}_{21} \mathbf{a}_{12}^+ \mathbf{a}_{11}^- \mathbf{k}_{21}
 - w\,\mathbf{a}_{12}^+ \mathbf{k}_{21} \mathbf{k}_{11} \mathbf{k}_{21}
 - z\,\mathbf{k}_{11} \mathbf{k}_{21} \mathbf{a}_{12}^+ \mathbf{k}_{21}
=0.
\end{align}

One can also check directly that $[X_0^{(3)}(z),X_0^{(3)}(w)] = 0$,
using only the property that operators supported on different 
Fock spaces commute, e.g.
\(
[\mathbf{a}_{ij}, \mathbf{a}_{k\ell}] = 0
\)
for $(i,j)\neq (k,\ell)$.

\begin{conjecture}
For an arbitrary positive integer $N$, the operators satisfy
\[
[X_j^{(N)}(z), X_j^{(N)}(w)] = 0,
\qquad j=0,\dots,N.
\]
\end{conjecture}

If this conjecture is true, this suggests that the $X$-operators form a family of 
good operators for this $q$-generalization.
A quantum-algebraic explanation is missing and is an open problem.

\subsection{
A multiparametric-deformation of elementary symmetric functions
}
Let $t,Q$ be generic complex parameters such that $|t|,|Q|<1$.
For an operator $X$ acting on the bosonic Fock space $\mathfrak{F}$,
we introduce two types of weighted traces
\begin{align}
\mathrm{Tr}_{t,Q}^A(X)&:= \sum_{m \geq 0}
 \langle\!\langle m|
X
| m \rangle \frac{t^m}{(Q;Q)_m},
\\
\mathrm{Tr}_{t,Q}^B(X)&:= \sum_{m \geq 0}
 \langle\!\langle m|
X
| m \rangle \frac{Q^{m(m-1)/2} t^m}{(Q;Q)_m}.
\end{align}


For an operator $X$ acting on \(\bigotimes_{(j,k) \in D_n}\mathfrak{F}_{j,k}\), we define
the type $A$ multi-weighted trace by
\begin{align}
\mathrm{Tr}_{\mathbf{t},\mathbf{Q}}^{A}(X)
:=\sum_{\{m_{j,k}\ge 0\}}
\Bigl(
 \bigotimes_{(j,k) \in D_n} \langle\!\langle m_{j,k}|
\Bigr)\,
X\,
\Bigl(
 \bigotimes_{(j,k) \in D_n} |m_{j,k}\rangle
\Bigr)
\prod_{(j,k) \in D_n}
 \frac{t_{j,k}^{\,m_{j,k}}}{(Q_{j,k};Q_{j,k})_{m_{j,k}}}.
\end{align}

Similarly, the type $B$ multi-weighted trace is defined by
\begin{align}
\mathrm{Tr}_{\mathbf{t},\mathbf{Q}}^{B}(X)
:=\sum_{\{m_{j,k}\ge 0\}}
\Bigl(
 \bigotimes_{(j,k) \in D_n} \langle\!\langle m_{j,k}|
\Bigr)\,
X\,
\Bigl(
 \bigotimes_{(j,k) \in D_n} |m_{j,k}\rangle
\Bigr)
\prod_{(j,k) \in D_n}
 \frac{Q_{j,k}^{\,m_{j,k}(m_{j,k}-1)/2}\,t_{j,k}^{\,m_{j,k}}}
      {(Q_{j,k};Q_{j,k})_{m_{j,k}}}.
\end{align}
Here $t_{j,k}$   weight the occupation numbers, and $Q_{j,k}$ are the 
deformation parameters appearing in the normalizations.

\begin{figure}[htbp]
\centering
    \includegraphics[width=0.8\textwidth]{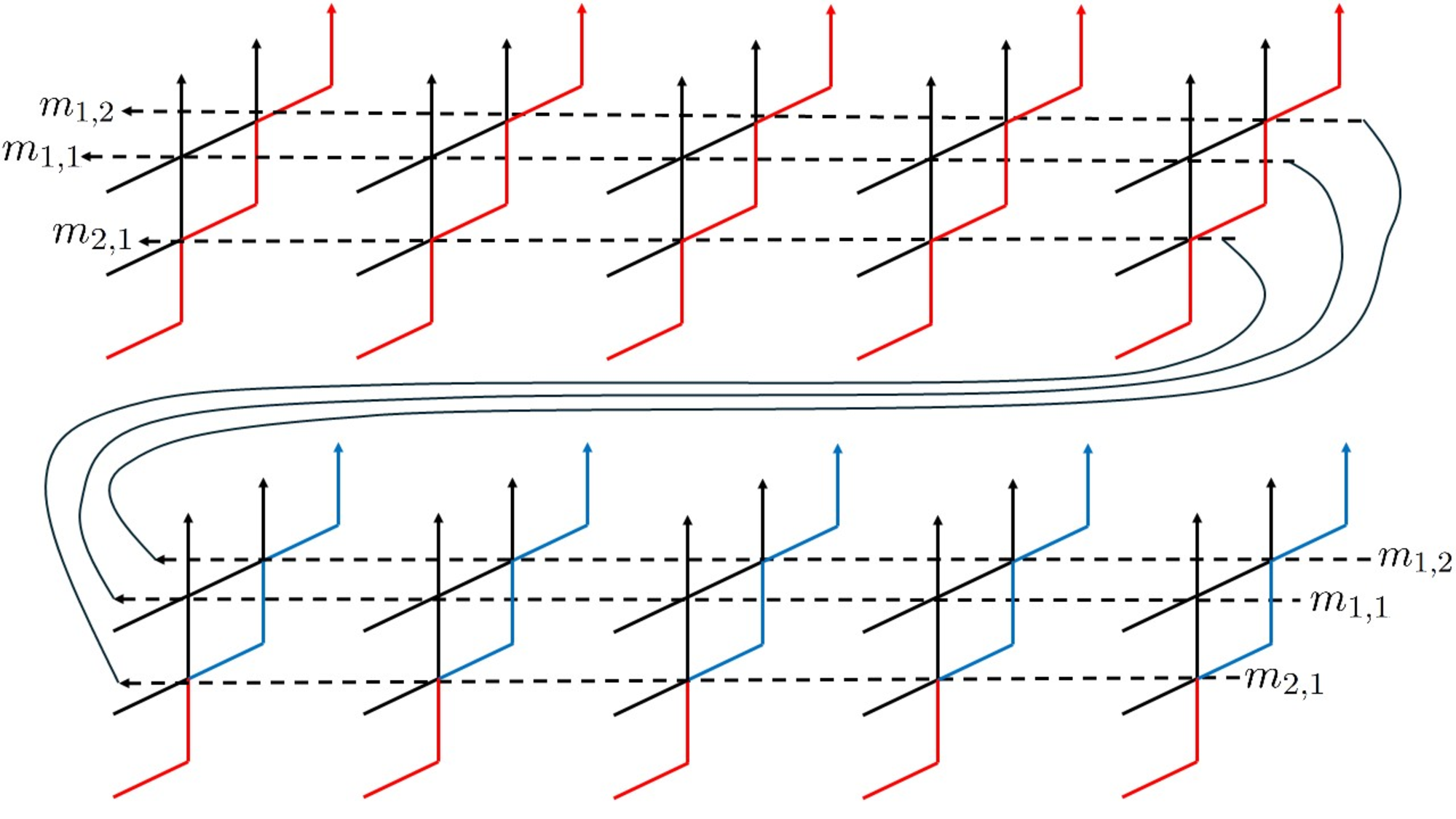}
\caption{
$\langle\!\langle m_{1,1}| \langle\!\langle m_{1,2}| \langle\!\langle m_{2,1}|
X_3^{(3)}(z_1) \cdots X_3^{(3)}(z_m) 
 X_1^{(3)}(w_1) \cdots X_1^{(3)}(w_n)
| m_{1,1} \rangle |m_{1,2} \rangle |m_{2,1} \rangle$.
$\langle\!\langle m_{1,1}| \langle\!\langle m_{1,2}| \langle\!\langle m_{2,1}| X_3^{(3)}(z_1), \dots ,X_3^{(3)}(z_m)$ 
is depicted on the top part, and the bottom part corresponds to 
$X_1^{(3)}(w_1), \dots, X_1^{(3)}(w_n) | m_{1,1} \rangle |m_{1,2} \rangle |m_{2,1} \rangle $.}
\label{figureforone}
\end{figure}


\begin{proposition}
We have
\begin{align}
&\mathrm{Tr}_{\mathbf{t},\mathbf{Q}}^{A}
(
X_N^{(N)}(z_1) \cdots X_N^{(N)}(z_m) X_{N-2}^{(N)}(w_1) \cdots X_{N-2}^{(N)}(w_n)
) \nonumber \\
=&\frac{1}{ \prod_{k=1}^{N-2} \{ ( q^n t_{1,k};Q_{1,k})_\infty (q^n t_{2,k};Q_{2,k})_\infty \} \prod_{(j,k) \in D_n, j \ge 3} (t_{j,k};Q_{j,k})_\infty } 
\nonumber \\
&\times 
 \sum_{i=0}^{\mathrm{min}(m,n)}
e_i(z_1^{-1},\dots,z_m^{-1})e_i(w_1,\dots,w_n)
\sum_{\ell=0}^i q^{\ell(\ell+1)} \begin{bmatrix} i \\ \ell \end{bmatrix}_{q^2}
\frac{1}{(-q^{2 \ell} t_{1,N-1};Q_{1,N-1})_\infty}
,
\label{qtdefelementary} \\
&\mathrm{Tr}_{\mathbf{t},\mathbf{Q}}^{B}
(
X_N^{(N)}(z_1) \cdots X_N^{(N)}(z_m) X_{N-2}^{(N)}(w_1) \cdots X_{N-2}^{(N)}(w_n)
) \nonumber \\
=& \prod_{k=1}^{N-2} \{ (- q^n t_{1,k};Q_{1,k})_\infty (-q^n t_{2,k};Q_{2,k})_\infty \} \prod_{(j,k) \in D_n, j \ge 3} (-t_{j,k};Q_{j,k})_\infty 
\nonumber \\
&\times 
 \sum_{i=0}^{\mathrm{min}(m,n)}
e_i(z_1^{-1},\dots,z_m^{-1})e_i(w_1,\dots,w_n)
\sum_{\ell=0}^i q^{\ell(\ell+1)} \begin{bmatrix} i \\ \ell \end{bmatrix}_{q^2}
(q^{2 \ell} t_{1,N-1};Q_{1,N-1})_\infty,
\label{qtdefelementarytwo}
\end{align}
\end{proposition}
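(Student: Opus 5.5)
We compute the weighted trace by first identifying which configurations of the $L$-operators in $X_N^{(N)}(z_j)$ and $X_{N-2}^{(N)}(w_k)$ can contribute a diagonal matrix element. The plan has four steps: decompose both operators Fock space by Fock space, reduce the trace to a few nontrivial Fock spaces, do the combinatorics there, and then sum the trace weights.

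\textbf{Step 1: structure of the operators.} Following the explicit $n=3$ formulas, all bottom edges of $X_N^{(N)}(z)$ are red. By the ice rule, each vertex is either $\mathcal{L}_{11}^{11}=1$ or one of a short chain of vertices producing $\mathbf{a}^-$ (with $\mathbf{k}$ or $-q\mathbf{k}$ factors). Each such chain flips exactly one top edge to blue and carries a factor $z^{-1}$.

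Dually, $X_{N-2}^{(N)}(w)$ has blue edges only in columns $N-1,N$. Its configurations are a "ground" term plus chains producing $\mathbf{a}^+$, each carrying a factor $w$.

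\textbf{Step 2: which terms survive the trace.} The trace only sees diagonal matrix elements, so the total number of creations on each $\mathfrak{F}_{j,k}$ must equal the number of annihilations there. For this particular pair of operators I expect the chain structure to force the following. Every annihilation path in a $z$-factor must be matched by a creation path in a $w$-factor along the same route. The surviving terms are then indexed by choosing $i$ of the $z$'s and $i$ of the $w$'s, with weight $\prod z_{a}^{-1}\prod w_b$. Summing over these choices gives $e_i(z^{-1})e_i(w)$.

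The remaining operator content is diagonal:
\begin{itemize}
\item a product of $\mathbf{k}$'s on the Fock spaces $\mathfrak{F}_{1,k},\mathfrak{F}_{2,k}$ with $k\le N-2$, giving $q^{n m_{j,k}}$, since every one of the $n$ factors $X_{N-2}(w)$ contributes one $\mathbf{k}$ on those rows;
\item identities on rows $j\ge3$;
\item a nontrivial factor on $\mathfrak{F}_{1,N-1}$ (and the adjacent space it pairs with), of the form $(\mathbf{a}^-\mathbf{a}^+)$ or $-q\mathbf{k}^2$ mixtures, coming from the $i$ matched pairs.
\end{itemize}

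\textbf{Step 3: the nontrivial Fock space.} On $\mathfrak{F}_{1,N-1}$, each matched pair contributes either $\mathbf{a}^-\mathbf{a}^+=1-q^2\mathbf{k}^2$ or a $\mathbf{k}$-type term. Expanding the product of $i$ such factors with commuting diagonal values gives $\langle\!\langle m|\cdots|m\rangle=(q^{2m+2};q^2)_i$, or an equivalent form. By the $q$-binomial theorem \eqref{qbinomial} with base $q^2$, this equals
\[
\sum_{\ell}(-1)^\ell q^{\ell(\ell-1)}\begin{bmatrix} i\\ \ell\end{bmatrix}_{q^2}q^{2\ell(m+1)}=\sum_\ell(-1)^\ell q^{\ell(\ell+1)}\begin{bmatrix} i\\ \ell\end{bmatrix}_{q^2}q^{2\ell m}.
\]
This reproduces the $q^{\ell(\ell+1)}\left[\begin{smallmatrix} i\\ \ell\end{smallmatrix}\right]_{q^2}$ coefficient, and the sign is absorbed into $-q^{2\ell}t$.

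\textbf{Step 4: summing the traces.} Each Fock space is now summed independently.
\begin{itemize}
\item For $\mathrm{Tr}^A$, \eqref{qsummationone} gives $\sum_m (ct)^m/(Q;Q)_m=1/(ct;Q)_\infty$, with $c=q^n$ on rows $1,2$, $c=1$ on rows $\ge3$, and $c=-q^{2\ell}$ on $\mathfrak{F}_{1,N-1}$.
\item For $\mathrm{Tr}^B$, \eqref{qsummationtwo} gives $(-ct;Q)_\infty$ in each case.
\end{itemize}
This yields both \eqref{qtdefelementary} and \eqref{qtdefelementarytwo}.

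The main obstacle is Step 2. We must show rigorously that no other index-balanced configurations survive, for example annihilations in a $z$-factor compensated by creations coming from different chains. We must also pin down exactly which diagonal operators appear, including the sign of the $-q\mathbf{k}$ vertex, so that the effective argument on $\mathfrak{F}_{1,N-1}$ comes out as $-q^{2\ell}t_{1,N-1}$.

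I would first settle the case $N=3$ using the explicit formulas for $X_3^{(3)}$ and $X_1^{(3)}$, tracking occupation numbers. I would then argue for general $N$ by a propagation argument along Fock spaces, similar to the proof of \eqref{tracefactorization}, showing that the balancing condition forces the configurations in the red columns to freeze.
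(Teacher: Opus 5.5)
Your route is the paper's. You freeze the Fock spaces on which only one family can change the occupation number, balance creations against annihilations on $\mathfrak{F}_{1,N-1}$ to produce $e_i(z^{-1})e_i(w)$, expand with \eqref{qbinomial}, and sum with \eqref{qsummationone}, \eqref{qsummationtwo}. For $N=3$ your Step 2 follows directly from the explicit operators. Nothing creates on $\mathfrak{F}_{2,1}$, so the $\mathbf{a}^-_{21}$-terms of $X_3^{(3)}$ drop. After that nothing creates on $\mathfrak{F}_{1,1}$, so the $\mathbf{a}^-_{11}$-terms drop too, leaving $1+z^{-1}\mathbf{a}^-_{12}\mathbf{k}_{11}$ and $w\,\mathbf{a}^+_{12}\mathbf{k}_{21}+\mathbf{k}_{11}\mathbf{k}_{21}$. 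Two details of your bookkeeping need correcting:
\begin{itemize}
\item The factor $q^{nM_{1,1}}$ comes from $i$ copies of $\mathbf{k}_{11}$ in the selected $z$-factors plus $n-i$ copies in the unselected $w$-factors. A selected $w$-factor acts as $1$ on $\mathfrak{F}_{1,1}$, so it is not ``one $\mathbf{k}$ from every $w$''.
\item The $-q\mathbf{k}$ vertex never survives, so it cannot supply any sign.
\end{itemize}

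The genuine gap is your sentence ``the sign is absorbed into $-q^{2\ell}t$''. Your expansion
\[
\langle\!\langle M|(\mathbf{a}^-)^i(\mathbf{a}^+)^i|M\rangle=(q^{2M+2};q^2)_i=\sum_{\ell=0}^i(-1)^\ell q^{\ell(\ell+1)}\begin{bmatrix} i\\ \ell\end{bmatrix}_{q^2}q^{2\ell M}
\]
is correct. But $(-1)^\ell$ does not depend on $M$, whereas $(-q^{2\ell}t)^M=(-1)^M q^{2\ell M}t^M$, so one cannot be traded for the other. Summed correctly, \eqref{qsummationone} gives $\sum_{\ell}(-1)^\ell q^{\ell(\ell+1)}\begin{bmatrix} i\\ \ell\end{bmatrix}_{q^2}/(q^{2\ell}t_{1,N-1};Q_{1,N-1})_\infty$. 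Likewise \eqref{qsummationtwo} gives $\sum_\ell(-1)^\ell q^{\ell(\ell+1)}\begin{bmatrix} i\\ \ell\end{bmatrix}_{q^2}(-q^{2\ell}t_{1,N-1};Q_{1,N-1})_\infty$. Neither matches the factors in \eqref{qtdefelementary} and \eqref{qtdefelementarytwo}.

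In fact the displayed statement is false by exactly this sign. Take $N=3$, $m=n=1$ and all $t_{j,k}=0$. Both traces reduce to $\langle\!\langle\Omega|X_3^{(3)}(z)X_1^{(3)}(w)|\Omega\rangle=1+(1-q^2)w/z$, while both stated right-hand sides equal $1+(1+q^2)w/z$. The paper's proof makes the same slip, in the step following ``From \eqref{qbinomial} and \eqref{qsummationone}''. Its Corollary \eqref{qtdefelementaryspecial}, with $\prod_k(1-q^{2k})$, agrees only with the corrected version.

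So your plan cannot establish the proposition as written. Carried out correctly, it proves the version where $1/(-q^{2\ell}t_{1,N-1};Q_{1,N-1})_\infty$ becomes $(-1)^\ell/(q^{2\ell}t_{1,N-1};Q_{1,N-1})_\infty$. Likewise $(q^{2\ell}t_{1,N-1};Q_{1,N-1})_\infty$ becomes $(-1)^\ell(-q^{2\ell}t_{1,N-1};Q_{1,N-1})_\infty$. That corrected identity is what you should state and prove.
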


\begin{proof}
Let us show the simplest nontrivial case $N=3$. The general case can be proved in the same way.
By definition,
\begin{align}
&\mathrm{Tr}^{A}_{\mathbf{t},\mathbf{Q}}
\!\left(
X_3^{(3)}(z_1)\cdots X_3^{(3)}(z_m)\,
X_1^{(3)}(w_1)\cdots X_1^{(3)}(w_n)
\right)
\nonumber \\[4pt]
=&\sum_{m_{1,1},\,m_{1,2},\,m_{2,1}\ge 0}
\left(
\frac{t_{1,1}^{m_{1,1}}}{(Q_{1,1};Q_{1,1})_{m_{1,1}}}
\right)
\left(
\frac{t_{1,2}^{m_{1,2}}}{(Q_{1,2};Q_{1,2})_{m_{1,2}}}
\right)
\left(
\frac{t_{2,1}^{m_{2,1}}}{(Q_{2,1};Q_{2,1})_{m_{2,1}}}
\right)
\nonumber \\[4pt]
&\qquad \times
\langle\!\langle m_{1,1}|\,
\langle\!\langle m_{1,2}|\,
\langle\!\langle m_{2,1}|\,
X_3^{(3)}(z_1)\cdots X_3^{(3)}(z_m)\,
X_1^{(3)}(w_1)\cdots X_1^{(3)}(w_n)\,
|m_{1,1}\rangle\,|m_{1,2}\rangle\,|m_{2,1}\rangle .
\label{weightedtraceproof}
\end{align}

We fix $m_{1,1},m_{1,2},m_{2,1}$ and compute
$\langle\!\langle m_{1,1}| \langle\!\langle m_{1,2}| \langle\!\langle m_{2,1}|
X_3^{(3)}(z_1) \cdots X_3^{(3)}(z_m)$ \\
$
 X_1^{(3)}(w_1) \cdots X_1^{(3)}(w_n)
| m_{1,1} \rangle |m_{1,2} \rangle |m_{2,1} \rangle$
which can be graphically depicted
as Figure
\ref{figureforone}.
We first make an observation on the $L$-operators on $\mathfrak{F}_{2,1}$.
From the ice-rule
$\mathcal{L}(z)_{ij}^{k\ell}=0$ unless $i+j=k+\ell$, we 
note that the fermionic edges of the $L$-operators on the second row of the $X_1$-operators
are  all colored in the same way ($\mathcal{L}_{01}^{01}$) as depicted in the bottom part of the left panel of
Figure \ref{figurefortwo}.
The weighted projection operator is assigned to each of these
configurations, and these operators do not change the bosonic number in the bosonic Fock space
$\mathfrak{F}_{2,1}$. The state $|m_{2,1} \rangle$ remains the same after acting $X_1$-operators.
We next act $X_3$-operators and sandwitch by $\langle m_{2,1}|$.
Since two of the fermionic edges of the $L$-operator of the second row for each $X_3$-operator
are already colored by red, we note the other two edges must be colored either both by blue ($\mathcal{L}_{01}^{10}$)
or both by red ($\mathcal{L}_{11}^{11}$), which produces the annihliation operator and the identity operator respectively.
However, we can never use the annihilation operator since using it will produce a state in $\mathfrak{F}_{2,1}$ with bosonic number smaller than
$m_{2,1}$, and sandwitching by $\langle m_{2,1}|$ gives zero. Hence we can only use the identity operator
and note that all the edges must be colored with red ($\mathcal{L}_{11}^{11}$) as depicted in the top part of the right panel of
Figure \ref{figurefortwo}.

Next, in the same figure, we examine
the $L$-operators on $\mathfrak{F}_{1,2}$.
Note that two of the fermionic edges of the $X_1$-operators are already colored by blue
as in the bottom part, which means that either the creation operator (if $\mathcal{L}_{10}^{01}$) or the identity operator
(if $\mathcal{L}_{00}^{00}$)
can be assigned to each part, but not the annihilation operator.
On the other hand, two of the fermionic edges of the $X_3$-operators are already colored by red
as in the top part, which implies that either the annihilation operator (if $\mathcal{L}_{01}^{10}$) or the identity operator (if $\mathcal{L}_{11}^{11}$)
can be assigned, but not the annihilation operator.
Since we finally sandwitch by $|m_{1,2} \rangle$ and $\langle m_{1,2}|$,
the particle number must be preserved which means that the number of annihilation operators
and creation operators which we use must be equal, say $i$ ($0 \leq i \leq \mathrm{min}(m,n)$).
A typical figure is given in the left panel of Figure \ref{figureforthree}.

Finally, we make the following observation on the $L$-operators on $\mathfrak{F}_{1,1}$.
We note from the previous steps that
two of the fermionic edges are already colored and we see that either
the identity operator (if $\mathcal{L}_{11}^{11}$), the projection operator (if $\mathcal{L}_{01}^{01}$)
or the annihilation operator (if $\mathcal{L}_{01}^{10}$) can be assigned,
but not the creation operator. Moreover, since we finally sandwitch by $|m_{1,1} \rangle$ and $\langle m_{1,1}|$,
we note that we cannot use the annihilation operator. This observation together with the ice-rule determines
the coloring of the remaining edges in a unique way as depicted in the right panel of Figure \ref{figureforthree}.

To read out the explicit form of $\langle\!\langle m_{1,1}| \langle\!\langle m_{1,2}| \langle\!\langle m_{2,1}|
X_3^{(3)}(z_1) \cdots X_3^{(3)}(z_m)$ \\
$
 X_1^{(3)}(w_1) \cdots X_1^{(3)}(w_n)
| m_{1,1} \rangle |m_{1,2} \rangle |m_{2,1} \rangle$,
see Figure \ref{figureforfour}.
We label the $X_3$-operators which annihilation operators appear in $\mathfrak{F}_{1,2}$ as $X_3^{(3)}(z_{j_1}),\dots,X_3^{(3)}(z_{j_i})$ $(j_1<j_2<\cdots<j_i)$,
and  the $X_1$-operators which creation operators appear in $\mathfrak{F}_{1,2}$ as $X_1^{(3)}(w_{k_1}),\dots,X_1^{(3)}(w_{k_i})$ $(k_1<k_2<\cdots<k_i)$.
The weight coming from the $L$-operators on $\mathfrak{F}_{1,1}$, $\mathfrak{F}_{1,2}$, $\mathfrak{F}_{2,1}$ of
  $X_3^{(3)}(z_{j_p})$ ($p=1,\dots,i$) is $q^{m_{1,1}}$, $(1-q^{2(m_{1,2}+p)})z_{j_p}^{-1}$, $1$ respectively,
hence the operator $X_3^{(3)}(z_{j_p})$ contributes the factor $q^{m_{1,1}}(1-q^{2(m_{1,2}+p)})z_{j_p}^{-1}$.
We also note all the other $X_3$-operators give the factor 1 since all the weights coming from the $L$-operators
on $\mathfrak{F}_{1,1}$, $\mathfrak{F}_{1,2}$, $\mathfrak{F}_{2,1}$ are 1.
The weight coming from the $L$-operators on $\mathfrak{F}_{1,1}$, $\mathfrak{F}_{1,2}$, $\mathfrak{F}_{2,1}$ of
$X_1^{(3)}(w_{k_p})$ ($p=1,\dots,i$) is $1$, $w_{k_p}$, $q^{m_{2,1}}$ respectively,
hence the operator $X_1^{(3)}(w_{k_p})$ contributes the factor $q^{m_{2,1}} w_{k_p}$.
Likewise, we note each of the other $X_1$-operators gives the factor $q^{m_{1,1}+m_{2,1}}$.
Multiplying the factors coming from all $X$-operators and simplifying, we get the total weight
\begin{align}
q^{(m_{1,1}+m_{2,1})n} \prod_{p=1}^i (1-q^{2(m_{1,2}+p)}) z_{j_1}^{-1} \cdots z_{j_i}^{-1} w_{k_1} \cdots w_{k_i},
\label{weightfortypicalconfig}
\end{align}
from this configuration. This
is the weight for one typical configutation, and we need to take all this kind of configuration into account to get
the expression for \\
$\langle\!\langle m_{1,1}| \langle\!\langle m_{1,2}| \langle\!\langle m_{2,1}|
X_3^{(3)}(z_1) \cdots X_3^{(3)}(z_m)$
$
 X_1^{(3)}(w_1) \cdots X_1^{(3)}(w_n)
| m_{1,1} \rangle |m_{1,2} \rangle |m_{2,1} \rangle$. This means that we take the summation 
$\sum_{i=0}^{\mathrm{min}(m,n)} \sum_{1 \le j_1<j_2<\cdots<j_p \le m}
\sum_{1 \le k_1<k_2<\cdots<k_p \le n}
$. We get
\begin{align}
&\langle\!\langle m_{1,1}| \langle\!\langle m_{1,2}| \langle\!\langle m_{2,1}|
X_3^{(3)}(z_1) \cdots X_3^{(3)}(z_m)
 X_1^{(3)}(w_1) \cdots X_1^{(3)}(w_n)
| m_{1,1} \rangle |m_{1,2} \rangle |m_{2,1} \rangle
 \nonumber \\
=&\sum_{i=0}^{\mathrm{min}(m,n)} \sum_{1 \le j_1<j_2<\cdots<j_p \le m}
\sum_{1 \le k_1<k_2<\cdots<k_p \le n}
q^{(m_{1,1}+m_{2,1})n} \prod_{p=1}^i (1-q^{2(m_{1,2}+p)}) z_{j_1}^{-1} \cdots z_{j_i}^{-1} w_{k_1} \cdots w_{k_i}
\nonumber \\
=&q^{(m_{1,1}+m_{2,1})n} \sum_{i=0}^{\mathrm{min}(m,n)}
\prod_{p=1}^i (1-q^{2(m_{1,2}+p)}) e_i(z_1^{-1},\dots,z_m^{-1})e_i(w_1,\dots,w_n). \label{onesummationform}
\end{align}
Inserting \eqref{onesummationform} into
\eqref{weightedtraceproof}, we get
\begin{align}
&\mathrm{Tr}_{\mathbf{t},\mathbf{Q}}^A
(
X_3^{(3)}(z_1) \cdots X_3^{(3)}(z_m) X_1^{(3)}(w_1) \cdots X_1^{(3)}(w_n)
) \nonumber \\
=&\sum_{m_{1,1},m_{1,2},m_{2,1} \geq 0}
\frac{
t_{1,1}^{m_{1,1}}
}{(Q_{1,1};Q_{1,1})_{m_{1,1}}}
\frac{
t_{1,2}^{m_{1,2}}
}{(Q_{1,2};Q_{1,2})_{m_{1,2}}}
\frac{
t_{2,1}^{m_{2,1}}
}{(Q_{2,1};Q_{2,1})_{m_{2,1}}}
q^{(m_{1,1}+m_{2,1})n} 
\nonumber \\
&\times \sum_{i=0}^{\mathrm{min}(m,n)}
\prod_{p=1}^i (1-q^{2(m_{1,2}+p)}) e_i(z_1^{-1},\dots,z_m^{-1})e_i(w_1,\dots,w_n) \nonumber \\
=&\sum_{m_{1,1} \geq 0}
\frac{
(q^n t_{1,1})^{m_{1,1}}
}{(Q_{1,1};Q_{1,1})_{m_{1,1}}}
\sum_{m_{2,1} \geq 0}
\frac{
(q^n t_{2,1})^{m_{2,1}}
}{(Q_{2,1};Q_{2,1})_{m_{2,1}}}
 \sum_{i=0}^{\mathrm{min}(m,n)}
 e_i(z_1^{-1},\dots,z_m^{-1})e_i(w_1,\dots,w_n)
 \nonumber \\
&\times \sum_{m_{1,2} \geq 0}
\frac{
(t_{1,2})^{m_{1,2}}
}{(Q_{1,2};Q_{1,2})_{m_{1,2}}} \prod_{p=1}^i (1-q^{2(m_{1,2}+p)}). \label{righthandsidetrace}
\end{align}
From \eqref{qbinomial} and \eqref{qsummationone}, we have
\begin{align}
&\sum_{m_{j,1} \geq 0}
\frac{
(q^n t_{j,1})^{m_{j,1}}
}{(Q_{j,1};Q_{j,1})_{m_{j,1}}}=\frac{1}{(q^n t_{j,1};Q_{j,1})_\infty}, \ \ \ j=1,2,
  \\
&\sum_{m_{1,2} \geq 0}
\frac{
(t_{1,2})^{m_{1,2}}
}{(Q_{1,2};Q_{1,2})_{m_{1,2}}} \prod_{p=1}^i (1-q^{2(m_{1,2}+p)}) \nonumber \\
=&\sum_{\ell=0}^i q^{\ell(\ell+1)}
\begin{bmatrix} i \\ \ell \end{bmatrix}_{q^2}
\sum_{m_{1,2} \geq 0}
\frac{
(-q^{2\ell} t_{1,2})^{m_{1,2}}
}{(Q_{1,2};Q_{1,2})_{m_{1,2}}} =\sum_{\ell=0}^i q^{\ell(\ell+1)}
\begin{bmatrix} i \\ \ell \end{bmatrix}_q
\frac{1}{(-q^{2 \ell} t_{1,2};Q_{1,2})_\infty},
\end{align}
and inserting 
into \eqref{righthandsidetrace}, we get
\begin{align}
&\mathrm{Tr}_{\mathbf{t},\mathbf{Q}}^{A}
(
X_3^{(3)}(z_1) \cdots X_3^{(3)}(z_m) X_{1}^{(3)}(w_1) \cdots X_{1}^{(3)}(w_n)
) \nonumber \\
=& \frac{1}{ ( q^n t_{1,1};Q_{1,1})_\infty (q^n t_{2,1};Q_{2,1})_\infty  } \nonumber \\
&\times
 \sum_{i=0}^{\mathrm{min}(m,n)}
e_i(z_1^{-1},\dots,z_m^{-1})e_i(w_1,\dots,w_n)
\sum_{\ell=0}^i q^{\ell(\ell+1)} \begin{bmatrix} i \\ \ell \end{bmatrix}_{q^2}
\frac{1}{(-q^{2 \ell} t_{1,2};Q_{1,2})_\infty}
.
\end{align}


The other type of trace
\begin{align}
&\mathrm{Tr}^{B}_{\mathbf{t},\mathbf{Q}}
\!\left(
X_3^{(3)}(z_1)\cdots X_3^{(3)}(z_m)\,
X_1^{(3)}(w_1)\cdots X_1^{(3)}(w_n)
\right)
\nonumber \\[4pt]
=&\sum_{m_{1,1},\,m_{1,2},\,m_{2,1}\ge 0}
\left(
\frac{ Q_{1,1}^{m_{1,1}(m_{1,1}-1)/2}   t_{1,1}^{m_{1,1}}   }{(Q_{1,1};Q_{1,1})_{m_{1,1}}}
\right)
\left(
\frac{  Q_{1,2}^{m_{1,2}(m_{1,2}-1)/2}   t_{1,2}^{m_{1,2}}}{(Q_{1,2};Q_{1,2})_{m_{1,2}}}
\right)
\left(
\frac{   Q_{2,1}^{m_{2,1}(m_{2,1}-1)/2}     t_{2,1}^{m_{2,1}}}{(Q_{2,1};Q_{2,1})_{m_{2,1}}}
\right)
\nonumber \\[4pt]
&\qquad \times
\langle\!\langle m_{1,1}|\,
\langle\!\langle m_{1,2}|\,
\langle\!\langle m_{2,1}|\,
X_3^{(3)}(z_1)\cdots X_3^{(3)}(z_m)\,
X_1^{(3)}(w_1)\cdots X_1^{(3)}(w_n)\,
|m_{1,1}\rangle\,|m_{1,2}\rangle\,|m_{2,1}\rangle .
\label{weightedtraceprooftwo}
\end{align}
can be computed in a similar way using \eqref{qbinomial} and \eqref{qsummationtwo}, resulting in 
\begin{align}
&\mathrm{Tr}_{\mathbf{t},\mathbf{Q}}^{B}
(
X_3^{(3)}(z_1) \cdots X_3^{(3)}(z_m) X_{1}^{(3)}(w_1) \cdots X_{1}^{(3)}(w_n)
) \nonumber \\
=& (- q^n t_{1,1};Q_{1,1})_\infty (-q^n t_{2,1};Q_{2,1})_\infty   \nonumber \\
&\times
 \sum_{i=0}^{\mathrm{min}(m,n)}
e_i(z_1^{-1},\dots,z_m^{-1})e_i(w_1,\dots,w_n)
\sum_{\ell=0}^i q^{\ell(\ell+1)} \begin{bmatrix} i \\ \ell \end{bmatrix}_{q^2}
(q^{2 \ell} t_{1,2};Q_{1,2})_\infty
.
\end{align}

\end{proof}

\begin{figure}[htbp]
\centering
    \includegraphics[width=0.45\textwidth]{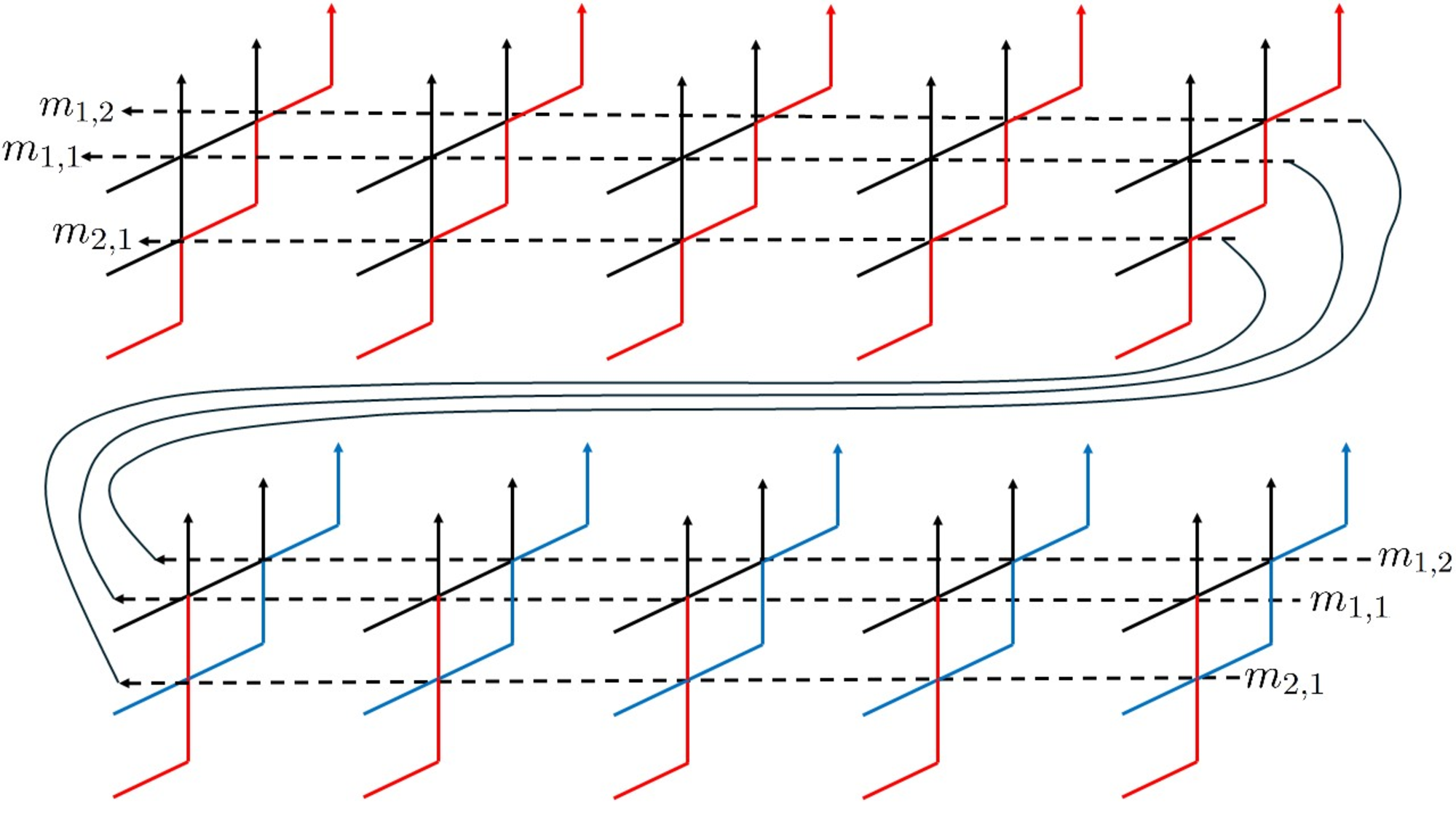}
\includegraphics[width=0.45\textwidth]{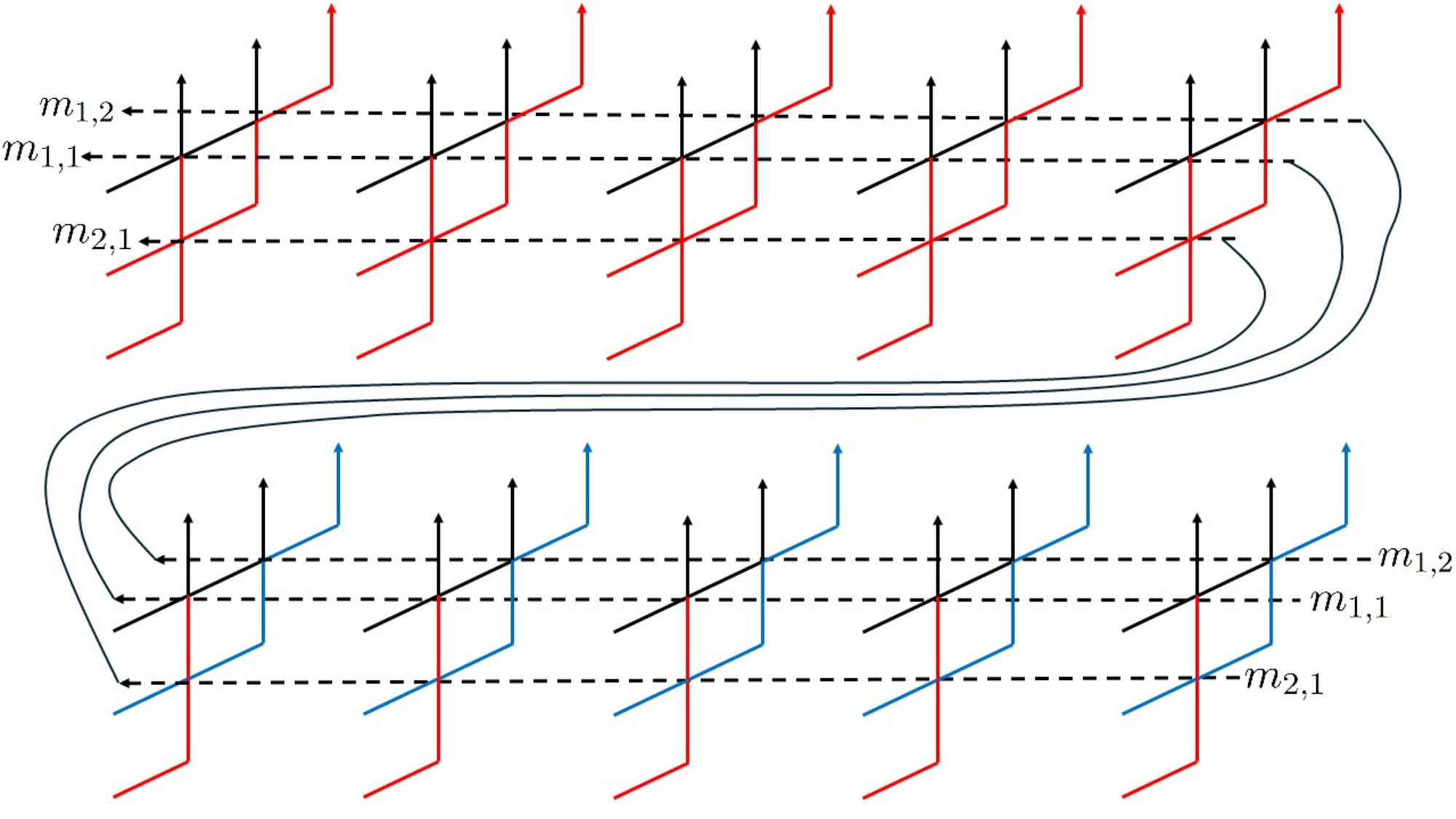}
\caption{One notes that the $L$-operators acting on the bosonic Fock space $\mathfrak{F}_{2,1}$ are fixed uniquely.}
\label{figurefortwo}
\end{figure}

\begin{figure}[htbp]
\centering
    \includegraphics[width=0.45\textwidth]{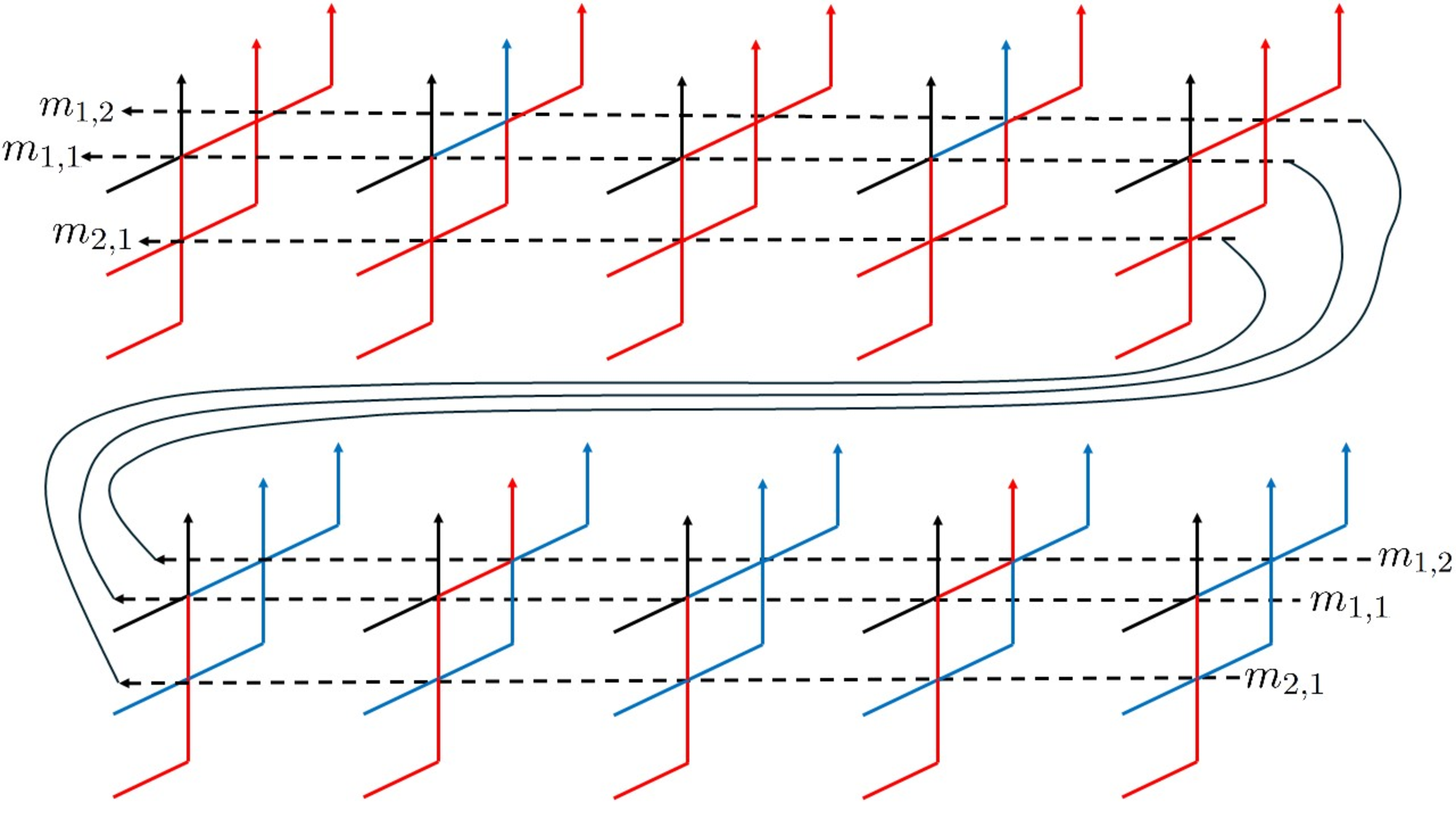}
\includegraphics[width=0.45\textwidth]{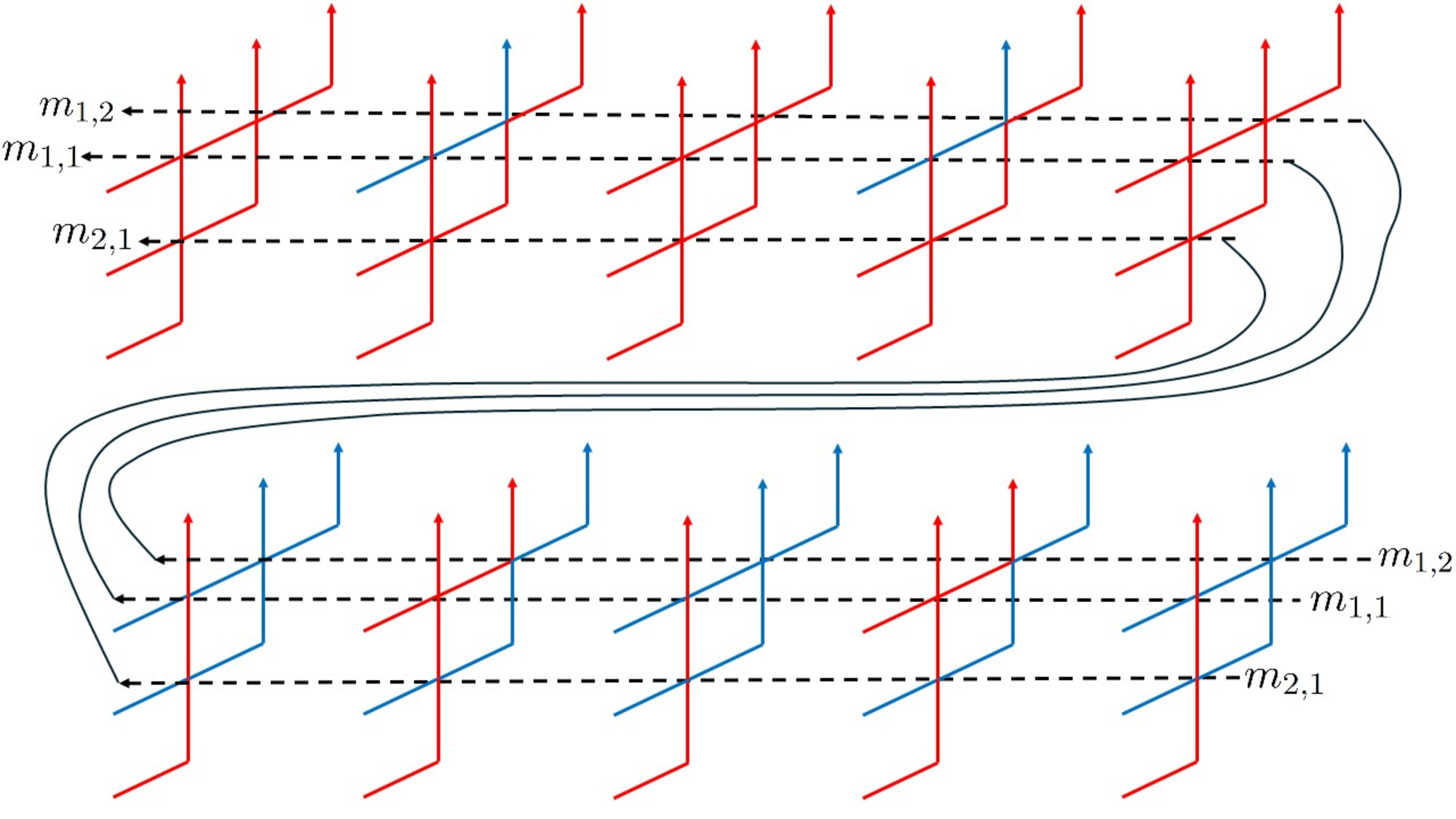}
\caption{A typical configuration which gives a nonzero contribution to the partition function.}
\label{figureforthree}
\end{figure}

\begin{figure}[htbp]
\centering
    \includegraphics[width=0.8\textwidth]{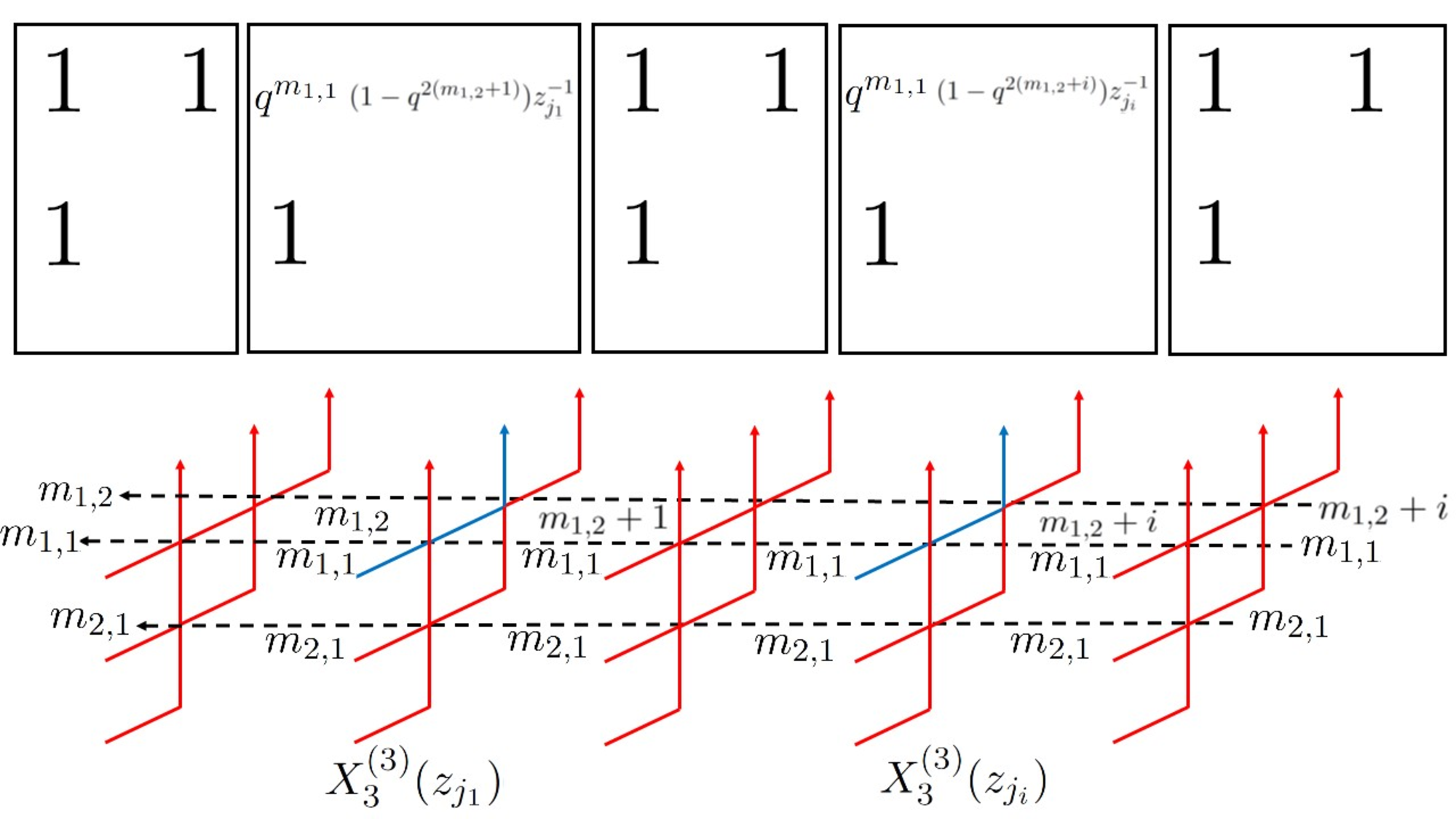}
\includegraphics[width=0.8\textwidth]{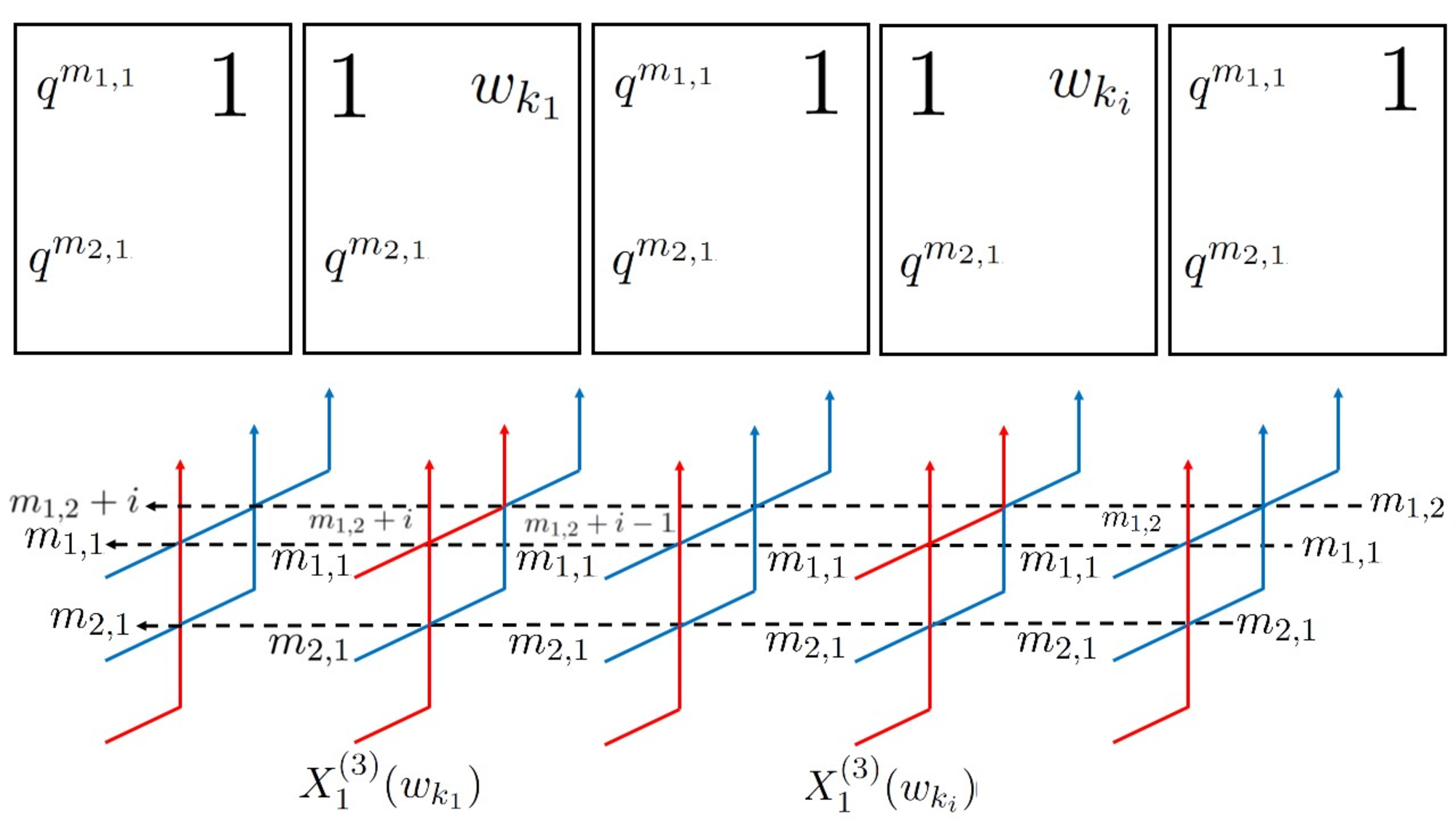}
\caption{
The weights assigned to the  $X_1$- and $X_3$-operators.}
\label{figureforfour}
\end{figure}

For $N=3$ and setting $t_{i,j}=Q_{i,j}=0$ for all $i,j$,
\eqref{qtdefelementary} can be rewitten as follows.
\begin{corollary} We have
\begin{align}
&\langle\!\langle \Omega|
X_3^{(3)}(z_1) \cdots X_3^{(3)}(z_m) X_1^{(3)}(w_1) \cdots X_1^{(3)}(w_n)| \Omega \rangle \nn \\
=&\frac{1}{z_1 \cdots z_m}
\sum_{\ell \ge 0}
\prod_{k=1}^\ell (1-q^{2k}) e_{m-\ell}(z_1,\dots,z_m) e_{\ell}(w_1,\dots,w_n).
\label{qtdefelementaryspecial}
\end{align}
\end{corollary}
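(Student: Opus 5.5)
The plan is to specialize the proof of the preceding proposition at $N=3$ instead of specializing its final closed form. Setting $t_{i,j}=Q_{i,j}=0$ in $\mathrm{Tr}^{A}_{\mathbf{t},\mathbf{Q}}$ makes every weight $t_{j,k}^{m_{j,k}}/(Q_{j,k};Q_{j,k})_{m_{j,k}}$ vanish unless $m_{j,k}=0$. Under the convention $0^0=1$ and $(Q;Q)_0=1$, the surviving term has weight $1$. Also $\langle\!\langle 0|=\langle 0|$, because $(q^2)_0=1$. So the weighted trace in \eqref{weightedtraceproof} collapses to the vacuum expectation value $\langle\!\langle \Omega|X_3^{(3)}(z_1)\cdots X_3^{(3)}(z_m)X_1^{(3)}(w_1)\cdots X_1^{(3)}(w_n)|\Omega\rangle$, where $|\Omega\rangle=|0\rangle_{1,1}|0\rangle_{1,2}|0\rangle_{2,1}$. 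This is exactly the left-hand side of \eqref{qtdefelementaryspecial}.

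Next I would take the fixed-occupation computation \eqref{onesummationform} and set $m_{1,1}=m_{1,2}=m_{2,1}=0$. The configuration analysis there covers the Fock spaces $\mathfrak{F}_{2,1}$, $\mathfrak{F}_{1,2}$ and $\mathfrak{F}_{1,1}$. It forces all $L$-operators on $\mathfrak{F}_{2,1}$ and on $\mathfrak{F}_{1,1}$, and matches $i$ annihilations in the $X_3$-operators with $i$ creations in the $X_1$-operators on $\mathfrak{F}_{1,2}$. It remains valid verbatim, and the prefactor $q^{(m_{1,1}+m_{2,1})n}$ becomes $1$. This yields
\[
\sum_{i\ge 0}\prod_{p=1}^{i}(1-q^{2p})\, e_i(z_1^{-1},\dots,z_m^{-1})\,e_i(w_1,\dots,w_n).
\]
The remaining step is the elementary identity $e_i(z_1^{-1},\dots,z_m^{-1})=e_{m-i}(z_1,\dots,z_m)/(z_1\cdots z_m)$. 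It follows by pairing each $i$-subset with its complement. Renaming $i$ as $\ell$ gives \eqref{qtdefelementaryspecial}. The upper limit $\min(m,n)$ is automatic, because $e_{m-\ell}$ vanishes for $\ell>m$ and $e_\ell(w)$ vanishes for $\ell>n$.

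The main point of care is to avoid simply substituting $t=Q=0$ into the final closed form \eqref{qtdefelementary}. There, the inner sum $\sum_{\ell}q^{\ell(\ell+1)}\begin{bmatrix} i \\ \ell \end{bmatrix}_{q^2}$ is meant to reproduce $\prod_{p=1}^i(1-q^{2p})=(q^2;q^2)_i$. By \eqref{qbinomial} with base $q^2$ and $z=q^2$, this requires a factor $(-1)^\ell$ in the summand. To be safe, I would route the argument through \eqref{onesummationform}, which still contains the product $\prod_{p=1}^i(1-q^{2p}$ $\!+\!$ $2m_{1,2})$ in the unsummed form $\prod_{p=1}^i(1-q^{2(m_{1,2}+p)})$. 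This way the sign bookkeeping of the $q$-binomial expansion never enters.

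I would also check briefly that the factor from $\mathfrak{F}_{1,2}$ with $m_{1,2}=0$ is indeed $\prod_{p=1}^{i}(1-q^{2p})$. This holds because the $p$-th annihilation acts on $|p\rangle$ and produces $1-q^{2p}$ under the stated action of $\mathbf{a}^-$.
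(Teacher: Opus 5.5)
Your proof is correct, and it takes a genuinely different route from the paper at one point that matters.

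\textbf{The two routes.} The paper gets the corollary in one line, by substituting $N=3$ and $t_{i,j}=Q_{i,j}=0$ into the closed form \eqref{qtdefelementary}. You instead set $m_{1,1}=m_{1,2}=m_{2,1}=0$ in the unsummed expression \eqref{onesummationform}. Only afterwards do you apply $e_i(z_1^{-1},\dots,z_m^{-1})=e_{m-i}(z_1,\dots,z_m)/(z_1\cdots z_m)$.

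\textbf{Why your caution is justified.} Expanding $\prod_{p=1}^i(1-q^{2(m_{1,2}+p)})=(q^{2m_{1,2}+2};q^2)_i$ with \eqref{qbinomial} gives $\sum_\ell(-1)^\ell q^{\ell(\ell+1)}\begin{bmatrix} i \\ \ell \end{bmatrix}_{q^2}q^{2\ell m_{1,2}}$. So the inner sum in \eqref{qtdefelementary} should be $\sum_\ell(-1)^\ell q^{\ell(\ell+1)}\begin{bmatrix} i \\ \ell \end{bmatrix}_{q^2}/(q^{2\ell}t_{1,N-1};Q_{1,N-1})_\infty$. As printed, the minus sign sits inside the Pochhammer argument rather than as an overall factor $(-1)^\ell$. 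A literal substitution $t=Q=0$ then gives $\sum_\ell q^{\ell(\ell+1)}\begin{bmatrix} i \\ \ell \end{bmatrix}_{q^2}=(-q^2;q^2)_i=\prod_{p=1}^i(1+q^{2p})$. That is not the $\prod_{p=1}^i(1-q^{2p})$ stated in the corollary.

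\textbf{What each route buys.} The paper's route is shorter, but it is valid only once that sign is corrected. Your route lands directly on the stated formula.

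\textbf{A fully self-contained check.} At the vacuum you can even bypass the configuration analysis by using the explicit $n=3$ operators. On states $|0\rangle_{1,1}|i\rangle_{1,2}|0\rangle_{2,1}$, $X_1^{(3)}(w)$ acts as $1+w\,\mathbf{a}^+_{12}$. On the same states, $X_3^{(3)}(z)$ acts as $1+z^{-1}\mathbf{a}^-_{12}$: every other term contains $\mathbf{a}^-_{11}$ or $\mathbf{a}^-_{21}$ with no compensating creation, and $\mathbf{k}_{11}$ acts as $1$. This immediately gives $\sum_i\prod_{p=1}^i(1-q^{2p})\,e_i(z^{-1})\,e_i(w)$.

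\textbf{A typo.} The product in your third paragraph is garbled; you mean $\prod_{p=1}^i(1-q^{2(m_{1,2}+p)})$.
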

{\bf Note}:
For this degenerate case,
the right-hand side of \eqref{qtdefelementaryspecial}
essentially coincides with the case $m_1=0$ and $q=0$
of \cite[Proposition~4.6]{CdGW}.
The more general version differs from
\cite[Proposition~4.6]{CdGW}.

Further specializing $q=0$
in \eqref{qtdefelementaryspecial}, we get 
\begin{align}
&\langle\!\langle \Omega|
X_3^{(3)}(z_1) \cdots X_3^{(3)}(z_m) X_1^{(3)}(w_1) \cdots X_1^{(3)}(w_n)| \Omega \rangle \nn \\
=&\frac{1}{z_1 \cdots z_m}
e_{m}(z_1,\dots,z_m,w_1,\dots,w_n).
\end{align}
From this degeneration, we note
\eqref{qtdefelementary}
can be viewed as a $q$-deformation of elementary symmetric functions.

\begin{figure}[htbp]
\centering
    \includegraphics[width=0.8\textwidth]{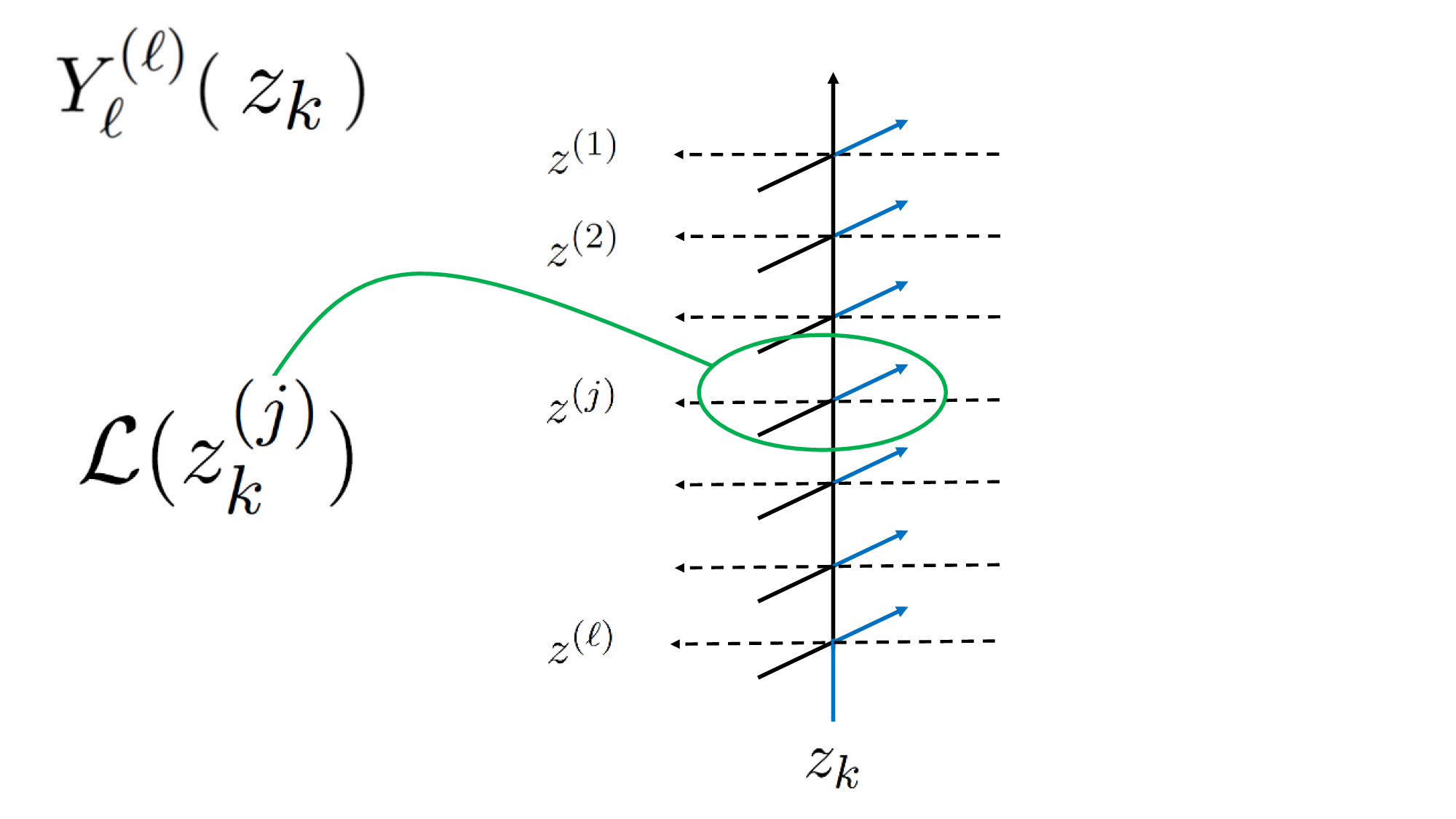}
\caption{The $Y_\ell^{(\ell)}$-operator. The $L$-operator in the $j$-th layer is assigned the spectral parameter 
$z_k^{(j)}$, with $z^{(j)}$ coming from the horizontal line and $z_k$ from 
the vertical line.
}
\label{figureforYoperator}
\end{figure}

\begin{figure}[htbp]
\centering
    \includegraphics[width=0.8\textwidth]{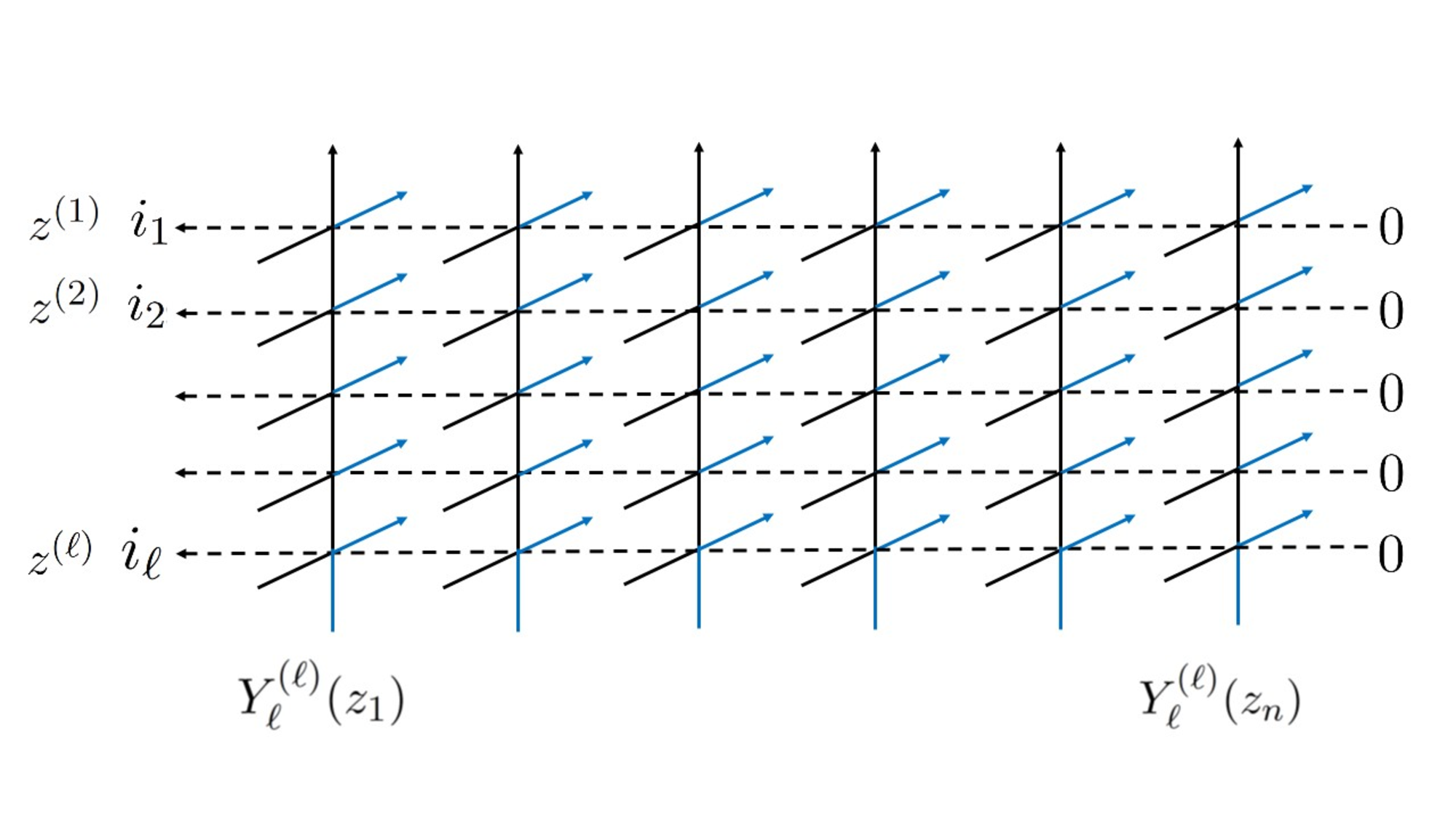}
\caption{$\langle\!\langle i_1,i_2,\dots,i_\ell 
|Y_\ell^{(\ell)}(z_1) \cdots Y_\ell^{(\ell)}(z_n)|\Omega \rangle$.}
\label{generalizedqloopelementary}
\end{figure}

\subsection{A $q$-deformation of loop elementary symmetric functions}

We introduce operators $Y_\ell^{(\ell)}(z)$, $\ell \geq 1$ acting on $\mathfrak{F}^{\otimes \ell}$,
graphically represented as Figure \ref{figureforYoperator}.
The $L$-operator in the $j$-th layer is assigned the spectral parameter 
$z_k^{(j)}$, with $z^{(j)}$ coming from the horizontal line and $z_k$ from 
the vertical line. We use the notation $Y_\ell^{(\ell)}(z)$, following the $q=0$ case used in \cite{IMO}.

In this subsection, we introduce a different type of partition functions
from the previous subsection
which are constructed from $Y_\ell^{(\ell)}$-operators and show they
give a $q$-deformation of the loop elementary symmetric functions.

Introduce the following notation
$\langle\!\langle i_1, i_2, \cdots, i_\ell |:=\langle\!\langle i_1| \otimes \langle\!\langle i_2| \otimes \cdots \otimes \langle\!\langle i_\ell|
$, $i_1,i_2,\dots,i_\ell \in \mathbb{Z}_{\geq 0}$
for tensor product of dual vectors acting on bosonic Fock spaces.

\begin{proposition}
We have
\begin{align}
&\langle\!\langle i_1,i_2,\dots,i_\ell 
|Y_\ell^{(\ell)}(z_1) \cdots Y_\ell^{(\ell)}(z_n)|\Omega \rangle \nn \\
=&
\sum_{\{ m_k^{(j)} \}} \prod_{j=1}^\ell \prod_{k=1}^{i_j} 
\Bigg(
z_{m_k^{(j)}}^{(j)} \prod_{p=1}^{j-1} q^{s(j,k,p)}
\Bigg), \label{higherrankqloop}
\end{align}
where
the sum is over $m_k^{(j)}$, $j=1,\dots,\ell$, $k=1,\dots,i_j$
satisfying
$1 \le m_{i_j}^{(j)}<m_{i_j-1}^{(j)}<\cdots<m_1^{(j)} \le n$
and $m_r^{(s)} \neq m_k^{(j)}$ if $(r,s) \neq (k,j)$.
$s(j,k,p)$ is the integer such that $m_{s(j,k,p)+1}^{(p)}<m_k^{(j)}<m_{s(j,k,p)}$
with the convention $m_0^{(j)}:=n+1$, $m_{i_j+1}^{(j)}:=0$, $j=1,\dots,\ell$.
\end{proposition}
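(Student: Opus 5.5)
The plan is a direct enumeration of configurations, in the same spirit as the proofs of \eqref{top}, \eqref{tracefactorization} and \eqref{qtdefelementary}. Each $Y_\ell^{(\ell)}(z_k)$ is a column of $\ell$ $L$-operators $\mathcal{L}(z_k^{(j)})$, $j=1,\dots,\ell$, one acting on each Fock space $\mathfrak{F}_j$. These are threaded by a single fermionic line that passes from layer to layer, and each layer also has its own external fermionic edges.

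\textbf{Step 1: admissible local states.} First I will read off from Figure \ref{figureforYoperator} which local states are allowed in each layer. The boundary colorings of $Y_\ell^{(\ell)}$ together with the ice rule ($[\mathcal{L}]_{ij}^{ab}=0$ unless $i+j=a+b$) should show the following. In each column the fermion carried along the threading line either
\begin{itemize}
\item[(a)] is absorbed in some layer $j$, producing $\mathcal{L}_{10}^{01}=z_k^{(j)}\mathbf{a}^+$ on $\mathfrak{F}_j$, or
\item[(b)] passes through layer $p$ unchanged, producing a $\mathbf{k}$-type weight $\mathcal{L}_{01}^{01}=\mathbf{k}$ on $\mathfrak{F}_p$, or
\item[(c)] produces the trivial weight $\mathcal{L}_{00}^{00}$ or $\mathcal{L}_{11}^{11}$.
\end{itemize}
Each column therefore contains at most one creation operator in total.

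\textbf{Step 2: ruling out annihilations.} Next I will show that annihilation operators $\mathcal{L}_{01}^{10}$ cannot occur. The product $Y_\ell^{(\ell)}(z_1)\cdots Y_\ell^{(\ell)}(z_n)$ acts on $|\Omega\rangle$, and the total number of bosons in $\mathfrak{F}_j$ is fixed to be $i_j$ by the pairing with $\langle\!\langle i_1,\dots,i_\ell|$. I plan to argue by a monotonicity/counting argument, as in the proof of \eqref{actiononvacuum}: the fermion number along the threading line is conserved, so every annihilation must be compensated by an extra creation in the same column. The column structure forbids this. This step is the main obstacle, since it depends on the exact boundary data of the figure. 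If annihilations cannot be excluded outright, I would instead show that their contributions cancel or vanish against the vacuum, processing the layers in order $j=1,\dots,\ell$ exactly as the proof of \eqref{qtdefelementary} proceeds Fock space by Fock space.

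\textbf{Step 3: weight of a configuration.} Once only creations and $\mathbf{k}$'s remain, a configuration is determined by the following data: for each layer $j$, the set of column indices $m_1^{(j)}>\cdots>m_{i_j}^{(j)}$ at which a boson is created in $\mathfrak{F}_j$. These sets are pairwise disjoint, because each column creates at most one boson. Since $Y(z_n)$ acts first, column $m$ sees the bosons created by columns with larger index. The creation at column $m_k^{(j)}$ contributes $z^{(j)}_{m_k^{(j)}}$; with the normalized bra $\langle\!\langle\cdot|$, the factor from $\mathbf{a}^+$ is simply $1$. On its way down to layer $j$, the fermion passes through layers $p<j$, each contributing $\mathbf{k}_p$. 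The eigenvalue of $\mathbf{k}_p$ is $q^{\#\{r: m_r^{(p)}>m_k^{(j)}\}}=q^{s(j,k,p)}$, and I will double-check that it is this positive-sign configuration $\mathbf{k}$ rather than $-q\mathbf{k}$ that appears. Columns with no creation contribute $1$.

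\textbf{Step 4: summation.} Multiplying these weights over all columns and summing over all admissible disjoint decreasing sequences $\{m_k^{(j)}\}$ gives \eqref{higherrankqloop}. As a sanity check, setting $\ell=1$ should give the elementary symmetric function $e_{i_1}(z_1^{(1)},\dots,z_n^{(1)})$, and $q=0$ should recover the known $q=0$ loop elementary symmetric functions of \cite{IMO}.
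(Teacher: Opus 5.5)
Your overall strategy matches the paper's. You freeze each column, encode a configuration by the creation positions $m^{(j)}_k$, and read off $z^{(j)}_{m^{(j)}_k}$ at the creation vertex and $q^{s(j,k,p)}$ from the $\mathbf{k}$'s in rows $p<j$. Steps 3 and 4 are fine, including the count $\#\{r: m^{(p)}_r>m^{(j)}_k\}=s(j,k,p)$ and the normalization of $\langle\!\langle\cdot|$.

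The gap is in the local input that all of this rests on. The paper's proof starts from the observation that the boundary coloring of $Y_\ell^{(\ell)}$ (Figure~\ref{figureforYoperator}) allows only $\mathcal{L}_{00}^{00}$, $\mathcal{L}_{10}^{01}$ and $\mathcal{L}_{01}^{01}$. These are exactly the three nonzero vertices with first upper index $a=0$. Your Step 1 instead admits $\mathcal{L}_{11}^{11}$, and this is not harmless. Above a creation at row $j$ the vertical edge is red, so if $\mathcal{L}_{11}^{11}$ were allowed, a row $p<j$ could carry weight $1$ instead of $q^{s(j,k,p)}$. The configuration for given boson data would then no longer be unique, and the formula would fail. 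Your Step 3 silently assumes $\mathbf{k}_p$ in those rows.

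Your proposed argument for Step 2 also would not exclude $\mathcal{L}_{01}^{10}$, for three reasons. First, the bra only fixes the net boson number $i_j$ in $\mathfrak{F}_j$, which does not forbid a creation and an annihilation in the same row at different columns. Second, the vacuum argument of \eqref{actiononvacuum} does not transfer, because an annihilation in column $m$ acts on bosons already created by columns $m'>m$. Third, fermion number is conserved only at each vertex, between the two lines, not along the vertical line. For example, a column with $\mathcal{L}_{10}^{01}$ at row $j$, $\mathcal{L}_{01}^{01}$ in between, and $\mathcal{L}_{01}^{10}$ at some row $p<j$ satisfies the ice rule at every vertex.

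What excludes $\mathcal{L}_{11}^{11}$, $\mathcal{L}_{01}^{10}$ and $\mathcal{L}_{10}^{10}$ all at once is the fixed boundary edge of $Y_\ell^{(\ell)}$. That single observation gives everything you need:
\begin{itemize}
\item there are no annihilations, so each row's boson number is nondecreasing and the encoding by $m^{(j)}_k$ is valid;
\item a column with a creation at row $j$ is frozen to $\mathcal{L}_{01}^{01}$ in rows $p<j$ and $\mathcal{L}_{00}^{00}$ in rows $p>j$;
\item together with the ice rule, there is at most one creation per column, since a second creation above the first would need a vertex with three red edges and one blue. This is where the condition $m^{(s)}_r\neq m^{(j)}_k$ comes from.
\end{itemize}
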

See Figure \ref{generalizedqloopelementary} for a graphical description of $\langle\!\langle i_1,i_2,\dots,i_\ell 
|Y_\ell^{(\ell)}(z_1) \cdots Y_\ell^{(\ell)}(z_n)|\Omega \rangle$.

\begin{figure}[htbp]
\centering
    \includegraphics[width=0.8\textwidth]{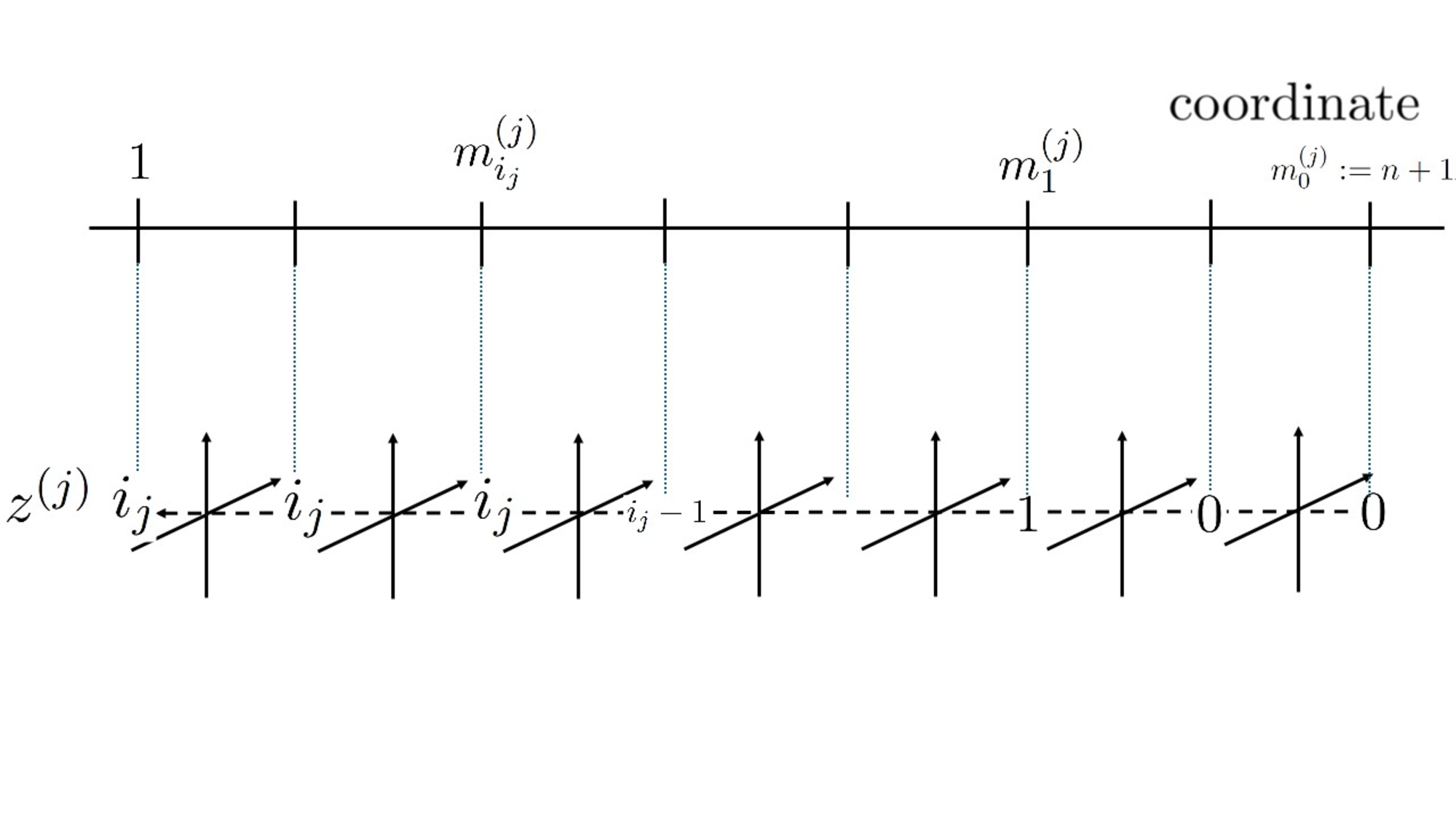}
\caption{
$m_k^{(j)}:=\mathrm{Max} \{ m \ | \ 1 \le m \le n, i_m=k  \}$ for $k=1,\dots,i_j$.
These are coordinates which the number of bosons changes.
We also define $m_0^{(j)}:=n+1$ and $m_{i_j+1}^{(j)}:=0$.}
\label{labelforposition
}
\end{figure}

\begin{proof}

First, observe that the only type of $Y$-operators that is used is  $Y_\ell^{(\ell)}$, and we note that locally three types of configurations
$\mathcal{L}_{00}^{00}, \mathcal{L}_{10}^{01}, \mathcal{L}_{01}^{01}$ can appear for $Y_\ell^{(\ell)}$.
Also note that none of the three types of configurations produces the annihilation operator.
Keeping this in mind, we first label configurations of the dashed line (bosonic Fock space) for each row.
Since annihilation operators do not appear, one notes
that the configurations in the $j$-th dashed line can always be written as a sequence of nonincreasing integers
$(b_1,\dots,b_n)=(i_j,\dots i_j,i_{j}-1,\dots,i_{j}-1,\dots, 1\dots,1,0,\dots,0)$.
We define $m_k^{(j)}:=\mathrm{Max} \{ m \ | \ 1 \le m \le n, i_m=k  \}$ for $k=1,\dots,i_j$.
We also define $m_0^{(j)}:=n+1$ and $m_{i_j+1}^{(j)}:=0$.
A bosonic configuration in the $j$-th dashed line can be encoded as $m_k^{(j)}$, $k=0,\dots,i_j$.
For example, if the bosonic configuration in the first dashed row is $33332222211100$, then
$m_3^{(1)}=4$, $m_2^{(1)}=9$, $m_1^{(1)}=12$, $m_0^{(1)}=14$. See also Figure \ref{labelforposition
}.
Let us expand
$
\langle\!\langle i_1,i_2,\dots,i_\ell
|Y_\ell^{(\ell)}(z_1) \cdots Y_\ell^{(\ell)}(z_n)|\Omega \rangle$
as
\begin{align}
\langle\!\langle i_1,i_2,\dots,i_\ell 
|Y_\ell^{(\ell)}(z_1) \cdots Y_\ell^{(\ell)}(z_n)|\Omega \rangle
=\sum_{\{m_k^{(j)} \}} Z_{\{m_k^{(j)} \}},
\end{align}
where
the sum is over $m_k^{(j)}$, $j=1,\dots,\ell$, $k=1,\dots,i_j$
satisfying
$1 \le m_{i_j}^{(j)}<m_{i_j-1}^{(j)}<\cdots<m_1^{(j)} \le n$.
Each summand $Z_{\{m_k^{(j)} \}}$ is the sum
of weights corresponding to configurations which have the same bosonic configuration.
There can be several in principle, but we observe that
there is only one configuration if the bosonic configuration is fixed
and $m_{r}^{(s)} \neq m_{k}^{(j)}$ for all $(r,s) \neq (k,j)$,
and there is no configuration if $m_r^{(s)}=m_k^{(j)}$ for some $(r,s)=(k,j)$.

To see this, consider the case
$m_{r}^{(s)} \neq m_{k}^{(j)}$ for all $(r,s) \neq (k,j)$ first.
In this case, there is only one creation or no creation operator
in each column.
The $m_{k}^{(j)}$-th column has exactly one creation operator in the 
$j$-th bosonic Fock space which is graphically
depicted as  the left panel of Figure \ref{columnfreezingandweights}.
One notes that due to the ice rule,
the configurations from the 1st to the $(j-1)$-th rows are all fixed to be
$\mathcal{L}_{01}^{01}$,
and those from the $(j+1)$-th to the last rows are all fixed to be
$\mathcal{L}_{00}^{00}$.
This means that there is a red line passing through the vertical line from the $j$-th row 
as depicted in the right panel.
The $L$-operator on the $j$-th row gives the weight $z_{m_k^{(j)}}^{(j)}$.
The $L$-operators from the $1$-st to the $(j-1)$-th rows produce weighted projection operators.
For the $p$-th row $p=1,\dots,j-1$, the number of particles in the bosonic Fock space sandwiching the $L$-operator is $s(j,k,p)$,
hence the $L$-operator gives the factor $q^{s(j,k,p)}$.
From the $m_k^{(j)}$-th column we have the weight $z_{m_k^{(j)}}^{(j)} \prod_{p=1}^{j-1} q^{s(j,k,p)}$ in total.
For a column which a creation operator does not appear, 
we can also make a similar observation and note that all fermionic edges must be colored by blue,
and the weight for a column with no creation operator is always 1.
After all, we observe that the fermionic edges are all colored in a unique way
once the bosonic configuration is fixed and $m_{r}^{(s)} \neq m_{k}^{(j)}$ for all $(r,s) \neq (k,j)$
(Figure \ref{beforeandaftercolordrawing}),
and also conclude that $\displaystyle
Z_{\{m_k^{(j)} \}}= \prod_{j=1}^\ell \prod_{k=1}^{i_j} 
\Bigg(
z_{m_k^{(j)}}^{(j)} \prod_{p=1}^{j-1} q^{s(j,k,p)}
\Bigg)
$.

When the bosonic configuration is fixed and $m_{r}^{(s)} = m_{k}^{(j)}$ for some $(r,s) \neq (k,j)$,
we have
$Z_{\{m_k^{(j)} \}}=0$ since there is no nontrivial configuration in this case.
The condition $m_{r}^{(s)} = m_{k}^{(j)}$ for some $(r,s) \neq (k,j)$
means that there are at least two creation operators in the $m_{k}^{(j)}$-th column.
As observed in the former case, we note that there is a red line passing upward through the vertical line,
starting from the row where the creation operator is produced.
This implies that if there are two creation operators in the same column, then a configuration with three edges colored with red
and one edge colored with blue appears, but this is a forbidden configuration since this violates the so-called ice-rule.

\end{proof}

\begin{figure}[htbp]
\centering
    \includegraphics[width=0.7\textwidth]{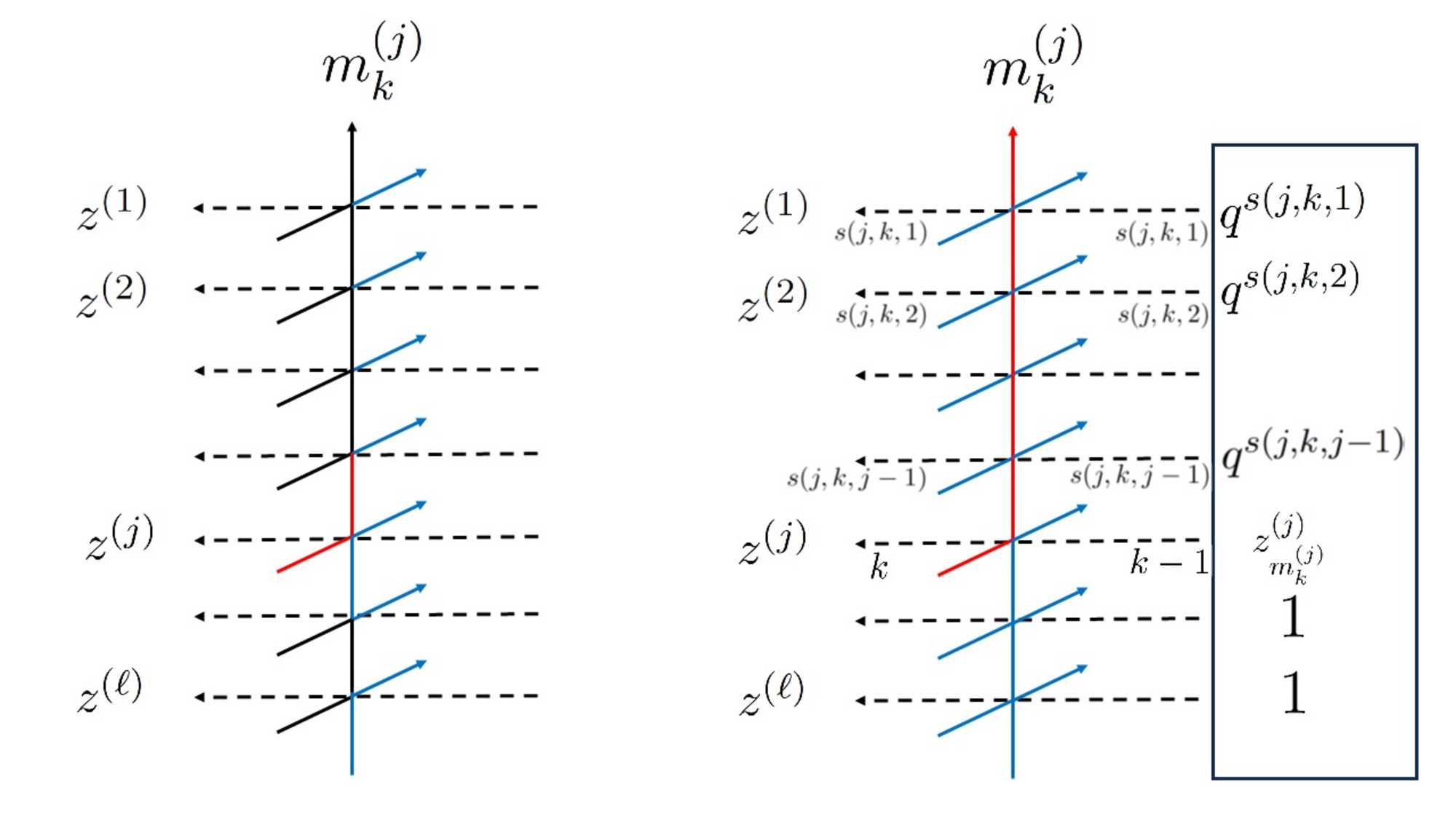}
\caption{The freezing of colors in the $m_k^{(j)}$th column with one creation operator.
Starting from the left panel, due to the ice-rule, we note that
from the $j$-th row, a line passes upward through the column,
and all the remaining uncolored edges are colored with blue as depicted
in the right panel. The weights coming from the elements of the $L$-operators
are presented in the framed box.
}
\label{columnfreezingandweights}
\end{figure}

\begin{figure}[htbp]
\centering
    \includegraphics[width=0.8\textwidth]{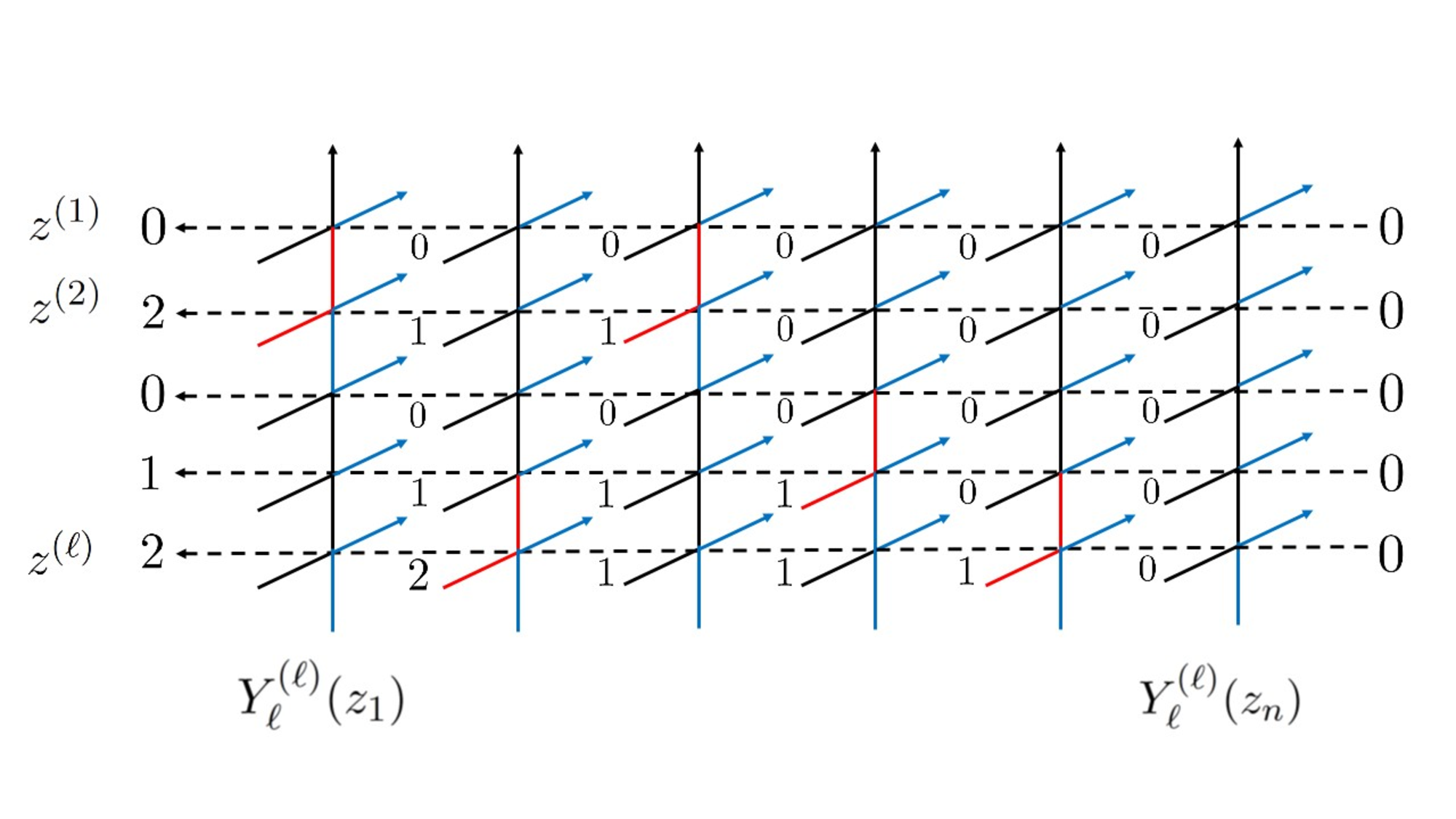}
 \includegraphics[width=0.8\textwidth]{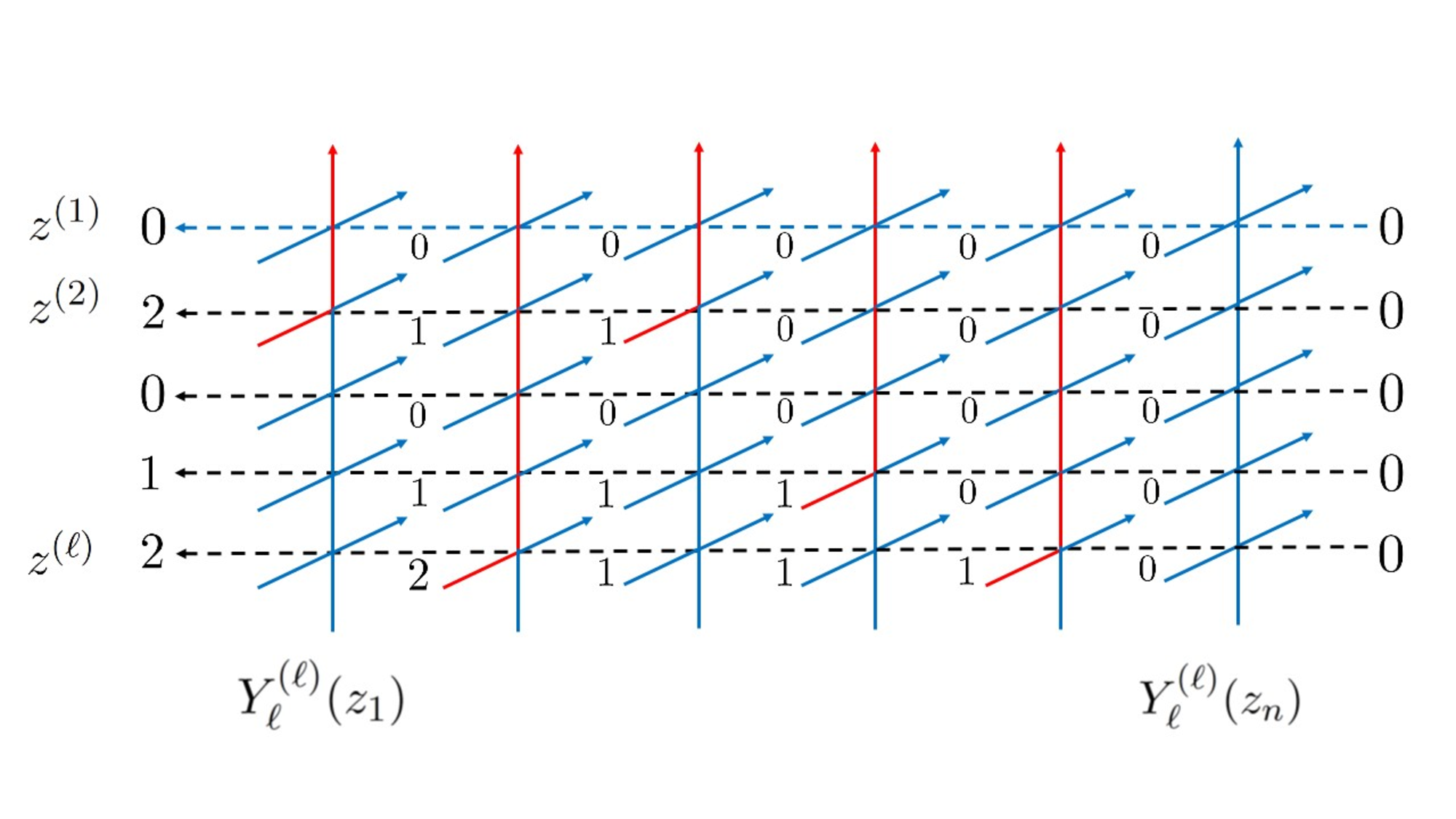}
\caption{A configuration which the states in the bosonic Fock spaces are fixed,
and the fermionic edges are not fully colored yet (top panel).
The edges in each column are colored in a unique way and the bottom panel is the figure after
all edges are colored.}
\label{beforeandaftercolordrawing}
\end{figure}

Special cases of
\eqref{higherrankqloop} restricting to $i_1,i_2,\dots,i_\ell \in \{0,1 \}$ gives a $q$-deformation of the loop elementary symmetric functions.

\begin{corollary}
For $f=i_1, i_2, \cdots, i_\ell$ a sequence of 0s and 1s such that the number of
1s is $m$ and the number of 0s is $\ell-m$,
let $\langle f|=\langle\!\langle i_1, i_2, \cdots, i_\ell |$.
We define $k_j$ $(j=1,\dots,m)$ to be integers such that 
$i_{k_1}=i_{k_2}=\cdots=i_{k_m}=1$ and $1 \le k_1<k_2<\cdots<k_m \le \ell$.
We have
\begin{align}
\langle\!\langle
&f |Y_\ell^{(\ell)}(z_1) \cdots Y_\ell^{(\ell)}(z_n)|\Omega \rangle \nn \\
=&\sum_{\sigma \in S_m}
\sum_{1 \le j_1 < j_2 < \cdots < j_m \le n}
q^{\mathrm{Inv}(\sigma(k_1),\sigma(k_2),\dots,\sigma(k_m))}
z_{j_1}^{(\sigma(k_1))} z_{j_2}^{(\sigma(k_2))} \cdots z_{j_m}^{(\sigma(k_m))}.
\label{qdefloop}
\end{align}
Here $S_m$ is the symmetric group of order $m$
with elements given by permutations of \\
 $(k_1, k_2, \cdots, k_m)$,
and $\mathrm{Inv}(\sigma(k_1),\sigma(k_2),\dots,\sigma(k_m))
=|\{ (k_i,k_j) \ | \ \sigma(k_i) > \sigma(k_j) \}|$.
For example,
\begin{align}
&\langle\!\langle 10010|
Y_5^{(5)}(z_1)Y_5^{(5)}(z_2)Y_5^{(5)}(z_3)|\Omega \rangle \nn \\
=&z_1^{(1)}z_2^{(4)}+qz_1^{(4)}z_2^{(1)}
+z_1^{(1)}z_3^{(4)}+qz_1^{(4)}z_3^{(1)}
+z_2^{(1)}z_3^{(4)}+qz_2^{(4)}z_3^{(1)}.
\end{align}
\end{corollary}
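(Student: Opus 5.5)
The plan is to specialize the preceding Proposition, i.e. \eqref{higherrankqloop}, to $i_1,\dots,i_\ell\in\{0,1\}$ and then reorganize the resulting sum. If $i_j=0$, row $j$ contributes no index $m^{(j)}_k$. If $i_j=1$, that is $j=k_r$ for some $r$, row $j$ contributes exactly one index $m^{(k_r)}_1\in[1,n]$. The distinctness condition $m_r^{(s)}\neq m_k^{(j)}$ then says that the $m$ indices $m^{(k_1)}_1,\dots,m^{(k_m)}_1$ are pairwise distinct. For such a tuple the weight in \eqref{higherrankqloop} is
\[
\prod_{r=1}^m \Bigl( z^{(k_r)}_{m^{(k_r)}_1}\prod_{p<k_r} q^{s(k_r,1,p)}\Bigr).
\]

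The key step is evaluating the exponent $s(j,1,p)$. Take $p<j$. If $i_p=0$, then $m^{(p)}_0=n+1$ and $m^{(p)}_1=0$, so $s(j,1,p)=0$. If $i_p=1$, then $s(j,1,p)\in\{0,1\}$. It equals $1$ exactly when $m^{(p)}_1<m^{(j)}_1<m^{(p)}_0=n+1$, that is, when $m^{(p)}_1<m^{(j)}_1$, and it equals $0$ when $m^{(j)}_1<m^{(p)}_1$. Hence the total power of $q$ is
\[
\#\{(p,j):p<j,\ i_p=i_j=1,\ m^{(p)}_1<m^{(j)}_1\}.
\]

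Next I reparametrize. The set $\{m^{(k_1)}_1,\dots,m^{(k_m)}_1\}$ is an $m$-subset of $[1,n]$, which I write increasingly as $j_1<\cdots<j_m$. The assignment of rows to positions is a bijection from $\{k_1,\dots,k_m\}$ to itself: $\sigma(k_t)$ is the row whose index equals $j_t$. Under this correspondence the weight becomes $z^{(\sigma(k_1))}_{j_1}\cdots z^{(\sigma(k_m))}_{j_m}$. A pair of rows $p<j$ with $m^{(p)}_1<m^{(j)}_1$ corresponds to positions $t<u$ with $\sigma(k_t)<\sigma(k_u)$. So the $q$-exponent counts non-inversions of the word $(\sigma(k_1),\dots,\sigma(k_m))$, and the sum over tuples becomes the sum over $\sigma\in S_m$ and $1\le j_1<\dots<j_m\le n$.

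The main obstacle I expect is this orientation convention, inversions versus non-inversions. With my reading of $s$ I get non-inversions, while \eqref{qdefloop} states $q^{\mathrm{Inv}}$. In the example $\langle\!\langle 10010|$, the term $z^{(1)}_1z^{(4)}_2$ has weight $q^0$. In that term row $4$ has $m^{(4)}_1=2>m^{(1)}_1=1$, so I must check against the proof of the Proposition (Figure \ref{columnfreezingandweights}) which of the two is true. There, the number of bosons in Fock space $p$ at column $m^{(j)}_1$ is $1$ iff the creation in row $p$ occurs at a column to the right, i.e. $m^{(p)}_1>m^{(j)}_1$, because states propagate from $|\Omega\rangle$ on the right. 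I would therefore use the figure's actual boson count rather than the literal inequality in the definition of $s$. That gives exponent $\#\{p<j: m^{(p)}_1>m^{(j)}_1\}$, which is exactly the inversion number of $(\sigma(k_1),\dots,\sigma(k_m))$ and agrees with the example. I would finish by verifying the displayed $\langle\!\langle 10010|$ case directly as a consistency check.
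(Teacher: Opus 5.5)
Your argument is correct and follows the paper's route. The paper obtains the corollary simply as the specialization of \eqref{higherrankqloop} to $i_j\in\{0,1\}$, and you carry this out explicitly via the bijection between tuples of distinct columns $(m^{(k_1)}_1,\dots,m^{(k_m)}_1)$ and pairs $(\sigma,\ j_1<\cdots<j_m)$, under which the $q$-exponent becomes $\mathrm{Inv}(\sigma(k_1),\dots,\sigma(k_m))$.

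The orientation ``obstacle'' you anticipate does not actually exist; it comes from an off-by-one misreading of the definition of $s$. For $i_p=1$, the case $s(j,1,p)=1$ of the defining inequality reads $m^{(p)}_2=0<m^{(j)}_1<m^{(p)}_1$. The inequality $m^{(p)}_1<m^{(j)}_1<m^{(p)}_0$ that you wrote is the case $s=0$. So the literal definition already gives $s(j,1,p)=1$ if and only if $m^{(p)}_1>m^{(j)}_1$, which agrees with the boson count from the figure. You can therefore use the Proposition as stated, without appealing to the details of its proof.
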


Further specializing to $z_k^{(j)}=z_k$ for all $j,k$ give the elementary symmetric functions.

\begin{corollary}
For $f=i_1, i_2, \cdots, i_\ell$ a sequence of 0s and 1s such that the number of
1s is $m$ and the number of 0s is $\ell-m$,
let $\langle\!\langle f|=\langle\!\langle i_1, i_2, \cdots, i_\ell |$.
We have
\begin{align}
\langle\!\langle
f |Y_\ell^{(\ell)}(z_1) \cdots Y_\ell^{(\ell)}(z_n)|\Omega \rangle
=[m]_q! e_m(z_1,\dots,z_n).
\end{align}
Here, recall $[m]_q=(1-q^m)/(1-q)$ and $[m]_q!=\prod_{j=1}^m [j]_q$.
For example,
\begin{align}
\langle\!\langle
1^m 0^{\ell-m} |Y_\ell^{(\ell)}(z_1) \cdots Y_\ell^{(\ell)}(z_n)|\Omega \rangle
=[m]_q! e_m(z_1,\dots,z_n).
\end{align}
\end{corollary}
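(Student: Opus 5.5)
The plan is to specialize the previous corollary \eqref{qdefloop} by setting $z_k^{(j)}=z_k$ for all $j,k$, and then sum over permutations. After the specialization, the monomial $z_{j_1}^{(\sigma(k_1))}\cdots z_{j_m}^{(\sigma(k_m))}$ becomes $z_{j_1}\cdots z_{j_m}$. This no longer depends on $\sigma$. Hence the right-hand side of \eqref{qdefloop} factors as
\begin{align}
\Bigg(\sum_{\sigma \in S_m} q^{\mathrm{Inv}(\sigma(k_1),\dots,\sigma(k_m))}\Bigg)\sum_{1\le j_1<\cdots<j_m\le n} z_{j_1}\cdots z_{j_m}. \nn
\end{align}
The second factor is $e_m(z_1,\dots,z_n)$ by definition.

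For the first factor, note that $(\sigma(k_1),\dots,\sigma(k_m))$ ranges over all orderings of the fixed set $\{k_1<\cdots<k_m\}$. The number of inversions depends only on the relative order of the entries. So this sum equals the inversion generating function of $S_m$, namely $\sum_{\pi\in S_m}q^{\mathrm{inv}(\pi)}$.

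It remains to show $\sum_{\pi\in S_m}q^{\mathrm{inv}(\pi)}=[m]_q!$. I would do this by induction on $m$. Inserting the largest letter $m$ into a permutation of $S_{m-1}$ at position $r$ from the right creates exactly $r-1$ new inversions, for $r=1,\dots,m$. Summing over positions multiplies the generating function by $1+q+\cdots+q^{m-1}=[m]_q$.

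Finally, the example $\langle\!\langle 1^m0^{\ell-m}|$ is just the case $k_j=j$. The only point needing care is the observation that the statement depends on $f$ only through $m$. This holds because the set $\{k_j\}$ enters solely through relative order. I expect no step here to be a genuine obstacle.
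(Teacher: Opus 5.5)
Your proposal is correct and takes the same route as the paper, which obtains this corollary simply by specializing $z_k^{(j)}=z_k$ in \eqref{qdefloop}. The paper leaves implicit the identity $\sum_{\pi\in S_m}q^{\mathrm{inv}(\pi)}=[m]_q!$, which your insertion argument supplies correctly; you also rightly read the paper's $\mathrm{Inv}$ as counting only pairs with $i<j$, as its worked example confirms.
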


The right hand side of
\eqref{qdefloop} can be regarded as a $q$-weighted sum of the loop elementary symmetric functions.
The $q=0$ and $i_1=i_2=\cdots=i_\ell=1$ case of \eqref{qdefloop}
is
\begin{align}
\langle\!\langle 1^\ell |Y_\ell^{(\ell)}(z_1) \cdots Y_\ell^{(\ell)}(z_n)|\Omega \rangle
=e_\ell^{(1)}(z_1,\dots,z_n), \label{specialcaseforclassicalintepretation}
\end{align}
where $e_k^{(a)}(z_1,\dots,z_n)$ is the loop elementary symmetric functions
\begin{align}
e_k^{(a)}(z_1,\dots,z_n)= \sum_{1 \le i_1 < i_2 < \dots < i_k \le n}
z_{i_1}^{(a)} z_{i_2}^{(a+1)} \cdots z_{i_k}^{(a+k-1)}.
\end{align}

\eqref{specialcaseforclassicalintepretation}
can be transformed to the classical lattice path interpretation of the loop elementary symmetric functions
as follows.
First, we draw all the lattice points in a rectangular grid with $(\ell+1)$ columns and 
$(n+1)$ rows as depicted in the top panel of
Figure \ref{figureforclassicalinterpretation}.
Consider lattice paths on this grid that start from the southeast corner and move to the northwest corner using only west and north-west steps.
At each lattice point in the $(j+1)$-th row counted from top, 
if the particle number in the bosonic Fock space of the $j$-th row
increases, we choose a northwest step which passes through the central point
of the $L$-operator; otherwise, we choose a west step.
When the $L$-operator being traversed is located at the $j$-th row and $k$-th column, we assign the weight 
$z_k^{(j)}$ to the northwest step, and assign the weight 1 to every west step.
The weight of a single path is defined as the product of the weights assigned to its steps.
An example can be seen in the bottom panel of Figure \ref{figureforclassicalinterpretation}.
Note that when $q=0$, the particle number increases faster in the bosonic Fock space of the lower row
and we can always transform every configuration to a lattice path which starts from the
southeast corner and ends at the northwest corner.
Partition functions are the sum of the weights from all possible configurations,
and all possible configurations correspond to all admissible lattice paths.
The sum of the weights over all such paths gives a classical interpretation of the loop elementary symmetric functions
$e_\ell^{(1)}(z_1,\dots,z_n)$,
which corresponds to the matrix elements of multiplication of the Lax operators of the discrete Toda lattice
\cite{Yamada,Lam,LP}.

\begin{figure}[htbp]
\centering
    \includegraphics[width=0.8\textwidth]{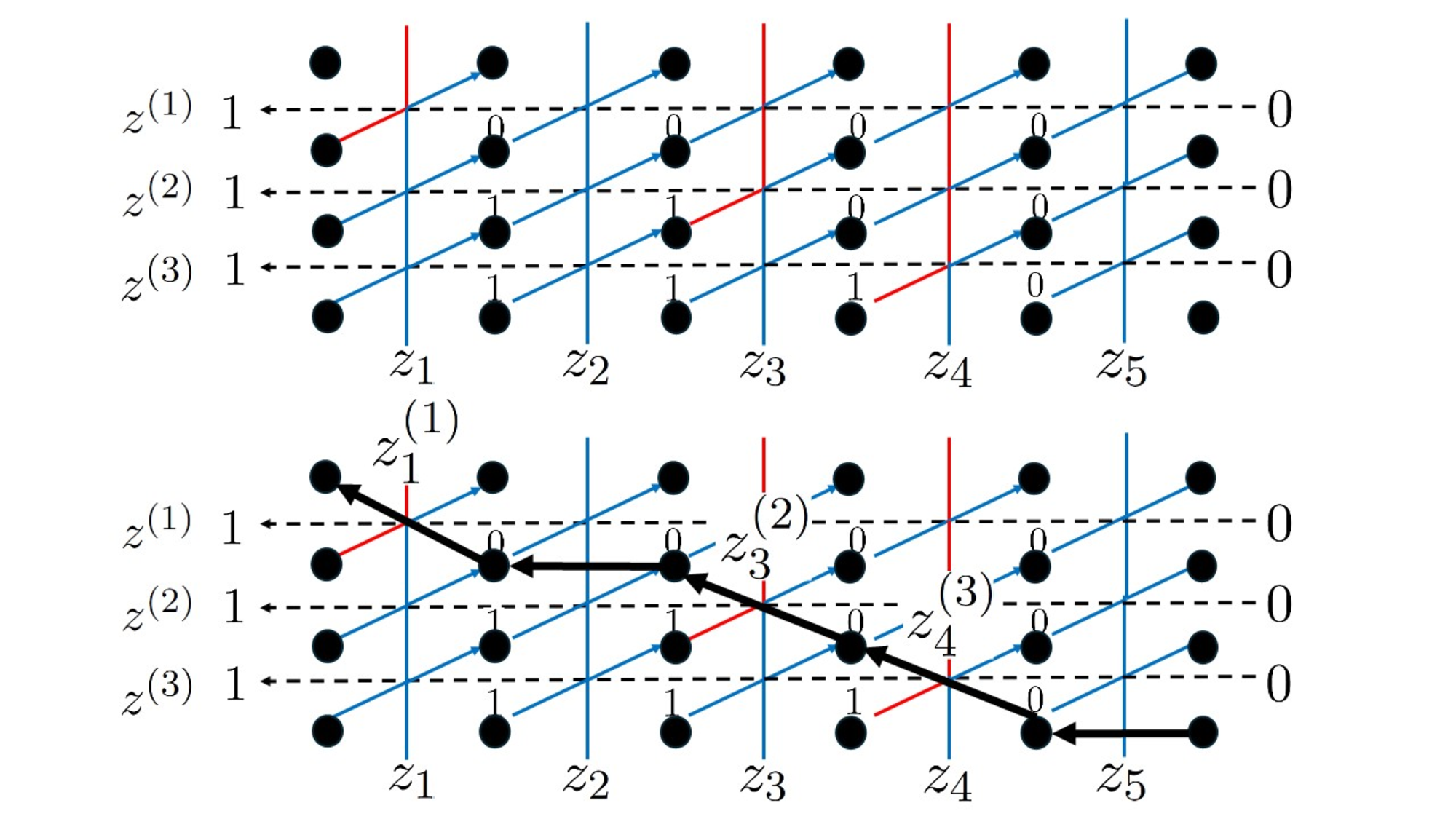}
\caption{
Transforming the partition function description to the classical lattice path description of the loop elementary symmetric functions.
The figure corresponds to the case $\ell=3,n=4$.
The lattice path drawn in the bottom part gives the weight $z_1^{(1)}z_3^{(2)}z_4^{(3)}$.
Taking all possible configurations of partition functions is transformed to taking all possible lattice paths into account,
and the sum of the weights of the lattice paths give $e_3^{(1)}(z_1,\dots,z_5)$.
}
\label{figureforclassicalinterpretation}
\end{figure}

\section*{
Acknowledgements
}

This work was partially supported by Grant-in-Aid for Scientific
Research (C) 21K03180, 23K03056, 26K06839.


\end{document}